\documentclass[12pt]{article}
\usepackage[utf8]{inputenc}
\usepackage{amsmath}
\usepackage{amsfonts,mathrsfs}
\usepackage{amssymb,amsthm,xcolor}
\usepackage[pdftex,plainpages=false,hypertexnames=false,pdfpagelabels]{hyperref}
\usepackage{xcolor}
\usepackage{enumerate}

\usepackage{tikz}
\usetikzlibrary{calc}
\usetikzlibrary{patterns}
\usepackage{tocvsec2}
\usepackage{caption}
\definecolor{dark-red}{rgb}{0.7,0.25,0.25}
\definecolor{dark-blue}{rgb}{0.15,0.15,0.55}
\definecolor{medium-blue}{rgb}{0,0,.8}
\definecolor{DarkGreen}{RGB}{0,150,0}
\definecolor{rho}{named}{red}
\hypersetup{
   colorlinks, linkcolor={purple},
   citecolor={medium-blue}, urlcolor={medium-blue}
}
\usepackage[top=1in, bottom=1in, left=0.87in, right=0.87in]{geometry}

\usepackage[shortlabels]{enumitem}

\newcommand{\frg}{\mathfrak{g}}
\newcommand{\fso}{\mathfrak{so}}

\newcommand{\Diffc}{\Diff_c(S^1)}
\newcommand{\Ran}{\operatorname{Ran}}
\newcommand{\Rep}{\operatorname{Rep}}
\newcommand{\Int}{\operatorname{int}}
\newcommand{\Irr}{\operatorname{Irr}}

\newcommand{\Ann}{\operatorname{Ann}}

\newcommand{\scA}{\mathscr{A}}

\newcommand{\Rot}{\operatorname{Rot}}

\newcommand{\cB}{\mathcal{B}}

\newcommand{\cF}{\mathcal{F}}
\newcommand{\cL}{\mathcal{L}}

\newcommand{\cY}{\mathcal{Y}}
\newcommand{\cX}{\mathcal{X}}
\newcommand{\cH}{\mathcal{H}}

\newcommand{\C}{\mathbb{C}}
\newcommand{\bbC}{\mathbb{C}}
\newcommand{\cA}{\mathcal{A}}

\newcommand{\cI}{\mathcal{I}}
\newcommand{\cU}{\mathcal{U}}

\newcommand{\cS}{\mathcal{S}}

\newcommand{\bbR}{\mathbb{R}}
\newcommand{\Z}{\mathbb{Z}}
\newcommand{\bbZ}{\mathbb{Z}}

\newcommand{\fsl}{\mathfrak{sl}}

\newcommand{\cl}{\operatorname{cl}}

\newcommand{\id}{\operatorname{id}}

\newcommand{\abs}[1]{\left|#1\right|}
\newcommand{\babs}[1]{\big|#1\big|}

\newcommand{\norm}[1]{\left\|#1\right\|}

\newcommand{\vertex}[3]{\binom{#1}{#2 \, #3}}
\newcommand{\ip}[1]{\left\langle#1\right\rangle}

\newcommand{\bip}[1]{\big\langle#1\big\rangle}

\newcommand{\bbD}{\mathbb{D}}

\newcommand{\cR}{\mathcal{R}}
\newcommand{\Diff}{\operatorname{Diff}}

\newcommand{\Span}{\operatorname{span}}
\newcommand{\Ad}{\operatorname{Ad}}
\newcommand{\End}{\operatorname{End}}
\newcommand{\Hom}{\operatorname{Hom}}

\newcommand{\Vir}{\operatorname{Vir}}
\newcommand{\Mob}{\operatorname{M\ddot{o}b}}

\newcommand{\interior}[1]{\mathring{#1}}

\newtheorem{thmalpha}{Theorem}
\newtheorem{coralpha}[thmalpha]{Corollary}

\newtheorem{Theorem}{Theorem}[section]
\newtheorem*{Theorem*}{Theorem}
\newtheorem{Lemma}[Theorem]{Lemma}
\newtheorem{Proposition}[Theorem]{Proposition}
\newtheorem{Corollary}[Theorem]{Corollary}
\newtheorem*{Corollary*}{Corollary}
\theoremstyle{definition}
\newtheorem{Remark}[Theorem]{Remark}
\newtheorem{Definition}[Theorem]{Definition}
\newtheorem{Conjecture}[Theorem]{Conjecture}
\newtheorem*{Definition*}{Definition}

\newtheorem*{Problem*}{Problem}

\numberwithin{equation}{section}
\numberwithin{figure}{section}

\usepackage[nottoc]{tocbibind}
\settocbibname{References}

\begin{document}

\title{Fusion and positivity in chiral conformal field theory}
\author{James E. Tener}
\date{}

\maketitle

\abstract{
In this article we show that the conformal nets corresponding to WZW models are rational, resolving a long-standing open problem.
Specifically, we show that the Jones-Wassermann subfactors associated with these models have finite index.
This result was first conjectured in the early 90s but had previously only been proven in special cases, beginning with Wassermann's landmark results in type A.
The proof relies on a new framework for the systematic comparison of tensor products (a.k.a. `fusion') of conformal net representations with the corresponding tensor product of vertex operator algebra modules.
This framework is based on the geometric technique of `bounded localized vertex operators,' which realizes algebras of observables via insertion operators localized in partially thin Riemann surfaces.
We obtain a general method for showing that Jones-Wassermann subfactors have finite index, and apply it to additional families of important examples beyond WZW models.
We also consider applications to a class of positivity phenomena for VOAs, and use this to outline a program for identifying unitary tensor product theories of VOAs and conformal nets even for badly-behaved models.
}

\newpage

\setcounter{tocdepth}{2}
\tableofcontents
\newpage


\section{Introduction} 

The Haag-Kastler axioms describe quantum field theories in terms of nets of operators algebras acting on a Hilbert space \cite{HaKa64}.
\emph{Conformal nets} are an axiomatiztaion of two-dimensional chiral conformal field theories (CFTs) in the style of the Haag-Kastler axioms \cite{FrReSc92}. 
Conformal nets have been at the center of extensive investigation aimed at providing a rigorous foundation for CFT and a better understanding of the mathematical phenomena that arise.
One such phenomenon is the \emph{index theory of subfactors} (first introduced in a landmark article of Vaughan Jones \cite{Jones83}) which is closely linked to the notion of \emph{rationality} of conformal field theories.
Indeed, a conformal net is called rational if it has finitely many irreducible representations and the subfactors corresponding to these representations all have finite index.
Because of this second requirement, rationality of conformal nets is much harder to verify than what is typically thought of as `rationality' of conformal field theories.
However under this definition rational conformal nets have been proven to exhibit the behavior expected from a rational CFT, such as modularity of the representation category \cite{KaLoMu01}.

Among the most fundamental examples of rational CFTs are the Wess-Zumino-Witten (WZW) models corresponding to simple complex Lie algebras at positive integer level.
The importance of showing that the conformal nets corresponding to WZW models are rational (specifically, showing that the corresponding subfactors have finite index) was identified at the very beginning of the study of conformal nets \cite{GabbianiFrohlich93}, and the difficulty that this foundational problem has exhibited over nearly 30 years speaks to the challenges inherent in rigorous analysis of quantum field theories.
We give a complete resolution:

\begin{thmalpha}\label{thm: intro wzw rational}
The WZW conformal nets are rational for all finite-dimensional simple complex Lie algebras and all positive integer levels.
In particular, the Jones-Wassermann subfactors corresponding to irreducible representations of these nets have finite index.
\end{thmalpha}

Special cases of Theorem \ref{thm: intro wzw rational} have appeared in the literature.
The first result in this direction was Wassermann's proof that subfactors arising from WZW models of type $A$ have finite index \cite{Wa98}, and shortly thereafter Toledano Laredo proved an analogous result for type $D$ at odd levels \cite{TL97}.
These results, when combined with a classification of irreducible conformal net representations (implicit in \cite{Xu00b,KaLoMu01} for type $A$, and obtained for all types in \cite{HenriquesColimits}), establish Theorem \ref{thm: intro wzw rational} for WZW models of type $A$ (all levels) and type $D$ (odd levels).
Later, Gui proved Theorem \ref{thm: intro wzw rational} for WZW models of type $C$ and $G$ \cite{gui21categoricalextension}.
The proof we give covers all cases.

In order to prove Theorem \ref{thm: intro wzw rational}, we will compare conformal nets with other axiomatizations of chiral conformal field theory, specifically \emph{vertex operator algebras} \cite{Borcherds86,FLM88} and \emph{Segal CFTs} (similar to the better known Atiyah-Segal axioms of topological quantum field theory) \cite{SegalDef, MoSe89}.
These axioms are supposed to describe (essentially) the same objects as conformal nets, but look and behave very differently.
The project of comparing these notions has developed significant momentum in recent years, with the goal of obtaining a single unified framework for the mathematical study of CFTs.
One of the most important, and most difficult, areas in which to make this comparison is that of tensor products (`fusion products') of representations, as many important questions regarding fusion products have been solved in one of the axiomatizations but are fundamental open problems in another.
Computing fusion products in the context of conformal nets in turn allows one to show that the corresponding subfactors have finite index \cite[Thm. 4.1]{Longo90}.

In this article we will develop a framework for the systematic comparison of fusion product theories in VOAs and conformal nets, using the idea of Segal CFT (extended to `partially thin' Riemann surfaces as introduced by Henriques \cite[\S2]{Henriques14}) to provide a new geometric understanding of `fusion' in the setting of operator algebras.
This framework is used to give a short proof of Theorem \ref{thm: intro wzw rational}, but also to prove rationality of other conformal nets, such as the discrete series $W$-algebras of type $A$ and $E$.
Rationality of these $W$-algebra conformal nets had previously only been proven for type $A_1$ \cite{KaLo04} (in which case the discrete series $W$-algebras are the unitary Virasoro minimal models).
As another application, we provide new links at the categorical level and use conformal nets to demonstrate the unitarity of representation categories for all WZW VOAs (building upon \cite{GuiUnitarityI,GuiUnitarityII,GuiG2}) as well as many $W$-algebra examples.%
\footnote{In the time between this article's posting on the arXiv and its submission for publication, Gui in fact proved the representation categories of these WZW and $W$-algebra conformal nets coincide, which also provides an alternate proof of Theorem \ref{thm: intro wzw rational} \cite{Gui20unbddax}.}
Our approach is sufficiently general that the methods used should apply to any model (without any additional assumption, such as rationality), and we describe how it could pave the way for a uniform understanding of the relationship between fusion product theories in unitary VOAs and conformal nets.

In the remainder of the introduction we will outline our framework for comparing VOA and conformal net fusion product theory and describe further applications.

\subsubsection*{Bounded localized vertex operators}

We now give a brief overview of the geometric method for relating VOAs and conformal nets developed in \cite{GRACFT1,TenerGRACFT2}.
For a more detailed overview, see Section \ref{sec: BLVO} or \cite[\S1.2]{TenerGRACFT2}.
Given a unitary VOA and numbers $R > 1 > r$, the operator $R^{-L_0}Y(v,z)r^{L_0}$ inserts the state $v$ at the point $z$ inside the annulus $\{R > \abs{w} > r\}$.
$$
\begin{tikzpicture}[scale=0.55,baseline={([yshift=-.5ex]current bounding box.center)}]
	\fill[fill=red!10!blue!20!gray!30!white] (0,0) circle (2cm);
	\draw (0,0) circle (2cm);
	\node at (2,0) {\textbullet};
	\node at (2.37,0) {$R$};

	\fill[fill=white] (0,0) circle (0.6cm);
	\draw (0,0) circle (0.6cm);
	\node at (0.6,0) {\textbullet};
	\node at (0.3,0) {$r$};

	\node at ([shift=(207:1.37cm)]0,0.3cm) {$z$};
	\node at ([shift=(187:1.50cm)]0,0.3cm) {$v$};
	\node at ([shift=(190:1.5cm)]0,0cm) {\textbullet};
\end{tikzpicture}
\,\,
\sim
R^{-L_0}Y(v,z)r^{L_0}
$$
For these values of $z$, we expect $R^{-L_0}Y(v,z)r^{L_0}$ to define a continuous (i.e., bounded) operator on the Hilbert space completion of $V$.
While this construction encodes the insertion operators of a VOA in an analytic language which is closer to that of conformal nets, the resulting operators are not local in the sense of algebraic conformal field theory.

A breakthrough idea of Henriques was to consider insertion operators localized in `degenerate' or `partially thin' annuli (see \cite{Henriques14}).
Specifically, we replace the operators $r^{L_0}$ and $R^{-L_0}$ with operators $A$ and $B$ corresponding to partially thin annuli supported in an interval $I$ of the unit circle.
We say that a unitary VOA $V$ has \emph{bounded localized vertex operators} if the von Neumann algebra $\cA_V(I)$ generated by insertions $BY(v,z)A$ supported in $I$ are bounded and local in the sense of ACFT, in which case $\cA_V(I)$ is a conformal net.
$$
\begin{tikzpicture}[baseline={([yshift=-.5ex]current bounding box.center)}]
	\coordinate (a1) at (150:1cm);
	\coordinate (b1) at (270:1cm);
	\coordinate (c1) at (210:1.5cm);
	\fill[red!10!blue!20!gray!30!white] (a1)  .. controls ++(250:.6cm) and ++(120:.4cm) .. (c1) .. controls ++(300:.4cm) and ++(190:.6cm) .. (b1) arc (270:510:1cm);
	\draw ([shift=(270:1cm)]0,0) arc (270:510:1cm);
	\coordinate (a) at (120:1cm);
	\coordinate (b) at (240:1cm);
	\coordinate (c) at (180:.25cm);
	\fill[fill=red!10!blue!20!gray!30!white] (0,0) circle (1cm);
	\fill[fill=white] (a)  .. controls ++(210:.6cm) and ++(90:.4cm) .. (c) .. controls ++(270:.4cm) and ++(150:.6cm) .. (b) -- ([shift=(240:1cm)]0,0) arc (240:480:1cm);
	\draw ([shift=(240:1cm)]0,0) arc (240:480:1cm);
	\draw (a) .. controls ++(210:.6cm) and ++(90:.4cm) .. (c);
	\draw (b) .. controls ++(150:.6cm) and ++(270:.4cm) .. (c);
	\draw (a) arc (120:150:1cm) (a1) .. controls ++(250:.6cm) and ++(120:.4cm) .. (c1);
	\draw (b1) .. controls ++(190:.6cm) and ++(300:.4cm) .. (c1);
	
%
	\node at (197:.9cm) {\textbullet};
	\node at (204:.7cm) {$z$};
	\node at (183:.9cm) {$v$};
%
%
%
	\draw (275:1.45cm) -- (276:1.65cm);
	\draw (120:1.4cm) -- (120:1.6cm);
	\draw (120:1.5cm) arc (120:200:1.7cm) arc (200:290:1.3cm);
	\node at (180:1.9cm) {\scriptsize{$I$}};
\end{tikzpicture}
\,\, \sim
BY(v,z)A \in \cA_V(I)
$$
In \cite{GRACFT1,TenerGRACFT2} it was shown that many VOAs have bounded localized vertex operators, including all WZW models, their tensor products, subtheories, and so on.
A key step was to interpret the degenerate insertion operators as arising from Segal CFT, thereby simultaneously linking the three frameworks.
The geometric part of this argument was developed in \cite{Ten16}, and the analytic part is explored further in \cite{PutinarTener18}.

If $M$ is a unitary $V$-module, then the $\cA_V$-representation $\pi^M$ associated to $M$  (if it exists) is the one which takes insertion operators to insertion operators. 
That is, $\pi^M(B Y(v,z) A) = B Y^M(a,z) A$, where we have allowed the degenerate annuli $B$ and $A$ to act on the Hilbert space completion of $M$ via the Virasoro action (see Section \ref{sec: BLVO}).
The correspondence $M \leftrightarrow \pi^M$ was explored in \cite{TenerGRACFT2}.

\subsubsection*{Translating fusion between VOAs and conformal nets}

In this article we extend the above correspondence between vertex operator algebras and conformal nets and address fusion products of representations.
Our approach is informed by the geometry of Segal CFT, and unlike other approaches (e.g. \cite{CKLW18,GuiUnitarityI,GuiUnitarityII}) deals exclusively with continuous operators. 

The first challenge that one encounters is that the definitions of fusion product for conformal nets and VOAs are very different in nature.
For conformal nets, the fusion product of representations $\pi$ and $\lambda$ is given by composition of DHR endomorphisms \cite{DHR71,FrReSc92}, or equivalently the Connes-Sauvageot relative tensor product, known as `Connes fusion' in this context \cite{Wa98} (see also Section \ref{sec: conformal nets}).
The construction proceeds by putting a new inner product on a subspace of $\cH_\pi \otimes_{\bbC} \cH_\lambda$, and building a representation $\pi \boxtimes \lambda$ on its Hilbert space completion.
The fusion product of two representations always exists, and is given explicitly, although in practice it is quite difficult to identify the resulting representation (i.e. as a direct sum of known representations).

The VOA fusion product of modules, on the other hand, is generally defined by a universal property as opposed to an explicit construction.
The universal property makes explicit reference to a choice of category of modules, and the correct choices of modules (and universal property) are not clear for poorly behaved unitary VOAs (such as Virasoro VOAs with central charge $c \ge 1$).

If one accepts the existence of a correspondence between conformal net representations and VOA modules then the Connes fusion representation $\pi^M \boxtimes \pi^N$ of $\cA_V$ should be of the form $\pi^K$ for some $V$-module $K$.
Moreover, this VOA module $K$ should be given by an explicit construction, since the fusion product of conformal net representations $\pi^M \boxtimes \pi^N$ is.

We give a translation of Connes' fusion into the language of VOAs in Section \ref{sec: positivity}, where we introduce the notion of a transport module $M \boxtimes_t N$ for a pair of unitary $V$-modules $M$ and $N$.
Transport modules appear implicitly in the work of Wassermann \cite{Wa98} for type A WZW models, and again in the work of Gui \cite{GuiUnitarityI, GuiUnitarityII} for strongly rational VOAs.
In Section \ref{sec: positivity} we develop basic theory of transport modules in a way which does not depend on any special properties of the VOA, and in Section \ref{sec: positivity from BLVO} we show that the framework of bounded localized vertex operators is a powerful tool for establishing their existence.
As an application, we show the modular categories associated to WZW VOAs and type $ADE$ discrete series $W$-algebras are unitary in the remaining open cases (Theorem \ref{thm: UMTC for WZW and W}).

A key feature of our approach is that we are able to obtain positivity and unitarity results for VOAs simply from the existence of conformal net representations $\pi^M$, without requiring any hard analysis on intertwining operators.
In previous approaches, a model-by-model analysis of intertwining operators was often necessary, and this analysis often relied on model specific features of these operators.
In contrast, the treatment of examples in Section \ref{sec: examples} does not require any technical details, and the methods easily apply to other models as well.
In fact, a proof that every irreducible representation of a conformal net is of the form $\pi^M$ seems well within reach of our methods.

\subsubsection*{Geometric analysis of Connes' fusion}

Transport modules are a translation of Connes' fusion from the language of conformal nets to VOAs, and we use the framework of bounded localized vertex operators to make rigorous connections between the two settings.
We show in Section \ref{sec: transport and fusion} that if $\pi^M$ and $\pi^N$ exist, and there is a transport module $M \boxtimes_t N$, then there is a natural and explicit unitary $U: \cH_M \boxtimes \cH_N \to \cH_{M \boxtimes_t N}$ between the Hilbert space of conformal net fusion and the Hilbert space of the VOA transport module.
In the present article, we do not show in full generality that $U$ is an isomorphism between $\pi^M \boxtimes \pi^N$ and $\pi^{M \boxtimes_t N}$ (although such a result seems within reach of current techniques; see Section \ref{sec: outlook}).
Instead, we identify a maximal submodule $M \boxtimes_{loc} N \subset M \boxtimes_t N$ such that $\pi^{M \boxtimes_{loc} N}$ is embedded in $\pi^M \boxtimes \pi^N$ via $U$.

This identification is powerful enough to deduce finiteness of the index of subfactors and rationality of conformal nets.
A representation $\pi$ of a conformal net is said to have finite index if the associated Jones-Wassermann subfactor $\pi_I(\cA(I)) \subset \pi_{I^\prime}(\cA(I^\prime))^\prime$ has finite index, and a conformal net is said to be rational if it has finitely many irreducible representations, each with finite index.
Over the past several decades, the problem of showing specific examples are rational has proven to be an extremely difficult one, and arguments have relied on specific features of the model under consideration.
The primary difficulty is establishing the finiteness of the index of subfactors, or equivalently the rigidity of the representation category of the net.
Section \ref{sec: fusion rules for bimodules} provides the first general purpose tools for this problem.

\begin{thmalpha}\label{thmalpha: finite index}
Let $V$ be simple unitary VOA with bounded localized vertex operators, let $W$ be a unitary (not necessarily conformal) subalgebra which is regular, and let $M$ be a $W$-submodule of $V$.
Then $\pi^M$ has finite index.
\end{thmalpha}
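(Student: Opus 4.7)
The plan is to verify Longo's characterization of finite statistical dimension \cite{Longo90}: an irreducible DHR representation of a conformal net has finite index if and only if it admits a conjugate representation. It thus suffices to construct a conjugate of $\pi^M$ in the category of $\cA_W$-representations, where $\cA_W \subset \cA_V$ is the (Möbius-covariant) subnet generated by the vertex operators of $W$. Since $W$ is regular, we may assume without loss of generality that $M$ is irreducible, the general case following by decomposing $M$ as a finite direct sum of irreducible $W$-submodules.

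Because $\Rep(W)$ is a rigid unitary modular tensor category, $M$ admits a contragredient $\bar M$; and since $V$ is simple and unitary, the invariant form on $V$ yields a $W$-module identification $V \cong V^*$, so $\bar M$ also appears as a $W$-submodule of $V$, and the corresponding representation $\pi^{\bar M}$ is realized on the appropriate isotypic subspace of $\cH_V$. Consider the VOA vertex operator $Y^V$ restricted to $M \otimes \bar M \hookrightarrow V \otimes V$ and postcomposed with the orthogonal projection onto the $W$-isotypic component of $V$: simplicity of $V$ and nondegeneracy of the $W$-invariant pairing $M \times \bar M \to \bbC$ ensure the resulting intertwining operator $\cY \in I\binom{W}{M \, \bar M}$ is nonzero, and symmetrically one obtains a nonzero $\cY' \in I\binom{W}{\bar M \, M}$. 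In $\Rep(W)$ these are exactly the evaluation data witnessing the duality $M \dashv \bar M$.

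The remaining task is to transfer this VOA-level duality into bounded intertwiners of $\cA_W$-representations solving the conjugate equations. This is where the BLVO framework is essential: combining Theorem~\ref{thmalpha: positivity and transport} with Proposition~\ref{prop: transport for regular VOAs} (which identifies $M \boxtimes_t \bar M \cong M \boxtimes_W \bar M$ in $\Rep(W)$, since $W$ is regular), together with Theorem~\ref{thmalpha: fusion} applied to the conjugate pair, produces embeddings $\pi^W \hookrightarrow \pi^M \boxtimes \pi^{\bar M}$ and $\pi^W \hookrightarrow \pi^{\bar M} \boxtimes \pi^M$. The resulting intertwiners $R$ and $\bar R$ satisfy the zig-zag equations because these reduce, under the net-VOA identification, to the corresponding identities in $\Rep(W)$, which are built into the rigid MTC structure. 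The main obstacle will be verifying the technical hypotheses of Theorem~\ref{thmalpha: fusion} — that the relevant creation operator is localized and that the vacuum summand $W \subset M \boxtimes_W \bar M$ lies in the local submodule $\boxtimes_{loc}$ rather than merely in $\boxtimes_t$ — but this should follow from the mutual locality of the vertex operators of $V$, which is exactly what the BLVO hypothesis on $V$ encodes.
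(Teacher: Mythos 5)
Your overall architecture — reduce to simple $M$, realize $\overline{M}$ as a $W$-submodule of $V$ (the paper uses $\theta M$), extract a nonzero localized intertwining operator from $Y^V$ restricted to $M\otimes W$ using the BLVO hypothesis on the ambient $V$, and then feed the resulting embedding of the vacuum into the Connes fusion to Longo's criterion — is exactly the paper's route (Corollary \ref{cor: finite index for regular in vacuum} via Theorems \ref{thm: finite index for regular} and \ref{thm: finite index general}). Your diagnosis of the technical crux is also right: the paper obtains $\cY_M^+ \in I_{loc}\binom{M}{M\,W}$ from \cite[Thm.~5.15]{Ten18ax} precisely because $M$ sits inside $V$ and $V$ has bounded localized vertex operators, and then passes to $\cY_{\overline{M}}^- = (\cY_{\overline{M}}^+)^\dagger \in I_{loc}$ via Proposition \ref{prop: dagger localized} and simplicity of $M$, which is what puts the vacuum summand inside $M \boxtimes_{loc} \overline{M}$.

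There is, however, a genuine gap in your final step. You propose to produce intertwiners $R,\bar R$ and verify the conjugate (zig-zag) equations by arguing that they ``reduce, under the net--VOA identification, to the corresponding identities in $\Rep(W)$.'' That reduction is not available: the framework of this paper does \emph{not} establish that the unitaries $U_I$ of Theorem \ref{thm: transport unitary} are compatible with the monoidal structures (associators, units) on the two sides — it only produces the single embedding $\pi^{M\boxtimes_{loc}N}\hookrightarrow \pi^M\boxtimes\pi^N$ of Theorem \ref{thm: voa and net fusion} — so there is no tensor functor through which the zig-zag identities could be transported. The repair is that you do not need the conjugate equations at all: the version of Longo's theorem the paper invokes, \cite[Thm.~4.1]{Longo90}, states that if $\pi$ and $\rho$ are irreducible and $\pi\boxtimes\rho$ contains the vacuum as a subsector, then both have finite index and are mutually contragredient. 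Hence the single embedding $\pi^W\hookrightarrow\pi^M\boxtimes\pi^{\overline{M}}$, which your argument does correctly produce, already finishes the proof; the attempted verification of the duality equations should be deleted rather than completed. (A minor additional caution: you should not invoke unitarity of the modular tensor category $\Rep(W)$, as that is among the conclusions of this paper rather than an available hypothesis; only rigidity, i.e.\ the nonvanishing of the evaluation $I\binom{W}{M\,\overline{M}}$, is needed, and that you obtain correctly from the simplicity of $V$.)
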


The condition of regularity imposed on $W$ in Theorem~\ref{thmalpha: finite index} is equivalent to rationality plus $C_2$-cofiniteness in the context of simple unitary VOAs \cite[Thm. 4.5]{ABD04}.
The class of regular VOAs includes all familiar unitary rational VOAs, such as unitary Virasoro minimal models and WZW models at positive integer level.

\begin{coralpha}
Let $V$ be a simple unitary VOA with bounded localized vertex operators and let $W$ be a unitary conformal subalgebra which is regular.
Then the inclusion $\cA_W \subset \cA_V$ has finite index, so that $\cA_W$ is rational if and only if $\cA_V$ is.
\end{coralpha}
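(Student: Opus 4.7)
The plan is to reduce to Theorem \ref{thmalpha: finite index} via the decomposition of $V$ as a $W$-module. Because $W$ shares a conformal vector with $V$, the $L_0$-grading on $V$ as a $W$-module coincides with its own, so every $L_0$-weight space of $V$ is finite-dimensional. Combined with the regularity of $W$---only finitely many isomorphism classes of irreducible unitary $W$-modules together with complete reducibility---this forces a \emph{finite} decomposition
\[
V \;\cong\; \bigoplus_{i=1}^{k} n_i\, M_i
\]
into irreducible unitary $W$-modules $M_i$ with finite multiplicities $n_i$: the number of isomorphism classes appearing is bounded by rationality of $W$, and each multiplicity is bounded by the finite dimension of the lowest-weight subspace of $V$ at the corresponding minimal conformal weight.

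Because the two stress-energy tensors agree, the partially thin annuli $A,B$ defining bounded localized vertex operators for $W$ act on the Hilbert space completion of $V$ (and of each $M_i$) via exactly the same Virasoro operators as those used for $V$ itself. Consequently each $\pi^{M_i}$ exists as a representation of $\cA_W$, and the restriction of the vacuum representation of $\cA_V$ to the subnet $\cA_W$ decomposes as $\bigoplus_i n_i\, \pi^{M_i}$. Theorem \ref{thmalpha: finite index} gives that every $\pi^{M_i}$ has finite index, and a finite direct sum of finite-index representations has finite statistical dimension. Via the standard identification of the Jones-Kosaki index of a subnet inclusion with the squared statistical dimension of its defining representation, we conclude that the inclusion $\cA_W \subset \cA_V$ has finite index.

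The equivalence of rationality for $\cA_W$ and $\cA_V$ then follows from the classical fact that a finite-index inclusion of conformal nets preserves rationality in both directions: a finite-index subnet of a rational net is rational, and a finite-index extension of a rational net is rational. Both statements are standard in the conformal net literature (Longo, Longo-Rehren, Kawahigashi-Longo-M\"uger), and apply directly once finite index is in hand.

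The main technical point I expect to be delicate is in step two: verifying that the correspondence $M \mapsto \pi^M$ is additive, and that $\pi^{V}$ (taken with respect to the $W$-module structure) coincides with the restriction of the vacuum representation of $\cA_V$ to $\cA_W$. Both facts are intuitively built into the characterization of $\pi^M$ via insertion operators, but they require a careful check that the partially thin annuli defining $\cA_W$ act on the Hilbert space completion of $V$ via the same Virasoro operators regardless of whether one views $V$ as a $W$-module or as a $V$-module. Fortunately, the framework of bounded localized vertex operators developed earlier in the paper and in \cite{Ten18ax} is designed precisely to support this kind of consistency check.
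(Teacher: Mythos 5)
Your argument is correct and is essentially the paper's own: the statement is the ``in particular'' clause of Corollary \ref{cor: rationality of nets}, which is deduced from Corollary \ref{cor: finite index for regular in vacuum} by likewise decomposing $V$ into finitely many simple $W$-submodules, applying Theorem \ref{thm: finite index for regular} to each, and then invoking \cite[Thm. 24]{Longo03} for the equivalence of rationality. The only cosmetic differences are that the paper records the conclusion as ``$\pi^V$ is a finite-index representation of $\cA_W$'' and passes to the subnet index via $\cA_W(I) \subset \cA_V(I) \subset \pi^V_{I^\prime}(\cA_W(I^\prime))^\prime$ rather than by summing statistical dimensions, and that the subnet index is $d\big(\pi^V|_{\cA_W}\big)$ rather than its square --- an imprecision in your last step that does not affect the finiteness claim.
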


The proof of this theorem is quite flexible, and can be adapted to other models if desired.

Looking forward, future VOA research will immediately provide additional information about the corresponding conformal nets via Theorem \ref{thmalpha: finite index}.
The framework presented here allows for passing information back and forth between conformal nets and VOAs with a minimum of technical overhead, and we expect that it will provide new opportunities and inspiration for collaboration between VOA and conformal net specialists.

\subsubsection*{Structure of the article}

In Sections \ref{sec: unitary VOA} and \ref{sec: conformal nets} we give standard background information on unitary VOAs and conformal nets, respectively.
Section \ref{sec: BLVO} gives an overview of bounded localized vertex operators, adapted from \cite{TenerGRACFT2}.
Section \ref{sec: positivity} introduces transport modules from a purely VOA-theoretic perspective and establishes basic results relating transport, positivity, and fusion products.
Section \ref{sec: fusion rules for bimodules} connects transport modules to Connes fusion of conformal net representations.
Our main theorems are proven in this section, exploiting this connection between VOAs and conformal nets.
Section \ref{sec: examples} applies our main theorems to families of examples, primarily WZW models and $W$-algebras, and also describes opportunities for future work to unify the fusion product theory of VOAs and conformal nets.

\subsubsection*{Acknowledgments}

The author would like to thank A. Henriques for explaining his geometric idea of conformal nets, without which this article could not have been written. The author would also like to thank V.F.R. Jones, M. Bischoff, T. Creutzig, B. Gui, and R. McRae for helpful discussions related to the article.
The author gratefully acknowledges the hospitality and support of the Max Planck Institute for Mathematics, Bonn, which played a crucial role in this project.
The author was also supported by ARC Discovery Project DP200100067 and an AMS-Simons Travel Grant.

\newpage

\section{Background}\label{sec: preliminaries}

\subsection{Unitary VOAs and their representations}\label{sec: unitary VOA}

We assume that the reader has some familiarity with unitary vertex operator algebras (VOAs) and their representation theory.
For more detailed background on these subjects, we suggest \cite{TenerGRACFT2,GuiUnitarityI,CKLW18, DongLin14}.
For an overview of fusion product theory for VOAs, see e.g. \cite{CreutzigKanadeMcRae2017ax} for a concise treatment (or alternatively \cite{HuangLepowsky13} and related articles).
For completeness and convenience of the reader, we will now provide the relevant definitions and more specific references.

\subsubsection{Unitary VOAs}

\begin{Definition}\label{def: VOA}
A  vertex operator algebra is given by:
\begin{itemize}
\item a vector space $V$
\item vectors $\Omega,\nu \in V$ called the \emph{vacuum vector} and the \emph{conformal vector}, respectively.
\item a state-field correspondence $Y:V  \to \End(V)[[x^{\pm 1}]]$, denoted 
\begin{equation}
Y(a,x) = \sum_{n \in \Z} a_{(n)} x^{-n-1}.
\end{equation}
Here $\End(V)[[x^{\pm 1}]]$ is the vector space of formal series with coefficients in $\End(V)$.
\end{itemize}
This data must satisfy:
\begin{itemize}
\item For every $a \in V$, we have $a_{(n)}\Omega = 0$ for $n \ge 0$ and $a_{(-1)}\Omega = a$.
\item $Y(\Omega,x) = 1_V$.
\item For every  $a,b \in V$ and every $m,k,n \in \Z$, we have the Borcherds (or Jacobi) identity:
$$
\sum_{j = 0}^\infty \binom{m}{j} \big(a_{(n+j)}b\big)_{(m+k-j)} = \sum_{j=0}^\infty (-1)^j \binom{n}{j}\left( a_{(m+n-j)}b_{(k+j)} 
-(-1)^{n}b_{(n+k-j)}a_{(m+j)}\right) 
$$
\item If we write $Y(\nu,x) = \sum_{n \in \Z} L_n x^{-n-2}$, then the $L_n$ give a representation of the Virasoro algebra. 
That is, 
$$
[L_m,L_n] = (m-n)L_{m+n} + \tfrac{c}{12} (m^3-m)\delta_{m,-n}1_V
$$
for a number $c \in \C$, called the \emph{central charge}.
\item If we write $V_\alpha = \ker (L_0 - \alpha 1_V)$, then we have a decomposition of $V$ as an algebraic direct sum
$$
V = \bigoplus_{\alpha \in \Z_{\ge 0}} V_\alpha
$$
with $\dim V_\alpha < \infty$.
\item For every $a \in V$ we have $[L_{-1}, Y(a,x)] = \frac{d}{dx} Y(a,x)$.
\end{itemize}
\end{Definition}
Note that the eigenvalues of $L_0$ are non-negative by assumption.
If $a \in V_\alpha$, then we say that $a$ is homogeneous, with conformal dimension $\alpha$.

The special cases of the Borcherds identity with $n=0$ is called the Borcherds commutator formula, and it says
$$
[a_{(m)}, Y(b,x)] = \sum_{j=0}^\infty \binom{m}{j} x^{m-j} Y(a_{(j)}b,x).
$$

There are natural notions of subalgebra, ideal, and homomorphism for VOAs, and a VOA with no ideals besides $\{0\}$ or $V$ is called simple (see \cite{FHL93} or \cite{CKLW18}).

\begin{Definition}\label{def: unitary VOA}
A \emph{unitary vertex operator algebra} is a VOA $V$ equipped with an inner product and an antilinear automorphism $\theta$ satisfying:
\begin{enumerate}
\item $\bip{b, Y(\theta a, \overline{x})c} = \bip{Y(e^{x L_1} (-x^{-2})^{L_0} a, x^{-1})b, c}$ for all $a,b,c \in V$.
\item $\ip{\Omega,\Omega} = 1$
\end{enumerate}
\end{Definition}
We write $\cH_V$ for the Hilbert space completion of $V$, consisting of vectors $\sum_{n \in \bbZ_{\ge 0}} v_n$ with $v_n \in V_n$ and $\sum \norm{v_n}^2 < \infty$.

A unitary VOA is simple if and only if $\dim V_0 = 1$, which means that simple unitary VOAs are of ``CFT type'' \cite[Prop. 5.3]{CKLW18}.
All VOAs considered in this article will be simple and unitary.
Some ways to construct new simple unitary VOAs from old ones include tensor products, fixed points under unitary automorphisms, or more generally unitary subalgebras and cosets with respect to unitary subalgebras (see \cite[\S5]{CKLW18} or \cite[\S2.2]{GRACFT1}).

\subsubsection{Unitary modules}

\begin{Definition}
Let $V$ be a vertex operator algebra.
A \emph{$V$-module} is given by a vector space $M$ along with a state-field correspondence $Y^M:V \to \End(M)[[x^{\pm 1}]]$, written
$$
Y^M(a,x) = \sum_{n \in \Z} a_{(n)}^M x^{-n-1},
$$
which is required to be linear, and satisfy the following additional properties.
\begin{itemize}
\item $Y^M(\Omega,x) = 1_M$.
\item For every $a \in V$ and $b \in M$, we have $a^M_{(n)}b = 0$ for $n$ sufficiently large.
\item For every homogeneous $a,b \in V$ and $m,k,n \in \Z$, the Borcherds(/Jacobi) identity holds:
\begin{align*}
\hspace{-.5cm}\sum_{j = 0}^\infty \binom{m}{j} \big(a_{(n+j)}b\big)^M_{(m+k-j)} = \sum_{j=0}^\infty (-1)^j \binom{n}{j}\left( a_{(m+n-j)}^Mb_{(k+j)}^M 
-(-1)^{n}  b_{(n+k-j)}^M a_{(m+j)}^M\right). 
\end{align*}
\item $M$ is graded by conformal dimensions.
That is, if $Y^M(\nu,x) =: \sum_{n \in \Z} L_n^M x^{-n-2}$ and we set $M_\alpha = \ker(L_0^M - \alpha 1_M)$ then we require that $\dim M_\alpha < \infty$ and
$$
M = \bigoplus_{\alpha \in \bbC} M_\alpha.
$$
\end{itemize}
\end{Definition}
Our definition of module is sometimes called a `strong' or `ordinary' module, in light of the requirement that $M$ be graded by finite-dimensional $L_0$-eigenspaces.
We restrict to this class of modules for convenience, although many of the ideas of this article apply in a broader context.

There are obvious notions of submodules and direct sums of $V$-modules, as well as $V$-module homomorphisms.
If $M$ has no proper, non-trivial submodules then it is called a \emph{simple} module.

We will be interested in unitary modules over unitary VOAs, which first appeared in \cite{DongLin14}.
\begin{Definition}
Let $V$ be a unitary VOA, and let $M$ be a $V$-module.
A sesquilinear form $\ip{ \, \cdot \, , \, \cdot \,}$ on $M$ is called \emph{invariant} if
\begin{equation}\label{eqnModuleInnerProductInvariance}
\bip{b,Y^M(\theta a,\overline{x})d} = \bip{Y^M(e^{x L_1} (-x^{-2})^{L_0} a, x^{-1})b,d},
\end{equation}
for all $a \in V$ and $b,d \in M$.
We call $M$ a \emph{unitary} module if it is equipped with an invariant inner product.
For the purpose of this article, we will also require that unitary modules have the property that only countably many of the weight spaces $M_\alpha$ are non-zero.
\end{Definition}
A unitary module $M$ has a Hilbert space completion $\cH_M$, and the requirement that only countably many weight spaces are non-zero is equivalent to requiring that $\cH_M$ be separable.
We impose this restriction to avoid considering non-separable Hilbert spaces.

Unitary modules provide positive energy representations of the Virasoro algebra, so $M_\alpha$ is non-zero only when $\alpha \ge 0$ (see \cite[Lem. 2.5]{DongLin14} or \cite[\S2]{TenerGRACFT2}).
There are natural notions of orthogonal direct sum and unitary isomorphism of unitary modules.
When talking about unitary VOAs and unitary modules, we will reserve the symbol $\bigoplus$ for orthogonal direct sums.
An infinite direct sum $\bigoplus M_i$ of non-zero unitary modules is again a unitary module if and only if the direct sum is a module (i.e. the $L_0$-eigenspaces are finite dimensional) and the collection of modules $M_i$ is countable.

Complete reducibility for unitary modules was shown in \cite[Lem. 2.5]{DongLin14}.
Since a simple unitary module has a unique invariant inner product up to scalar \cite[Prop. 2.19]{TenerGRACFT2}, this means that any unitary module $M$ can be orthogonally decomposed 
$$
M = \bigoplus M_i \otimes X_i
$$
where the $M_i$ are pairwise non-isomorphic simple unitary modules and the $X_i$ are finite-dimensional Hilbert spaces.

If $M$ is a unitary module, then the complex conjugate vector space $\overline{M}$ has a natural $V$-module structure given by 
$$
Y^{\overline{M}}(a,x)\overline{b} = \overline{Y^M(\theta a,x)b}
$$
where we have denoted by $b \mapsto \overline{b}$ the natural conjugate linear map $M \to \overline{M}$ (as well as its extension to the algebraic completion of $M$).
The reader is cautioned that $\overline{M}$ does not refer to the algebraic completion, as that notion will make only a brief appearance in this article.
An invariant inner product on $M$ can be understood as an isomorphism between $\overline{M}$ and the contragredient module $M^\prime$ (see \cite[Lem. 2.18]{TenerGRACFT2}).

\subsubsection{Intertwining operators}\label{sec: intertwining operators}

If $M$ and $N$ are vector spaces, we write $\cL(M,N)$ for the space of linear maps from $M$ to $N$, and $\cL(M,N)\{x\}$ for the space of all formal series 
$$
\sum_{n \in \C} a_{(n)} x^{-n-1}
$$
with $a_{(n)} \in \cL(M,N)$.
\begin{Definition}
Let $V$ be a VOA, and let $M, N$ and $K$ be $V$-modules.
An \emph{intertwining operator} of type $\binom{K}{M \, N}$ is a linear map $\cY: M \to \cL(N, K)\{x\}$, written
$$
\cY(a,x) = \sum_{n \in \C} a^{\cY}_{(n)} x^{-n-1},
$$
which satisfies the following properties. 
\begin{enumerate}
\item For every $a \in M$, $b \in N$ and $k \in \C$, we have $a^{\cY}_{(k+n)}b = 0$ for all sufficiently large $n \in \Z$.
\item For every $a \in M$, $\cY$ satisfies the $L_{-1}$-derivative property:
\begin{equation}
\cY(L_{-1} a, x) = \frac{d}{dx} \cY(a,x)
\end{equation}

\item For every homogeneous $a \in V$, $b \in M$, every $m,n \in \Z$, and every $k \in \C$, the Borcherds(/Jacobi) identity holds:
\begin{align*}
\hspace{-.5cm}\sum_{j = 0}^\infty \binom{m}{j} \big(a^M_{(n+j)}b\big)^\cY_{(m+k-j)} = \sum_{j=0}^\infty (-1)^j \binom{n}{j} \left(a_{(m+n-j)}^K b_{(k+j)}^\cY  
- (-1)^{n} b_{(n+k-j)}^\cY a_{(m+j)}^N\right).
\end{align*}
\end{enumerate}
We denote by $I \binom{K}{M \, N}$ the vector space of all intertwining operators of the indicated type.
\end{Definition}

The module operators $Y^M$ are intertwining operators of type $\binom{M}{V M}$, and $\dim I \binom{M}{V M} = 1$ precisely when $M$ is simple.

Specializing the Borcherds identity to $n=0$ produces the Borcherds commutator formula
$$
a_{(m)}^K \cY(b,x) - \cY(b,x) a_{(m)}^N = \sum_{j =0}^\infty \binom{m}{j}x^{m-j}\cY(a^M_{(j)}b,x).
$$
When $a = \nu$ is the conformal vector and $m=1$, we have 
\[
L^K_0\cY(b,x) - \cY(b,x)L^N_0 = x \cY(L_{-1}^M b, x) + \cY(L_0^M b, x) = x \frac{d}{dx} \cY(b,x) + \cY(L_0^M b, x).
\]
At the level of modes this says that 
\[
L_0^K b^\cY_{(n)} - b^\cY_{(n)} L_0^N = (-n-1 + \Delta_b) b ^\cY_{(n)}
\]
when $b$ is homogeneous with conformal dimension $\Delta_b$.
It follows that 
\[b_{(n)}^\cY N_\alpha \subset K_{\alpha-n-1+\Delta_b}.
\]
Thus if the conformal dimensions of all vectors in $M$, $N$, and $K$ are real (in particular, if they are unitary modules), then $b_{(n)}^\cY = 0$ unless $n \in \bbR$.
In the following we assume that this is the case.

\begin{Definition}
An intertwining operator $\cY \in I \binom{K}{M N}$ is called \emph{dominant} if
$$
K = \Span \{ a^\cY_{(n)} b : a \in M, b \in N, n \in \bbR\}.
$$
\end{Definition}
Of course any intertwining operator can be made dominant by replacing $K$ with a submodule.

If $M$, $N$, and $K$ are unitary modules and $\cY \in I \binom{K}{M N}$, then there is a unique intertwining operator $\cY^\dagger \in I \binom{N}{\overline{M} K}$ such that for all $a \in N$, $b \in M$, and $d \in K$
$$
\bip{\cY(\theta_{\overline{x}} a,\overline{x}^{-1})b, d} = 
\bip{b, \cY^\dagger(a,x)d}
$$
where $\theta_{\overline{x}} a = e^{\overline{x} L_1} e^{i \pi L_0} \overline{x}^{-2L_0} \overline{a}$.
Moreover, $\cY^{\dagger \dagger} = \cY$.
See \cite[\S1.3]{GuiUnitarityI} for more details.
As a consequence of the definition of unitary module and unitary VOA we have ${Y^M}^\dagger = Y^M$ after identifying $V \cong \overline{V}$ via $\theta$.

After module operators $Y^M$, the most basic intertwining operators associated to unitary modules are the annihilation and creation operators.

\begin{Definition}
Let $M$ be a unitary $V$-module.
The creation operator $\cY_M^+ \in I \binom{M}{M V}$ is given by 
$$
\cY_M^+(a,x)b = e^{x L_{-1}} Y^M(b,-x)a.
$$
The annihilation operator $\cY_M^- \in I \binom{V}{\overline{M} M}$ is given by $\cY_M^- = (\cY_M^+)^\dagger$.
\end{Definition}
The creation operator is obtained from $Y^M$ by the braiding \cite[\S5.4]{FHL93}, and does not require unitarity.
The creation and annihilation operators first appeared in the context of unitary VOAs in \cite[\S1]{GuiUnitarityI}.
Observe that creation and annihilation operators are compatible with direct sums of modules.

In this article we will be interested in analytic properties for intertwining operators, and the most fundamental thing we can ask is that an intertwining operator take values in the Hilbert space completion.

\begin{Definition}\label{def: Hilb intertwiner}
Let $V$ be a unitary VOA, let $M$, $N$, and $K$ be unitary $V$-modules. 
The space of intertwining operators with Hilbert space values, denoted $I_{Hilb} \vertex{K}{M}{N}$, consists of $\cY \in I \vertex{K}{M}{N}$ such that $\cY(a,z)b \in \cH_K$ whenever $0 < \abs{z} < 1$.
\end{Definition}

To interpret the condition $\cY(a,z)b \in \cH_K$, recall that in general $\cY(a,z)b$ lies in the algebraic completion $\prod_{n \in \bbR_{\ge 0}} K_{n}$.
Since the $K_{n}$ are pairwise orthogonal, we may realize $\cH_K$ as the subspace of the algebraic completion consisting of vectors whose components have square-summable norms.
It follows from basic properties of VOAs that $Y^M \in I_{Hilb}$ for any unitary $V$-module $M$ (as in \cite[Prop. 2.15]{GRACFT1}).

It is almost, but not quite, obvious that if $\cY \in I_{Hilb} \binom{K}{M N}$ then for $a \in M$ and $b \in N$ the expression $\cY(a,z)b$ defines a multi-valued holomorphic function from the punctured unit disk into $\cH_K$.
For completeness we include the details.

\begin{Lemma}\label{lem: Hilb intertwiners give holomorphic functions}
Let $V$ be a unitary VOA, let $M$, $N$, and $K$ be unitary $V$-modules, and let $\cY \in I_{Hilb} \vertex{K}{M}{N}$.
Then for every $a \in M$ and $b \in N$, the map $z \mapsto \cY(a,z)b$ defines a multi-valued holomorphic function on the punctured unit disk $\{1 > \abs{z} > 0\}$ with values in $\cH_K$, equipped with the norm topology.
\end{Lemma}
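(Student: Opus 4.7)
The plan is to exploit orthogonality of the conformal weight spaces of $K$ to express $\cY(a,z)b$ as a series of $\cH_K$-valued holomorphic functions that converges uniformly on compact subsets of the universal cover of the punctured disk, and then to invoke the standard fact that a uniform-on-compacts limit of Banach-space-valued holomorphic functions is holomorphic. By linearity it suffices to treat homogeneous $a \in M_{d_a}$ and $b \in N_{d_b}$. A routine application of the Borcherds commutator formula with the conformal vector (specializing $[L_0, \cY(a,x)] = x\tfrac{d}{dx}\cY(a,x) + \cY(L_0 a, x)$) gives $[L_0, a^\cY_{(n)}] = (d_a - n - 1) a^\cY_{(n)}$, so $a^\cY_{(n)}b$ lies in the weight space $K_{d_a + d_b - n - 1}$. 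Writing $v_\alpha := a^\cY_{(d_a + d_b - \alpha - 1)}b \in K_\alpha$, we obtain, for each branch of $\log z$, the formal identity
\begin{equation*}
\cY(a,z)b = \sum_\alpha v_\alpha z^{\alpha - d_a - d_b},
\end{equation*}
where $\alpha$ ranges over the countable set of conformal weights of $K$.

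Since the $K_\alpha$ are pairwise orthogonal and $\cH_K$ is their Hilbert direct sum, the assumption $\cY(a,z)b \in \cH_K$ is equivalent to the Parseval-style bound
\begin{equation*}
\sum_\alpha \snorm{v_\alpha}^2 \sabs{z}^{2(\alpha - d_a - d_b)} < \infty \qquad \text{for every } 0 < \sabs{z} < 1.
\end{equation*}
Fix a point $z_0$ on the universal cover of the punctured disk, choose a simply connected neighborhood $U$ of $z_0$ on which a branch of $\log$ is defined, and pick $r$ with $\sabs{z_0} < r < 1$ and $\sabs{z} \leq r$ for all $z \in U$. For $\alpha \geq d_a + d_b$ one has $\sabs{z}^{2(\alpha - d_a - d_b)} \leq r^{2(\alpha - d_a - d_b)}$ on $U$, so the hypothesis applied at $\sabs{z}=r$ supplies a convergent dominating majorant. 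Enumerating the weights and setting $S_H(z) := \sum_{\alpha \leq H} v_\alpha z^{\alpha - d_a - d_b}$, orthogonality gives
\begin{equation*}
\bnorm{\cY(a,z)b - S_H(z)}^2 = \sum_{\alpha > H} \snorm{v_\alpha}^2 \sabs{z}^{2(\alpha - d_a - d_b)} \longrightarrow 0
\end{equation*}
uniformly for $z \in U$.

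Each $S_H$ is a finite $\C$-linear combination of $\cH_K$-valued holomorphic functions of $z \in U$, hence holomorphic; therefore so is its uniform limit $\cY(a,\cdot)b$, by the Banach-valued Weierstrass theorem (provable via Morera, via the Cauchy integral formula, or by testing against $\cH_K^*$ and using the scalar case). Since $z_0$ was arbitrary, this yields a holomorphic $\cH_K$-valued function on every simply connected subset of the punctured disk equipped with a chosen branch of $\log$, which is precisely the sense of ``multi-valued holomorphic'' in the statement. I do not anticipate any genuine obstacle; the only real care required is bookkeeping, namely tracking that real (rather than integral) conformal weights force the use of $\log z$ and thus the passage to the universal cover. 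The analytic content reduces entirely to orthogonal summation plus the Banach-space uniform-limit principle.
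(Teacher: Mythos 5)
Your proof is correct and reaches the conclusion by a genuinely different route from the paper's. The paper first invokes the weak-implies-strong holomorphy principle for Banach-space-valued functions, reducing the claim to holomorphy of the scalar functions $\ip{\cY(a,z)b,\xi}$ for arbitrary $\xi \in \cH_K$ (with a footnote warning that testing only against finite-energy vectors $d \in K$ does not suffice), and then argues that the resulting scalar series, converging pointwise on the whole punctured disk, must converge absolutely and locally uniformly. You instead work directly with the $\cH_K$-valued series: the same Parseval identity coming from orthogonality of the weight spaces gives locally uniform norm convergence of the partial sums, and you conclude by the vector-valued Weierstrass theorem. Your route is arguably cleaner: it sidesteps the subtlety flagged in the paper's footnote, and the ``usual arguments'' the paper gestures at for locally uniform convergence of a generalized power series are replaced by an explicit domination of the tail by its value at $\abs{z}=r$. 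The one point to tighten is your assertion that each $S_H$ is a \emph{finite} linear combination: under the paper's definition of unitary module, the conformal weights of $K$ lying below a given bound $H$ need not be finite in number (e.g.\ an infinite orthogonal sum of simples whose lowest weights accumulate below $H$). This is harmless --- for $\alpha \le H$ the exponents $\alpha - d_a - d_b$ are bounded, so $\abs{z}^{2(\alpha - d_a - d_b)}$ is bounded above and below on a compact annulus and the same Parseval bound shows $S_H$ is itself a locally uniform limit of finite sums, hence holomorphic --- but as written that step is stated slightly too strongly.
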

\begin{proof}
By standard results \cite[Ch. 5]{TaylorLay} it suffices to prove that $\ip{\cY(a,z)b,\xi}$ is holomorphic for every $\xi \in \cH_K$.\footnote{Note that it does not suffice to prove that $\ip{\cY(a,z)b,d}$ is holomorphic for every $d \in K$, which necessitates additional care. An alternate approach would proceed by showing that $\norm{\cY(a,z)b}$ is locally bounded.}
It also suffices to establish the result when $a$ and $b$ are homogeneous.
By definition, since $\cY \in I_{Hilb}$ we have
$$
\norm{\cY(a,z)b}^2 = \sum_{n \in \bbR} \|{a_{(n)}^\cY b z^{-n-1}}\|^2
$$
and thus $\sum_{n \in \bbR} a_{(n)}^{\cY} b z^{-n-1}$ converges in $\cH_K$ to $\cY(a,z)b$.
It follows that for every $z$ in the punctured unit disk we have 
$$
\bip{\cY(a,z)b,\xi} = \sum_{n \in \bbR} \bip{a_{(n)}^\cY b,\xi} z^{-n-1}. 
$$
with pointwise absolute convergence.
Note that $a_{(n)}^\cY b = 0$ for $n \gg 0$.
By a straightforward application of the Weierstrass M-test this series converges uniformly on any neighborhood of $z$ which has been equipped with a branch of the logarithm and which is compactly contained in the punctured unit disk.
Thus $\ip{\cY(a,z)b,\xi}$ is locally given by a single-valued holomorphic function.
\end{proof}

\subsubsection{Modes of intertwining operators as unbounded operators}\label{sec: unbounded operators}

We briefly present the basic notions of unbounded operators on Hilbert spaces.
More detail can be found in any standard functional analysis textbook (e.g. \cite{AnalysisNOW}).

An unbounded operator on a Hilbert space $\cH$ is a pair $(x,D)$ where $D \subseteq \cH$ is a subspace of $\cH$ (generally required to be dense), and $x:D \to \cH$ is a linear map.
The space $D$ is called the domain of $x$.
An unbounded operator is called closed if its graph $\Gamma := \{ (\xi,x\xi) : \xi \in D\}$ is a closed subspace of $\cH \oplus \cH$, and it is called closable if the closure $\overline{\Gamma}$ is the graph of a (necessarily closed) operator, which is called the closure of $x$.
A dense subspace $D^\prime \subset D$ is called a core for a closed operator $x$ if the closure of $x|_{D^\prime}$ is equal to $x$.

The adjoint of an unbounded operator $(D,x)$ is the operator $(D^*,x^*)$ with domain given by 
$$
D^* = \{ \xi \in \cH : D \ni \eta \mapsto \ip{\xi,x\eta} \mbox{ is bounded}\}.
$$
We then use the Riesz representation theorem to characterize $x^*\xi$ by $\ip{x^*\xi,\eta} = \ip{\xi, x\eta}$ for all $\xi \in D^*$ and $\eta \in D$.
An operator $(D,x)$ is closable if and only if $D^*$ is dense.

It was observed in \cite[\S6]{CKLW18} that if $V$ is a unitary VOA and we regard $v_{(n)}$ as an unbounded operator with domain $V$, then $V$ is also contained in the domain of ${v_{(n)}}^*$.
Thus the modes $v_{(n)}$ are closable operators.
The same argument shows that the modes of a unitary module $v_{(n)}^M$ are closable.

The reader is cautioned that subtleties regarding domains of unbounded operators are somewhat notorious.
It will be an extremely important fact for us (Lemma \ref{lem: intertwiner image is a core}) that if $\cY \in I_{Hilb}\binom{K}{M N}$ is dominant then $\Span \{ \cY(a,z)b : a \in M, b \in N\}$ is a core for the closure of $v_{(n)}^K$ whenever $0 < \abs{z} < 1$.

\subsection{Conformal nets and Connes' fusion}\label{sec: conformal nets}

For a detailed introduction to conformal nets, we suggest \cite[\S1-3]{CKLW18}.
To fix notation we give the definition and basic notions.
In this article, all Hilbert spaces are assumed separable.

We denote by $\Diff_c(S^1)$ a certain central extension of $\Diff(S^1)$, the group of orientation preserving diffeomorphisms of the circle, by $U(1)$.
Representations of $\Diff_c(S^1)$ in which the central $U(1)$ acts standardly (i.e. $w \in U(1)$ acts as multiplication by $w$) are the same thing as projective representations of $\Diff(S^1)$ with central charge $c$ (see \cite[\S2.2]{HenriquesColimits} and \cite[\S II.2]{KhesinWendt}).
We will use the term ``representation of $\Diff_c(S^1)$'' to mean a representation in which the central $U(1)$ acts standardly.

For an interval $I \subset S^1$, we denote by $\Diff(I)$ the subgroup of $\Diff(S^1)$ consisting of diffeomorphisms which act as the identity outside of $I$, and write $\Diff_c(I)$ for the central extension of $\Diff(I)$ obtained by pulling back the extension $\Diff_c(S^1)$.
There is a canonical embedding 
$$
\Mob(\bbD) := PSU(1,1) \hookrightarrow \Diff_c(S^1).
$$
We denote by $\Diff_c^{(\infty)}(S^1)$ the $\bbZ$-central extension of $\Diff_c(S^1)$ reflecting the covering $\bbR \to S^1$, and by $\Mob^{(\infty)}(\bbD)$ the induced central extension of $\Mob(\bbD)$.
Further details on these extensions may be found in \cite{Weiner05, CKLW18, HenriquesColimits}.

A conformal net is a $\Diff(S^1)$-covariant family of von Neumann algebras indexed by intervals of the circle, where an \emph{interval} $I \subset S^1$ is an open, connected, non-empty, non-dense subset.
We denote by $\cI$ the set of all intervals.
If $I \in \cI$, we denote by $I^\prime$ the complementary interval.
\begin{Definition}
A conformal net of central charge $c$ is given by the data:
\begin{enumerate}
\item A Hilbert space $\cH$.
\item A strongly continuous unitary representation $U:\Diff_c(S^1) \to \cU(\cH)$.
\item For every $I \in \cI$, a von Neumann algebra $\cA(I) \subset \cB(\cH)$.
\end{enumerate}
The data is required to satisfy:
\begin{enumerate}
\item If $I,J \in \cI$ and $I \subset J$, then $\cA(I) \subset \cA(J)$.
\item If $I,J \in \cI$ and $I \cap J = \emptyset$, then $[\cA(I), \cA(J)] = \{0\}$.
\item $U(\gamma)\cA(I)U(\gamma)^* = \cA(\gamma(I))$ for all $\gamma \in \Diff_c(S^1)$, and $U(\gamma)xU(\gamma)^* = x$ when $x \in \cA(I)$ and $\gamma \in \Diff_c(I^\prime)$.
\item There is a unique (up to scalar) unit vector $\Omega \in \cH$, called the \emph{vacuum vector}, which satisfies $U(\gamma)\Omega = \Omega$ for all $\gamma \in \Mob(\bbD)$.
This vacuum vector is required to be cyclic for the von Neumann algebra $\cA(S^1):=\bigvee_{I \in \cI} \cA(I)$.
\item The generator $L_0$ of the one-parameter group $U(r_{\theta})$ is positive.
\end{enumerate}
\end{Definition}

Some of the key properties of conformal nets are (see \cite[\S3.3]{CKLW18}):
\begin{itemize}
\item (Haag duality) $\cA(I^\prime) = \cA(I)^\prime$
\item (Reeh-Schlieder) $\cH = \overline{\cA(I)\Omega}$ for every $I \in \cI$.
\item (Additivity) If a family of intervals $I_i$ satisfies $I = \bigcup_i I_i$, then $\cA(I)$ is generated as a von Neumann algebra by the subalgebras $\cA(I_i)$.
\item $\cA(I)$ is a type III factor for every interval $I \in \cI$, unless $\cH = \bbC$.
\end{itemize}

\begin{Definition}
A \emph{representation} of a conformal net $\cA$ is a Hilbert space $\cH_\pi$ and a family of representations (i.e. normal $*$-homomorphisms) $\pi_I:\cA(I) \to \cB(\cH_\pi)$, indexed by $I \in \cI$, which satisfy $\pi_{I}|_J = \pi_J$ when $J \subset I$.
\end{Definition}
The defining representation $\cH$ of $\cA$ is called the vacuum sector.
We point out that when $\cH_{\pi}$ is separable (as it always will be in this article) the normality of $\pi_I$ is automatic \cite[Thm. V.5.1]{TakesakiTOAI}.

By Haag duality, $U(\gamma) \in \cA(I)$ when $\gamma \in \Diff_c(I)$, and so given a representation of $\cA$ we obtain strongly continuous representations $\pi_I \circ U$ of every $\Diff_c(I)$ which are compatible with inclusions $I \subseteq J$.
By \cite[Thm. 12]{HenriquesColimits}, $\Diff_c^{(\infty)}(S^1)$ is the colimit of the $\Diff_c(I)$ along the inclusions $\Diff_c(I_1) \hookrightarrow \Diff_c(I_2)$ in the category of topological groups.
Thus the representations $\pi_I \circ U$ assemble to a strongly continuous representation 
$U^\pi$ of $\Diff_c^{(\infty)}(S^1)$ on $\cH_\pi$ (this was originally proven for irreducible representations in \cite{DFK04}).
The representation $\pi$ is covariant with respect to $U^\pi$ (as in \cite[Lem. 3.1]{KaLo04}), meaning that
$$
 U^\pi(\gamma) \pi_I(x) U^\pi(\gamma)^* = \pi_{\gamma(I)}(U(\gamma)xU(\gamma)^*)
$$
for all $\gamma \in \Diff_c^{(\infty)}(S^1)$.
Using \cite[Thm. 3.8]{Weiner06} one can show that the generator of rotation $U^\pi(r_\theta)$ has positive energy (see \cite[\S2.2]{TenerGRACFT2}).

If $\pi$ and $\lambda$ are representations of $\cA$, and $I \in \cI$, then the local intertwiners are given by
$$
\Hom_{\cA(I)}(\cH_\pi,\cH_\lambda) := \{ x \in \cB(\cH_\pi,\cH_\lambda) \, : \, x\pi_I(y) = \lambda_I(y)x  \mbox{ for all } y \in \cA(I)\}.
$$
Haag duality says that in the vacuum sector we have $\End_{\cA(I)}(\cH) = \cA(I^\prime)$.

We now discuss `fusion' products of representations of a conformal net.
Often this is done in the context of localized endomorphisms and DHR theory \cite{DHR71,FrReSc92,BKLR15}.
For our purposes it will be more convenient to use the ``Connes' fusion'' formulation of the fusion product first used by Wassermann \cite{Wa98} and put on solid categorical footing by Gui \cite{gui21categoricalextension}.

Given a choice of interval $I \in \cI$, the corresponding fusion product $\cH_\pi \boxtimes_I \cH_\lambda$ of two representations is the Hilbert space completion/quotient of the algebraic tensor product $\Hom_{\cA(I^\prime)}(\cH, \cH_\pi) \otimes_\bbC \Hom_{\cA(I)}(\cH,\cH_\lambda)$ with respect to the inner product
$$
\ip{x_1 \otimes y_1, x_2 \otimes y_2} = \ip{y_2^*y_1x_2^*x_1\Omega,\Omega}.
$$
We denote by $x \boxtimes_I y$ the image of $x \otimes y$ in $\cH_\pi \boxtimes_I \cH_\lambda$.
Observe that $\cA(I)$ and $\cA(I^\prime)$ act on $\cH_\pi \boxtimes \cH_\lambda$ by 
$$
(\pi \boxtimes_I \lambda)_I(r)(x \boxtimes_I y) = \pi_I(r) x \boxtimes_I y
$$
and
$$
(\pi \boxtimes_I \lambda)_{I^\prime}(r)(x \boxtimes_I y) = x \boxtimes_I \lambda_{I^\prime}(r)y.
$$

We embed $\Hom_{\cA(I)}(\cH,\cH_\lambda) \hookrightarrow \cH_\lambda$ by $y \mapsto y\Omega$.
The map $x \otimes y \mapsto x \boxtimes_I y$ extends continuously to a map $\Hom_{\cA(I)^\prime}(\cH,\cH_\pi) \otimes_\C \cH_\lambda \to \cH_\pi \boxtimes_I \cH_\lambda$, which we again denote by $\boxtimes_I$.
We also have a continuous extension to a map $\cH_\pi \otimes_\bbC \Hom_{\cA(I)}(\cH,\cH_\lambda) \to \cH_\pi \boxtimes_I \cH_\lambda$.
If $r \in \cA(I)$, we have $x r \boxtimes_I \xi = x \boxtimes_I \lambda_I(r)\xi$ for all $\xi \in \cH_\lambda$.

We want to define actions of $\cA(J)$ on $\cH_\pi \boxtimes \cH_\lambda$ for all intervals $J$ in such a way as to construct a representation of $\cA$.
When the Hilbert spaces are separable, the usual DHR argument shows the existence of such a representation.
If one has the equivalence of local and global intertwiners for localized endomorphisms (e.g. for strongly additive nets or when the sectors have finite index \cite[Thm. 2.3]{GuidoLongo96}) then such an extension is unique, but it is not clear that this is always the case.

Instead of the DHR argument, we follow the approach in \cite[\S2]{gui21categoricalextension} which gives a natural $\cA$-representation structure on $\cH_\pi \boxtimes_I \cH_\lambda$.
To describe this representation, we observe that when $\tilde I \subset I$ is a subinterval, the inclusion $\Hom_{\cA(\tilde I^\prime)}(\cH,\cH_\pi) \subset \Hom_{\cA(I^\prime)}(\cH,\cH_\pi)$ extends to a unitary map
$$
U_{I \leftarrow \tilde I} : \cH_\pi \boxtimes_{\tilde I} \cH_\lambda \to \cH_\pi \boxtimes_{I} \cH_\lambda.
$$
Gui showed that there is a (necessarily unique) way to endow each $\cH_\pi \boxtimes_I \cH_\lambda$ with the structure of a sector, compatible with the natural actions of $\cA(I)$ and $\cA(I^\prime)$, in such a way that the $U_{I \leftarrow \tilde I}$ are isomorphisms of sectors.
We denote this representation $\pi \boxtimes_I \lambda$, or $\pi \boxtimes \lambda$ when the particular interval $I$ is not important.
In the same article, Gui also shows that this `Connes' fusion' tensor product makes the category of $\cA$-representations into a braided tensor category.

\subsection{Bounded localized vertex operators}\label{sec: BLVO}

In \cite{GRACFT1,TenerGRACFT2} we introduced the notion of \emph{bounded localized vertex operators} as a way of formalizing the relationship between unitary VOAs and conformal nets (an alternate approach to the same problem was given in \cite{CKLW18}).
In these articles we gave definitions of what it means for a conformal net $\cA$ to come from a unitary VOA $V$, and what it means for an $\cA$-representation to come from a unitary $V$-module.
We will now describe the aspects of this relationship which are necessary for this article, and refer the reader to \cite{TenerGRACFT2} for more details.

The Segal-Neretin semigroup of annuli consists of compact Riemann surfaces which are isomorphic to annuli, equipped with boundary parametrizations \cite{SegalDef, Neretin90}.
Sufficiently nice representations of $\Diff_c^{(\infty)}(S^1)$ extend to representations of a central extension of the semigroup of annuli.
The annuli $\{1 > \abs{z} > r\}$ and $\{ R > \abs{z} > 1\}$ act by $r^{L_0}$ and $R^{-L_0}$, respectively.
If $V$ is a VOA, the operator $R^{-L_0}Y(v,z)r^{L_0}$ corresponds to inserting a state inside the annulus $\{R > \abs{z} > r\}$.
In general this should be a bounded operator.
We depict this situation:
$$
\hspace{-.1in}
\begin{tikzpicture}[scale=0.55,baseline={([yshift=-.5ex]current bounding box.center)}]
	\fill[fill=red!10!blue!20!gray!30!white] (0,0) circle (2cm);
	\draw (0,0) circle (2cm);
	\node at (2,0) {\textbullet};
	\node at (2.39,0) {$R$};

	\fill[fill=white] (0,0) circle (1cm);
	\draw (0,0) circle (1cm);
	\node at (1,0) {\textbullet};
	\node at (0.7,0) {$1$};
\end{tikzpicture}
\,\, \sim R^{-L_0},
\qquad
\begin{tikzpicture}[scale=0.55,baseline={([yshift=-.5ex]current bounding box.center)}]
	\fill[fill=red!10!blue!20!gray!30!white] (0,0) circle (1cm);
	\draw (0,0) circle (1cm);
	\node at (1,0) {\textbullet};
	\node at (1.3,0) {$1$};

	\fill[fill=white] (0,0) circle (0.6cm);
	\draw (0,0) circle (0.6cm);
	\node at (0.6,0) {\textbullet};
	\node at (0.3,0) {$r$};
\end{tikzpicture}
\,\, \sim r^{L_0}, 
\qquad
\begin{tikzpicture}[scale=0.55,baseline={([yshift=-.5ex]current bounding box.center)}]
	\fill[fill=red!10!blue!20!gray!30!white] (0,0) circle (2cm);
	\draw (0,0) circle (2cm);
	\node at (2,0) {\textbullet};
	\node at (2.37,0) {$R$};

	\fill[fill=white] (0,0) circle (0.6cm);
	\draw (0,0) circle (0.6cm);
	\node at (0.6,0) {\textbullet};
	\node at (0.3,0) {$r$};

	\node at ([shift=(207:1.37cm)]0,0.3cm) {$z$};
	\node at ([shift=(187:1.50cm)]0,0.3cm) {$v$};
	\node at ([shift=(190:1.5cm)]0,0cm) {\textbullet};
\end{tikzpicture}
\,\,
\sim
R^{-L_0}Y(v,z)r^{L_0}
$$
While the field $Y(v,z)$ is local in the Wightman sense, the operator $R^{-L_0}Y(v,z)r^{L_0}$ encoding the insertion of $v$ in the annulus is not local in the sense of algebraic conformal field theory.
That is, if $\cA_V$ is the conformal net which is associated to the same CFT, then the insertion in the annulus $\{R > \abs{z} > r\}$ should not lie in any $\cA_V(I)$.

A breakthrough idea of Henriques was to extend the semigroup of annuli to also include `degenerate' (or `partially thin') annuli, where the incoming boundary and outgoing boundary are allowed to overlap \cite{Henriques14}.
If one knows that the representation of the semigroup of annuli on $\cH_V$ extends to this larger semigroup, one can consider operators which insert states in `local' annuli.
Here, an annulus is localized in an interval $I$ if the incoming and outgoing boundary parametrizations agree on $I^\prime$.

Specifically, we wish to consider partially thin annuli $\Sigma$ whose outgoing boundary lies in $\{\abs{z} \ge 1\}$ and whose incoming boundary lies in $\{\abs{z} \le 1\}$.
We divide $\Sigma$ into two annuli, $B := \Sigma \cap \{\abs{z} \ge 1\}$ and $A := \Sigma \cap \{\abs{z} \le 1\}$.
The unit circle is the incoming boundary of $B$ and the outgoing boundary of $A$, and we parametrize both by the identity.
If $\pi$ is the representation of the semigroup of partially thin annuli on the positive energy representation $\cH_V$, then the operator which inserts the state $v$ at $z \in \Sigma$ is $\pi(B)Y(v,z)\pi(A)$.
We depict the situation as follows:
$$
\begin{tikzpicture}[baseline={([yshift=-.5ex]current bounding box.center)}]
	\coordinate (a) at (150:1cm);
	\coordinate (b) at (270:1cm);
	\coordinate (c) at (210:1.5cm);
	\draw (0,0) circle (1cm);
	\fill[red!10!blue!20!gray!30!white] (a)  .. controls ++(250:.6cm) and ++(120:.4cm) .. (c) .. controls ++(300:.4cm) and ++(190:.6cm) .. (b) arc (270:510:1cm);
	\filldraw[fill=white] (0,0) circle (1cm);
	\draw ([shift=(270:1cm)]0,0) arc (270:510:1cm);
	\draw (a) .. controls ++(250:.6cm) and ++(120:.4cm) .. (c);
	\draw (b) .. controls ++(190:.6cm) and ++(300:.4cm) .. (c);
\end{tikzpicture}
\,\, \sim B,
\qquad
\begin{tikzpicture}[baseline={([yshift=-.5ex]current bounding box.center)}]
	\coordinate (a) at (120:1cm);
	\coordinate (b) at (240:1cm);
	\coordinate (c) at (180:.25cm);
	\fill[fill=red!10!blue!20!gray!30!white] (0,0) circle (1cm);
	\draw (0,0) circle (1cm);
	\fill[fill=white] (a)  .. controls ++(210:.6cm) and ++(90:.4cm) .. (c) .. controls ++(270:.4cm) and ++(150:.6cm) .. (b) -- ([shift=(240:1cm)]0,0) arc (240:480:1cm);
	\draw ([shift=(240:1cm)]0,0) arc (240:480:1cm);
	\draw (a) .. controls ++(210:.6cm) and ++(90:.4cm) .. (c);
	\draw (b) .. controls ++(150:.6cm) and ++(270:.4cm) .. (c);
\end{tikzpicture}
\,\, \sim A,
\qquad
\begin{tikzpicture}[baseline={([yshift=-.5ex]current bounding box.center)}]
	\coordinate (a1) at (150:1cm);
	\coordinate (b1) at (270:1cm);
	\coordinate (c1) at (210:1.5cm);
	\fill[red!10!blue!20!gray!30!white] (a1)  .. controls ++(250:.6cm) and ++(120:.4cm) .. (c1) .. controls ++(300:.4cm) and ++(190:.6cm) .. (b1) arc (270:510:1cm);
	\draw ([shift=(270:1cm)]0,0) arc (270:510:1cm);
	\coordinate (a) at (120:1cm);
	\coordinate (b) at (240:1cm);
	\coordinate (c) at (180:.25cm);
	\fill[fill=red!10!blue!20!gray!30!white] (0,0) circle (1cm);
	\fill[fill=white] (a)  .. controls ++(210:.6cm) and ++(90:.4cm) .. (c) .. controls ++(270:.4cm) and ++(150:.6cm) .. (b) -- ([shift=(240:1cm)]0,0) arc (240:480:1cm);
	\draw ([shift=(240:1cm)]0,0) arc (240:480:1cm);
	\draw (a) .. controls ++(210:.6cm) and ++(90:.4cm) .. (c);
	\draw (b) .. controls ++(150:.6cm) and ++(270:.4cm) .. (c);
	\draw (a) arc (120:150:1cm) (a1) .. controls ++(250:.6cm) and ++(120:.4cm) .. (c1);
	\draw (b1) .. controls ++(190:.6cm) and ++(300:.4cm) .. (c1);
	
%
	\node at (197:.9cm) {\textbullet};
	\node at (204:.7cm) {$z$};
	\node at (183:.9cm) {$v$};
%
%
%
\end{tikzpicture}
\,\, \sim
\pi(B)Y(v,z)\pi(A)
$$
We now attempt to construct a conformal net $\cA_V$ whose local algebras $\cA_V(I)$ are generated by operators of the form $\pi(B)Y(v,z)\pi(A)$, as $v$ runs over all of the states of $V$, $(B,A)$ runs over all pairs $(B, A)$ localized in $I$, and $z \in \Sigma = B \circ A$.

At present, the theory of the semigroup of degenerate annuli and its representations has not been completed, although we hope to take up this question in future work with Henriques.
What we do have is a way of assigning bounded operators to certain degenerate annuli, and direct verification that these operators behave in many of the ways that one would expect if they did come from a semigroup representation.
We do not wish to focus too much on the specifics of these operators, and so we abstract out certain key properties that they satisfy.
We call the resulting structure a \emph{system of generalized annuli}.
Just below we will introduce systems of generalized annuli, and then later discuss how to carry out the construction of conformal nets from VOAs in this context.

\subsubsection{Systems of generalized annuli}


For $c \in \{ 1 - \frac{6}{(m+2)(m+3)} : m \ge 1 \} \cup [1,\infty)$, let $L(c,0)$ be the unitary irreducible representation of the Virasoro algebra with central charge $c$ and let $\cH_{c,0}$ be its Hilbert space completion.
For every such $c$ there is a strongly continuous unitary representation $U_{c,0}$ of $\Diff_c(S^1)$ on $\cH_{c,0}$, and the local diffeomorphisms generate a conformal net $\cA_c$ called the Virasoro net of central charge $c$.
For more details, see \cite[\S2.4]{Carpi04} and references therein.

As described above, the Hilbert spaces $\cH_{c,0}$ carry (projective) representations of the Segal-Neretin semigroup of annuli, and in future work with Henriques we will show that these representations extend to a larger semigroup containing ``partially thin'' annuli (with the incoming and outgoing boundary allowed to coincide).
This extended representation would provide a one-dimensional space of bounded operators $\pi(A)$ on $\cH_{c,0}$ for each region $A$ of the form $A=D_2 \setminus \overline{D_1}$, where $D_1 \subseteq D_2$ are Jordan domains in $\bbC$ with smooth boundary, equipped with parametrizations of the boundaries $\partial D_i$ by the unit circle $S^1$.
Then for each interval $I \subset S^1$ we could form the collection of operators $\scA_I^{in} \subset \cB(\cH_{c,0})$ of the form $\pi(A)$ where $A = \bbD \setminus \overline{D}$,  $\bbD$ is the unit disk in $\bbC$, $D \subset \bbD$ a Jordan domain which has smooth boundary and which contains $0$, $S^1 = \partial \bbD$ is parametrized by the identity, and $\partial D$ is parametrized by the unit circle in such a way that $I$ is mapped into $\partial D \cap S^1$.

Similarly we could form the collections of operators $\scA_I^{out} \subset \cB(\cH_{c,0})$ of the form $\pi(B)$ where $B = D \setminus \overline{\bbD}$ where $D$ is a Jordan domain with smooth boundary that contains $\bbD$, the boundary $\partial \bbD$ is parametrized by the identity, and $\partial D$ is parametrized in such a way that $I$ maps into $\partial D \cap S^1$.
We could then consider the collection $\scA_I \subset \scA_I^{out} \times \scA_I^{in}$ consisting of pairs $(\pi(B),\pi(A))$ where $B=D_2 \setminus \overline{\bbD}$ and $A = \bbD \setminus \overline{D_1}$ as above and the parametrizations of $\partial D_2$ and $\partial D_1$ coincide when restricted to the interval $I$.
The operator $\pi(B)\pi(A)$ would be the one assigned to the annulus $D_2 \setminus \overline{D_1}$ by the representation of the semigroup of partially thin annuli, and would lie in $\cA_c(I)$.

However, as indicated above, the joint work in progress to construct representations of the semigroup of partially thin annuli has not yet been completed, and thus we cannot form the sets of operators $\scA_I^{in/out}$ and $\scA_I$ just described.
Since operators assigned to annuli will play a major role in the next section when we describe the correspondence between vertex operator algebras and conformal nets, we will require an alternative to the above construction.
We will now introduce the placeholder notion of ``system of generalized annuli'' which consists of families of bounded operators $\scA_I^{in/out}$ and $\scA_I$ enjoying many of the properties that one would expect from the above construction via representations of partially thin annuli.
We now give the full list of properties required of an abstract system of generalized annuli, and then we will describe the system of generalized annuli that was constructed in \cite[\S 3]{TenerGRACFT2}.

\begin{Definition}[Left and right localizable operators]\label{def: localizable operators}
Let $U=U_{c,0}$ be the vacuum representation of $\Diffc$ on $\cH_{c,0}$.
An operator $A \in \cB(\cH_{c,0})$ is \emph{left localizable in $I$} if there exists $\gamma \in \Diffc$ such that $U(\gamma)A \in \cA_c(I)$.
Similarly, $B \in \cB(\cH_{c,0})$ is \emph{right localizable in $I$} if there exists $\gamma \in \Diffc$ such that $AU(\gamma) \in \cA_c(I)$.
We write $\Ann^{\ell}_I(\cH_{c,0})$ for the class of operators which are left localizable in $I$ and which have dense image.
We write $\Ann^{r}_I(\cH_{c,0})$ for the class of operators which are right localizable in $I$ and are injective.
We abbreviate these to $\Ann^{\ell/r}_I$ when $c$ is understood from context.
\end{Definition}

\begin{Definition}[System of incoming/outgoing generalized annuli]
\label{def: system of incoming annuli}
A system of incoming generalized annuli with central charge $c$ is given by a family of subsets $\scA^{in}_I \subset \Ann^{\ell}_I(\cH_{c,0})$ for all intervals $I$, along choices of open subsets $\Int(A) \subset \bbC$ for every $A \in \scA^{in}_I$.
These must satisfy the following conditions:
\begin{enumerate}
\item $1 \in \scA^{in}_I$ for all $I$
\item If $I \subset J$ then $\scA^{in}_I \subset \scA^{in}_J$ 
\item If $A \in \scA^{in}_I$ then $U(r_{\theta})AU(\gamma)^* \in \scA^{in}_{\gamma(I)}$ for all $\gamma \in \Diffc$ and all $\theta \in \bbR$.
\item For every $A \in \scA^{in}_I$, the open set $\Int(A)$ is of the form $\Int(A) = \bbD \setminus \overline{D}$ where $\bbD \subset \bbC$ is the open unit disk and $D \subset \bbD$ is a Jordan domain with $C^\infty$ boundary such that $0 \in D$.
\item The interior function satisfies $\Int(r_{\theta} A U(\gamma)^*) = r_{\theta}(\Int(A))$.
\item If $U(\gamma)A \in \cA_c(I)$, then $\gamma^{-1}(I^\prime) \subseteq \partial D \cap S^1$.
\end{enumerate}
The associated system of generalized outgoing annuli is given by the adjoints 
$$
{\scA_I^{out}}:=\{A^* : A \in \scA_I^{in}\}.
$$
If $B \in \scA_I^{out}$ then we define $\Int(B) = \{\overline{z}^{-1} : z \in \Int(B^*)\}$.
\end{Definition}

\begin{Definition}[System of generalized annuli of central charge $c$]
\label{def: system of annuli central charge c}
We denote by $\Ann_I$ the collection of pairs $(B,A)$ with $A \in \Ann^{\ell}_I$ and $B \in \Ann^{r}_I$ for which there exists a common $\gamma \in \Diffc$ such that $U(\gamma)A, BU(\gamma)^* \in \cA_c(I)$.
A system of generalized annuli of central charge $c$ is a system of incoming and outgoing generalized annuli $\scA^{in/out}_I$, and a family of subsets $\scA_I \subset {\scA^{out}_I} \times \scA_I^{in}$ such that
\begin{enumerate}
\item $\scA_I \subset \Ann_I$
\item if $I \subset J$ then $\scA_I \subset \scA_J$
\item if $(B,A) \in \scA_I$ then $(A^*,B^*) \in \scA_I$
\item if $(B,1) \in \scA_I$ and $(1,A) \in \scA_I$ then $(B,A) \in \scA_I$
\item if $\gamma \in \Diff_c(I)$, then $(1,U(\gamma)) \in \scA_I$
\item if $\gamma \in \Diffc$, $r_\theta \in \Rot(S^1)$, $(B,A) \in \scA_I$, then $(U(\gamma)BU(r_{\theta})^*, U(r_\theta)AU(\gamma)^*) \in \scA_{\gamma(I)}$.
\item if $r_\theta \in \Rot(S^1)$, $I$ is an interval containing both the intervals $J$ and $r_\theta(J)$, and $(B,A) \in \scA_J$, then $(B, U(r_{\theta})AU(r_\theta)^*) \in \scA_I$
\item for all $I$ there exists an $A \in \scA_I^{in}$ such that $(1,A) \in \scA_I$ and $\Int(A) \ne \emptyset$.
\end{enumerate}
For $(B,A) \in \scA_I$, we define 
$$
\Int(B,A) = {(\cl(\Int(A)) \cup \cl(\Int(B))}\interior{\,\,}.
$$ 
\end{Definition}

In \cite[\S3.4]{TenerGRACFT2}, we constructed a system of generalized annuli explicitly in terms of exponentials of smeared Virasoro fields.
We will give a brief overview of this construction here, with technical details left to the reference.

Consider a one-parameter semigroup $\varphi_t$ of univalent self-maps of the unit disk $\mathbb{D}$ which fix the point $0$, and let $\rho$ be the holomorphic vector field that generates the semigroup.
Let $L(\rho)$ be the unbounded operator $\sum_{n \in \bbZ} L_n \hat \rho(n)$ acting on $L(c,0) \subset \cH_{c,0}$, where $\hat \rho(n)$ are the Fourier coefficients of $\rho$.
Then $\scA_I^{in}$ consists of all operators $e^{-tL(\rho)}U(\gamma)$, as $\rho$ ranges over generators of one-parameter semigroups of univalent maps $\varphi_t$ as above, $t \ge 0$, and $\gamma \in \Diff_c(S^1)$ is required to be compatible with the map $\varphi_t$ and the interval $I$ via the condition $\varphi_t(\gamma^{-1}(I')) \subset S^1$.
The interior is given by $\Int(e^{-tL(\rho)}U(\gamma)) = \bbD \setminus \overline{\varphi_t(\bbD)}$.
The system of generalized annuli is given by pairs $(B,A)$ with $B^*,A \in \scA_I^{in}$ such that $BA \in \cA_c(I)$.
It is shown in \cite[Lem. 3.23]{TenerGRACFT2} that this is indeed a system of generalized annuli.


\subsubsection{VOAs with bounded localized vertex operators}

Given a fixed system of generalized annuli of central charge $c$ and a unitary VOA with the same central charge, we would like to define insertion operators $BY(a,z)A$ for all $(B,A) \in \scA_I$.
The generalized annuli $A$ and $B$ are operators on the vacuum Hilbert space $\cH_{c,0}$ of the Virasoro net.
We use the Virasoro net to transport the action of $A$ and $B$ to $\cH_V$, noting that $\cH_V$ decomposes as a direct sum of irreducible representations of the Virasoro algebra and therefore carries a natural representation $\pi$ of $\cA_c$.
Thus for left localizable annuli $A \in \Ann^\ell_I$ we define an action $\pi_I(A) = U^\pi(\gamma)^*\pi_I(U(\gamma)A)$ which is independent of a choice of localizing unitary $\gamma$.
More generally, if $\pi$ is any representation of $\cA_c$, we have an action $\pi_I(A) = U^\pi(\tilde \gamma)^*\pi_I(U(\gamma)A)$ for any lift $\tilde \gamma$ of $\gamma$ to $\Diff_c^{(\infty)}(S^1)$, which is well-defined up to a multiple of $e^{2 \pi i L_0}$.
We can similarly define the action of right localizable annuli, and we have $\pi_I(A)^* = \pi_I(A^*)$.
These actions are described in detail in \cite[\S3.2]{TenerGRACFT2}.
For simplicity, we drop the notation $\pi_I(A)$ and simply let $A$ act on representations of $\cA_c$, and similarly for $B$.

We will interpret $BY(v,z)A$, as a sesquilinear form.
We introduce the notion of incoming and outgoing annuli to ensure that it is densely defined.
\begin{Definition}\label{def: incoming and outgoing annuli}
Let $M$ be a unitary representation of the Virasoro algebra on which $L_0$ is diagonalizable, and let $\cH$ be the Hilbert space completion of $M$.
An \emph{incoming generalized annulus} on $\cH$ is an operator $A \in \cB(\cH)$ such that $M \subset \Ran(A)$ and $A^{-1}(M)$ is dense in $\cH$.
An operator $B \in \cB(\cH)$ is an \emph{outgoing generalized annulus} if $A^*$ is an incoming generalized annulus.
We write $\Ann^{in}(\cH)$ and $\Ann^{out}(\cH)$ for the set of incoming and outgoing generalized annuli, respectively.
\end{Definition}

If $B \in \Ann^{out}(\cH_V)$ and $A \in \Ann^{in}(\cH_V)$, then we can evaluate $\ip{BY(v,z)A \xi, \eta} = \ip{Y(v,z)(A\xi), (B^*\eta)}$ whenever $A\xi \in V$ and $B^*\eta \in V$.
Thus $BY(v,z)A$ is a densely defined sesquilinear form, which may or may not extend to a bounded operator.

\begin{Definition}[Bounded insertions] \label{def: bounded insertions}
Let $V$ be a simple unitary VOA of central charge $c$ and fix a system $\scA$ of generalized annuli with the same central charge.
We say that $V$ has bounded insertions for $\scA$ if for every interval $I \in \cI$ and every $(B,A) \in \scA_I$ we have $B \in \Ann^{out}(\cH_V)$, $A \in \Ann^{in}(\cH_V)$ and for every $z \in \Int(B,A)$ there is some $s > 0$ such that $BY(s^{L_0} \cdot,z)A$ defines a bounded operator $\cH_V \otimes \cH_V \to \cH_V$.
\end{Definition}

The map $BY(s^{L_0} \cdot,z)A$ corresponds to the picture

$$
\begin{tikzpicture}[baseline={([yshift=-.5ex]current bounding box.center)}]
	\coordinate (a1) at (150:1cm);
	\coordinate (b1) at (270:1cm);
	\coordinate (c1) at (210:1.5cm);
	\fill[red!10!blue!20!gray!30!white] (a1)  .. controls ++(250:.6cm) and ++(120:.4cm) .. (c1) .. controls ++(300:.4cm) and ++(190:.6cm) .. (b1) arc (270:510:1cm);
	\draw ([shift=(270:1cm)]0,0) arc (270:510:1cm);
	\coordinate (a) at (120:1cm);
	\coordinate (b) at (240:1cm);
	\coordinate (c) at (180:.25cm);
	\fill[fill=red!10!blue!20!gray!30!white] (0,0) circle (1cm);
	\fill[fill=white] (a)  .. controls ++(210:.6cm) and ++(90:.4cm) .. (c) .. controls ++(270:.4cm) and ++(150:.6cm) .. (b) -- ([shift=(240:1cm)]0,0) arc (240:480:1cm);
	\draw ([shift=(240:1cm)]0,0) arc (240:480:1cm);
	\draw (a) .. controls ++(210:.6cm) and ++(90:.4cm) .. (c);
	\draw (b) .. controls ++(150:.6cm) and ++(270:.4cm) .. (c);
	\draw (a) arc (120:150:1cm) (a1) .. controls ++(250:.6cm) and ++(120:.4cm) .. (c1);
	\draw (b1) .. controls ++(190:.6cm) and ++(300:.4cm) .. (c1);
	\filldraw[fill=white] (197:.9cm) circle (.25cm);
\end{tikzpicture}
$$

\begin{Definition}[Bounded localized vertex operators]
If $V$ has bounded insertions for $\scA$, we define the local algebras corresponding to intervals $I \subset S^1$ by
$$
\cA_V(I) = \{ BY(a,z)A : (B,A) \in \scA_I, z \in \Int(B,A), a \in V \}^{\prime\prime} \vee \{BA : (B,A) \in \scA_I\}^{\prime\prime}.
$$
We say that $V$ has bounded localized vertex operators if $\cA_V(I)$ and $\cA_V(J)$ commute whenever $I$ and $J$ are disjoint.
\end{Definition}

It is not hard to show that if $V$ has bounded localized vertex operators then $\cA_V$ is a conformal net \cite[Prop. 4.8]{GRACFT1} \cite[Prop. 4.15]{TenerGRACFT2}.

\subsubsection{Representation theory and (local) intertwiners}

If $M$ is a unitary $V$-module, then it decomposes as a direct sum of irreducible representations of the Virasoro algebra, and thus $\cH_M$ carries a representation of the Virasoro net $\cA_c$.
If $(B,A) \in \scA_I$, then $B$ and $A$ are simultaneously localizable in $I$.
As described above, the actions of $A$ and $B$ on $\cH_M$ are defined using a lift $\tilde \gamma \in \Diff_c^{(\infty)}(S^1)$ of a localizing diffeomorphism $\gamma$, and they are in general only well-defined up to a a multiple of $e^{2\pi i L_0}$.
The sesquilinear form $BY^M(a,z)A$, however, is canonically defined by requiring that $B$ and $A$ act using the same lift $\tilde \gamma$.
Provided $A \in \Ann^{in}(\cH_M)$ and $B \in \Ann^{out}(\cH_M)$ (see Definition \ref{def: incoming and outgoing annuli}), the sesquilinear form $BY^M(a,z)A$ is densely defined.

\begin{Definition}\label{def: piM}
Let $V$ be a simple unitary VOA with bounded localized vertex operators, and let $M$ be a unitary $V$-module.
Suppose that for every interval $I$ and every $(B,A) \in \scA_I$ we have $A \in \Ann^{in}(\cH_M)$, $B \in \Ann^{out}(\cH_M)$.
Then the $\cA_V$-representation corresponding to $M$, if it exists, is the representation $\pi^M$ on $\cH_M$ such that for every $I$ and $(B,A)$ as above, and every $z \in \Int(B,A)$ and $a \in V$ we have
\begin{equation}\label{eqn: piM formulas}
\pi^M_I(BY(a,z)A) = BY^M(a,z)A.
\end{equation}
\end{Definition}

In particular this implies that the forms $BY^M(a,z)A$ are bounded.
In fact, since $\cH_M$ is separable as a part of our definition of unitary $V$-module, there exists a unitary operator $u:\cH_V \to \cH_M$ that implements $\pi^M_I$ (i.e. $\pi^M_I(x) = uxu^*$ for all $x \in \cA_V(I)$).
If $z \in \Int(B,A)$ then for some $s > 0$ the map $BY(s^{L_0}-,z)A$ is bounded, and thus so is $BY^M(s^{L_0}-,z)A = u BY(s^{L_0}-,z)A u^*$.

Our next step is to consider operators of the form $B \cY(a,z)A$ for $(B,A) \in \scA_I$ and $\cY \in \binom{K}{M N}$.
We need to assume that $K$ and $N$ are unitary so that $A$ and $B$ act on $\cH_N$ and $\cH_K$, respectively.
Moreover, in order for $B \cY(a,z)A$ to be densely defined we need to assume that $B \in \Ann^{out}(\cH_K)$ and $\Ann^{in}(\cH_N)$.
In practice, it is easiest to just assume that $\pi^N$ and $\pi^K$ exist.
We also generally assume that $M$ is a unitary module, although strictly speaking that is not necessary.

Unlike the case of $BY^M(a,z)A$, there is in general a genuine ambiguity in the definition of $B\cY(a,z)A$.
While we can simultaneously localize $B$ and $A$, the $e^{2 \pi i L_0}$ ambiguities in their action on $\cH_K$ and $\cH_N$ do not necessarily cancel.
However, we will often be interested in the special case where $(1,A) \in \scA_I$, in which case $A \in \cA_c(I)$.
We then define the action of $A$ on $\cH_N$ in the natural way using $\operatorname{id} \in \Diff_c^{(\infty)}(S^1)$ as the localizing unitary.

\begin{Definition}\label{def: localized intertwiner}
Let $V$ be a simple unitary vertex operator algebra, let $M$, $N$, and $K$ be unitary $V$-modules, and assume that $\pi^N$ and $\pi^K$ exist.
An intertwining operator $\cY \in I \binom{K}{M N}$ is said to have \emph{bounded insertions} if for every interval $I$, every $(B,A) \in \scA_I$, and every $z \in \Int(B,A)$ there is some $s > 0$ such that the operator $B \cY(s^{L_0}-,z) A$ defines a bounded bilinear map $\cH_M \times \cH_N \to \cH_K$.
If moreover we always have
$$
B \cY(a,z) A \in \Hom_{\cA_V(I^\prime)}(\cH_N, \cH_K)
$$
for all $a \in M$ then we say that $\cY$ is \emph{localized}.
We write $I_{loc} \binom{K}{M N}$ for the space of localized intertwining operators.
\end{Definition}

Note that even though the actions of $A$ and $B$ are generally only defined up to a multiple of $e^{2 \pi i L_0}$, the boundedness and locality of the operators is independent of those choices.

\begin{Remark}
In the definition of bounded localized vertex operators we required that $BY(s^{L_0}-,z)A$ define a bounded map from the Hilbert space tensor product $\cH_V \otimes \cH_V$ into $\cH_V$ for some $s > 0$.
In contrast, in Definition \ref{def: localized intertwiner} we have required that $B\cY(s^{L_0}-,z)A$ define a bounded bilinear map $\cH_M \times \cH_N \to \cH_K$, or equivalently that it define a bounded map on the projective tensor product $\cH_M \otimes_\pi \cH_N$, which is a weaker condition.
The weaker condition is more appropriate for intertwining operators (as e.g. the proof of Proposition \ref{prop: dagger localized} would run into problems).
In fact, it is possible that the weaker condition would be more appropriate for bounded localized vertex operators as well, and all of the results from \cite{GRACFT1,TenerGRACFT2} would go through verbatim under the weaker hypothesis.
However we have not made this change to the definition of bounded localized vertex operators here so as not to muddle the relationship between this article and the preceding ones.
We also point out that if $B Y(s^{L_0}-, z)A$ defines a bounded bilinear map and $r^{L_0}$ is trace class on $\cH_V$, then $B Y((rs)^{L_0}-,z)A$ defines a bounded map on the Hilbert space tensor product.
Thus so long as $r^{L_0}$ is trace class for some $r$ the weaker and stronger definitions are actually equivalent.
\end{Remark}

By the Banach-Steinhaus theorem, $B \cY(s^{L_0}-,z)A$ defining a bounded bilinear map is equivalent to saying that for every $a \in M$ the sesquilinear form $B \cY(a,z)A$ defines a bounded map, and the map $a \mapsto B \cY(s^{L_0}a,z)A$ is bounded $M \to \cB(\cH_N,\cH_K)$.

We in saw in \cite[Lem. 4.8]{TenerGRACFT2} that if $B \cY(s^{L_0}-,z)A$ was bounded then for any $r < s$ the maps $B \cY(r^{L_0}-,w)A$ are uniformly bounded in norm when $\abs{w-z}$ is sufficiently small.
In particular this implies by \cite[Lem. 4.1]{TenerGRACFT2} that for any $a \in M$ the map $B\cY(a,z)A$ is holomorphic from $\Int(B,A)$ to $\cB(\cH_N,\cH_K)$.

We will establish one last property of localized intertwining operators.

\begin{Proposition}\label{prop: dagger localized}
Let $V$ be a simple unitary VOA with bounded localized vertex operators, let $M$, $N$, and $K$ be unitary $V$-modules, and assume that $\pi^K$ and $\pi^N$ exist.
Suppose that $\cY \in I_{loc} \binom{K}{M N}$.
Then $\cY^\dagger \in I_{loc}\binom{N}{\overline{M} K}$.
\end{Proposition}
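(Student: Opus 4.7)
The plan is to express the insertion $B\cY^\dagger(a,z)A$ as the Hilbert-space adjoint of an insertion of $\cY$, and then invoke the locality of $\cY$ together with the $(B,A) \mapsto (A^*,B^*)$ symmetry of the system of generalized annuli.

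Unpacking the defining identity $\ip{\cY(\theta_{\overline x} a, \overline x^{-1}) b, d}_K = \ip{b, \cY^\dagger(a, x) d}_N$, for $a \in \overline M$, $(B,A) \in \scA_I$, and $z \in \Int(B, A)$, I would first derive the equality of densely defined sesquilinear forms
\[
B\cY^\dagger(a, z)A \;=\; \bigl(A^* \cY(\theta_{\overline z} a, \overline z^{-1}) B^*\bigr)^{\!*}.
\]
By axiom (iii) of Definition \ref{def: system of annuli central charge c} we have $(A^*,B^*) \in \scA_I$, and the anti-holomorphic map $w \mapsto \overline w^{-1}$ carries $\Int(B,A)$ bijectively onto $\Int(A^*,B^*)$. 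The localization hypothesis on $\cY$ therefore applies to the operator inside the adjoint: $A^* \cY(a', \overline z^{-1}) B^*$ is a bounded element of $\Hom_{\cA_V(I')}(\cH_N, \cH_K)$ for every $a' \in M$, in particular for $a' = \theta_{\overline z} a$. Since Hilbert-space adjoints preserve boundedness and the commutant of any $\ast$-closed algebra, this simultaneously establishes the boundedness of $B\cY^\dagger(a, z)A$ and its commutation with $\cA_V(I')$.

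The remaining task is the uniform bilinear estimate: to produce $s' > 0$ such that $B\cY^\dagger(s'^{L_0}-, z)A$ is bounded as a bilinear form on $\cH_{\overline M} \times \cH_K$. Starting from the hypothesized bilinear bound on $A^* \cY(s^{L_0}-, \overline z^{-1}) B^*$ and using $[L_0, L_1] = -L_1$, a direct computation yields, for a homogeneous $\overline a \in \overline M$ of weight $h$,
\[
\theta_{\overline z}\, s'^{L_0}\, \overline a \;=\; \bigl(-s'/(s\overline z^2)\bigr)^{h}\, s^{L_0}\, e^{s\overline z L_1} a.
\]
Choosing $s'$ small enough that $s'/(s|z|^2) < 1$, the scalar prefactor decays geometrically in $h$, and this geometric decay absorbs the sub-exponential growth in $h$ of $\|e^{s\overline z L_1} a\|$ on each weight space $M_h$, yielding the required bilinear bound.

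The main obstacle is the last analytic step: controlling $\|e^{s\overline z L_1}\|_{M_h}$ in terms of $h$. This follows from the energy identity $\|L_1 v\|^2 = \|L_{-1} v\|^2 - 2h\|v\|^2$ on a unitary module (derived from $[L_1, L_{-1}] = 2L_0$) combined with the polynomial-growth estimate on $\|L_{-1}|_{M_h}\|$ available for positive-energy Virasoro representations. The preceding steps are formal, relying only on the symmetry axioms of $\scA_I$, the behavior of $\Int$ under the outgoing-incoming adjoint, and the $\ast$-algebra structure of $\cA_V(I')$.
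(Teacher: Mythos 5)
Your proposal follows essentially the same route as the paper's proof: both identify $B\cY^\dagger(a,z)A$ with the adjoint of $A^*\cY(\theta_{\overline z}a,\overline z^{-1})B^*$, use $(A^*,B^*)\in\scA_I$ and $\overline z^{-1}\in\Int(A^*,B^*)$ to import boundedness and locality of each individual insertion from the hypothesis $\cY\in I_{loc}$ (locality of the adjoint following since $\cA_V(I')$ is $*$-closed), and then reduce the uniform bilinear bound to showing that an operator of the form $\theta_{(\cdot)}r^{L_0}$, equivalently $e^{wL_1}r^{L_0}$, is bounded for $r$ small. The only divergence is in that last analytic step: the paper quotes \cite[Lem.~4.6]{Ten18ax} (boundedness of $e^{wrL_{-1}}r^{L_0}$ for $r+\abs{rw}<1$) and takes adjoints, whereas you estimate $\|e^{s\overline z L_1}\|$ on each weight space directly. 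Your commutation identity $\theta_{\overline z}\,s'^{L_0}\,\overline a = (-s'/(s\overline z^{2}))^{h}\,s^{L_0}e^{s\overline z L_1}a$ checks out. The one quantitative slip is the claim that $\|e^{s\overline z L_1}a\|$ grows sub-exponentially in $h$: the energy bound $\|L_1|_{M_h}\|=O(h)$ yields only $\|e^{wL_1}a\|\le\sum_{k\le h}\binom{h+1}{k}(C\abs{w})^{k}\|a\|\le(1+C\abs{w})^{h+1}\|a\|$, which is genuinely exponential (and essentially sharp, as one sees from the fact that $r^{L_0}e^{wL_1}$ is bounded only for $r$ sufficiently small). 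Consequently the stated condition $s'/(s\abs{z}^2)<1$ does not suffice; you must shrink $s'$ further so that $s'(1+Cs\abs{z}) < s\abs{z}^{2}$, making the geometric prefactor beat the exponential growth. With that one-line correction the argument closes, and is a self-contained alternative to citing the external lemma.
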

\begin{proof}
Let $(B,A) \in \scA_I$, and let $z \in \Int(B,A)$.
For $a \in M$, the map $B\cY^\dagger(a,z)A$ is bounded, as it is the adjoint of $A^*\cY(\theta_{\overline{z}}a,\overline{z}^{-1})B^*$ and $\overline{z}^{-1} \in \Int(A^*,B^*)$.
Moreover it follows that $B\cY^\dagger(a,z)A \in \Hom_{\cA(I)^\prime}(\cH_K,\cH_N)$.
Thus all we need to show is that $a \mapsto B\cY^\dagger(s^{L_0}a,z)A$ defines a bounded map $\overline{M} \to \cB(\cH_K,\cH_N)$.

Since $\cY \in I_{loc}$, for some $s > 0$ the bilinear map $A^*\cY(s^{L_0}-,\overline{z}^{-1})B^*$ is bounded.
That is, there is some constant $C$ such that $\norm{A^*\cY(s^{L_0}a,\overline{z}^{-1})B^*} \le C \norm{a}$ for $a \in M$.
Hence taking adjoints we have for any $r > 0$
\begin{align*}
\|B \cY^\dagger((rs)^{L_0}a,z)A\|
&=
\norm{A^* \cY(\theta_{\overline{z}}(rs)^{L_0}a,\overline{z}^{-1})B^*}\\
&=
\norm{A^* \cY(s^{L_0} \theta_{s\overline{z}}r^{L_0}a,\overline{z}^{-1})B^*}\\
&\le C \norm{\theta_{s\overline{z}}r^{L_0}a}.
\end{align*}
Hence if we can find $r$ small enough that $\theta_{s \overline{z}} r^{L_0}$ is a bounded operator then 
$$
a \mapsto B \cY^\dagger((rs)^{L_0}a,z)A
$$
will be bounded as well.

Recalling that $\theta_w a = e^{w L_1} e^{i \pi L_0} w^{-2L_0} \overline{a}$, it suffices to show that for any $w$ there is a $r$ such that $e^{w L_1} r^{L_0}$ is bounded.
Equivalently, we must show that $r^{L_0}e^{rwL_1}$ is bounded.
We saw in \cite[Lem. 4.6]{TenerGRACFT2} that when $r + \abs{rw} < 1$ the operator $e^{wrL_{-1}}r^{L_0}$ is bounded, and so taking adjoints we see that $r^{L_0} e^{rwL_1}$ is bounded as well.
Thus $B\cY^\dagger((rs)^{L_0}-,z)A$ defines a bounded bilinear form for $r$ sufficiently small.
\end{proof}

\newpage
\section{Transport modules and positivity for unitary VOAs}\label{sec: positivity}

In this section we introduce the \emph{transport module} associated to a pair of unitary modules $M$ and $N$ for some unitary VOA $V$.
We will see in Section \ref{sec: transport for rational} that if $V$ is sufficiently nice (e.g. regular) VOA and a transport module exists, then it is isomorphic to the fusion product $M \boxtimes N$ and moreover gives a canonical inner product on $M \boxtimes N$.
Closely related ideas have been used by Gui \cite{GuiUnitarityI,GuiUnitarityII,GuiQSystem} to construct unitary modular tensor categories of $V$-modules for many examples.
The results of this section and Section \ref{sec: examples} will provide many examples of transport modules, and thereby more examples where Gui's machinery can be applied.

Of additional interest is the fact that transport modules can be defined for an arbitrary unitary VOA $V$, and we conjecture that they exist for any unitary modules $M$ and $N$ which are reasonably nice.
This provides a proposal for a unitary fusion product theory for an arbitrary unitary VOA, just as any conformal net has a tensor category of representations.
In fact, the transport form (Definition \ref{def: transport form}) is a translation of the Connes' fusion inner product from conformal nets into the language of VOAs, which we will use in Section \ref{sec: fusion rules for bimodules} to compare VOA and conformal net fusion products.
A version of this translation was used by Wassermann in his work on WZW models of type A \cite{Wa98}, and both our approach and the approach used by Gui were inspired by Wassermann's work.

\subsection{The transport forms and transport modules}\label{sec: transport forms}

Let $V$ be a simple unitary VOA and let $M$ and $N$ be unitary $V$-modules.
If $K$ is a unitary $V$-module equipped with an intertwining operator $\cY \in I_{Hilb} \vertex{K}{M}{N}$, then we can define a family of pre-inner products on $M \otimes_\bbC N$ by 
\begin{equation}\label{eqn: pre inner product cY}
[a_1 \otimes b_1, a_2 \otimes b_2] := \ip{\cY(a_1,z)b_1, \cY(a_2,z)b_2} = \ip{\cY^\dagger(\theta_{\overline{z}}a_2,\overline{z}^{-1})\cY(a_1,z)b_1,b_2},
\end{equation}
for $z$ in the punctured unit disk (equipped with a choice of $\log z$), where $\theta_{\overline{z}}$ is as in Section \ref{sec: unitary VOA}.

We follow the philosophy that the unitary $V$-module $K$, its Hilbert space completion $\cH_K$, and the intertwining operator $\cY$ can be recovered from the above pre-inner product $[\, \cdot \, , \, \cdot \,]$, which we now explain.
Under favorable circumstances (see Lemma \ref{lem: intertwiner image is a core}) vectors of the form $\cY(a,z)b$ span a dense subspace of $\cH_K$, and so we can identify $\cH_K$ with the Hilbert space completion (and quotient) $\widehat{M \otimes N}$ of $M \otimes N$ with respect to the pre-inner product above for some choice of $z$ (and $\log z$).
In other words, we can recover the Hilbert space $\cH_K$, along with the dense subspace spanned by vectors of the form $\cY(a,z)b$, from the pre-inner product $[ \, \cdot \, ,\, \cdot \, ]$.

In fact, we can also recover the $V$-module $K$ from the pre-inner product (under the same favorable circumstances as in Lemma \ref{lem: intertwiner image is a core}).
Under the identification $\cH_K \cong \widehat{M \otimes N}$, the canonical map $M \otimes N \to \widehat{M \otimes N}$ coincides with the map $a \otimes b \mapsto \cY(a,z)b$.
By the Borcherds commutator formula we have
\[
v_{(n)}^K \cY(a,z)b = \cY(a,z)v_{(n)}^N b + \sum_{j=0}^\infty \binom{n}{j} z^{n-j} \cY(v_{(j)}^M a, z)b 
\]
where we have temporarily suppressed technical considerations with regard to the domain of the operator $v_{(n)}^K$.
Thus the unique candidate for a $V$-module action on the image of $M \otimes N$ in $\widehat{M \otimes N}$ that is compatible with the identification $\cH_K \cong \widehat{M \otimes N}$ is
\begin{equation}\label{eqn: canonical action on M otimes N}
v^{M \otimes N}_{(n)}(a \otimes b) = a \otimes v^{N}_{(n)}b + \sum_{j=0}^\infty \binom{n}{j}z^{n-j} v_{(j)}^Ma \otimes b.
\end{equation}
We regard the maps $v^{M \otimes N}_{(n)}$ as unbounded operators on $\widehat{M \otimes N}$ (see Section \ref{sec: unbounded operators}), and take their closure.
Under the hypotheses of Lemma \ref{lem: intertwiner image is a core}, 
there is a canonical subspace of $\widehat{M \otimes N}$ such that restricting the modes $v^{M \otimes N}_{(n)}$ to this subspace yields a unitary $V$-module that is isomorphic to $K$ via the identification $\cH_K \cong \widehat{M \otimes N}$ previously established.
Thus we have recovered the module $K$ and the intertwining operator $\cY$ from only the pre-inner product \eqref{eqn: pre inner product cY} and the canonical action \eqref{eqn: canonical action on M otimes N}\footnote{More precisely, we have recovered the $P(z)$-intertwining map (in the sense of \cite{HuangLepowsky95,HuangLepowskyPartIII}) associated to $\cY$ and the point $z$ in the universal cover of the punctured disk that was used to define \eqref{eqn: pre inner product cY}}. 


Given that the pre-inner product \eqref{eqn: pre inner product cY} encodes the unitary module $K$ and the intertwining operator $\cY$, we will now examine the formula on the right-hand side of \eqref{eqn: pre inner product cY} in more detail.
In the context of vertex tensor categories, we may hope to write the product 
$$
\cY^\dagger(\theta_{\overline{z}}a_2,\overline{z}^{-1})\cY(a_1,z)
$$ 
as an iterate 
\begin{equation}\label{eqn: product to iterate}
\cY^\dagger(\theta_{\overline{z}}a_2,\overline{z}^{-1})\cY(a_1,z) = \cY_1(\cY_2(\theta_{\overline{z}}a_2,\overline{z}^{-1}-z)a_1,z)
\end{equation}
for certain intertwining operators $\cY_1$ and $\cY_2$.

This sort of computation makes sense for any unitary $V$-module $K$ and any intertwining operator $\cY$, but a unitary fusion product should be a pair $(K, \cY)$ such that we have \eqref{eqn: product to iterate} with $\cY_1 = Y^N$ and $\cY_2 = \cY_M^{-}$ (where $\cY_M^- \in I \binom{V}{\overline{M} M}$ is the annihilation operator - see Section \ref{sec: intertwining operators}).
Thus the pre-inner product \eqref{eqn: pre inner product cY} which characterizes both the inner product and $V$-module structure on a unitary fusion product of $M$ and $N$ can be explicitly written in terms of the data of these modules.
The main idea of this section is to introduce the notion of \emph{transport module}, which is a formalization of the notion of unitary fusion product module characterized by an intrinsic pre-inner product.

Now given unitary $V$-modules $M$ and $N$, we define a formal series-valued sesquilinear pairing $\ip{ \, \cdot \, , \, \cdot \,}_{x,y}$ on $M \otimes N$ by
\begin{equation}\label{eqn: formal form}
\ip{a_1 \otimes b_1, a_2 \otimes b_2}_{x,y} = \bip{Y^N(\cY_M^-(\theta_{\overline{y}} a_2, \overline{y}^{-1}-x)a_1,x)b_1,b_2}_N
\end{equation}
where $\theta_{\overline{y}}a_2 = e^{\overline{y} L_1} e^{i \pi L_0} {\overline{y}}^{-2L_0}\overline{a_2}$.
Here $a \mapsto \overline{a}$ is the canonical antilinear map $M \to \overline{M}$.

We wish to evaluate the above pairing at complex numbers $z,w \in \bbC \setminus \{0\}$, and indeed one should expect that the corresponding series converges to a multi-valued holomorphic function on $\abs{z} > \abs{\overline{w}^{-1}-z} > 0$.
However, for our purposes it will suffice to consider a more restricted domain.
Let
\begin{equation}\label{eqn: R def}
\cR = \{(z,w) \in \bbC^2 : 1 > \abs{z} > \abs{\overline{w}^{-1} - z} > 0 \mbox{ and } 1 > \abs{w} > 0\}.
\end{equation}
A point $(z,w) \in \cR$ is one for which it is reasonable to hope that the series defining $\cY_2(a_2,\overline{w}^{-1})\cY_1(a_1,z)$ and $\cY_1(\cY_2(a_2,\overline{w}^{-1}-z)a_1,z)$ converge (when $\cY_i$ are intertwining operators with appropriate target and source modules).
As a convenient bonus, we will see now that $\ip{a_1 \otimes b_1, a_2 \otimes b_2}_{z,w}$ is naturally single-valued on $\cR$, assuming that the relevant series converges.

Let us first consider when $M$ is a simple module.
In this case, the conformal weights of $M$ lie in $\Delta + \bbZ$ for some $\Delta \in \bbR$.
In this case, the powers of $(\overline{y}^{-1}-x)$ which arise in \eqref{eqn: formal form} lie in $-2\Delta + \bbZ$.
The second source of multi-valuedness in \eqref{eqn: formal form} is the term $\theta_{\overline{y}} a_2$, which introduces a factor of the form $\overline{y}^{-2\Delta+m}$ for an integer $m$.
Thus if we expand the series \eqref{eqn: formal form} we obtain 
$$
\ip{a_1 \otimes b_1, a_2 \otimes b_2}_{x,y} 
= \overline{y}^{-2\Delta}(\overline{y}^{-1}-x)^{-2\Delta}f(x,\overline{y})
=(1-\overline{y}x)^{-2\Delta}f(x,\overline{y}).
$$
where $f(x,\overline{y})$ is a formal series involving only integral powers of $x$ and $\overline{y}$.
When $\abs{z},\abs{w} < 1$, we evaluate $(1-\overline{w}z)^{-2\Delta}$ using the standard branch of $\log(1-\overline{w}z)$.
Thus the series defining $\ip{a_1 \otimes b_1, a_2 \otimes b_2}_{z,w}$ is well-defined and single-valued for $(z,w) \in \cR$.

So far, we have only considered simple modules $M$.
In general a unitary $V$-module $M$ may be written $M = \bigoplus M_i$ and the annihilation operator for $M$ decomposes similarly.
Thus the transport form $\ip{a_1 \otimes b_1, a_2 \otimes b_2}_{z,w}$ is naturally single-valued in general.

\begin{Definition}\label{def: transport form}
Let $V$ be a simple unitary VOA and let $M$ and $N$ be unitary $V$-modules.
We say that the transport form for $M$ and $N$ exists if the series 
\begin{equation}\label{eqn: transport form}
\ip{a_1 \otimes b_1, a_2 \otimes b_2}_{z,w} = 
\sum_{s \in \bbR} \bip{Y^N(P_s \cY_M^-(\theta_{\overline{w}} a_2, \overline{w}^{-1}-z)a_1,z)b_1,b_2}_N
\end{equation}
converges absolutely for every
$$
(z,w) \in \cR = \{(z,w) \in \bbC^2 : 1 > \abs{z} > \abs{\overline{w}^{-1} - z} > 0 \mbox{ and } 1 > \abs{w} > 0\},
$$
where $P_s$ is the projection onto vectors of conformal dimension $s$.
In this case we write $\bip{Y^N(\cY_M^-(\theta_{\overline{w}} a_2, \overline{w}^{-1}-z)a_1,z)b_1,b_2}_N$ instead of the sum \eqref{eqn: transport form}.

As explained in the previous paragraph, we regard $\ip{a_1 \otimes b_1, a_2 \otimes b_2}_{z,w}$ as a single-valued function on $\cR$ using the standard branch of $\log(1-\overline{w}z)$.
This function is holomorphic in $z$ and antiholomorphic in $w$.
\end{Definition}

In addition to the transport form, we will be interested in the sesquilinear form 
\begin{equation}\label{eqn: pairing of vectors}
\ip{\cY(a_1,z)b_1, \cY(a_2,w)b_2}
\end{equation}
where $\cY \in I_{Hilb}\binom{K}{M N}$ and $(z,w) \in \cR$.
Recall (Section \ref{sec: intertwining operators}) that the condition $\cY \in I_{Hilb}$ means that $\cY(a_i,z)b_i \in \cH_K$, so the inner product in \eqref{eqn: pairing of vectors} makes sense.
While a priori \eqref{eqn: pairing of vectors} depends on choices of $\log z$ and $\log w$, there is a natural single-valued meaning to this expression on $\cR$.
Indeed, if $(z,w) \in \cR$ then $\abs{1 - \overline{w}z} < 1$, which implies that $z$ and $w$ lie in a common half-plane cut out by a line through the origin of $\bbC$.
We evaluate \eqref{eqn: pairing of vectors} using values of $\log z$ and $\log w$ coming from a common branch of $\log$ defined on this half-plane.
The value of \eqref{eqn: pairing of vectors} is independent of the choice of branch, and of the choice of half-plane, yielding a single-valued meaning \eqref{eqn: pairing of vectors}.
Indeed, this is the unique branch of \eqref{eqn: pairing of vectors} on $\cR$ with the property that
$$
\ip{\cY(a,z)b, \cY(a,z)b} = \norm{\cY(a,z)b}^2
$$
whenever $(z,z) \in \cR$.

\begin{Remark}\label{rmk: U times U log}
Alternatively, note that if $\cU \subset \bbC^\times$ is an open set such that $\cU \times \cU \subset \cR$, and $z \in \cU$, then $\cU$ cannot contain any positive multiple of $-z$.
Thus there exists a holomorphic branch of $\log$ on $\cU$, and we define \eqref{eqn: pairing of vectors} on $\cU \times \cU$ using any fixed branch of $\log$ on both factors.
The value of \eqref{eqn: pairing of vectors} for other $(z,w) \in \cR$ can be obtained by analytic continuation.
\end{Remark}

\begin{Definition}\label{def: transport module}
Let $V$ be a simple unitary VOA and let $M$ and $N$ be unitary $V$-modules. 
Suppose that the transport form exists for $M$ and $N$.
Then a transport module for $M$ and $N$ is a pair $(M \boxtimes_t N, \cY_t)$ where $M \boxtimes_t N$ is a unitary $V$-module and $\cY_t \in I_{Hilb} \vertex{M \boxtimes_t N}{M}{N}$ is a dominant intertwining operator such that
\begin{equation}\label{eqn: transport module}
\ip{a_1 \otimes b_1, a_2 \otimes b_2}_{z,w} = \ip{\cY_t(a_1,z)b_1, \cY_t(a_2,w)b_2}_{M \boxtimes_t N}
\end{equation}
as single-valued functions on 
$$
\cR= \{(z,w) \in \bbC^2 : 1 > \abs{z} > \abs{\overline{w}^{-1} - z} > 0 \mbox{ and } 1 > \abs{w} > 0\}.
$$
\end{Definition}
Note that the left-hand and right-hand sides of \eqref{eqn: transport module} are interpreted as single-valued functions on $\cR$ as described in the paragraphs preceding Definitions \ref{def: transport form} and \eqref{def: transport module}, respectively.

\begin{Remark}
Recall that a dominant intertwining operator $\cY$ is one for which terms $a^{\cY}_{(k)}b$ span the target module.
The requirement of Definition \ref{def: transport module} that $\cY_t$ be dominant should be thought of as a non-degeneracy requirement, and if $\cY_t$ satisfies \eqref{eqn: transport module} but is not dominant then the submodule generated by the modes of $\cY_t$ is a transport module.
\end{Remark}

\begin{Remark}\label{rem: VOA flavor transport def}
Modulo subtleties about multi-valued functions and domains, the transport module condition is equivalent to
$$
\bip{Y^N(\cY_M^-(a_2,w-z)a_1,z)b_1,b_2} = \bip{\cY_t^\dagger(a_2,w)\cY_t(a_1,z)b_1,b_2}
$$
for appropriate value of $w$ and $z$.
Thus it is a particular example of what is called the equivalence of products and iterates in the VOA literature.
The key feature is that we may write the iterate of $Y^N$ and $\cY_M^-$ as a product of an intertwining operator and its adjoint.
\end{Remark}

We will see below (Proposition \ref{prop: transport module unique}) that a transport module $(M \boxtimes_t N, \cY_t)$, if it exists, is unique up to unique unitary isomorphism.
In particular $M \boxtimes_t N$ is unique up to unitary isomorphism, and we will speak of \emph{the} transport module.

It is clear that if $K$ and $\tilde K$ are unitary $V$-modules satisfying Definition \ref{def: transport module}, with corresponding intertwiners $\cY_t$ and $\tilde \cY_t$, then there is an isometry defined on the subspace of $\cH_K$ spanned by vectors of the form $\cY_t(a,z)b$ by $\cY_t(a,z)b \mapsto \tilde \cY_t(a,z)b$.
We must check that the domain and range of this map are dense, and that it gives a homomorphism of $V$-modules.

Recall (Section \ref{sec: unbounded operators}) that the modes $v_{(k)}^K$ and $v_{(k)}^{\tilde K}$ are closable operators densely defined on $K \subset \cH_K$ and $\tilde K \subset \cH_{\tilde K}$, respectively, so it suffices to check that vectors of the form $\cY_t(a,z)b$ are a core for the closure of  $v_{(k)}^K$.
This fact, which we establish just below in Lemma \ref{lem: intertwiner image is a core}, is both philosophically and technically important. 
It justifies our approach to understand fusion products by studying the image of intertwining operators, and it is crucial for rigorously applying functional analytic tools to VOAs.

\begin{Lemma}\label{lem: intertwiner image is a core}
Let $V$ be a simple unitary VOA, let $M$, $N$, and $K$ be unitary $V$-modules, and let $\cY \in I_{Hilb} \vertex{K}{M}{N}$ be a dominant intertwining operator.
Let $z \in \bbC$ with $1 > \abs{z} > 0$, and fix a choice of $\log z$.
Set 
$$
K_z = \Span \{ \cY(a,z)b : a \in M, \, b \in N \} \subset \cH_K.
$$
Then $K_z$ is dense in $\cH_K$.
Moreover $K_z$ is an invariant core for the closure of $v_{(n)}^K$, for every $v \in V$ and $n \in \bbZ$.
\end{Lemma}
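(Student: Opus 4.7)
The plan is to handle the three assertions (invariance, density, and the core property) in turn, with the self-adjointness of $L_0^K$ on $\cH_K$ as the main unifying tool. For invariance, the Borcherds commutator formula gives
\[
v_{(n)}^K \cY(a, z) b \;=\; \cY(a, z) v_{(n)}^N b \;+\; \sum_{j \geq 0} \binom{n}{j} z^{n-j} \cY(v_{(j)}^M a, z) b,
\]
a finite sum because $v_{(j)}^M a = 0$ for $j$ sufficiently large; each summand is of the form $\cY(a', z) b'$ and so lies in $K_z$. This identity simultaneously places $\cY(a, z) b$ in $\dom(\overline{v_{(n)}^K})$ and shows $\overline{v_{(n)}^K}(K_z) \subseteq K_z$. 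Specializing to $v = \nu$ with homogeneous $a \in M_d$, $b \in N_{d'}$ gives $L_0^K \cY(a, z) b = (d + d')\cY(a, z) b + z \cY(L_{-1} a, z) b \in K_z$, and iterating yields $p(L_0^K) \cY(a, z) b \in K_z$ for every polynomial $p \in \bbC[x]$.

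For density, fix homogeneous $a \in M_d$, $b \in N_{d'}$ and write $\eta := \cY(a, z) b = \sum_s \eta_s$ with $\eta_s := z^{s-d-d'} a_{(d+d'-s-1)}^\cY b \in K_s$; the summands are pairwise orthogonal and square-summable in $\cH_K$. Applying the same decomposition to $\cY(L_{-1}^k a, z) b \in \cH_K$ for every $k \geq 0$ yields $\sum_s (1+|s|)^{2k} \|\eta_s\|^2 < \infty$ for all $k$, so $\eta$ lies in the domain of every polynomial in $L_0^K$. By the spectral theorem for the positive self-adjoint operator $L_0^K$ (which has discrete spectrum bounded below), the projection $P_{s_0}\eta$ onto the weight-$s_0$ subspace belongs to the norm-closure of $\{p(L_0^K)\eta : p \in \bbC[x]\}$. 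Since each $p(L_0^K)\eta$ lies in $K_z$ by invariance, $P_{s_0}\eta = z^{s_0-d-d'} a_{(d+d'-s_0-1)}^\cY b$ lies in $\overline{K_z}$. Varying $a$, $b$, and $s_0$ and invoking dominance, the algebraic module $K \subseteq \overline{K_z}$; since $K$ is dense in $\cH_K$, so is $K_z$.

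For the core property, since $K$ is automatically a core for $\overline{v_{(n)}^K}$, it suffices to graph-norm approximate each $k \in K$ by elements of $K_z$, and dominance reduces this to approximating a single mode $a_{(m)}^\cY b$. For $v \in V_{d_v}$ homogeneous, the weight-shift relation $[L_0^K, v_{(n)}^K] = (d_v - n - 1)\, v_{(n)}^K$ gives $v_{(n)}^K\, p(L_0^K) = p(L_0^K - \alpha)\, v_{(n)}^K$ with $\alpha := d_v - n - 1$. Applying the spectral theorem to the self-adjoint operator $L_0^K \oplus (L_0^K - \alpha)$ on $\cH_K \oplus \cH_K$ acting on the vector $\cY(a, z) b \oplus v_{(n)}^K \cY(a, z) b$ produces a single sequence of polynomials $p_N$ such that both $p_N(L_0^K)\cY(a, z) b \to P_{s_0} \cY(a, z) b$ and $v_{(n)}^K p_N(L_0^K) \cY(a, z) b = p_N(L_0^K - \alpha) v_{(n)}^K \cY(a, z) b \to v_{(n)}^K P_{s_0} \cY(a, z) b$ in $\cH_K$, yielding the graph-norm approximation after a harmless scalar rescaling.

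The principal technical obstacle is the density step, specifically the assertion that $P_{s_0}\eta$ lies in the polynomial-orbit closure of $\eta$ under $L_0^K$ (equivalently, that polynomials are dense in $L^2$ of the spectral measure of $L_0^K$ on $\eta$). This is a moment-determinacy question, and the key input is translating the hypothesis $\cY \in I_{Hilb}$—which supplies $\cY(L_{-1}^k a, z) b \in \cH_K$ for all $k$—into decay of $\|\eta_s\|^2$ faster than any polynomial in $s$, combined with the discreteness and positivity of $\sigma(L_0^K)$, in order to invoke a Carleman-type criterion.
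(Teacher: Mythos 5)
Your invariance step coincides with the paper's, but your density and core arguments replace the paper's machinery (Taylor expansion of $\cY(a,w)b$ in $w$, invariance of $\overline{K_z}$ under the unitary group $e^{i\theta L_0}$, and, for the core, a monodromy argument with the twist $e^{2\pi i L_0}$ together with a contour integral of the single-valued function $z^k p_h(\cdot)$) by polynomial functional calculus in $L_0^K$. That substitution is where the gap lies. To conclude that $P_{s_0}\eta$ lies in the closure of $\{p(L_0^K)\eta : p \in \bbC[x]\}$ you need polynomials to be dense in $L^2(\mu_\eta)$, where $\mu_\eta = \sum_s \|\eta_s\|^2\delta_s$, and the inputs you cite --- decay of $\|\eta_s\|^2$ faster than any polynomial, plus discreteness and positivity of the support --- do not deliver this. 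Faster-than-polynomial decay only gives finiteness of all moments, which is far from determinacy: a discrete measure on $\bbZ_{\ge 0}$ with masses $e^{-\sqrt{s}}$ decays faster than any polynomial, yet its moments grow like $(4k)!$, so $\sum_k m_{2k}^{-1/(2k)} < \infty$ and Carleman's criterion fails (and lognormal-type examples show such measures can genuinely be indeterminate, with polynomials non-dense in $L^2$). Discreteness is no help either, since N-extremal solutions of indeterminate moment problems are themselves discrete. So the density step, and with it the core step (which needs the same determinacy for the combined spectral measure of $L_0^K \oplus (L_0^K-\alpha)$), does not close as written.

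The gap is repairable within your framework, but you must extract more from $\cY \in I_{Hilb}$ than you did: since $\cY(a,w)b \in \cH_K$ for \emph{every} $0<\abs{w}<1$, comparing the series at radius $\abs{w}$ with $\abs{z}<\abs{w}<1$ gives $\|\eta_s\|^2 = O\bigl((\abs{z}/\abs{w})^{2s}\bigr)$, i.e.\ \emph{geometric} decay. Then $m_{2k} = O(C^{2k}(2k)!)$, Carleman's condition holds, $\mu_\eta$ is determinate, and M.~Riesz's theorem gives density of polynomials in $L^2(\mu_\eta)$; the same estimate applies to $v_{(n)}\cY(a,z)b$ (a finite sum of vectors of the same form), so your doubled-operator trick for the graph-norm approximation then goes through. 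Alternatively, you can sidestep the moment problem entirely, as the paper does: the Taylor expansion $\cY(a,w)b = \sum_m \frac{(w-z)^m}{m!}\cY(L_{-1}^m a,z)b$ shows $\overline{K_z}$ is invariant under $e^{i\theta L_0}$ for all $\theta$, hence reducing for $L_0$, hence invariant under its spectral projections --- bounded functional calculus requires no determinacy. Note also that for the core statement the paper must additionally handle the fact that $z \mapsto \cY(a,z)b$ is multi-valued, which it does by showing the closed graph is invariant under $\tau\oplus\tau$ with $\tau = e^{2\pi i L_0}$ before integrating over a circle; your fixed-$z$ polynomial approach avoids that issue but only at the price of the determinacy input above.
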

\begin{proof}
We begin by pointing out that $K_z$ depends on the choice of $\log z$.
For $v \in V$ and $a \in M$ we will write $v_{(n)}$ for $v^K_{(n)}$ and $a_{(k)}$ for $a_{(k)}^\cY$.
Let $D$ be the domain of the closure of $v_{(n)}$.
We show first that $K_z \subset D$.
It suffices to show that $\cY(a,z)b \in D$ when $a$ and $b$ are homogeneous.
Since $\cY(a,z)b \in \cH_K$ the partial sums of the series $\sum_{k \in \bbR} a_{(k)}bz^{-k-1}$, in which all terms are orthogonal, provide a sequence in $K$ converging to $\cY(a,z)b$.
Now by the Borcherds commutator formula,
\begin{align*}
v_{(n)} a_{(k)}b  =  a_{(k)} v_{(n)}b +  \sum_{j=0}^A \binom{n}{j}\big(v_{(j)} a\big)_{(n+k-j)}b
\end{align*}
for some number $A$.
Hence
\begin{align*}
\sum_{k \in \bbR} v_{(n)}a_{(k)}b z^{-k-1} &= \sum_{k \in \bbR} a_{(k)} v_{(n)}b z^{-k-1} \,+\\
&\quad+\, \sum_{j=0}^A \binom{n}{j} z^{n-j} \sum_{k \in \bbR} \big(v_{(j)}a\big)_{(n+k-j)}b z^{-n-k+j-1}.
\end{align*}
Each sum indexed by $\bbR$ on the right-hand side is of the homogeneous components of a vector in $\cH_K$, and thus converges.
Hence the partial sums of $\sum_{k \in \bbR} a_{(k)} b z^{-k-1}$ provide a witness for the vector $\cY(a,z)b$ lying in $D$, and moreover 
\begin{equation}\label{eqn: vn yazb}
v_{(n)}\cY(a,z)b = \cY(a,z)v_{(n)}b + \sum_{j=0}^A \binom{n}{j} z^{n-j} \cY(v_{(j)}a,z)b
\end{equation}
as one expects.
In particular we see that $K_z$ is invariant under (the closure of) $v_{(n)}$.

As a next step, we will prove that $K_z$ is dense in $\cH_K$.
Recall from Lemma \ref{lem: Hilb intertwiners give holomorphic functions} that functions of the form $\cY(a,z)b$ are holomorphic on the punctured disk.
It follows from the definition of an intertwining operator that $\frac{d}{dz} \ip{\cY(a,z)b,d} = \ip{\cY(L_{-1}a,z)b,d}$ for every $d \in K$, and thus $\frac{d}{dz} \cY(a,z)b = \cY(L_{-1}a,z)b$.
Hence if $0 < \abs{z-w} < \min(\abs{z}, 1-\abs{z})$ we have
\begin{equation}\label{eqn: intertwiner taylor series}
\cY(a,w)b = \sum_{m=0}^\infty \frac{(w-z)^m}{m!} \cY(L_{-1}^m a,z)b \in \overline{K_z}.
\end{equation}
Thus $K_w \subset \overline{K_z}$ for $\abs{z-w}$ sufficiently small, where the choice of $\log w$ is obtained by analytic continuation.
It follows that $e^{i \theta L_0}$ maps $K_z$ into $\overline{K_z}$ when $\theta$ is sufficiently small, and therefore $\overline{K_z}$ is invariant under $e^{i \theta L_0}$ for all $\theta$.
Since $L_0$ is diagonalizable on $\cH_K$, the closed subspace $\overline{K_z}$ decomposes as a direct sum of $L_0$-eigenspaces, and it follows that $a_{(k)}b \in \overline{K_z}$ for all $a \in M$ and $b \in N$.
Since $\cY$ was assumed dominant, this means that $K \subset \overline{K_z}$ and thus $K_z$ is dense.

Since $v_{(k)}$ is closed on $D$ and $K_z \subset D$ is a dense subspace, it follows that $v_{(k)}|_{K_z}$ is closable, as $(v_{(k)}|_D)^* \subset (v_{(k)}|_{K_z})^*$.
Let $D_z$ be the domain of the closure of $v_{(k)}|_{K_z}$.
We have $D_z \subseteq D$, and must prove the reverse inclusion.

We begin by showing that $D_z = D_w$ for any pair $z$ and $w$ in the punctured disk, and any choices of $\log z$ and $\log w$.
By \eqref{eqn: vn yazb}, the function $v_{(n)}\cY(a,z)b$ is holomorphic, and by standard VOA calculations $\frac{d}{dz} \ip{v_{(n)}\cY(a,z)b, d} = \ip{v_{(n)}\cY(L_{-1}a,z)b,d}$ for all $d \in K$.
It follows that $\frac{d}{dz} v_{(n)}\cY(a,z)b = v_{(n)}\cY(L_{-1}a,z)b$, and so when $\abs{z-w}$ is sufficiently small we may Taylor expand as before
$$
v_{(n)} \cY(a,w)b = \sum_{m=0}^\infty \frac{(w-z)^m}{m!} v_{(n)}\cY(L_{-1}^m a,z)b.
$$
Thus the partial sums of $\sum_{m=0}^\infty \frac{(w-z)^m}{m!} \cY(L_{-1}^m a,z)b$ witness the fact that $\cY(a,w)b \in D_z$.
We have shown that $K_w \subset D_z$, and it follows that $D_w \subset D_z$.
But $w$ and $z$ were interchangeable (provided $\abs{z-w}$ is sufficiently small), and so locally we have $D_z = D_w$, where the choice of $\log w$ is obtained by analytically continuing that of $z$.
It follows that $D_z$ does not depend on $z$ or the choice of $\log z$, and so we denote this single domain by $\tilde D$.

Let $\Gamma_z \subset \cH_K \oplus \cH_K$ be the graph of $v_{(n)}|_{K_z}$, and let $\Gamma_z^\prime$ be the corresponding graph when $\log z$ is replaced by $\log z + 2 \pi i$.
We have shown that $\overline{\Gamma_z} = \overline{\Gamma^\prime_z}=: \tilde \Gamma$ is the graph of $v_{(n)}|_{\tilde D}$.

Let $\tau = e^{2 \pi i L_0}$ be the twist on $\cH_K$.
Then $(\tau \oplus \tau)\Gamma_z = \Gamma_z^\prime$, so $(\tau \oplus \tau)\tilde \Gamma = \tilde \Gamma$.
It follows that the projection onto $\tilde \Gamma$ commutes with $\tau \oplus \tau$, and therefore commutes with its spectral projections.
For $h \in \bbR$ let $p_h$ to be the projection onto the closed subspace of $\cH_K$ spanned by homogeneous vectors with conformal dimension equal to $h$ mod $1$.
Then $p_h$ is the projection onto the $e^{2 \pi i h}$-eigenspace of $\tau$, and thus $p_h \oplus p_h$ leaves $\tilde \Gamma$ invariant.

Now for $k \in \bbR$ and homogeneous $a \in M$ and $b \in N$, let $h$ be the conformal dimension of $a_{(k)}b$.
Then the function
$$
f(z) = (p_h z^k \cY(a,z)b, p_h z^k v_{(n)} \cY(a,z)b)
$$
is a single-valued holomorphic function on the punctured disk with values in $\tilde \Gamma$.
Thus $(a_{(k)}b, v_{(n)} a_{(k)} b) = \frac{1}{2 \pi i} \int_{\abs{z} = \frac12} f(z) \, dz \in \tilde \Gamma$.
In particular $a_{(k)}b \in \tilde D$.
Since $\cY$ is dominant, we have $K \subset \tilde D$.
Hence $D \subset \tilde D$, as $D$ is the domain of the closure of $v_{(n)}|_K$.
We conclude that $D = \tilde D$ and thus $K_z$ is a core for $v_{(n)}$.
\end{proof}

We now prove the uniqueness of transport modules up to unitary isomorphism.

\begin{Proposition}\label{prop: transport module unique}
Let $V$ be a simple unitary VOA, let $M$ and $N$ be unitary $V$-modules, and let $(K,\cY_t)$ and $(\tilde K,\tilde \cY_t)$ be transport modules for $M$ and $N$.
Then there is a unique unitary $u:\cH_K \to \cH_{\tilde K}$ such that  $u \cY_t  = \tilde \cY_t$, and $u$ restricts to an isomorphism of $V$-modules $K \cong \tilde K$.
\end{Proposition}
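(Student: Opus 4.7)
The plan is to construct $u$ explicitly on a dense subspace using the transport form equality, establish isometry there, extend by continuity, and then verify the $V$-module intertwining using the core property from Lemma \ref{lem: intertwiner image is a core}.

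First I would fix some $z$ with $1/\sqrt{2} < \abs{z} < 1$, so that $(z,z) \in \cR$ (the condition $\abs{z} > \abs{\overline{z}^{-1}-z} = (1-\abs{z}^2)/\abs{z}$ is equivalent to $2\abs{z}^2 > 1$). Define a map $u_0$ on the algebraic span $K_z = \Span\{\cY_t(a,z)b : a \in M, b \in N\}$ by
$$
u_0\Big(\sum_i c_i \cY_t(a_i,z)b_i\Big) = \sum_i c_i \tilde\cY_t(a_i,z)b_i.
$$
The transport module property \eqref{eqn: transport module}, evaluated at $(z,z)$, expresses both inner products $\ip{\cY_t(a_i,z)b_i,\cY_t(a_j,z)b_j}_K$ and $\ip{\tilde\cY_t(a_i,z)b_i,\tilde\cY_t(a_j,z)b_j}_{\tilde K}$ as the same number $[a_i \otimes b_i, a_j \otimes b_j]_{z,z}$. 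This simultaneously shows that $u_0$ is well-defined and that it is isometric. By Lemma \ref{lem: intertwiner image is a core}, both $K_z$ and the analogous $\tilde K_z$ are dense, so $u_0$ extends uniquely to a unitary $u : \cH_K \to \cH_{\tilde K}$ satisfying $u \cY_t(a,z)b = \tilde\cY_t(a,z)b$ for every $a,b$ and the chosen $z$.

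Next I would show $u$ intertwines the $V$-module structure. The commutation identity \eqref{eqn: vn yazb} established in the proof of Lemma \ref{lem: intertwiner image is a core} reads
$$
v_{(n)}^K \cY_t(a,z)b = \cY_t(a,z)v_{(n)}^N b + \sum_{j=0}^A \binom{n}{j} z^{n-j}\cY_t(v_{(j)}a,z)b,
$$
and the identical formula holds for $\tilde\cY_t$ on $\cH_{\tilde K}$. Applying $u$ to the left-hand side and using $u\cY_t(\cdot,z)(\cdot) = \tilde\cY_t(\cdot,z)(\cdot)$ termwise on the right, we deduce $u v_{(n)}^K \xi = v_{(n)}^{\tilde K} u\xi$ for every $\xi \in K_z$. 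Since Lemma \ref{lem: intertwiner image is a core} tells us $K_z$ is a core for the closure of $v_{(n)}^K$, and the analogous statement holds for $v_{(n)}^{\tilde K}$ on $\tilde K_z = u(K_z)$, the intertwining extends to an identity of closed operators $u v_{(n)}^K = v_{(n)}^{\tilde K} u$ on the appropriate domains.

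To conclude that $u$ restricts to an isomorphism $K \cong \tilde K$, I would use that $u$ intertwines $L_0^K$ with $L_0^{\tilde K}$. Consequently $u$ carries any $L_0^K$-eigenvector to an $L_0^{\tilde K}$-eigenvector in $\cH_{\tilde K}$; since every such eigenvector lies in $\tilde K$ by the definition of a unitary module, and $K$ is the algebraic span of $L_0^K$-eigenvectors, we get $u(K) \subseteq \tilde K$, with equality by applying the same reasoning to $u^*$. The intertwining of all modes then shows $u|_K$ is a $V$-module isomorphism. Finally, uniqueness is immediate: any bounded $u'$ with $u'\cY_t = \tilde\cY_t$ agrees with $u$ on the dense subspace $K_z$. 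The only place requiring care is ensuring the chosen $z$ lies in $\cR$ and that the single-valued interpretation of the two sides of \eqref{eqn: transport module} matches on the diagonal, but this is built into the setup of Definition \ref{def: transport module} together with Remark \ref{rmk: U times U log}.
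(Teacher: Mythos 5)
Your proof is correct and follows essentially the same route as the paper's: define $u$ on the dense span $K_z$ via the transport form identity at a fixed $(z,z) \in \cR$, extend by density using Lemma \ref{lem: intertwiner image is a core}, and combine the commutation identity \eqref{eqn: vn yazb} with the core property to promote the intertwining to an identity of closed operators and hence a $V$-module isomorphism. The only step left implicit is that the relation $u\cY_t(a,z)b = \tilde\cY_t(a,z)b$ at the single fixed $z$ upgrades to the full identity $u\cY_t = \tilde\cY_t$ of intertwining operators (i.e.\ $u\,a^{\cY_t}_{(k)}b = a^{\tilde\cY_t}_{(k)}b$ for all $k$); this follows, as in the paper, by comparing homogeneous components, since $u$ preserves conformal dimensions.
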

\begin{proof}
Fix a point $z$ such that $(z,z) \in \cR$ (i.e. such that $1 > \abs{z} > 2^{-1/2}$), along with a choice of $\log z$.
The condition that $(K, \cY_t)$ and $(\tilde K, \tilde \cY_t)$ are both transport modules for $M$ and $N$ means that
\begin{equation}\label{eqn: same form}
\bip{\cY_t(a_1,z)b_1,\cY_t(a_2,z)b_2}_{\cH_K} 
= \bip{a_1 \otimes b_1, a_2 \otimes b_2}_{z,z}
= \bip{\tilde \cY_t(a_1,z)b_1,\tilde \cY_t(a_2,z)b_2}_{\cH_{\tilde K}}
\end{equation}
for all $a_i \in M$ and $b_i \in N$.
If we set $K_z = \Span \{ \cY_t(a,z)b : a \in M, b \in N\}$ and $\tilde K_z = \Span \{ \tilde \cY_t(a,z)b : a \in M, b \in N\}$, then there is an isometric map $u:K_z \to \tilde K_z$ such that 
\begin{equation}\label{eqn: u transport}
u \cY_t(a,z)b = \tilde \cY_t(a,z)b.
\end{equation}
Recall that $\cY_t$ and $\tilde \cY_t$ are dominant and lie in $I_{Hilb}$ by the definition of transport module.
Thus by Lemma \ref{lem: intertwiner image is a core}, $K_z$ and $\tilde K_z$ are dense, and so $u$ extends to a unitary $u:\cH_K \to \tilde \cH_K$.

We now verify that $u$ gives an isomorphism of $V$-modules.
Let $v \in V$ and for $n \in \bbZ$ let $v^K_{(n)}$ denote the closure of the mode of $Y^K(v,x)$, and similarly for $v^{\tilde K}_{(n)}$.
Note that $K_z$ and $\tilde K_z$ are invariant cores for $v^K_{(n)}$ and $v^{\tilde K}_{(n)}$, respectively, by Lemma \ref{lem: intertwiner image is a core}.
For $\xi \in K_z$ we have $v^{\tilde K}_{(n)}u\xi = uv^{K}_{(n)}\xi$ by \eqref{eqn: vn yazb}.
Hence $v^{\tilde K}_{(n)}u = u v^{K}_{(n)}$ as closed operators.

In particular, $u$ intertwines the actions of $L_0^K$ and $L_0^{\tilde K}$, and thus $u$ maps $K$ onto $\tilde K$.
Restricting the  identity $v_{(n)}u = u v_{(n)}$ to finite energy vectors we see that $u$ is a $V$-module isomorphism.

Suppose $a \in M$ and $b \in N$ are homogeneous.
Then since $u$ preserves conformal dimensions and $u \cY_t(a,z)b = \tilde \cY_t(a,z)b$, we must have $u a^{\cY_t}_{(k)}b = a^{\tilde \cY_t}_{(k)} b$ for all $k$.
Hence $u \cY_t = \tilde \cY_t$.

Uniqueness of $u$ is clear, as $\cY_t$ is dominant by the definition of transport module and so vectors of the form $\cY_t(a,z)b$ span a dense subspace of $\cH_K$ by Lemma \ref{lem: intertwiner image is a core}.

\end{proof}

We now compute transport for pairs of modules one of which is the vacuum.
The definition of the transport module for $M$ and $N$ is not symmetric in the inputs, and it turns out that the transport module for $V$ and $N$ is much simpler than the opposite order.

\begin{Proposition}
Let $V$ be a simple unitary VOA and let $N$ be a unitary $V$-module.
Then $(N,Y^N)$ is a transport module for $V$ and $N$.
\end{Proposition}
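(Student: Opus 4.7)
The plan is to verify the three conditions of Definition~\ref{def: transport module} with $K = N$ and $\cY_t = Y^N$. That $Y^N \in I_{Hilb}\vertex{N}{V}{N}$ is noted following Definition~\ref{def: Hilb intertwiner}, and $Y^N$ is dominant because $Y^N(\Omega,x)b = b$ already produces all of $N$. The only substantive point is the transport identity
$$
\ip{v_1 \otimes b_1, v_2 \otimes b_2}_{z,w} = \ip{Y^N(v_1,z)b_1,\, Y^N(v_2,w)b_2}_N
$$
as single-valued functions on $\cR$.

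The first step is to identify $\cY_V^-$ with $Y$ itself. By skew-symmetry of vertex operators, $\cY_V^+(a,x)b = e^{xL_{-1}} Y(b,-x) a = Y(a,x) b$, so $\cY_V^+ = Y$ on the vacuum module. Taking daggers and using the identification $\overline{V} \cong V$ via $\theta$, under which $(Y^N)^\dagger = Y^N$ (as recorded in Section~\ref{sec: intertwining operators}), we obtain $\cY_V^- = Y$. Substituting into Definition~\ref{def: transport form}, the transport form for $V$ and $N$ reduces to
$$
\ip{v_1 \otimes b_1, v_2 \otimes b_2}_{z,w} = \bip{Y^N\bigl(Y(\theta_{\overline{w}} v_2,\, \overline{w}^{-1} - z)\,v_1,\, z\bigr) b_1,\, b_2}_N.
$$

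The right-hand side of the transport identity is then matched in two steps. First, the adjoint formula $\ip{\cY(\theta_{\overline{x}}a,\overline{x}^{-1})\xi,\eta} = \ip{\xi,\cY^\dagger(a,x)\eta}$ applied to $\cY = Y^N$, combined with $(Y^N)^\dagger = Y^N$, gives
$$
\ip{Y^N(v_1,z)b_1,\, Y^N(v_2,w)b_2}_N = \bip{Y^N(\theta_{\overline{w}} v_2,\, \overline{w}^{-1})\, Y^N(v_1,z)b_1,\, b_2}_N.
$$
Second, the standard VOA product-to-iterate identity
$$
Y^N(u, z_1)\, Y^N(v, z_2) = Y^N\bigl(Y(u, z_1 - z_2) v,\, z_2\bigr),
$$
valid when $\abs{z_1} > \abs{z_2} > \abs{z_1-z_2}$, applied with $z_1 = \overline{w}^{-1}$ and $z_2 = z$, collapses the product into the iterate: the required inequalities $\abs{\overline{w}^{-1}} = \abs{w}^{-1} > 1 > \abs{z}$ and $\abs{z} > \abs{\overline{w}^{-1}-z}$ are precisely what defines $\cR$. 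Composing these two reductions produces exactly the reduced form of the transport form above.

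The only mild subtlety, and the main thing to treat carefully, is that the two sides are defined as single-valued on $\cR$ via different branch conventions (the standard branch of $\log(1-\overline{w}z)$ on the left, and Remark~\ref{rmk: U times U log} on the right). Rather than reconciling branches directly, I would carry out the calculation above on a simply connected open subset of $\cR$ where a common branch of $\log$ is available (for instance a neighborhood of a diagonal point $w = z$ with $\abs{z} > 2^{-1/2}$), and then appeal to connectedness of $\cR$ together with joint holomorphy/antiholomorphy of both sides to extend the equality to all of $\cR$. This branch-bookkeeping is the only step that is not immediate from skew-symmetry, the unitary adjoint formula, and VOA associativity.
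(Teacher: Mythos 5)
Your proof is correct and follows essentially the same route as the paper's: rewrite $\ip{Y^N(a_1,z)b_1,Y^N(a_2,w)b_2}$ via the invariance property of the unitary module as $\bip{Y^N(\theta_{\overline{w}}a_2,\overline{w}^{-1})Y^N(a_1,z)b_1,b_2}$, then apply the product-to-iterate (Jacobi/associativity) identity for module vertex operators on $\cR$, with dominance and $I_{Hilb}$-membership of $Y^N$ being immediate. The only cosmetic difference is your handling of branches: the paper simply observes that since $V$ has integer conformal weights both series involve only integer powers of $z$ and $\overline{w}$, so everything is already single-valued and no continuation argument is needed.
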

\begin{proof}
As a consequence of the Jacobi identity for modules (\cite{FHL93}, see also \cite[Prop. 2.13]{GuiUnitarityI}), whenever $\abs{z^\prime} > \abs{z} > \abs{z - z^\prime}$ we have
$$
\bip{Y^N(a_2,z^\prime)Y^N(a_1,z)b_1,b2}
=
\bip{Y^N(Y(a_2,z^\prime - z)a_1,z)b_1,b_2}
$$
with both series converging absolutely.
Note that both series involve integer powers of $z$, so the functions are single-valued.
Thus if $(z,w) \in \cR$ we apply the fact that $N$ is a unitary module to obtain
\begin{align*}
\bip{Y^N(a_1,z)b_1,Y^N(a_2,w)b2}
&=
\bip{Y^N(\theta_{\overline{w}} a_2,\overline{w}^{-1})Y^N(a_1,z)b_1,b2}\\
&=
\bip{Y^N(Y(a_2,\overline{w}^{-1} - z)a_1,z)b_1,b_2}\\
&=
\ip{a_1 \otimes b_1, a_2 \otimes b_2}_{z,w}.\\
\end{align*}
Since $Y^N$ is trivially dominant  and $Y^N \in I_{Hilb} \vertex{N}{V}{N}$ we conclude that $(N,Y^N)$ is a transport module.
\end{proof}

Next, we compute a transport module for $M$ and $V$.
This is somewhat more subtle, as it involves intertwining operators as opposed to module operators.
The key step is to establish a `fusion relation' for annihilation and creation operators (defined in Section \ref{sec: intertwining operators}).

\begin{Lemma}\label{lem: ann cr fusion relation}
Let $V$ be a simple unitary VOA and let $M$ be a unitary $V$-module.
Let $a_1 \in M$, $a_2 \in \overline{M}$, and $u, v \in V$.
Then the series
$$
\bip{\cY_M^-(a_2,w)\cY_M^+(a_1,z)v,u} := \sum_{s \in \bbR} \bip{\cY_M^-(a_2,w)P_s\cY_M^+(a_1,z)v,u}
$$
converges absolutely when $\abs{w} > \abs{z} > 0$,
$$
\bip{Y(\cY_M^-(a_2,w-z)a_1,z)v,u}:=\sum_{s \in \bbR} \bip{Y(P_s\cY_M^-(a_2,w-z)a_1,z)v,u}
$$
converges absolutely when $\abs{z} > \abs{w-z} > 0$,
and
$$
\bip{\cY_M^-(a_2,w)\cY_M^+(a_1,z)v,u} = \bip{Y(\cY_M^-(a_2,w-z)a_1,z)v,u}
$$
when $\abs{w} > \abs{z} > \abs{w-z} > 0$ if $\log w-z$ is taken close to $\log w$ when $z$ is small.
\end{Lemma}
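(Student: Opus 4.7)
\emph{Plan.} The statement is an instance of the product--iterate equivalence (associativity) for intertwining operators, specialized to the combination of the creation operator $\cY_M^+$, the annihilation operator $\cY_M^-$, and the VOA vertex operation $Y$. My plan is to reduce both claims to standard facts about the module operator $Y^M$, exploiting the explicit formula $\cY_M^+(a_1,z)v = e^{zL_{-1}}Y^M(v,-z)a_1$ and the adjoint relation $\cY_M^- = (\cY_M^+)^\dagger$. Combined with VOA skew-symmetry $Y(\xi,z)v = e^{zL_{-1}}Y(v,-z)\xi$ and the translation identity $\cY_M^-(a_2,w)e^{zL_{-1}} = e^{zL_{-1}}\cY_M^-(a_2,w-z)$ (coming from $[L_{-1},\cY]=\partial_x\cY$), both sides of the asserted identity can be rewritten, modulo a common outer factor $e^{zL_{-1}}$, in a form involving $\cY_M^-(a_2,w-z)$ and a $V$-insertion at $-z$.

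For the two absolute-convergence claims, I would realize each series as an honest Hilbert-space inner product, so that Cauchy--Schwarz across the $L_0$-graded decomposition applies termwise. The dagger formula rewrites $\ip{\cY_M^-(a_2,w)c,u}_V$ as $\ip{c,\Psi u}_M$ with $\Psi u\in\cH_M$ a specific vector produced by applying $\cY_M^+$ to $u$ after the $\theta_{\overline w}$-substitution in the definition of $\dagger$; the Hilbert-valued property $\cY_M^+\in I_{Hilb}$ follows directly from $Y^M\in I_{Hilb}$ via the skew-symmetry formula. With both $\cY_M^+(a_1,z)v$ and $\Psi u$ in $\cH_M$, the sum
$$
\sum_s\bip{\cY_M^-(a_2,w)P_s\cY_M^+(a_1,z)v,u} \;=\; \sum_s\bip{P_s\cY_M^+(a_1,z)v,\Psi u}_M
$$
converges absolutely to $\ip{\cY_M^+(a_1,z)v,\Psi u}_M$ by Cauchy--Schwarz across $\cH_M=\bigoplus_s M_s$. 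An analogous dagger manipulation, using $\cY_M^-\in I_{Hilb}\vertex{V}{\overline M}{M}$ (obtained by dagger from $\cY_M^+$) and factoring the iterate through $\cH_V$, gives absolute convergence of the second series on $\{1>|z|>|w-z|>0\}$.

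On the overlap $\{|w|>|z|>|w-z|>0\}$, Lemma~\ref{lem: Hilb intertwiners give holomorphic functions} promotes both absolutely convergent series to holomorphic functions of $(z,w)$, so by the identity principle it suffices to verify the equality on a single non-empty open subset. After stripping the common $e^{zL_{-1}}$, this reduces to the product--iterate equivalence for $\cY_M^-$ paired with a $V$-insertion: the matrix coefficients of $\cY_M^-(a_2,y)Y^M(v,x)a_1$ and of $Y(v,x)\cY_M^-(a_2,y)a_1$ analytically continue to the same rational function of $(x,y)$, as a direct consequence of the Borcherds/Jacobi identity for $\cY_M^-$ treating $v\in V$ as the VOA input (as in \cite{FHL93}). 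I expect the main obstacle to lie not in this manipulation itself but in the careful bookkeeping of logarithmic branches, since $\cY_M^-(a_2,\cdot)$ generally involves non-integer powers. The lemma's stipulation that ``$\log(w-z)$ be taken close to $\log w$ when $z$ is small'' is precisely what guarantees the analytic continuations of the two sides agree as single-valued functions on the overlap, rather than merely up to a choice of sheet.
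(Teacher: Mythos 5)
Your overall strategy (skew-symmetry to convert everything into statements about $\cY_M^-$ against a $V$-insertion, Hilbert-space pairings for convergence, analytic continuation for the identity) is plausible in outline, but two steps have genuine gaps. First, your convergence argument leans on $\cY_M^+\in I_{Hilb}$, which you assert ``follows directly'' from $Y^M\in I_{Hilb}$ via $\cY_M^+(a,z)b=e^{zL_{-1}}Y^M(b,-z)a$. It does not: $e^{zL_{-1}}$ is an unbounded operator on $\cH_M$, so knowing $Y^M(b,-z)a\in\cH_M$ does not place $e^{zL_{-1}}Y^M(b,-z)a$ in $\cH_M$ without an energy estimate. In fact the paper runs this implication in the opposite direction: $\cY_M^+\in I_{Hilb}$ is deduced in Proposition \ref{prop: ann cr transport module} as a corollary of the present lemma (taking $u=v$, $w=\overline{z}^{-1}$, $a_2=\theta_{\overline{z}}a_1$), so you owe an independent proof to avoid circularity. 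Separately, the dagger trick producing $\Psi u\in\cH_M$ requires $|\overline{w}^{-1}|<1$, i.e.\ $|w|>1$, while the lemma claims convergence on all of $|w|>|z|>0$; a $q^{L_0}$ rescaling repairs this, but it needs to be said.

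Second, and more seriously, the ``stripping of the common factor $e^{zL_{-1}}$'' is not domain-preserving. After skew-symmetry and the conjugation $\cY_M^-(a_2,w)e^{zL_{-1}}=e^{zL_{-1}}\cY_M^-(a_2,w-z)$, the product side becomes $\cY_M^-(a_2,y)Y^M(v,x)a_1$ with $y=w-z$, $x=-z$, whose iterated sum converges only for $|y|>|x|$, i.e.\ $|w-z|>|z|$ --- disjoint from the regime $|z|>|w-z|$ in which the iterate (the other ordering $Y(v,x)\cY_M^-(a_2,y)a_1$) converges. So the two reduced expressions you propose to compare converge on disjoint open sets, the identity principle cannot be applied on a common open subset, and what remains is the full commutativity-plus-continuation statement together with a nontrivial interchange of a double series: the conjugation identity applied to the infinitely-graded vector $Y^M(v,-z)a_1$ is essentially the whole content of the lemma, not a formal manipulation. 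The paper avoids exactly this by first proving the case $v=\Omega$, where $\cY_M^+(a_1,z)\Omega=e^{zL_{-1}}a_1$ involves a single finite-energy vector and the conjugation identity is the cited convergence result of Gui, and then propagating to general $v=v_{(-1)}\Omega$ by finitely many applications of the Borcherds commutator formula, so that only finite sums are ever rearranged against already-convergent pieces. Your proof would need either this reduction or an explicit absolute-convergence/Fubini argument for the resulting triple series in a region where both orderings converge, followed by the analytic continuation and branch bookkeeping you mention.
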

\begin{proof}
Assume without loss of generality that all vectors are homogeneous.
We first prove the lemma when $v = \Omega$.

By part 1 of \cite[Lem. 2.16]{GuiUnitarityI}, if $\abs{w} > \abs{z}$ and we choose $\log w - z$ close to $\log w$ when $z$ is small, we have
\begin{align*}
\bip{\cY_M^-(a_2,w)\cY_M^+(a_1,z)\Omega, u}
&:= \sum_{s \in \bbR} \bip{\cY_M^-(a_2,w)P_s\cY_M^+(a_1,z)\Omega, u}\\
&= \sum_{s \in \bbR} \bip{\cY_M^-(a_2,w)P_s e^{zL_{-1}} a_1, u}\\
&= \bip{e^{z L_{-1}}\cY_M^-(a_2,w - z)a_1,u}
\end{align*}
with the sum converging absolutely.
Note that the expansion of $e^{z L_{-1}}$ in the last expression has only finitely many non-zero terms since $L_1^k b = 0$ for $k$ sufficiently large.
On the other hand, if $z \ne w$
\begin{align*}
\bip{Y(\cY_M^-(a_2, w - z)a_1,z)\Omega,b_2}
&:=
\sum_{s \in \bbR} \bip{Y(P_s\cY_M^-(a_2, w - z)a_1,z)\Omega,u}\\
&=
\sum_{s \in \bbR} \bip{e^{zL_{-1}}P_s\cY_M^-(a_2, w - z)a_1,u}\\
&= \bip{e^{zL_{-1}} \cY_M^-(a_2, w - z)a_1,u}
\end{align*}
with the sum having only finitely many non-zero terms as before.
Comparing these two computations, we see that the lemma has been proven when $v = \Omega$.

Now if $v \in V$ is homogeneous with conformal weight $\Delta_v$, we apply the Borcherds commutator formula twice to obtain
\begin{align*}
\bip{\cY_M^-(a_2, w)\cY_M^+(a_1,z)v, u} &:= \nonumber
\sum_{s \in \bbR} \bip{\cY_M^-(a_2, w)P_s\cY_M^+(a_1,z)v_{(-1)}\Omega, u}\\
\nonumber &=
\sum_{s \in \bbR} \sum_{j \ge 0} \bip{\cY_M^-(a_2, w)P_s\cY_M^+(v_{(j)}a_1,z)\Omega, u} (-z)^{-j-1} \quad +\\
\nonumber &\quad+ \sum_{s \in \bbR} \sum_{\ell \ge 0} \bip{\cY_M^-(v_{(\ell)} a_2, w)P_s\cY_M^+(a_1,z)\Omega, u}  (- w)^{-\ell-1} \quad+\\
&\quad+ \sum_{s \in \bbR}  \bip{\cY_M^-(a_2, w)P_s\cY_M^+(a_1,z)\Omega, v_{(-1)}^\dagger u}
\end{align*}
Note that the sums in $j$ and $\ell$ are finite, and thus by the case $v=\Omega$ completed above we have that $\bip{\cY_M^-(a_2,w)\cY_M^+(a_1,z)v, u}$ converges absolutely when $\abs{w} > \abs{z} > 0$ and
\begin{align}\label{eqn: ann cr eqn 0}
\bip{\cY_M^-(a_2,w)\cY_M^+(a_1,z)v, u} &= 
\sum_{j \ge 0} \bip{Y(\cY_M^-(a_2, w - z)v_{(j)} a_1,z)\Omega,u} (-z)^{-1-j} \quad +\nonumber\\
&\quad + \sum_{\ell \ge 0} \bip{Y(\cY_M^-(v_{(\ell)}a_2, w - z) a_1,z)\Omega,u} (- w)^{-1-\ell} \quad +\nonumber\\
&\quad + \bip{Y(\cY_M^-(a_2, w - z) a_1,z)\Omega,v_{(-1)}^\dagger u}.
\end{align}

On the other hand,
\begin{align}\label{eqn: ann cre eqn 1}
\bip{Y(\cY_M^-(a_2, w - z)a_1,z)v,u} \nonumber
&:= \nonumber
\sum_{s \in \bbR} \bip{Y(P_s\cY_M^-(a_2, w - z)a_1,z)v_{(-1)}\Omega,u}\\
&= \nonumber
\sum_{s \in \bbR} \sum_{j \ge 0} \bip{Y(v_{(j)} P_s \cY_M^-(a_2, w - z)a_1,z)\Omega,u} (z- w)^{-1-j} \,\,\, +\\
&\quad +\bip{Y(\cY_M^-(a_2, w - z) a_1,z)\Omega,v_{(-1)}^\dagger u}
\end{align}
Comparing \eqref{eqn: ann cr eqn 0} and \eqref{eqn: ann cre eqn 1}, we see that it suffices to show that the double sum in \eqref{eqn: ann cre eqn 1} converges absolutely when $\abs{z} > \abs{w-z} > 0$ and
\begin{align}\label{eqn: ann cr new goal}
&\sum_{s \in \bbR} \sum_{j \ge 0} \bip{Y(v_{(j)} P_s \cY_M^-(a_2, w - z)a_1,z)\Omega,u} (z- w)^{-1-j}
\quad=\nonumber\\
=\quad&\,\,\sum_{j \ge 0} \bip{Y(\cY_M^-(a_2, w - z)v_{(j)} a_1,z)\Omega,u} (-z)^{-1-j}
\quad+ \nonumber\\
&+\quad\sum_{\ell \ge 0} \bip{Y(\cY_M^-(v_{(\ell)}a_2, w - z) a_1,z)\Omega,u} (- w)^{-1-\ell}
\end{align}
when $\abs{w} > \abs{z} > \abs{z-w} > 0$.

In fact, we establish \eqref{eqn: ann cr new goal} with the order of summation in $s$ and $j$ reversed, but since we will establish absolute convergence the desired result will still follow.
So assume $\abs{z} > \abs{z-w} > 0$, and observe
\begin{align}\label{eqn: ann cre eqn 2}
\sum_{j \ge 0}\sum_{s \in \bbR} &\bip{Y(v_{(j)} P_s \cY_M^-(a_2, w - z)a_1,z)\Omega,u} (-z)^{-1-j} \quad =\nonumber\\
&= \sum_{j \ge 0}\sum_{s \in \bbR} \bip{Y( P_{s+\Delta_v-j-1} v_{(j)}\cY_M^-(a_2,w - z)a_1,z)\Omega,u} (-z)^{-1-j} \nonumber \\
&= \sum_{j \ge 0}\sum_{s \in \bbR} \bip{Y( P_{s} v_{(j)}\cY_M^-(a_2, w - z)a_1,z)\Omega,u} (-z)^{-1-j} \nonumber \\
\nonumber&=   \sum_{j \ge 0}\sum_{s \in \bbR} \bip{Y(P_s \cY_M^-(a_2, w - z)v_{(j)} a_1,z)\Omega,u} (-z)^{-1-j} \quad + \\
\nonumber& \quad  \sum_{j \ge 0}\sum_{s \in \bbR}\sum_{\ell \ge 0} \bip{Y(P_s\cY_M^-(v_{(\ell)} a_2, w - z)a_1,z)\Omega,u} \cdot \binom{j}{\ell} ( w -z)^{j-\ell} (-z)^{-1-j}\\
\nonumber&=   \sum_{j \ge 0}\sum_{s \in \bbR} \bip{Y(P_s\cY_M^-(a_2, w - z)v_{(j)} a_1,z)\Omega,u} (-z)^{-1-j} \quad + \\
\nonumber& \quad  \sum_{\ell \ge 0}\sum_{j \ge 0}\sum_{s \in \bbR} \bip{Y(P_s\cY_M^-(v_{(\ell)} a_2, w - z)a_1,z)\Omega,u} \cdot \binom{j}{\ell} (w - z)^{j-\ell} (-z)^{-1-j}\\
\end{align}
The first equality in \eqref{eqn: ann cre eqn 2} is the fact that $v_{(j)}$ raises energy by $\Delta_v -j-1$, the second is reindexing the sum, the third is the Borcherds commutator formula, and the fourth uses the fact that the sum in $\ell$ and $j$ is finite.
The first of the two terms remaining at the end of \eqref{eqn: ann cre eqn 2} converges absolutely to the first term in \eqref{eqn: ann cr new goal}, using the case $v=\Omega$ already established.
We now show absolute convergence of the second term at the end of \eqref{eqn: ann cre eqn 2} to the second term of \eqref{eqn: ann cr new goal}. 

In fact, since the sum over $\ell$ is finite, it suffices to verify absolute convergence for fixed $\ell$.
We have
\begin{align}\label{eqn: ann cre eqn 3}
\sum_{j \ge 0}\sum_{s \in \bbR} & \bip{Y(P_s\cY_M^-(v_{(\ell)} a_2, w - z)a_1,z)\Omega,u} \cdot \binom{j}{\ell} (w -z)^{j-\ell} (-z)^{-1-j} \quad= \nonumber\\
&=\quad \left(\sum_{s \in \bbR}  \bip{Y(P_s\cY_M^-(v_{(\ell)} a_2,w - z)a_1,z)\Omega,u} \right)\left( \sum_{j \ge 0} \binom{j}{\ell} (w -z)^{j-\ell} (-z)^{-1-j}\right )\nonumber\\
&=\quad \left(\sum_{s \in \bbR}  \bip{Y(P_s\cY_M^-(v_{(\ell)} a_2,\tilde w - z)a_1,z)\Omega,u} \right)(-w)^{-\ell-1}.
\end{align}
Here we have used the fact that when $\abs{z} > \abs{w-z} > 0$ we have absolute convergence
$$
\sum_{j \ge 0}\binom{j}{\ell} (w -z)^{j-\ell} (-z)^{-1-j} = (-w)^{-\ell-1},
$$
which is easily established for $\ell=0$ using geometric series and then the general case may be obtained by differentiating in $w$.
Hence the sum in \eqref{eqn: ann cre eqn 3} converges absolutely to
$$
\bip{Y(\cY_M^-(v_{(\ell)}a_2, w - z) a_1,z)\Omega,u} (- w)^{-1-\ell}.
$$
Plugging this expression into the end of \eqref{eqn: ann cre eqn 2}, we see that \eqref{eqn: ann cr new goal} has been proven, completing the lemma.
\end{proof}

%

\begin{Proposition}\label{prop: ann cr transport module}
Let $V$ be a simple unitary VOA and let $M$ be a unitary $V$-module.
Then $\cY_M^+ \in I_{Hilb}\binom{M}{M V}$ and $(M,\cY_M^+)$ is a transport module for $M$ and $V$.
\end{Proposition}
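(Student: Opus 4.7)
The plan is to verify the three defining properties of a transport module from Definition \ref{def: transport module} for the pair $(M, \cY_M^+)$: dominance, membership in $I_{Hilb}$, and the transport form identity \eqref{eqn: transport module}. Dominance is immediate from $\cY_M^+(a, x)\Omega = e^{x L_{-1}} Y^M(\Omega, -x)a = e^{x L_{-1}}a$, whose coefficient of $x^0$ gives $a = a^{\cY_M^+}_{(-1)}\Omega$, so every $a \in M$ lies in the span of the modes of $\cY_M^+$.

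The transport form identity is essentially an immediate corollary of Lemma \ref{lem: ann cr fusion relation}. For $(z,w) \in \cR$, I would start with the left-hand side and apply the dagger relation $(\cY_M^+)^\dagger = \cY_M^-$ term-by-term against the decomposition $\sum_s P_s \cY_M^+(a_2, w) b_2$ (each summand being a well-defined finite-energy vector in $M$), rewriting
\[
\ip{\cY_M^+(a_1, z) b_1, \cY_M^+(a_2, w) b_2}_M = \sum_s \bip{\cY_M^-(\theta_{\overline{w}} a_2, \overline{w}^{-1}) P_s \cY_M^+(a_1, z) b_1, b_2}_V.
\]
The inequalities defining $\cR$ give $|\overline{w}^{-1}| = |w|^{-1} > 1 > |z| > |\overline{w}^{-1} - z| > 0$, which is exactly the convergence region of Lemma \ref{lem: ann cr fusion relation} (with its $w$ playing the role of $\overline{w}^{-1}$ and its $a_2$ playing the role of $\theta_{\overline{w}} a_2$). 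The lemma delivers both the absolute convergence of the sum and the product-to-iterate identity
\[
\bip{\cY_M^-(\theta_{\overline{w}} a_2, \overline{w}^{-1}) \cY_M^+(a_1, z) b_1, b_2}_V = \bip{Y(\cY_M^-(\theta_{\overline{w}} a_2, \overline{w}^{-1} - z) a_1, z) b_1, b_2}_V,
\]
whose right-hand side is the transport form $\ip{a_1 \otimes b_1, a_2 \otimes b_2}_{z,w}$ in the case $N = V$. The branch of $\log(\overline{w}^{-1} - z)$ prescribed by the lemma (continued from $\log \overline{w}^{-1}$ as $z \to 0$) matches the single-valued convention for the transport form on the simply connected region where $|1 - \overline{w} z| < 1$.

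The main technical obstacle is the preliminary step of establishing $\cY_M^+ \in I_{Hilb}$, which is needed before the dagger relation can be applied as above. For $1 > |z| > 2^{-1/2}$ the pair $(z, z)$ lies in $\cR$, and specializing the calculation above with $w = z$, $a_1 = a_2 = a$, $b_1 = b_2 = b$ gives
\[
\sum_s \norm{P_s \cY_M^+(a, z) b}^2 = \bip{Y(\cY_M^-(\theta_{\overline{z}} a, \overline{z}^{-1} - z) a, z) b, b}_V,
\]
which is finite because $Y \in I_{Hilb}$. To extend to all $0 < |z| < 1$, I would invoke the intertwining operator covariance $s^{L_0} \cY_M^+(a, z) s^{-L_0} = \cY_M^+(s^{L_0} a, s z)$: for $s \in (0, 1)$, $s^{L_0}$ is bounded on $\cH_M$ while $s^{-L_0}$ makes sense on the finite-energy subspace $M$, so the conclusion on $|z| > 2^{-1/2}$ propagates to $|z| > s \cdot 2^{-1/2}$ upon substituting $(a, b, z) \mapsto (s^{L_0} a, s^{L_0} b, s z)$ and using that every homogeneous vector lies in the range of $s^{L_0}$; letting $s \to 0^+$ exhausts the punctured disk. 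The most delicate bookkeeping throughout will be keeping the branch-of-$\log$ and $\theta$-twist conventions aligned with the single-valued interpretations introduced before Definitions \ref{def: transport form} and \ref{def: transport module}.
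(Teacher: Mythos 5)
Your proof is correct and follows essentially the same route as the paper: everything reduces to Lemma \ref{lem: ann cr fusion relation}, applied with its $w$ replaced by $\overline{w}^{-1}$ and its $a_2$ by $\theta_{\overline{w}}a_2$, together with the termwise dagger relation and the branch bookkeeping you describe. The only place you do more work than necessary is the $I_{Hilb}$ step: the product side of that lemma converges whenever $\abs{\overline{z}^{-1}} > \abs{z} > 0$, i.e.\ for \emph{every} $z$ in the punctured unit disk, so $\sum_s \norm{P_s\cY_M^+(a,z)b}^2 = \bip{\cY_M^-(\theta_{\overline{z}}a,\overline{z}^{-1})\cY_M^+(a,z)b,b} < \infty$ follows directly, and your restriction to $\abs{z}>2^{-1/2}$ followed by the $s^{L_0}$-scaling argument (which is valid) can be skipped.
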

\begin{proof}
Observe that if $1 > \abs{z} > 0$, then by Lemma \ref{lem: ann cr fusion relation}, we have convergence of 
$$
\bip{\cY_M^-(\theta_{\overline{z}} a,\overline{z}^{-1})\cY_M^+(a,z)b,b}
=
\bip{\cY_M^+(a,z)b,\cY_M^+(a,z)b}
$$
for all $a \in M$ and $b \in V$, and thus $\cY_M^+ \in I_{Hilb}\binom{M}{M V}$.
By the same lemma, whenever $(z,w) \in \cR$ the series defining
$$
\ip{a_1 \otimes b_1, a_2 \otimes b_2}_{z,w} := \bip{Y(\cY_M^-(\theta_{\overline{w}}a_2,\overline{w}^{-1}-z)a_1,z)b_1,b_2}
$$
converges absolutely and is equal to
$$
\bip{\cY_M^-(\theta_{\overline{w}}a_2,\overline{w}^{-1})\cY_M^+(a_1,z)b_1,b_2}
=
\bip{\cY_M^+(a_1,z)b_1,\cY_M^+(a_2,w)b_2},
$$
as multi-valued functions which agree when $\log w-z$ is taken close to $\log w$ when $z$ is small.
Note that $\bip{\cY_M^+(a_1,z)b_1,\cY_M^+(a_2,w)b_2}$ is in fact single-valued, and the definition of $\ip{ \, \cdot \, , \, \cdot \,}_{z,w}$ as a single-valued function $\cR$ (preceding Definition \ref{def: transport form}) agrees with taking $\log w$ close to $\log w - z$ when $z$ is small, completing the proof.
\end{proof}

\subsection{Regular VOAs and positivity of transport matrices}\label{sec: transport for rational}

Let $V$ be a simple unitary VOA and let $M$ and $N$ be unitary $V$-modules.
Recall from Section \ref{sec: transport forms} that the transport forms for $M$ and $N$ are defined by
\begin{equation}
\ip{a_1 \otimes b_1, a_2 \otimes b_2}_{z,w} := 
\bip{Y^N(\cY_M^-(\theta_{\overline{w}} a_2, \overline{w}^{-1}-z)a_1,z)b_1,b_2}_N
\end{equation}
for
$$
(z,w) \in \cR := \{(z,w) \in \bbC^2 : 1 > \abs{z} > \abs{\overline{w}^{-1} - z} > 0 \mbox{ and } 1 > \abs{w} > 0\},
$$
provided the relevant series converges absolutely (in which case we say that the transport form exists for $M$ and $N$).
In this section, we discuss the transport form in more detail in the context of regular VOAs.
Regularity is a strong semisimplicity condition on the representation theory of $V$ originally introduced in \cite{DongLiMason97}.
Under the mild assumption that a VOA is of CFT type (which is always true for simple unitary VOAs), ``regular'' is equivalent to ``rational and $C_2$-cofinite'' \cite{ABD04}.

We will not discuss the precise definition of regularity, and instead list the relevant consequences of this property.
A regular VOA $V$ has finitely many simple modules up to isomorphism, every $V$-module can be written as a direct sum of simple modules, and the spaces of intertwining operators $I \binom{K}{MN}$ are finite-dimensional.
Most importantly, the fusion product theory of Huang and Lepowsky applies to regular VOAs with the additional property of being ``strong CFT type,'' which always holds for simple unitary VOAs \cite{HuangLepowsky13,HuangDEs,HuangVerlinde,HuangModularity}.

\begin{Proposition}\label{prop: rational intertwiner Hilb}
Let $V$ be a simple regular unitary VOA, and let $M$, $N$ and $K$ be unitary $V$-modules.
Then we have:
\begin{enumerate}
\item $I \binom{K}{MN} = I_{Hilb} \binom{K}{M N}$.
\item The transport form for $M$ and $N$ exists.
\end{enumerate}
\end{Proposition}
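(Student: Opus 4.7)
The plan is to reduce both parts to Huang's convergence theorems for products and iterates of intertwining operators, which apply to regular VOAs of strong CFT type. Since simple unitary VOAs are automatically of strong CFT type by \cite[Prop. 5.3]{CKLW18}, the relevant hypotheses of \cite{HuangDEs, HuangLepowsky13} are in force for the modules under consideration.

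For part 1, fix $\cY \in I\binom{K}{MN}$, $a \in M$, $b \in N$, and $z$ in the punctured unit disk. Because $\cH_K$ is the orthogonal Hilbert space direct sum of its finite-dimensional conformal weight spaces, showing $\cY(a,z)b \in \cH_K$ is the same as showing
$$
\sum_{s \in \bbR} \snorm{P_s \cY(a,z) b}^2 < \infty.
$$
The unitarity relation between $\cY$ and $\cY^\dagger$ recorded in Section \ref{sec: intertwining operators} rewrites the $s$-th term as
$$
\snorm{P_s \cY(a,z)b}^2 = \bip{\cY^\dagger(\theta_{\overline{z}} a, \overline{z}^{-1}) P_s \cY(a,z) b, b},
$$
so the sum is the expansion, with an energy projection inserted, of the product $\cY^\dagger(\theta_{\overline{z}} a, \overline{z}^{-1})\cY(a,z)b$ paired against $b$. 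I would then invoke Huang's theorem on absolute convergence of products of intertwining operators in the domain $\abs{\overline{z}^{-1}} > \abs{z} > 0$, which is precisely $0 < \abs{z} < 1$. Since the terms are non-negative, absolute convergence coincides with finiteness of the sum, yielding $\cY \in I_{Hilb}\binom{K}{MN}$.

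For part 2, the transport form is by its very definition the pairing of an iterate:
$$
\ip{a_1 \otimes b_1, a_2 \otimes b_2}_{z,w} = \bip{Y^N\bigl(\cY_M^-(\theta_{\overline{w}} a_2,\overline{w}^{-1}-z) a_1, z\bigr) b_1, b_2}.
$$
Huang's convergence theorem for iterates of intertwining operators provides absolute convergence in the domain $\abs{z} > \abs{\overline{w}^{-1}-z} > 0$, which is exactly the geometric condition built into the definition of $\cR$; the additional constraints $\abs{z},\abs{w} < 1$ are innocuous and only restrict further. This is precisely what is required for the transport form to exist throughout $\cR$.

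The principal obstacle is not conceptual but bibliographic: one must cite the correct formulation of Huang's theorem so that it outputs absolute convergence of the full weight-indexed series, with the energy projections $P_s$ inserted exactly as in Definition \ref{def: transport form}, rather than merely formal-series or term-by-term convergence. Under the strongly rational hypothesis this is the standard output of the Huang--Lepowsky machinery, so once the correct statement is extracted both parts of the proposition follow immediately.
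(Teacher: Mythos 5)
Your proposal is correct and follows essentially the same route as the paper: part (1) is proved by rewriting $\norm{\cY(a,z)b}^2$ as the product $\cY^\dagger(\theta_{\overline{z}}a,\overline{z}^{-1})\cY(a,z)$ paired against $b$ and invoking Huang's convergence of products on $\abs{z}^{-1}>1>\abs{z}$, and part (2) by observing that the transport form is exactly an iterate covered by the same convergence theorem on $\cR$. Your extra care about summing $\snorm{P_s\cY(a,z)b}^2$ and about which formulation of Huang's theorem gives absolute convergence of the weight-indexed series is a reasonable elaboration of what the paper states more tersely.
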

\begin{proof}
First we prove (1).
Let $a \in M$ and $b \in N$ be homogeneous vectors, and suppose $0 < \abs{z} < 1$.
Observe that $\cY(a,z)b \in \cH_K$ if and only if the series defining $\ip{\cY(a,z)b, \cY(a,z)b}$ converges.
As series, we have
$$
\ip{\cY(a,z)b, \cY(a,z)b} = \bip{\cY^\dagger(\theta_{\overline{z}}a,\overline{z}^{-1})\cY(a,z)b,b}.
$$
By Huang's work \cite{HuangDEs}, the series on the right-hand side converges since $\abs{z}^{-1} > 1 > \abs{z}$, and thus we conclude that $\cY(a,z)b \in \cH_K$.

The proof of (2) is similar, as the convergence of the transport form on $\cR$ is a special case of Huang's convergence of iterated intertwining operators in the same reference.
\end{proof}

We now consider a fixed simple unitary regular VOA $V$, and fix a set $\Irr(V)$ of isomorphism class representatives of simple $V$-modules.
If $M$ and $N$ are $V$-modules, then the fusion product $M \boxtimes N$ is a $V$-module which represents the functor $K \mapsto I\binom{K}{M N}$.
That is, $M \boxtimes N$ is a $V$-module equipped with an intertwining operator $\cY_\boxtimes \in \binom{M \boxtimes N}{M N}$ such that for any $\cY \in \binom{K}{M N}$ there is a unique $V$-module homomorphism $\varphi:M \boxtimes N \to K$ such that $\cY = \varphi \cY_\boxtimes$.\footnote{%
In many applications, it is necessary to replace the notion of fusion product with the closely related notion of ``$P(z)$-tensor product.'' A $P(z)$-tensor product is essentially equivalent to a fusion product (as defined here) along with a choice of $\log z$.}
The fusion product always exists, and is given by
$$
M \boxtimes N = \bigoplus_{K \in \Irr(V)} I\binom{K}{M N}^* \otimes K,
$$
equipped with a certain canonical intertwining operator (see \cite[Prop. 12.5]{HuangLepowskyPartIII}).

\begin{Proposition}\label{prop: exists transport iso}
Let $V$ be a simple unitary regular VOA, let $M$ and $N$ be unitary $V$-modules, and suppose that $M \boxtimes N$ admits a unitary structure, which we fix.
Then there exists a unique $\Lambda \in \End_{V}(M \boxtimes N)$ satisfying
\begin{equation}\label{eqn: transport endo}
\bip{a_1 \otimes b_1, a_2 \otimes b_2}_{z,w} = \bip{\Lambda \cY_\boxtimes(a_1,z)b_1, \cY_\boxtimes(a_2,w)b_2}
\end{equation}
for all $a_i \in M$, $b_i \in N$, as single-valued functions on $\cR$.
\end{Proposition}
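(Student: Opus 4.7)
The uniqueness is immediate. By Proposition \ref{prop: rational intertwiner Hilb}, $\cY_\boxtimes \in I_{Hilb}\binom{M \boxtimes N}{M N}$, and $\cY_\boxtimes$ is dominant by the universal property of the tensor product. Hence Lemma \ref{lem: intertwiner image is a core} says that $\Span\{\cY_\boxtimes(a,z_0)b : a \in M, b \in N\}$ is dense in $\cH_{M \boxtimes N}$ for any fixed $z_0$ in the punctured unit disk. Setting $z = w = z_0$ in \eqref{eqn: transport endo} and polarizing, the matrix coefficients $\bip{\Lambda\xi,\eta}$ for $\xi, \eta$ in this dense subspace are prescribed by the transport form, so $\Lambda$ is unique.

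For existence, I exploit regularity to decompose
$$
M \boxtimes N = \bigoplus_{K \in \Irr(V)} H_K \otimes K,
$$
where $H_K = \Hom_V(K, M \boxtimes N)$ is a finite-dimensional multiplicity space carrying the Hilbert space structure coming from the fixed unitary structure on $M \boxtimes N$. Picking an orthonormal basis $\{e^K_i\}$ of each $H_K$ produces a basis $\{\cY^K_i\} \subset I\binom{K}{MN}$ via the universal property, and $\cY_\boxtimes = \sum_{K,i} e^K_i \otimes \cY^K_i$. Any prospective $\Lambda \in \End_V(M \boxtimes N)$ decomposes as $\bigoplus_K \Lambda_K \otimes \id_K$ with $\Lambda_K = (\lambda^K_{ij}) \in \End(H_K)$, and the right-hand side of \eqref{eqn: transport endo} becomes the finite sum
$$
\sum_{K,i,j} \lambda^K_{ij}\, \bip{\cY^K_j(a_1,z)b_1, \cY^K_i(a_2,w)b_2}_K.
$$
Thus the task reduces to exhibiting scalars $\lambda^K_{ij}$ expanding the transport form in this basis of products.

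The existence of such an expansion is precisely the content of Huang's theorem on the equivalence of products and iterates of intertwining operators for regular VOAs, which also underlies Proposition \ref{prop: rational intertwiner Hilb}. The transport form is, by definition, a matrix coefficient of an iterate of $\cY_M^-$ followed by $Y^N$ (with middle module $V$), while each term on the right-hand side above is a matrix coefficient of a product of intertwining operators of types $\binom{M \boxtimes N}{M N}$ and $\binom{N}{\overline{M}\, M \boxtimes N}$ (with middle module $K$). Huang's associator expresses one presentation in terms of the other with structure constants $c^K_{ij}$ depending only on the modules and the chosen basis, not on $a_i, b_i$; these structure constants are the required $\lambda^K_{ij}$. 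The main obstacle in carrying this out is bookkeeping: one must carefully match Huang's branch conventions for products and iterates to the single-valued conventions on $\cR$ laid out preceding Definitions \ref{def: transport form} and \ref{def: transport module}, including the twist $\theta_{\overline w}$, so that the identity holds exactly rather than up to a phase. Once \eqref{eqn: transport endo} is checked at a single point of $\cR$, it extends to all of $\cR$ by analytic continuation in $z$ and $\overline w$, and the $V$-equivariance of $\Lambda$ is built in, since $\Lambda$ was defined block-diagonally with respect to the isotypic decomposition of $M \boxtimes N$.
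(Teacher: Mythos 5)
Your argument is correct, and it rests on exactly the same key input as the paper's proof: Huang's theorem on the equality of (convergent) products and iterates of intertwining operators for regular VOAs of strong CFT type, applied to the iterate of $Y^N$ and $\cY_M^-$ defining the transport form, followed by the universal property of $(M \boxtimes N, \cY_\boxtimes)$. Where you differ is in the packaging. The paper never decomposes $M \boxtimes N$ into isotypic pieces: Huang's theorem produces a single module $K$ with $\cY_1 \in I\binom{K}{M\,N}$ and $\cY_2 \in I\binom{N}{\overline{M}\,K}$ such that the transport form equals $\bip{\cY_1(a_1,z)b_1, \cY_2^\dagger(a_2,w)b_2}$; since both $\cY_1$ and $\cY_2^\dagger$ have type $\binom{K}{M\,N}$, the universal property gives $\varphi_1,\varphi_2 \in \Hom_V(M\boxtimes N, K)$ with $\cY_1 = \varphi_1\cY_\boxtimes$ and $\cY_2^\dagger = \varphi_2\cY_\boxtimes$, and one simply sets $\Lambda = \varphi_2^*\varphi_1$. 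This is shorter and avoids your basis bookkeeping and the separate verification that the block-diagonal $\Lambda$ is $V$-equivariant (equivariance is automatic for $\varphi_2^*\varphi_1$). Your version, with multiplicity spaces and structure constants $\lambda^K_{ij}$, is essentially the transport-matrix formulation of Gui that the paper records afterward in Remark \ref{rem: transport matrices}; its advantage is that it makes the object to be shown positive in Proposition \ref{prop: transport for regular VOAs} concretely computable. Both treatments leave the branch-matching between Huang's conventions and the single-valued conventions on $\cR$ at the same level of detail, so I do not count your flagged ``bookkeeping obstacle'' as a gap relative to the paper. Your uniqueness argument (density of $\Span\{\cY_\boxtimes(a,z_0)b\}$ via Proposition \ref{prop: rational intertwiner Hilb} and Lemma \ref{lem: intertwiner image is a core}) is identical to the paper's.
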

\begin{proof}
By the work of Huang \cite{HuangDEs}, there exists a $V$-module $K$ and intertwining operators $\cY_1 \in I \binom{K}{M N}$ and $\cY_2 \in I\binom{N}{\overline{M} K}$ such that
\begin{align*}
\ip{a_1 \otimes b_1, a_2 \otimes b_2}_{z,w} &= \bip{Y^N(\cY_M^-(\theta_{\overline{w}} a_2,\overline{w}^{-1}-z)a_1,z)b_1,b2}\\
&= \bip{\cY_2(\theta_{\overline{w}}a_2,\overline{w}^{-1})\cY_1(a_1,z)b_1,b_2}\\
&= \bip{\cY_1(a_1,z)b_1,\cY_2^\dagger(a_2,w)b_2}
\end{align*}
as functions on $\cR$.
By the definition of fusion product, there exist $\varphi_i \in \Hom_V(M \boxtimes N, K)$ such that $\cY_1 = \varphi_1 \cY_\boxtimes$ and $\cY_2^\dagger = \varphi_2 \cY_\boxtimes$.
Thus 
$$
\ip{a_1 \otimes b_1, a_2 \otimes b_2}_{z,w} = \ip{\varphi_2^*\varphi_1 \cY_\boxtimes(a_1,z)b_1, \cY_\boxtimes(a_2,w)b_2},
$$
which establishes existence.

Uniqueness is clear since $\cY_\boxtimes \in I_{Hilb}$ by Proposition \ref{prop: rational intertwiner Hilb} and thus the span of vectors of the form $\cY_{\boxtimes}(a,z)b$ is dense in $\cH_{M \boxtimes N}$ by Lemma \ref{lem: intertwiner image is a core}.
\end{proof}

\begin{Definition}
Under the hypotheses of Proposition \ref{prop: exists transport iso}, the endomorphism $\Lambda \in \End_{V}(M \boxtimes N)$ satisfying \eqref{eqn: transport endo} is called the \emph{transport endomorphism}\footnote{%
The term \emph{transport} in this context originates in the ``transport formula'' of Antony Wassermann \cite[\S 31]{Wa98}, which showed existence and positivity of the transport operator when $V=V(\mathfrak{sl}_n,k)$ and $M$ is the module corresponding to the Young diagram with one box.
One of the key goals of this article is to interpret Wassermann's approach in the general context of unitary VOAs.
}
associated to $M$ and $N$.
\end{Definition}

Note that $\Lambda$ depends on both the inner product chosen on $M \boxtimes N$, as well as the intertwining operator $\cY_\boxtimes$ making $(M \boxtimes N, \cY_\boxtimes)$ into a fusion product.
However, we will be interested in the positivity of $\Lambda$, which will not depend on these choices.
Note that $\Lambda$ extends to a bounded operator on the Hilbert space completion $\cH_{M \boxtimes N}$, and as usual we use the same symbol for this extension.

\begin{Remark}\label{rem: transport matrices}
In the work of Gui \cite{GuiUnitarityI,GuiUnitarityII}, the operator $\Lambda$ is studied as a matrix by choosing bases for the spaces $I \binom{K}{M N}$.
We can compare the two approaches as follows.
Let $\Irr(V) = \{M_1, \ldots, M_n\}$ be a complete list of simple modules, equipped with a unitary structure.
Let $M \boxtimes N = \bigoplus_{\alpha} K_\alpha$ be an orthogonal decomposition into simple modules, and for each $\alpha$ choose an isometry $v_\alpha:M_{i(\alpha)} \to M \boxtimes N$ such that $\operatorname{Ran}(v_\alpha) = K_\alpha$, for the appropriate $i(\alpha) \in \{1, \ldots, n\}.$
Then by the universal property of the fusion product, $\{v_\alpha^* \cY_\boxtimes\}$ is a set of intertwining operators comprising a basis for $\bigoplus_{i=1}^n I \binom{M_i}{M N}$. 

Since the $M_i$ are simple, we have $v_\beta^* \Lambda v_\alpha = \Lambda^{\alpha \beta} \operatorname{id}_{M_{i(\alpha)}}$ for some scalars $\Lambda^{\alpha \beta}$ (which are necessarily zero when $i(\alpha) \ne i(\beta)$).
By construction $\operatorname{id}_{M \boxtimes N} = \sum_{\alpha} v_\alpha v_\alpha^*$, and so we can expand the identity  \eqref{eqn: transport endo} characterizing $\Lambda$ to obtain (for appropriate $z$ and $w$)
\begin{align}\label{eqn: transport matrix}
\bip{Y^N(\cY_M^-(a_2,w-z)a_1)b_1,b_2} &=
\bip{\cY_\boxtimes^\dagger(a_2,w)\Lambda\cY_\boxtimes(a_1,z)b_1,b_2} \nonumber\\
&= \sum_{\alpha,\beta}\bip{\cY_\boxtimes^\dagger(a_2,w)(v_\beta v_\beta^*)\Lambda(v_\alpha v_\alpha^*)\cY_\boxtimes(a_1,z)b_1,b_2} \nonumber\\
&= \sum_{\alpha,\beta}\Lambda^{\alpha \beta} \bip{\cY_\beta^\dagger(a_2,w)\cY_\alpha(a_1,z)b_1,b_2},
\end{align}
where $\cY_\alpha = v_{\alpha}^*\cY_\boxtimes$ and the product $\cY_\beta^\dagger(a_2,w)\cY_\alpha(a_1,z)$ is understood as $0$ if $i(\alpha) \ne i(\beta)$.
Thus the matrix $\Lambda^{\alpha\beta}$ is exactly the transport matrix \cite[Eqn. (6.16)]{GuiUnitarityII} considered by Gui.
By construction, the operator $\Lambda \ge 0$ if and only if the matrix $\Lambda^{\alpha \beta} \ge 0$.
\end{Remark}

In this context, Gui argued that the matrix $\Lambda^{\alpha \beta}$ is invertible, invoking the deep work of Huang on the rigidity of the category $\mathrm{Rep}(V)$ of $V$-modules.
We summarize the argument here for the convenience of the reader.

\begin{Lemma}[{\cite[Thm. 6.7, Proof step 3]{GuiUnitarityII}}]
\label{lem: transport invertible}
Let $V$ be a simple unitary regular VOA, let $M$ and $N$ be unitary $V$-modules, and suppose that $M \boxtimes N$ admits a unitary structure.
Then the transport endomorphism $\Lambda$ associated to $M$ and $N$ is invertible.
\end{Lemma}
\begin{proof}
We refer the reader to \cite[Thm. 6.7, Proof step 3]{GuiUnitarityII} for complete details of the argument.
The argument relies on the fact that the fusion product $\boxtimes$ makes the category $\mathrm{Rep}(V)$ of $V$-modules into a rigid tensor category in a natural way \cite{HuangModularity}, and we refer the reader to \cite[\S2.4]{GuiUnitarityI} for more details regarding the tensor category $\mathrm{Rep}(V)$.

We use the notation of Remark~\ref{rem: transport matrices}.
As the transport endomorphism $\Lambda$ acts on $M \boxtimes N = \bigoplus_\alpha K_\alpha$ via the matrix $\Lambda^{\alpha \beta}$ (after identifying each $K_\alpha$ with an appropriate isomorphism class representative $M_{i(\alpha)}$), it suffices to prove that the matrix $\Lambda^{\alpha \beta}$ is invertible.
It suffices to consider when $M$ and $N$ are simple modules, as decomposing $M$ and $N$ into simples yields a corresponding direct sum decomposition of the matrix $\Lambda^{\alpha \beta}$, and it suffices to show that each direct summand is invertible.

As described above, in the tensor category $\mathrm{Rep}(V)$ the hom spaces $\Hom(M \boxtimes N, K)$ are identified with spaces of intertwining operators $I\binom{K}{M \, N}$.
The associators provide a natural isomorphism 
\begin{equation}\label{eqn: associator 6j}
\bigoplus_{i=1}^n\Hom(\overline{M} \boxtimes M, M_i) \otimes \Hom(M_i \boxtimes N, N) \to \bigoplus_{j=1}^n \Hom(\overline{M} \boxtimes M_j, N) \otimes \Hom(M \boxtimes N, M_j).
\end{equation}
Identifying $\Hom$ spaces with intertwining operators, we obtain an isomorphism
\[
\bigoplus_{i=1}^n I\binom{M_i}{\overline{M} \, M} \otimes  I\binom{N}{M_i \boxtimes N} \to \bigoplus_{j=1}^n I\binom{N}{\overline{M} \, M_j} \otimes I\binom{M_j}{M\, N}.
\]
The tensor category structure on $\Rep(V)$ is defined so that the calculation \eqref{eqn: transport matrix} is equivalent to the fact that
\begin{equation}\label{eqn: assoc iso intertwiners}
\cY_M^- \otimes Y^N \mapsto \sum_{\alpha,\beta} \Lambda^{\alpha \beta} \cY_\beta^\dagger \otimes \cY_\alpha
\end{equation}
under this isomorphism.

Since $M$ is simple the morphism in $\Hom(\overline{M} \boxtimes M, V)$ corresponding to the intertwining operator $\cY^-_M \in I\binom{V}{\overline{M} \, M}$ must be a non-zero multiple of the evaluation $\operatorname{ev}_M$.
Similarly, since $N$ is simple $Y^N$ must correspond to a non-zero multiple of the left unitor $\ell_N \in \Hom(V \boxtimes N, N)$. 
Thus \eqref{eqn: assoc iso intertwiners} tells us that under the associativity isomorphism \eqref{eqn: associator 6j} we have
\begin{equation}\label{eqn: ev otimes unitor goes to}
\operatorname{ev}_M \otimes \ell_N \mapsto \mu \sum_{\alpha,\beta} \Lambda^{\alpha \beta} u_\beta \otimes v_\alpha^*
\end{equation}
where $\mu$ is a non-zero complex number, $u_\beta$ is the morphism in $\Hom(\overline{M} \boxtimes M_{i(\beta)}, N)$ corresponding to $\cY^\dagger_\beta$, and we note that by construction $v_\alpha^*$ is the morphism  in $\Hom(M \boxtimes N, M_{i(\alpha)})$ corresponding to $\cY_\alpha$.

It is now a purely categorical fact that the matrix $\Lambda^{\alpha \beta}$ is invertible.
Suppressing associators and unitors, \eqref{eqn: ev otimes unitor goes to} yields an identity in $\Hom(\overline{M} \boxtimes M \boxtimes N, N)$
\[
\mathrm{ev}_M \otimes \id_N = \mu \sum_{\alpha,\beta} \Lambda^{\alpha \beta} u_\beta(\id_{\overline{M}} \otimes v_\alpha^*).
\]
Precomposing by $\mathrm{coev}_M \otimes \id_{M \boxtimes N}$ and applying the zig-zag identity (and continuing to suppress associators and unitors) yields an identity in $\End(M \boxtimes N)$
\[
\id_{M \boxtimes N} = \mu \sum_{\alpha,\beta} \Lambda^{\alpha \beta} \tilde u_\beta v_\alpha^*,
\]
where $\tilde u_\beta = (\id_M \otimes u_\beta) (\mathrm{coev}_M \otimes \id_{M_{i(\beta)}}) \in \Hom(M_{i(\beta)}, M \boxtimes N)$.
It follows that $\Lambda^{\alpha \beta}$ is invertible, as otherwise the identity morphism of $M \boxtimes N$ would factor through the projection onto a proper subobject.
\end{proof}

The following easy proposition outlines the relationship between positivity, fusion products, and transport modules for regular VOAs.
In Section \ref{sec: transport and tensor products}, we will expand on this relationship for arbitrary unitary VOAs.

\begin{Proposition}\label{prop: transport for regular VOAs}
Let $V$ be a simple regular unitary VOA, and let $M$ and $N$ be unitary $V$-modules.
Let $(M \boxtimes N, \cY_\boxtimes)$ be the fusion product of $M$ and $N$, and suppose that $M \boxtimes N$ admits a unitary structure, which we fix.
Then the following are equivalent.
\begin{enumerate}
\item For every $(z,z) \in \cR$, the sesquilinear form $\ip{ \, \cdot \, , \, \cdot \, }_{z,z}$ on $M \otimes N$ is positive semi-definite.
\item For some $(z,z) \in \cR$, the sesquilinear form $\ip{ \, \cdot \, , \, \cdot \, }_{z,z}$ on $M \otimes N$ is positive semi-definite.
\item $\Lambda$ is a positive definite operator on $\cH_{M \boxtimes N}$
\item There exists a unitary structure on $M \boxtimes N$ such that $(M \boxtimes N, \cY_\boxtimes)$ is a transport module for $M$ and $N$.
\item There exists a transport module for $M$ and $N$.
\end{enumerate}
\end{Proposition}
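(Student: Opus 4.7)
The plan is to verify the cycle $(1) \Rightarrow (2) \Rightarrow (3) \Rightarrow (4) \Rightarrow (5) \Rightarrow (1)$. The engine is the identity characterizing $\Lambda$ in Proposition \ref{prop: exists transport iso}, together with Lemma \ref{lem: intertwiner image is a core} applied to $\cY_\boxtimes$; the latter applies because $\cY_\boxtimes$ is dominant by the universal property of the tensor product and lies in $I_{Hilb}$ by Proposition \ref{prop: rational intertwiner Hilb}.

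First I will dispatch the easy steps. The implications $(1) \Rightarrow (2)$ and $(4) \Rightarrow (5)$ are tautological. For $(5) \Rightarrow (1)$, given a transport module $(M \boxtimes_t N, \cY_t)$, setting $w = z$ in the defining identity yields, for every finite combination $\sum_i \lambda_i(a_i \otimes b_i)$,
\begin{equation*}
\sum_{i,j} \bar{\lambda}_j \lambda_i \ip{a_i \otimes b_i, a_j \otimes b_j}_{z,z} = \Bigl\| \sum_i \lambda_i \cY_t(a_i, z) b_i \Bigr\|_{M \boxtimes_t N}^2 \geq 0
\end{equation*}
for every $(z,z) \in \cR$, giving (1).

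For $(2) \Rightarrow (3)$, I will fix a point $z_0$ with $(z_0, z_0) \in \cR$ at which the transport form is positive semi-definite, and set $K_{z_0} := \Span\{\cY_\boxtimes(a, z_0) b : a \in M,\, b \in N\}$. Writing an arbitrary $\xi \in K_{z_0}$ as $\sum_i \lambda_i \cY_\boxtimes(a_i, z_0)b_i$, Proposition \ref{prop: exists transport iso} at $w = z_0$ yields
\begin{equation*}
\ip{\Lambda \xi, \xi} = \sum_{i,j} \bar{\lambda}_j \lambda_i \ip{a_i \otimes b_i, a_j \otimes b_j}_{z_0, z_0} \geq 0.
\end{equation*}
Lemma \ref{lem: intertwiner image is a core} shows $K_{z_0}$ is dense in $\cH_{M \boxtimes N}$, so continuity of the bounded operator $\Lambda$ extends the inequality to the full Hilbert space. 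For $(3) \Rightarrow (4)$, I will modify the inner product by setting $\ip{x, y}_\Lambda := \ip{\Lambda x, y}$. Because $\Lambda \in \End_V(M \boxtimes N)$ commutes with every mode of $Y^{M \boxtimes N}$, the invariance axiom transfers verbatim from the original inner product to $\ip{\cdot, \cdot}_\Lambda$, and the identity defining $\Lambda$ is then exactly the transport module equation for $(M \boxtimes N, \cY_\boxtimes)$.

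The main obstacle lies in $(3) \Rightarrow (4)$: for $\ip{\cdot, \cdot}_\Lambda$ to be a \emph{genuine} inner product rather than merely semi-definite, $\Lambda$ must have trivial kernel, so ``positive definite'' in (3) is intended in the strict sense. To control $\ker \Lambda$, I will observe that it is a $V$-submodule of $M \boxtimes N$, and argue that a non-trivial kernel would allow one to construct, via the induced inner product on the quotient $M \boxtimes N / \ker \Lambda$, a transport module for $M$ and $N$ strictly smaller than $M \boxtimes N$; combined with Proposition \ref{prop: transport module unique} on uniqueness of transport modules and the universal property of $\boxtimes$, this forces $\ker \Lambda = 0$. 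Arranging this non-degeneracy step without circularly invoking the equivalence we are trying to prove is the delicate point of the argument.
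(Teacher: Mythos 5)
Your cycle of implications is the same as the paper's, and the steps $(1)\Rightarrow(2)$, $(4)\Rightarrow(5)$, $(5)\Rightarrow(1)$, and the semi-definiteness part of $(2)\Rightarrow(3)$ all match the intended argument (including the correct use of Proposition \ref{prop: exists transport iso} and the density of $K_{z_0}$ from Lemma \ref{lem: intertwiner image is a core}). The genuine gap is the passage from ``$\Lambda$ is positive semi-definite'' to ``$\Lambda$ is positive definite,'' and you have both misplaced it and proposed a repair that does not work. Since statement (3) asserts strict positive definiteness, the burden of proving invertibility of $\Lambda$ sits inside $(2)\Rightarrow(3)$; in $(3)\Rightarrow(4)$ you are already handed $\ker\Lambda=0$ as a hypothesis, so there is nothing delicate left there (and indeed, given (3), the paper's $(3)\Rightarrow(4)$ is a one-line application of Proposition \ref{prop: exists transport iso}, as you also observe).

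Your proposed mechanism for forcing $\ker\Lambda=0$ — build a transport module on the quotient $M\boxtimes N/\ker\Lambda$ and derive a contradiction from Proposition \ref{prop: transport module unique} — cannot succeed. If $\ker\Lambda\neq 0$, the quotient together with the projected intertwining operator is (after passing to the dominant part) a perfectly legitimate transport module: the definition of transport module only requires $\cY_t$ to be dominant into its target, and uniqueness of transport modules says nothing about how that target compares to $M\boxtimes N$. A proper quotient being the transport module contradicts neither the universal property of $\boxtimes$ nor Proposition \ref{prop: transport module unique}, so no contradiction is available by these soft means. The invertibility of $\Lambda$ is genuinely deep: the paper obtains it from the rigidity of the category of modules over a regular VOA (Huang's theorem), via the computation in step 3 of the proof of \cite[Thm. 6.7]{GuiUnitarityII} (see also \cite{HuangKong07}) showing that the transport matrix of Remark \ref{rem: transport matrices} is invertible. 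Without importing that input, your argument establishes the equivalence of (1), (2), (5), and a weakened form of (3) and (4) with ``semi-definite'' and ``quotient transport module,'' but not the proposition as stated.
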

\begin{proof}
We prove (1) $\implies$ (2) $\implies$ (3) $\implies$ (4) $\implies$ (5) $\implies$ (1).
The implication (1) $\implies$ (2) is a tautology.
Now assume (2).
By Proposition \ref{prop: exists transport iso}
$$
\ip{a_1 \otimes b_1, a_2 \otimes b_2}_{z,z} = \ip{\Lambda \cY_\boxtimes(a_1,z)b_1,\cY_\boxtimes(a_2,z)b_2}
$$
as single-valued functions on $\cR$.
Recall (preceding Definition \ref{def: transport module}) that the right-hand side is defined on $\cR$ by requiring that $\cY_\boxtimes(a_1,z)b$ and $\cY_\boxtimes(a_2,z)b_2$ be interpreted with the same value of $\log z$.
Since vectors $\cY_\boxtimes(a,z)b$ span a dense subspace of $\cH_{M \boxtimes N}$ by Lemma \ref{lem: intertwiner image is a core}, the operator $\Lambda$ is positive semi-definite because the form $\ip{ \, \cdot \, , \, \cdot \,}_{z,z}$ is.
The operator $\Lambda$ is positive definite (i.e. invertible) by Lemma~\ref{lem: transport invertible}.

Now assume (3). 
Then the invariant inner product $\ip{ \Lambda \, \cdot \, , \, \cdot \,}$ on $M \boxtimes N$ makes $M \boxtimes N$ into a transport module by Proposition \ref{prop: exists transport iso}.

The implication (4) $\implies$ (5) is a tautology so we now assume (5).
If $(M \boxtimes_t N, \cY_t)$ is a transport module for $M$ and $N$, then 
$$
\bip{a_1 \otimes b_1, a_2 \otimes b_2}_{z,z} = \bip{\cY_t(a_1,z)b_2, \cY_t(a_2,z)b_2}
$$
is clearly positive semi-definite for any $(z,z) \in \cR$.
\end{proof}

\begin{Remark}\label{rem: transport as regular fusion}
When the conditions of Proposition \ref{prop: transport for regular VOAs} hold, we obtain a new construction of the fusion product $M \boxtimes N$ as a subspace of the Hilbert space completion of $M \otimes N$ with respect to the transport form.
The intertwining operator $\cY_\boxtimes$ can be recovered from the fact that the canonical inclusion of $M \otimes N$ into its Hilbert space completion should be a $P(z)$-intertwining map.
Thus a unitary fusion product $M \boxtimes N$ is constructed from just the transport form; this is described in more detail in \cite{TenerPositivity} (see also Section \ref{sec: transport forms}).
Note that this construction is `manifestly unitary,' in that $M \boxtimes N$ comes already endowed with an inner product (in fact, it was constructed from the inner product).
We will verify the conditions of Proposition \ref{prop: transport for regular VOAs} in many examples in Section \ref{sec: examples}, and thus this new manifestly unitary construction produces unitary fusion products in these examples.
\end{Remark}

\begin{Remark}
The hypothesis throughout this section that $M \boxtimes N$ admits a unitary structure is expected to be vacuous, as it is conjectured that for unitary regular VOAs every module admits a unitary structure.
However, it has not even been shown that if $M$ and $N$ are unitary modules, then $M \boxtimes N$ admits a unitary structure.
The construction described in Remark \ref{rem: transport as regular fusion} provides one possibility for proving this second, weaker conjecture.
If the transport forms are positive semidefinite, then one can attempt to build $M \boxtimes N$ inside the Hilbert space completion of $M \otimes N$ in a manifestly unitary way.
\end{Remark}
%

\subsection{Transport, fusion products, and positivity}\label{sec: transport and tensor products}\label{sec: transport conjectures}

We conclude Section \ref{sec: positivity} by describing the role of transport modules in a unitary fusion product theory for unitary VOAs, and by presenting conjectures related to transport modules.

\subsubsection{The positivity conjecture}

Let $V$ be a simple unitary VOA and let $M$ and $N$ be unitary $V$-modules.
In Definition \ref{def: transport form}, we defined a family of sesquilinear forms $\ip{a_1 \otimes b_1, a_2 \otimes b_2}_{z,w}$ on $M \otimes N$ which are a single-valued function of $(z,w) \in \cR$, where
$$
\cR= \{(z,w) \in \bbC^2 : 1 > \abs{z} > \abs{\overline{w}^{-1} - z} > 0 \mbox{ and } 1 > \abs{w} > 0\}.
$$
In Proposition \ref{prop: transport for regular VOAs} and Remark \ref{rem: transport as regular fusion} we described how the transport forms provide a `manifestly unitary' approach to the study of fusion products of modules for regular VOAs, so long as we know that a certain sesquilinear form is positive semidefinite.
Moreover, we saw that this gave the same answer as the standard approach to fusion products of modules.

For VOAs which are not regular, the fusion product theory can be quite a bit more complicated.
As a first hurdle, our definition of module (with $L_0$ diagonalizable and finite-dimensional weight spaces) is too restrictive to study the representation theory of general unitary VOAs, as even the fusion product of simple unitary modules should not be a direct sum of such modules.
In \cite{TenerPositivity}, we sketched a proposal for a class of unitary modules which should be well-behaved from the perspective of fusion product theory.
For the rest of Section \ref{sec: transport and tensor products} we will use the word `module' to mean something more general than the strong sense used throughout the rest of the article.

Beyond the problem of selecting the correct category of modules, there is a fundamental issue of defining the fusion product $M \boxtimes N$, even when $M$ and $N$ are simple.
As an example of what can go wrong, we consider the Virasoro VOAs $L(c,0)$ when $c > 1$.
For such VOAs, the simple modules are parametrized by $h \in \bbC$, and $\dim I \binom{h_3}{h_1 h_2} = 1$ regardless of the values of $h_i$.
In light of this, it seems unwise to attempt to use the same definition of fusion product as in the case of regular VOAs.

Instead, we propose the notion of transport module as a candidate fusion product of unitary $V$-modules.
The motivation for this conjecture comes from the study of conformal nets.
Like unitary VOAs, conformal nets are supposed to axiomatize two-dimensional chiral conformal field theory, and so one expects a dictionary between the phenomena which arise in the two contexts.
However, the study of fusion products in the context of conformal nets looks quite different than in the context of VOAs.
While the fusion product of VOA modules is understood in terms of a universal property, the fusion product of conformal net representations is given by an explicit construction.
This construction is independent of any `niceness' properties of the conformal net, and produces a braided tensor category \cite{BKLR15}.
If one believes in the dictionary between conformal nets and VOAs, then there should be a corresponding VOA construction.
The notion of `transport module' is precisely this analogue.
Since the fusion product of conformal net representations always exists, we conjecture the same for transport modules.

\begin{Conjecture}\label{conj: existence of transport}
Let $V$ be a simple unitary VOA and let $M$ and $N$ be $V$-modules.
Then the transport form for $M$ and $N$ exists and there exists a transport module for $M$ and $N$.
\end{Conjecture}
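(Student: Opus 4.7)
The plan is to reduce Conjecture \ref{conj: existence of transport} to the framework of bounded localized vertex operators via Theorem \ref{thmalpha: positivity and transport}, and then to construct the transport module explicitly from the resulting positive form. Since the conjecture is stated for arbitrary simple unitary $V$, the first and most ambitious step is to establish two sub-conjectures: that every simple unitary VOA has bounded localized vertex operators, and that every unitary $V$-module $N$ carries a compatible $\cA_V$-representation $\pi^N$. Granted these inputs, convergence of the transport form on $\cR$ should follow geometrically: the series in \eqref{eqn: transport form} may be written as a matrix coefficient of a product $\pi^N_{I^\prime}(B\cY_M^-(a_2,w)A)\,\pi^N_I(B^\prime Y^N(a_1,z)A^\prime)$ for annuli localized in complementary intervals, and the operator product expansion around $\overline{w}^{-1}=z$ then assembles this product into an iterate whose absolute convergence is forced by the bounded insertion estimates. (For regular $V$ this convergence is automatic from Proposition \ref{prop: rational intertwiner Hilb}.)

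Second, positivity of $[\,\cdot\,,\,\cdot\,]_{z,z}$ is then immediate from Theorem \ref{thmalpha: positivity and transport}. It remains to build the transport module from this positive form: quotient $M\otimes N$ by the kernel of $[\,\cdot\,,\,\cdot\,]_{z,z}$, Hilbert-complete to obtain $\cH$ with canonical map $\iota:M\otimes N\to\cH$, define $\cY_t(a,z)b := \iota(a\otimes b)$ at the base point, extend analytically in $z$ using the transport form as a template, and impose the VOA-module action on the image of $\iota$ via the Borcherds commutator rule \eqref{eqn: vn yazb}. The identity described in Remark \ref{rem: VOA flavor transport def} is precisely what guarantees that $\cY_t$ satisfies the intertwining relations, and Lemma \ref{lem: intertwiner image is a core} ensures that modes so defined extend to closed operators with $\iota(M\otimes N)$ as a core.

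The main obstacle, in my view, lies in this third stage when $V$ is not regular. The Hilbert space $\cH$ need not split into finite-dimensional $L_0$-eigenspaces with non-accumulating real spectrum, so the output will generically fall outside the strong module category of Definition \ref{def: VOA}; an enlargement of that category, along the lines anticipated in the remark preceding the conjecture and in \cite{TenerPositivity}, is therefore a prerequisite. Even inside an enlarged category, verifying the Jacobi identity for $\cY_t$ from $[\,\cdot\,,\,\cdot\,]_{z,w}$ alone is delicate, as one must control analytic continuations of matrix coefficients $\ip{\cY_t(a,z)b,\xi}$ for $\xi$ ranging over a dense subspace of $\cH$ not \emph{a priori} algebraic. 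A promising route is to import the conformal net side as scaffolding: the Connes fusion $\cH_M\boxtimes\cH_N$ always exists, and the unitary $U:\cH_M\boxtimes\cH_N\to\cH$ supplied by Theorem \ref{thmalpha: fusion} should transfer the $\cA_V$-representation structure (and hence a positive-energy Virasoro action) onto $\cH$, from which the VOA module structure can be reconstructed by evaluating smeared localized vertex operators. Under this strategy, Conjecture \ref{conj: existence of transport} becomes essentially equivalent to the assertion that every simple unitary VOA has bounded localized vertex operators and that all of its unitary modules integrate to $\cA_V$-representations.
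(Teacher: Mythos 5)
This statement is a conjecture in the paper (Conjecture \ref{conj: existence of transport}, i.e.\ Conjecture \ref{conjalpha: transport exists} of the introduction); the paper does not prove it, and offers only partial evidence and a proposed construction. Your proposal is likewise not a proof: it reduces the conjecture to other statements that are themselves open, and the reductions contain a circularity. Concretely: (i) it is not known that every simple unitary VOA has bounded localized vertex operators, nor that every unitary module $N$ integrates to an $\cA_V$-representation $\pi^N$ --- the paper establishes these only for specific families (WZW models, their subtheories, tensor products, etc.), so Theorem \ref{thmalpha: positivity and transport} cannot be invoked for arbitrary $V$ and $N$; (ii) even granting positivity, the passage from the positive form to an actual module --- essential self-adjointness of the induced $L_0$ on $\cH_t$, closability of the modes with $\iota(M\otimes N)$ a core, and the Borcherds identity for $\cY_t$ --- is exactly the content the paper flags as unproven in Section \ref{sec: transport and tensor products}; Lemma \ref{lem: intertwiner image is a core} does not supply this, since it presupposes an already-existing intertwining operator into an already-existing unitary module; and (iii) your fallback of transferring structure from Connes fusion via the unitary $U$ of Theorem \ref{thmalpha: fusion} is circular, because Theorem \ref{thm: voa and net fusion} takes the existence of a transport module as a hypothesis. (The non-circular version of this idea is the paper's Conjecture \ref{conj: Connes fusion gives intertwining operators}, which remains open.)

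That said, your outline faithfully reproduces the program the paper itself sketches: Section \ref{sec: pos to transport} and the surrounding discussion describe precisely the completion-plus-Borcherds-commutator construction you give, and the paper agrees that the conjecture is equivalent to the assertion that this construction always succeeds. Where the paper \emph{can} produce transport modules unconditionally --- the regular case, Corollary \ref{cor: transport for regular} --- it does so by a different and easier route than building the module from scratch: it starts from the Huang--Lepowsky tensor product $M\boxtimes N$, identifies the transport form with $\ip{\Lambda\,\cdot\,,\,\cdot\,}$ for the transport endomorphism $\Lambda$, and uses rigidity to show $\Lambda$ is positive definite, so that rescaling the existing inner product yields the transport module. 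If you want an actual theorem rather than a program, you should restrict to that setting.
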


Note that a part of Conjecture \ref{conj: existence of transport} is finding the correct class of (generalized) modules for which the statement is true.
One immediate consequence of Conjecture \ref{conj: existence of transport} would be the following.

\begin{Conjecture}[Positivity conjecture]\label{conj: positivity conjecture}
Let $V$ be a simple unitary VOA and let $M$ and $N$ be $V$-modules.
Then for every $(z,z) \in \cR$ the transport forms $\ip{ \, \cdot \, , \, \cdot \,}_{z,z}$ on $M \otimes N$ are positive semidefinite.
\end{Conjecture}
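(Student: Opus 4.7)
The plan is to realize the transport form at $z=w$ as the Gram matrix of vectors produced by a single intertwining operator, since for any linear map $T\colon M \otimes N \to \cH$ the form $(\xi,\eta) \mapsto \ip{T\xi, T\eta}_\cH$ is automatically positive semidefinite. Concretely, the goal is to produce a unitary $V$-module $K$ and an intertwiner $\cY^+ \in I_{Hilb}\binom{K}{M N}$ such that
\begin{equation*}
\ip{a_1 \otimes b_1, a_2 \otimes b_2}_{z,z} = \bip{\cY^+(a_1, z)b_1,\, \cY^+(a_2, z)b_2}_K,
\end{equation*}
from which positivity follows immediately. This identity is a module-level analog of Lemma \ref{lem: ann cr fusion relation}: the transport form packages the product $Y^N(\cY_M^-(\cdot)a_1, z)$ applied to $b_1$, while the right-hand side presents the same quantity as the iterate $\cY^{+\dagger} \cY^+$.

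Producing such a $\cY^+$ abstractly is essentially equivalent to constructing the transport module of Conjecture \ref{conjalpha: transport exists}, so the realistic route is geometric, working under the auxiliary hypotheses that $V$ has bounded localized vertex operators and that $\pi^N$ exists (this is the setting of Theorem \ref{thmalpha: positivity and transport}). After applying rotation covariance one may assume $z \in (0,1)$; choose an interval $I \subset S^1$ containing a neighborhood of $\overline{z}^{-1} = z^{-1}$, and a pair $(B,A) \in \scA_I$ with $z \in \Int(B,A)$. The BLVO hypothesis makes $(a,b) \mapsto BY^N(a,z)Ab$ a bounded bilinear map $\cH_M \times \cH_N \to \cH_N$, whose Gram form
\begin{equation*}
\bip{BY^N(a_1,z)Ab_1,\, BY^N(a_2,z)Ab_2} = \bip{A^*Y^N(\theta_{\overline{z}}a_2,\overline{z}^{-1})B^*B\, Y^N(a_1,z)A\, b_1,\, b_2}
\end{equation*}
is manifestly positive semidefinite. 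The remaining task splits into two subproblems: first, identify the iterated product $Y^N(\theta_{\overline{z}}a_2,\overline{z}^{-1})Y^N(a_1,z)$ with $Y^N(\cY_M^-(\theta_{\overline{z}}a_2,\overline{z}^{-1}-z)a_1,z)$ in an appropriate sense of convergence, following the pattern of Lemma \ref{lem: ann cr fusion relation}; second, absorb the factors $A, A^*, B^*B$ by taking limits along a net of degenerate annuli whose ``thin part'' shrinks toward the identity, leaving only the transport form on both sides.

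The main obstacle is that Conjecture \ref{conj: positivity conjecture} as stated assumes neither BLVO nor the existence of $\pi^N$. The geometric strategy above yields only Theorem \ref{thmalpha: positivity and transport}; to address the full conjecture one must either establish BLVO (and construct $\pi^N$) for arbitrary simple unitary VOAs, or devise a purely VOA-theoretic proof of the iterate-product factorization that does not route through conformal nets. The latter demands convergence results beyond Huang's, which currently require $C_2$-cofiniteness; the former is the main open problem animating this article. Short of such advances, positivity can at best be verified model by model through BLVO, as in Section \ref{sec: examples}, which explains why the statement is offered only as a conjecture in this generality.
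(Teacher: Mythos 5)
You are right that this statement is offered only as a conjecture and that the paper does not prove it in general; the paper's results are exactly the partial ones you cite, namely positivity under the auxiliary hypotheses that $V$ has bounded localized vertex operators and that $\pi^N$ exists (Proposition \ref{prop: general positivity}), upgraded from "$\abs{z}$ close to $1$" to all of $\cR$ only when $V$ is regular (via Proposition \ref{prop: transport for regular VOAs}). So your assessment of the status of the statement, and of why it remains open, is accurate. However, your sketch of how the partial result is obtained misidentifies the mechanism in a way that would not close up into a proof. The expression $BY^N(a_1,z)Ab_1$ with $a_1 \in M$ does not typecheck: $Y^N$ accepts states from $V$, and an intertwining operator of type $\binom{K}{M\,N}$ whose Gram form reproduces the transport form is essentially the transport module itself, i.e. the thing one is trying to construct. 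Likewise the identification you propose, of a product of two copies of $Y^N$ with the iterate $Y^N(\cY_M^-(\theta_{\overline z}a_2,\overline z^{-1}-z)a_1,z)$, is not the relevant factorization; the transport form must be written as $\cY^\dagger(a_2,w)\cY(a_1,z)$ for an intertwiner $\cY$ out of $M\otimes N$, which is precisely what is unavailable a priori.

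The paper's actual route avoids this circularity. First, Lemma \ref{lem: ann cr fusion relation} proves the fusion relation between the annihilation and creation operators on the vacuum module, so that $(M,\cY_M^+)$ is a transport module for the pair $(M,V)$ (Proposition \ref{prop: ann cr transport module}); this is the one case where the required intertwiner is explicitly known. Consequently, setting $x_i = \cY_M^+(a_i,z_i)A_i$ for $(1,A_i)\in\scA_I$ and $z_i$ in a suitable neighborhood $\cU$ near the boundary circle, the product $x_2^*x_1$ equals $A_2^*\,Y(\cY_M^-(\theta_{\overline{z_2}}a_2,\overline{z_2}^{-1}-z_1)a_1,z_1)\,A_1$ and lies in $\cA_V(I)$ (Lemma \ref{lem: first positivity}). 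The transport form for $M$ and $N$ is then $\ip{\pi^N_I(x_2^*x_1)\xi_1,\xi_2}$ where $A_i\xi_i=b_i$; no limit along shrinking annuli is taken, because $N\subset A_i\cH_N$ by the definition of incoming generalized annulus, so the annuli are absorbed by passing to preimages. Positivity is then automatic because the normal $*$-homomorphism $\pi^N_I$ sends the manifestly positive element $x_1^*x_1$ to a positive operator, exhibiting the form as the Gram form of the vectors $x_{a}(z)V^*\tilde A b$. Note finally that even under these hypotheses the argument only yields positivity for $\abs{z}>r$ with $r$ depending on the chosen system of generalized annuli, not for all $(z,z)\in\cR$; extending to the full range currently requires regularity, a subtlety your sketch does not register.
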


In Section \ref{sec: positivity from BLVO}, we will show that the positivity conjecture holds for many modules over many VOAs, even badly non-rational ones like $\Vir_c$ with $c \ge 1$. 
We will do so by identifying the transport form with the Connes fusion inner product from conformal nets.
By Proposition \ref{prop: transport for regular VOAs}, this implies the existence of many transport modules for regular VOAs.
We believe these results provide strong evidence for Conjectures \ref{conj: existence of transport} and \ref{conj: positivity conjecture}.
In fact, the positivity conjecture should be a major component of proving the existence of transport modules in the non-regular regime.

\subsubsection{From the positivity conjecture to transport modules}

It is clear that the existence of transport modules implies the positivity conjecture.
We now discuss how to go in the opposite direction and construct a transport module given positivity.
This is a conceptually important step, as the positivity conjecture feels more approachable than the problem of constructing a transport module, as it is entirely intrinsic to the modules $M$ and $N$.
What we will now see is that the existence of transport modules can also be set up as a problem intrinsic to the modules $M$ and $N$.

If the positivity conjecture holds for $M$ and $N$, we let $\cH_t$ be the Hilbert space completion/quotient of $M \otimes N$ with respect to the transport form for some $z$.
This Hilbert space comes equipped with a map $M \otimes N \to \cH_t$, which we denote $\boxtimes_t$.
To construct a transport module, we need to put a $V$-module structure on a dense subspace $K$ of $\cH_t$, and show that there is an intertwining operator $\cY_t \in I \binom{K}{M N}$ such that $\cY_t(a,z)b = a \boxtimes_t b$.

In general the image of $\boxtimes_t$ will have trivial intersection with our desired subspace $K \subset \cH_t$, but because of Lemma \ref{lem: intertwiner image is a core} we can still hope that the image of $\boxtimes_t$ is a core for the action of modes on finite energy vectors.
Thus we proceed in the only way we can: we define an action of $V$ on the image of $\boxtimes_t$ by 
$$
v_{(n)} (a \boxtimes_t b) = a \boxtimes_t v_{(n)}b + \sum_{j=0}^\infty \binom{n}{j} z^{n-j} v_{(j)}a \boxtimes_t b.
$$
We conjecture that the induced densely defined action of $L_0$ on $\cH_t$ is essentially self-adjoint, and let $K = \bigcup_{A \ge 0} p_{[0,A]} \cH_t$, where $p_{[0,A]}$ is the spectral projection for $L_0$ corresponding to the interval $[0,A]$.
The modes $v_{(n)}$ should be closable with $K$ a core and satisfy the Borcherds identity, at which point we will have successfully constructed a transport module.

While there are many steps in the above construction which look difficult to prove, we point out that if a transport module exists, then \emph{a posteriori} this construction indeed recovers that transport module.
Since we demonstrate the existence of many transport modules in Section \ref{sec: examples}, we have many examples where this construction works.
Thus Conjecture \ref{conj: existence of transport} is equivalent to the conjecture that this construction always produces a transport module.
In particular, in the case of regular VOAs, we conjecture that this construction always produces a fusion product $M \boxtimes N$ (in light of Proposition \ref{prop: transport for regular VOAs}).
Our construction is in principal not so different than the `working backwards' construction of Huang-Lepowsky \cite{HuangLepowskyPartIII}.
The primary difference is that Huang-Lepowsky first construct a $V$-module on a certain  complicated subspace of the algebraic dual $(M \otimes N)^*$, whereas our proposed construction produces the fusion product module on the nose.

\subsubsection{From positivity to transport in vertex tensor categories}\label{sec: pos to transport}

In a vertex tensor category of $V$-modules, for any modules $M$ and $N$ one expects to be able to find a module $K$ and intertwining operators $\cY_1 \in I \binom{K}{M N}$ and $\cY_2 \in I\binom{N}{\overline{M} K}$ such that 
\begin{equation}\label{eqn: non transport splitting}
Y^N(\cY_M^-(a_2,w-z)a_1,z) = \cY_2(a_2,w)\cY_1(a_1,z)
\end{equation}
in the sense of matrix coefficients and multi-valued functions.
Thus if there exists a vertex tensor category of unitary $V$-modules, we expect to be able to find a unitary $K$ with this property.
Crucially, this should hold even when the category is not rigid.

It is not hard to show that if a decomposition \eqref{eqn: non transport splitting} exists and the positivity conjecture holds for $M$ and $N$, then there exists a transport module for $M$ and $N$.
We now sketch the argument.

Set $\tilde K = K \oplus K = K \otimes \bbC^2$,
$$
\cY = \begin{pmatrix} \cY_1\\ \cY_2^\dagger \end{pmatrix} \in I \binom{\tilde K}{M N},
\qquad
\varphi = \operatorname{id_K} \otimes \begin{pmatrix} 0 & 0\\ 1 & 0\end{pmatrix} \in \End(\tilde K).
$$
Then the identity \eqref{eqn: non transport splitting} becomes
$$
Y^N(\cY_M^-(a_2,w-z)a_1,z) = \cY^\dagger(a_2,w)\varphi\cY(a_1,z).
$$
From the convergence of the transport form, we see that $\cY \in I_{Hilb}$.
Let $M \boxtimes_t N \subset \tilde K$ be the subspace generated by the modes of $\cY$ acting on $N$, and let $v:\cH_{\tilde K} \to \cH_{M \boxtimes_t N}$ be the projection.
If we set $\Lambda:=v\varphi v^*$ and $\cY_t = v \cY$, then we have
$$
Y^N(\cY_M^-(a_2,w-z)a_1,z) = \cY_t^\dagger(a_2,w)\Lambda\cY_t(a_1,z),
$$
and now $\cY_t$ is dominant.
From here we can argue as in Proposition \ref{prop: transport for regular VOAs} that there is an inner product on $M \boxtimes_t N$ making $(M \boxtimes_t N, \cY_t)$ in to a transport module.
Indeed, since the transport from is positive and $\Lambda$ is manifestly a bounded operator, we can explicitly define the new inner product on $M \boxtimes_t N$ via $\ip{\Lambda \, \cdot \, , \, \cdot \,}$.

\subsubsection{Positive vectors in VOAs}

Positivity is a crucial aspect of the approach to quantum field theory using von Neumann algebras, and a key fact is that positivity is preserved by representations of von Neumann algebras.
The study of unitary VOAs would be greatly aided by a corresponding tool, which we now present a candidate for.

\begin{Definition}
Let $V$ be a simple unitary VOA, let $\xi \in \cH_V$, let $N$ be a unitary $V$-module, and suppose $0 < \abs{z} < 1$.
We say that $\xi$ is $z$-positive acting on $M$ if for every $b_1,b_2 \in N$ the series defining $\ip{Y^N(\xi,z)b_1,b_2}$ converges absolutely and
$$
\ip{Y^N(\xi,z)b,b} \ge 0
$$
for all $b \in N$.
\end{Definition}

In general, one should not expect to be able to find $z$-positive vectors lying in $V$ itself, only in some completion.
Observe that if $(z,z) \in \cR$, then by Proposition \ref{prop: ann cr transport module} for any unitary $V$-module $M$ we have
\begin{align*}
\ip{Y(\cY_M^+(\theta_{\overline{z}} a,\overline{z}^{-1}-z)a,z)b,b} = \bip{Y_M^+(a,z)b,\cY_M^+(a,z)b} \ge 0
\end{align*}
and thus $\cY_M^-(\theta_{\overline{z}}a,\overline{z}^{-1}-z)a$ is $z$-positive acting on $V$ for every $a \in M$.
The positivity conjecture, on the other hand, would imply that $\cY_M^-(\theta_{\overline{z}}a,\overline{z}^{-1}-z)a$ is $z$-positive acting on every unitary $V$-module.
We extend the positivity conjecture to a more general conjecture about positive vectors in VOAs:
\begin{Conjecture}
Let $V$ be a simple unitary VOA, let $\xi \in \cH_V$, and let $N$ be a unitary $V$-module.
If $\xi$ is $z$-positive acting on $V$ then $\xi$ is $z$-positive acting on $N$.
\end{Conjecture}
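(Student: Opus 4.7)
The natural plan is to reduce the conjecture to the fact that normal $*$-representations of von Neumann algebras preserve positivity, by working in the framework of bounded localized vertex operators. I would assume throughout that $V$ has bounded localized vertex operators and that $\pi^N$ exists; this is the setting in which a bridge between $V$-level and $N$-level matrix coefficients is actually available, and without some such hypothesis the conjecture appears inaccessible.

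The argument splits into four steps. First, for $(B,A) \in \scA_I$ with $z \in \Int(B,A)$, I would interpret the sesquilinear form $BY(\xi,z)A$ as a bounded operator $T_\xi$ in (the weak closure of) $\cA_V(I)$ by approximating $\xi$ via its energy-truncated components $\xi^{(k)} \in V$ and using the boundedness of $BY(s^{L_0}\cdot, z)A$ from Definition \ref{def: bounded insertions}. Second, specializing $B = A^*$ gives
\[
\ip{T_\xi \eta, \eta} = \ip{Y(\xi,z) A\eta, A\eta} \qquad (\eta \in \cH_V),
\]
and one wants to deduce $T_\xi \ge 0$ from the hypothesis that $\ip{Y(\xi,z)b,b} \ge 0$ for $b \in V$. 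The subtlety is that $A\eta \in \cH_V$ rather than $A\eta \in V$, so the $z$-positivity must be extended from $V$ to a dense subspace of $\cH_V$ by a continuity argument, exploiting that $A^{-1}(V)$ is dense in $\cH_V$. Third, since the normal $*$-homomorphism $\pi^N_I: \cA_V(I) \to \cB(\cH_N)$ preserves positivity, one obtains $\pi^N_I(T_\xi) = A^* Y^N(\xi,z) A \ge 0$ as a bounded operator on $\cH_N$. Fourth, since $A$ is injective with dense range by Definition \ref{def: localizable operators}, for $b = A\zeta \in N$ one has $\ip{Y^N(\xi,z) b, b} = \ip{\pi^N_I(T_\xi) \zeta, \zeta} \ge 0$; a density argument extends this to all $b \in N$, while boundedness of the relevant sesquilinear forms simultaneously yields absolute convergence of the defining series.

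The main obstacle is Step 2: the hypothesis that $\xi$ is $z$-positive on $V$ is a series-level statement about matrix coefficients with $b \in V$, whereas $T_\xi \ge 0$ is an operator inequality tested on arbitrary $\eta \in \cH_V$. Promoting positivity on the algebraic subspace $V \subset \cH_V$ to positivity of the bounded operator $T_\xi$ requires a uniform continuity of the maps $\eta \mapsto \ip{Y(\xi,z) A\eta, A\eta}$ which is plausible but not automatic from the hypotheses, and pinning down the correct topology for this extension seems to be the crux of the problem. A promising alternative approach would be to realize every $z$-positive $\xi$ as a norm limit of finite sums of explicit positive vectors of the form $\cY_M^-(\theta_{\overline z}a, \overline z^{-1} - z)a$, since the paragraph preceding the conjecture shows that such vectors are automatically positive on every unitary $V$-module; this would reduce the conjecture to a density statement about the positive cone, and moreover would be intrinsic to the VOA, sidestepping the BLVO hypothesis entirely.
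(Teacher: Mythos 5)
The statement you are trying to prove is presented in the paper as an open conjecture --- the author offers no proof, and it is explicitly framed as an extension of the (also open) positivity conjecture, Conjecture \ref{conj: positivity conjecture}. So there is no argument in the paper to compare against, and your proposal should be judged as a new attempt. As such it does not establish the statement as written: you immediately add the hypotheses that $V$ has bounded localized vertex operators and that $\pi^N$ exists, which are not part of the conjecture. Under those hypotheses your strategy is essentially the mechanism the paper already uses in Lemma \ref{lem: first positivity} and Proposition \ref{prop: general positivity}, and even there it has concrete gaps. In Step 1, Definition \ref{def: bounded insertions} only gives boundedness of $BY(s^{L_0}\,\cdot\,,z)A$ for \emph{some} $s>0$; to conclude that $BY(\xi,z)A$ is a bounded operator you need $\xi \in s^{L_0}\cH_V$ for that $s$, and $z$-positivity of $\xi$ supplies no such decay. (This is exactly why Lemma \ref{lem: first positivity} works only for the special vectors $\xi = \cY_M^-(\theta_{\overline{z}}a,\overline{z}^{-1}-z)a$, whose membership in $s^{L_0}\cH_V$ follows from $\cY_M^- \in I_{Hilb}$.) Step 2 is, as you say, the crux, and it remains open: positivity of the densely defined form on $A^{-1}(V)$ does not yield positivity of the bounded extension without first knowing the extension exists and agrees with the series on a core. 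Finally, even a completed argument would only cover those $z$ lying in $\Int(A^*,A)$ for available generalized annuli, i.e.\ $\abs{z}$ close to $1$, not an arbitrary $z$ in the punctured disk.

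Your proposed alternative is based on a misreading of the paper and is circular. The paragraph preceding the conjecture shows only that $\cY_M^-(\theta_{\overline{z}}a,\overline{z}^{-1}-z)a$ is $z$-positive \emph{acting on $V$} (this follows from Proposition \ref{prop: ann cr transport module}); its positivity acting on an arbitrary unitary module $N$ is precisely what the positivity conjecture asserts, and is only known under the BLVO hypotheses via Proposition \ref{prop: general positivity}. Reducing the present conjecture to density of the cone generated by such vectors therefore presupposes the very statement you are trying to generalize. If you want to make progress here, the honest formulation is: assuming $V$ has bounded localized vertex operators, $\pi^N$ exists, and $\xi \in s^{L_0}\cH_V$ for an appropriate $s$ depending on $z$, then $z$-positivity on $V$ implies $z$-positivity on $N$ for $\abs{z}$ sufficiently close to $1$ --- a genuinely weaker statement whose proof would follow your Steps 1--4 and the paper's Lemma \ref{lem: first positivity}.
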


\newpage
\section{Identification of VOA and conformal net fusion products}\label{sec: fusion rules for bimodules}

\subsection{VOA positivity via conformal nets}\label{sec: positivity from BLVO}

In Section \ref{sec: positivity}, we gave several conjectures regarding positivity of certain matrices and sesquilinear forms and described their role in the fusion product theory of unitary VOAs.
In general it is quite difficult to establish positivity of sesquilinear forms unless the positivity is manifest from the construction.
In contrast to the situation with VOAs, when chiral CFTs are studied using von Neumann algebras (i.e. as conformal nets), then the sesquilinear forms which arise are manifestly positive in light of basic results in the theory of operators on Hilbert spaces.

All of the existing positivity results for specific models require detailed case by case analysis to obtain certain analytic properties of intertwining operators (which are often not conjectured to hold for all VOAs).
In this section, we present a method for proving positivity results about VOAs via conformal nets which in contrast uses essentially no information about intertwining operators, and only uses analytic properties which are conjectured to hold for all unitary VOAs.
These analytic properties have been checked in a large class of examples, and thus by applying our method we can easily verify positivity properties in new classes of examples as well as systematically recovering earlier examples.
Our primary tool is the notion of bounded localized vertex operators and the correspondence $M \leftrightarrow \pi^M$ of representations introduced in \cite{GRACFT1, TenerGRACFT2} and summarized in Section \ref{sec: BLVO}.

The main idea in linking conformal net representations and the transport form is that if a representation $\pi^N$ exists, we can ``apply $\pi^N$'' to the transport form for $M$ and $V$ and obtain the transport form for $M$ and $N$.
This is extremely useful, as we completely understand the transport form for $M$ and $V$ since we have shown that $(M,\cY_M^+)$ is a transport module (Proposition \ref{prop: ann cr transport module}).
Since $\pi^N$ preserves positivity, we will be able to deduce positivity of the transport form for $M$ and $N$ from the one for $M$ and $V$.

The following lemma uses geometric ideas to show that the transport form $[ \, \cdot \, , \, \cdot \,]_{z_1,z_2}$ for $M$ and $N$ can be computed via $\pi^N$ when the $z_i$ lie in the interior of generalized annuli and are sufficiently close to each other.

\begin{Lemma}\label{lem: first positivity}
Let $V$ be a simple unitary VOA with bounded localized vertex operators, let $M$ and $N$ be unitary $V$-modules, and assume that $\cY_M^- \in I_{Hilb} \binom{V}{\overline{M} M}$ and that $\pi^N$ exists.
Let $I$ be an interval, let $(1,A_1),(1,A_2) \in \scA_I$, and suppose there is some $w_0 \in \Int(A_2^*,A_1) \cap S^1$.
For $a_i \in M$, let $x_{a_i}(z) = \cY_M^+(a_i,z)A_i$.

There exists a non-empty open subset $\cU \subset \Int(A_1) \cap \Int(A_2)$ with the following properties.
The boundary $\partial \cU$ contains an interval of the circle that contains the point $w_0$.
We have $\cU \times \cU \subset \cR$, and whenever $z_1,z_2 \in \cU$, $a_1,a_2 \in M$, and $b_1,b_2 \in N$:
\begin{enumerate}
\item The series defining the transport form for $M$ and $N$
$$
\ip{a_1 \otimes b_1, a_2 \otimes b_2}_{z_1,z_2} := \bip{Y^N(\cY_M^-(\theta_{\overline{z_2}}a_2, \overline{z_2}^{-1} - z_1)a_1,z_1)b_1,b_2}
$$ 
converges absolutely.
\item The sesquilinear forms $x_{a_1}(z_1)$ and $x_{a_2}(z_2)$ define bounded operators, and we have $x_{a_2}(z_2)^*x_{a_1}(z_1) \in \cA_V(I)$.
\item If $\xi_i \in \cH_N$ with $A_i \xi_i = b_i$, then 
$$
\ip{\pi^N(x_{a_2}(z_2)^*x_{a_1}(z_1))\xi_1,\xi_2} = \ip{a_1 \otimes b_1, a_2 \otimes b_2}_{z_1,z_2}.
$$
\end{enumerate}
\end{Lemma}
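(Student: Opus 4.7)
The strategy is to identify $x_{a_2}(z_2)^* x_{a_1}(z_1)$, a priori an operator on $\cH_V$, with a single localized vertex operator insertion $A_2^* Y(v, z_1) A_1 \in \cA_V(I)$, and then to transport the computation to $\cH_N$ via $\pi^N$. For the geometric setup: since $\Int(A_2^*) = \{\overline{z}^{-1} : z \in \Int(A_2)\}$ is disjoint from the open disk $\bbD$, one has $\partial \Int(A_2^*) \cap S^1 = \partial \Int(A_2) \cap S^1$, and combined with $w_0 \in \Int(A_2^*, A_1) \cap S^1 = \bigl(\cl \Int(A_1) \cup \cl \Int(A_2^*)\bigr)^{\circ} \cap S^1$ this forces $w_0 \in \partial \Int(A_1) \cap \partial \Int(A_2) \cap S^1$. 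Any sufficiently small open subset $\cU \subset \bbD$ accumulating on $w_0$ is therefore contained in $\Int(A_1) \cap \Int(A_2)$, with an arc $J$ of $S^1$ about $w_0$ lying in $\partial \cU$. For $z_1, z_2 \in \cU$ one has $|z_i|$ close to $1$ and $|\overline{z_2}^{-1} - z_1|$ arbitrarily small, so by shrinking $\cU$ we may further arrange $\cU \times \cU \subset \cR$. Axioms (3) and (4) of Definition \ref{def: system of annuli central charge c} applied to $(1, A_1), (1, A_2) \in \scA_I$ give $(A_2^*, A_1) \in \scA_I$, and $\cU \subset \Int(A_1) \subset \Int(A_2^*, A_1)$.

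The central identity is the fusion relation
\[
\cY_M^+(a_2, z_2)^* \cY_M^+(a_1, z_1) = Y\bigl(v(z_1, z_2), z_1\bigr), \qquad v(z_1, z_2) := \cY_M^-(\theta_{\overline{z_2}} a_2, \overline{z_2}^{-1} - z_1) a_1,
\]
obtained by combining the $\dagger$-relation between $\cY_M^+$ and $\cY_M^-$ with Lemma \ref{lem: ann cr fusion relation} applied to $\theta_{\overline{z_2}} a_2 \in \overline M$. Because $\cY_M^- \in I_{Hilb}$ and $|\overline{z_2}^{-1} - z_1| < 1$ throughout $\cU \times \cU$, the series defining $v(z_1, z_2)$ converges absolutely in $\cH_V$. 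For each finite-energy truncation $v_N := \sum_{n=0}^N P_n v$, the operator $A_2^* Y(v_N, z_1) A_1$ lies in $\cA_V(I)$ by the definition of bounded localized vertex operators, since $(A_2^*, A_1) \in \scA_I$ and $z_1 \in \Int(A_2^*, A_1)$. Definition \ref{def: bounded insertions} provides some $s_0 > 0$ for which $A_2^* Y(s_0^{L_0} \cdot, z_1) A_1$ is a bounded bilinear map on $\cH_V \times \cH_V$; shrinking $\cU$ so that $|\overline{z_2}^{-1} - z_1| < s_0$ on $\cU \times \cU$, the weight-decay of the components of $v$ ensures $s_0^{-L_0} v \in \cH_V$, whence $A_2^* Y(v_N, z_1) A_1 \to A_2^* Y(v, z_1) A_1$ in operator norm, with the limit in $\cA_V(I)$ because the latter is weak-operator closed. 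This establishes (2), and the boundedness of each $x_{a_i}(z_i)$ individually follows by applying the same analysis with $a_1 = a_2$, $z_1 = z_2$, and factoring.

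Applying $\pi^N$ via Definition \ref{def: piM} gives $\pi^N\bigl(A_2^* Y(v, z_1) A_1\bigr) = A_2^* Y^N(v, z_1) A_1$, and pairing with $\xi_1, \xi_2$ via $A_i \xi_i = b_i \in N$ yields
\[
\ip{\pi^N\bigl(x_{a_2}(z_2)^* x_{a_1}(z_1)\bigr) \xi_1, \xi_2} = \ip{Y^N(v(z_1,z_2), z_1) b_1, b_2} = \ip{a_1 \otimes b_1, a_2 \otimes b_2}_{z_1, z_2},
\]
which is (3), and the absolute convergence of the transport form series (1) appears as a byproduct. The principal obstacle is the second half of the central step: reconciling the Hilbert-space-valued vector $v(z_1, z_2)$ with the finite-energy vectors required by Definition \ref{def: bounded insertions}. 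This is precisely where the freedom to shrink $\cU$ near the boundary point $w_0 \in S^1$ is exploited, in order to compensate the growth of $s_0^{-L_0}$ on high-weight components of $v$.
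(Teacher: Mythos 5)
Your proposal follows essentially the same route as the paper: locate $w_0$ on $\partial\Int(A_1)\cap\partial\Int(A_2)\cap S^1$, identify $x_{a_2}(z_2)^*x_{a_1}(z_1)$ with a single localized insertion $A_2^*Y(v(z_1,z_2),z_1)\in\cA_V(I)$ via the creation/annihilation fusion relation (Lemma \ref{lem: ann cr fusion relation} / Proposition \ref{prop: ann cr transport module}), and then push that identity through $\pi^N$. Two small points to tighten: the constant $s_0$ from Definition \ref{def: bounded insertions} depends a priori on $z_1$, so shrinking $\cU$ to arrange $\abs{\overline{z_2}^{-1}-z_1}<s_0$ uniformly requires the boundedness of $A_2^*Y(s^{L_0}\cdot,z)A_1$ with a single $s$ on a whole neighborhood of $w_0$ (the paper invokes \cite[Cor. 4.5]{Ten18ax} for exactly this); and the \emph{absolute} convergence claimed in (1) does not follow from norm convergence of the truncations alone — one should keep a strict margin $\abs{\overline{z_2}^{-1}-z_1}<r<s_0$ so that the weight-$\alpha$ components carry a geometric factor $(r/s_0)^{\alpha}$, which is how the paper dominates the series.
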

\begin{proof}
First we choose the set $\cU$.
Since $V$ has bounded localized vertex operators, for every $z \in \Int(A_2^*,A_1)$ there exists some $s > 0$ such that $A_2^* Y(s^{L_0} \cdot, z)A_1$ defines a bounded map $\cH_{V} \otimes \cH_V \to \cH_V$.
Note that we used the fact that if $(1,A_i) \in \scA_I$ then $(A_2^*,A_1) \in \scA_I$ by the definition of system of generalized annuli.
By \cite[Cor. 4.5]{TenerGRACFT2}, we can find a neighborhood $w_0 \in \tilde \cU \subset \Int(A_2^*,A_1)$ and some $s > 0$ such that $A_2^* Y(s^{L_0} \cdot, z)A_1$  is bounded for every $z \in \tilde \cU$.
Since $\pi^N$ exists and is unitarily implemented, $A_2^* Y^N({s}^{L_0} \cdot, z)A_1$  is bounded for every $z \in \tilde \cU$ as well.
Without loss of generality we may take $s < 1$.
By shrinking $\tilde \cU$ about $w_0 \in S^1$, we may ensure that $\abs{z_1} > s > \abs{\overline{z_2}^{-1} - z_1}$ for all $z_1,z_2 \in \tilde \cU$.
We now set $\cU = \tilde \cU \cap \Int(A_1) \cap \Int(A_2)$, which is non-empty since $w_0 \in \partial \Int(A_1) \cap \partial \Int(A_2)$.
By construction, $\cU \times \cU \subset \cR$ (where $\cR$ is as in \eqref{eqn: R def}) and there is some interval contained in $\partial\cU$ that contains $w_0$.
$$
\begin{array}{ccc}
\begin{tikzpicture}[baseline={([yshift=-.5ex]current bounding box.center)}]
	\coordinate (aa) at (150:1cm);
	\coordinate (bb) at (270:1cm);
	\coordinate (cc) at (210:1.5cm);
	\coordinate (a) at (120:1cm);
	\coordinate (b) at (240:1cm);
	\coordinate (c) at (180:.25cm);
	\filldraw[fill=red!10!blue!20!gray!30!white] (aa)  .. controls ++(250:.6cm) and ++(120:.4cm) .. (cc) .. controls ++(300:.4cm) and ++(190:.6cm) .. (bb) arc (270:510:1cm);
	\fill[fill=white] (a)  .. controls ++(210:.6cm) and ++(90:.4cm) .. (c) .. controls ++(270:.4cm) and ++(150:.6cm) .. (b) -- ([shift=(240:1cm)]0,0) arc (240:480:1cm);
	\draw ([shift=(240:1cm)]0,0) arc (240:480:1cm);
	\draw (a) .. controls ++(210:.6cm) and ++(90:.4cm) .. (c);
	\draw (b) .. controls ++(150:.6cm) and ++(270:.4cm) .. (c);
 \node at (195:1cm) {\tiny \textbullet};
 \fill[pattern= crosshatch, pattern color=red] (183:0.9cm) -- (183:1.09cm) arc (180:215:1.09cm) -- (212:0.9cm) arc (212:180:0.9cm);
  	\draw[densely dotted] (0,0) circle (1cm);
\end{tikzpicture}
\quad&\quad
\begin{tikzpicture}[baseline={([yshift=-.5ex]current bounding box.center)}]
	\coordinate (a) at (120:1cm);
	\coordinate (b) at (240:1cm);
	\coordinate (c) at (180:.25cm);
	\fill[fill=red!10!blue!20!gray!30!white] (0,0) circle (1cm);
	\draw (0,0) circle (1cm);

	\fill[fill=white] (a)  .. controls ++(210:.6cm) and ++(90:.4cm) .. (c) .. controls ++(270:.4cm) and ++(150:.6cm) .. (b) -- ([shift=(240:1cm)]0,0) arc (240:480:1cm);
	\draw ([shift=(240:1cm)]0,0) arc (240:480:1cm);
	\draw (a) .. controls ++(210:.6cm) and ++(90:.4cm) .. (c);
	\draw (b) .. controls ++(150:.6cm) and ++(270:.4cm) .. (c);
 \draw[line width=0.4mm] (185:1.0cm) arc (185:210:1.0cm);
  \node at (195:1cm) {\tiny \textbullet};
 \fill[pattern= crosshatch, pattern color=purple] (183:0.9cm) -- (183:1.0cm) arc (180:215:1.0cm) -- (212:0.9cm) arc (212:180:0.9cm);
\end{tikzpicture}
\quad& \quad
\begin{tikzpicture}[baseline={([yshift=-.5ex]current bounding box.center)}]
	\coordinate (a) at (150:1cm);
	\coordinate (b) at (270:1cm);
	\coordinate (c) at (210:0.5cm);
	\filldraw[fill=red!10!blue!20!gray!30!white] (0,0) circle (1cm);
	\fill[white] (a)  .. controls ++(250:.6cm) and ++(120:.4cm) .. (c) .. controls ++(300:.4cm) and ++(170:.6cm) .. (b) arc (270:510:1cm);
	\draw ([shift=(270:1cm)]0,0) arc (270:510:1cm);
	\draw (a) .. controls ++(250:.6cm) and ++(120:.4cm) .. (c);
	\draw (b) .. controls ++(170:.6cm) and ++(300:.4cm) .. (c);
	\draw (0,0) circle (1cm);
 \draw[line width=0.4mm] (185:1.0cm) arc (185:210:1.0cm);
  \node at (195:1cm) {\tiny \textbullet};
 \fill[pattern= crosshatch, pattern color=purple] (183:0.9cm) -- (183:1.0cm) arc (180:215:1.0cm) -- (212:0.9cm) arc (212:180:0.9cm);
\end{tikzpicture}
\\
w_0 \in \tilde \cU \subset \Int(A_2^*, A_1)
&
\cU \subset \Int(A_1)
&
\cU \subset \Int(A_2)
\end{array}
$$
We now prove (1).
Fix a holomorphic branch of $\log$ on $\cU$ (which is possible since $\cU \times \cU \subset \cR$, see Remark \ref{rmk: U times U log}).
Without loss of generality assume that the $a_i$ and $b_i$ are homogeneous.
Fix $z_1,z_2 \in \cU$, and set 
\begin{equation*}\label{eqn: xi in rH}
\xi = \cY_M^-(\theta_{\overline{z_2}}a_2, \overline{z_2}^{-1} - z_1)a_1.
\end{equation*}
Note that the expression defining $\xi$ is multi-valued, but since $(z_1,z_2) \in \cR$ we may make the standard choice, as in the definition of the transport form (Definition \ref{def: transport form}).
Since $\cY_M^- \in I_{Hilb}\binom{V}{\overline{M} M}$ (by Proposition \ref{prop: ann cr transport module}) we have $\xi \in r^{L_0} \cH_V$ whenever $r > \abs{\overline{z_2}^{-1} - z_1}$.
Since $s >  \abs{\overline{z_2}^{-1} - z_1}$, we may fix some $r < s$ such that $\xi \in r^{L_0} \cH_V$.
Let $\eta = r^{-L_0}\xi$, and let $t=r/s$.
Then we have $\xi = s^{L_0} t^{L_0}\eta$ and $t < 1$.

Since $\pi^N$ exists, we have that $A_i \in \Ann^{in}(\cH_N)$ (see Definition~\ref{def: piM}).
This ensures that there exist $\xi_i \in \cH_N$ such that $A_i \xi_i = b_i$ (Definition~\ref{def: incoming and outgoing annuli}).
Observe that since the $b_i$ are homogeneous, there is some $\Delta \in \bbR$ such that 
$\bip{Y^N(P_\alpha\xi,z)b_1,b_2} = 0$ unless $\alpha \in \Delta + \bbZ_{\ge 0}$, where $P_\alpha$ is the projection onto vectors of conformal weight $\alpha$.
We now compute
\begin{align}\label{eqn: geom comparison}
\ip{a_1 \otimes b_1, a_2 \otimes b_2}_{z_1,z_2} &= \sum_{\alpha \in \Delta + \bbZ_{\ge 0}} \bip{Y^N(P_\alpha \xi, z_1)b_1,b_2} \nonumber\\
&=\sum_{\alpha \in \Delta + \bbZ_{\ge 0}} \bip{A_2^*Y^N(P_\alpha s^{L_0}t^{L_0} \eta, z_1)A_1 \xi_1,\xi_2} \nonumber\\
&=\sum_{\alpha \in \Delta + \bbZ_{\ge 0}} t^\alpha \bip{A_2^*Y^N(s^{L_0} P_\alpha \eta, z_1)A_1 \xi_1,\xi_2}
\end{align}
Now, since $A_2^* Y^N(s^{L_0} \cdot, z_1)A_1$ is a bounded operator and $\norm{P_\alpha \eta} \le \norm{\eta}$, we see that 
$$
\abs{\bip{A_2^*Y^N(s^{L_0} P_\alpha \eta, z_1)A_1 \xi_1,\xi_2}}
$$
is bounded independent of $\alpha$.
Thus \eqref{eqn: geom comparison} converges absolutely by comparison to a geometric series.
This completes the proof of (1).

We now proceed to (2).
Let $u_i \in V$ be homogeneous and pick $\xi_i^\prime$ such that $A_i \xi_i^\prime = u_i$.
Since $(Y^M, \cY_M^+)$ is a transport module for $M$ and $V$ (Proposition \ref{prop: ann cr transport module}),
\begin{align}\label{eqn: awk lem split form}
\bip{x_{a_1}(z_1)\xi_1^\prime, x_{a_2}(z_2)\xi_2^\prime} &= 
\bip{\cY_M^+(a_1,z_1)u_1, \cY_M^+(a_2,z_2)u_2}
 = \bip{Y(\xi,z_1)u_1,u_2} \nonumber\\ 
 &= \bip{A_2^*Y(\xi,z_1)A_1\xi_1^\prime, \xi_2^\prime}. 
\end{align}
Note that the series defining $\bip{Y(\xi,z_1)u_1,u_2}$ converges absolutely by part (1) (applied to $N=V$).
In using the definition of the transport module, it was important that $\log z_i$ were chosen from a common branch of $\log$ on $\cU$.

Since $\xi \in s^{L_0}\cH_V$ and $z_1 \in \cU$ the operator $A_2^*Y(\xi,z_1)A_1$ is bounded by the definition of bounded localized vertex operators.
Thus examining \eqref{eqn: awk lem split form} in the case $a_1 = a_2$ and $z_1=z_2$ we see that $x_{a_1}(z_1)$ and $x_{a_2}(z_1)$ are bounded operators.
Moreover we have $x_{a_2}(z_2)^*x_{a_1}(z_1) = A_2^*Y(\xi,z_1)A_1 \in \cA_V(I)$, again by the definition of bounded localized vertex operators.
This completes the proof of (2).

Finally we prove (3).
Since $\xi \in s^{L_0}\cH_V$, there exists a sequence of finite energy vectors $\eta_m$ such that $\eta_m \to s^{-L_0}\xi$.
Since $ A_2^*Y(s^{L_0}-,z_1)A_1$ is bounded, we have 
$$
A_2^*Y(s^{L_0}\eta_m,z_1)A_1 \to A_2^*Y(\xi,z_1)A_1
$$
in norm.
Thus from the definition of $\pi^N$ we have
$$
\pi^N_I(x_{a_2}(z_2)^*x_{a_1}(z_1)) = \pi^N_I(A_2^*Y(\xi,z_1)A_1) = A_2^*Y^N(\xi,z_1)A_1.
$$
Part (3) now follows immediately.
\end{proof}

As an easy consequence of Lemma \ref{lem: first positivity} we can verify the positivity conjecture (Conjecture \ref{conj: positivity conjecture}) under mild hypotheses, which we will be able to verify for many models (including non-rational ones).

\begin{Proposition}\label{prop: general positivity}
Let $V$ be a simple unitary VOA with bounded localized vertex operators, and let $M$ and $N$ be unitary $V$-modules.
Assume that $\pi^N$ exists and that $\cY_M^- \in I_{Hilb} \binom{V}{\overline{M} M}$.
Then there is some $r < 1$ such that whenever $(z,z) \in \cR$ and $\abs{z} > r$ the transport form
$$
\bip{a_1 \otimes b_1, a_2 \otimes b_2}_{z,z} := \bip{Y^N(\cY_M^-(\theta_{\overline{z}} a_2,\overline{z}^{-1}-z)a_1,z)b_1,b_2}
$$
on $M \otimes N$ exists and is positive semidefinite.
\end{Proposition}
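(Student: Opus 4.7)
The plan is to apply Lemma~\ref{lem: first positivity} directly with $A_1 = A_2 = A$ for a well-chosen $(1,A) \in \scA_I$. Under this specialization, conclusion~(3) of that lemma reads
$$
[a_1 \otimes b_1, a_2 \otimes b_2]_{z,z} \;=\; \bip{\pi^N_I(x_{a_2}(z)^* x_{a_1}(z))\xi_1,\xi_2},
$$
where $x_a(z) := \cY_M^+(a,z)A$ and $A\xi_i = b_i$. Summing over an arbitrary finite index set collapses the right-hand side to $\snorm{\sum_i \pi^N_I(x_{a_i}(z))\xi_i}^2 \ge 0$, so positivity of the diagonal transport form on all of $M \otimes N$ is immediate at any $z \in \cU$. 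The requisite $\xi_i \in \cH_N$ always exist because the hypothesis that $\pi^N$ exists forces $A \in \Ann^{in}(\cH_N)$, hence $N \subseteq \Ran(A)$.

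The first step is to produce such an $A$. Using axiom~(8) of Definition~\ref{def: system of annuli central charge c}, I fix any interval $I$ and any $(1,A) \in \scA_I$ with $\Int(A) \ne \emptyset$, and pick $w_0 \in S^1$ in the interior of the thin part of $A$, so that $w_0 \in \Int(A^*,A) \cap S^1$. Lemma~\ref{lem: first positivity} then supplies an open $\cU \subset \Int(A)$ with an arc through $w_0$ on its boundary, and by the computation above, positivity of the diagonal transport form on $M \otimes N$ holds at every $z \in \cU$.

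The second step is to spread positivity around $S^1$ using the rotation covariance in axiom~(3) of Definition~\ref{def: system of annuli central charge c}. For each $\theta \in \bbR$ the rotated pair $(1, U(r_\theta)AU(r_\theta)^*)$ lies in $\scA_{r_\theta(I)}$, and repeating the argument yields positivity on $r_\theta(\cU)$. Since $\cU$ is open and touches $S^1$, the set $\{|z| : z \in \cU\}$ is an interval of the form $(r_0,1)$, whence $\bigcup_{\theta \in \bbR} r_\theta(\cU) = \{r_0 < |z| < 1\}$. Taking $r := \max(r_0, 2^{-1/2})$ gives positivity everywhere that $(z,z) \in \cR$ and $|z| > r$.

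The main obstacle is really just geometric bookkeeping inside Lemma~\ref{lem: first positivity}: its output $\cU$ is only a shrunk neighborhood of $w_0$ inside $\Int(A)$, with $r_0$ determined by the boundedness constant $s$ and the requirement that both $|z| > s$ and $s > (1-|z|^2)/|z|$ hold throughout $\cU$. What keeps the argument simple is that $\cU$ is forced to touch $S^1$, so a single rotation sweep covers a full half-neighborhood of $S^1$ in $\bbD$. No delicate analytic input on intertwining operators beyond $\cY_M^- \in I_{Hilb}$ and the existence of $\pi^N$ is required; absolute convergence of the defining series for the transport form is already part of conclusion~(1) of Lemma~\ref{lem: first positivity}.
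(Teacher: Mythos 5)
Your proof is correct and follows essentially the same route as the paper: specialize Lemma~\ref{lem: first positivity} to $A_1=A_2=A$ with $\Int(A)\ne\emptyset$, read positivity off the Gram-type matrix $\big[\pi^N_I(x_{a_j}(z)^*x_{a_i}(z))\big]_{i,j}$, and then sweep in $|z|$ by rotation. The one spot needing repair is the expression $\pi^N_I(x_{a_i}(z))$, which is undefined as written: $x_{a_i}(z)=\cY_M^+(a_i,z)A$ maps $\cH_V\to\cH_M$ and is not an element of $\cA_V(I)$, so only the products $x_{a_j}(z)^*x_{a_i}(z)$ lie in the domain of $\pi^N_I$; the paper resolves this by using separability to write $\pi^N_I=\Ad V$ for a unitary $V:\cH_V\to\cH_N$, whence $\ip{\pi^N_I(x_{a_2}(z)^*x_{a_1}(z))\xi_1,\xi_2}=\ip{x_{a_1}(z)V^*\xi_1,\,x_{a_2}(z)V^*\xi_2}$ and positivity is manifest (equivalently, invoke complete positivity of the normal $*$-homomorphism $\pi^N_I$ applied entrywise to the positive matrix $[x_{a_j}(z)^*x_{a_i}(z)]$). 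A purely cosmetic difference is that the paper rotates the transport form itself via the covariance $\ip{e^{i\theta L_0}a_1\otimes e^{i\theta L_0}b_1, e^{i\theta L_0}a_2\otimes e^{i\theta L_0}b_2}_{z,z}=\ip{a_1\otimes b_1,a_2\otimes b_2}_{e^{i\theta}z,e^{i\theta}z}$ rather than rotating the annulus $A$ through the covariance axiom of the system of generalized annuli; the two devices are interchangeable here.
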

\begin{proof}
Observe that 
$$
\bip{e^{i \theta L_0}a_1 \otimes e^{i \theta L_0}b_1, e^{i \theta L_0}a_2 \otimes e^{i \theta L_0}b_2}_{z,z} = \bip{a_1 \otimes b_1, a_2 \otimes b_2}_{e^{i \theta} z, e^{i \theta} z}
$$
and so the existence and positivity of the transport form only depends on $\abs{z}$.

So choose some interval $I$ and some $(1,A) \in \scA_I$ with non-empty interior (which exists by the definition of system of generalized annuli).
The boundary of $\Int(1,A)$ contains an interval of $S^1$, and so we may choose some $w_0 \in \Int(A^*,A) \cap S^1$.
Let $\cU$ be the open set provided by Lemma \ref{lem: first positivity}, which necessarily contains points $z = (1-\varepsilon)w_0$ for $\varepsilon$ sufficiently small.
Thus it suffices to prove the positivity of the transport form for $z \in \cU$.

Since $\cH_N$ is a separable Hilbert space (as a part of our definition of $V$-module), there is a unitary $W: \cH_V \to \cH_N$ such that $\pi^N_I = \Ad W$.
Let $\tilde A = (A|_{\ker(A)^\perp})^{-1}$ as a map $N \to \cH_N$.
Now with notation as in Lemma \ref{lem: first positivity} part (3), we have
\begin{align*}
\bip{a_1 \otimes b_1, a_2 \otimes b_2}_{z,z} &= \bip{Wx_{a_2}(z)^*x_{a_1}(z)W^*\tilde A b_1,\tilde A b_2}\\
&= \bip{x_{a_1}(z)W^* \tilde A b_1, x_{a_2}(z)W^* \tilde A b_2}.
\end{align*}
Note that if $T$ is any linear map from a vector space to a Hilbert space, the sesquilinear form $\ip{T(v_1),T(v_2)}_{\cH}$ on $V$ is clearly positive semidefinite.
Applying this to $T:M \otimes N \to \cH_M$ given by
$$
T(a \otimes b) = x_a(z)W^* \tilde A b
$$
completes the proof.
\end{proof}

\begin{Remark}
The positivity conjecture states that $\ip{ \, \cdot \, , \, \cdot \,}_{z,z}$ is positive semidefinite whenever $(z,z) \in \cR$; that is, when $1 > \abs{z} > 1/\sqrt{2}$.
Thus to prove the conjecture for a pair of modules $M$ and $N$, one has to get the number $r$ from Proposition \ref{prop: general positivity} down to $1/\sqrt{2}$.
In practice, we do not have much control over the value of $r$ provided by Proposition \ref{prop: general positivity} as it is stated.
This is mainly due to the fact that we are using an abstract system of generalized annuli.
In future work, we expect to give a complete treatment of the representation theory of the semigroup of partially thin annuli, which will provide a system of generalized annuli for which the value of $r$ provided by Proposition \ref{prop: general positivity} is the optimal $1/\sqrt{2}$.
\end{Remark}

Even in its present form we regard Proposition \ref{prop: general positivity} as very strong evidence for the positivity conjecture.
Indeed, in Section \ref{sec: examples} we will use it to demonstrate that the transport form is positive semidefinite for a wide range of modules for rational and non-rational VOAs, including many representations $L(c,h)$ of the Virasoro algebra with $c \ge 1$.
Applying the argument of Section \ref{sec: pos to transport} we can construct transport modules for VOAs which are not necessarily rational, so long as they have a tensor category of modules.

We can now provide even stronger evidence of the positivity conjecture for regular VOAs.

\begin{Theorem}\label{thm: positivity conjecture for regular}
Let $V$ be a simple regular unitary VOA, let $M$ and $N$ be unitary $V$-modules.
Assume that $M \boxtimes N$ admits a unitary structure and that $\pi^N$ exists.
Then the transport form $\ip{ \, \cdot \, , \, \cdot \,}_{z,z}$ exists and is positive semidefinite for all $(z,z) \in \cR$.
\end{Theorem}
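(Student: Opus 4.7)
The plan is to combine Proposition \ref{prop: general positivity} with the equivalence of conditions (1) and (2) from Proposition \ref{prop: transport for regular VOAs}. The former gives positivity of the transport form on a (possibly small) neighborhood $\{(z,z) \in \cR : \abs{z} > r\}$, and the latter propagates that positivity to the entire diagonal of $\cR$ in the regular setting.

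First I would verify the hypotheses of Proposition \ref{prop: general positivity}. Since $\pi^N$ is given to exist, $V$ has bounded localized vertex operators (implicit in Definition \ref{def: piM}). Because $V$ is regular, Proposition \ref{prop: rational intertwiner Hilb} applies to $\overline{M}$, $M$, and $V$, giving $\cY_M^- \in I_{Hilb}\binom{V}{\overline{M} M}$ and ensuring that the transport form for $M$ and $N$ exists on all of $\cR$. With all hypotheses in place, Proposition \ref{prop: general positivity} produces some $r < 1$ such that the transport form $\ip{\, \cdot \, , \, \cdot \,}_{z,z}$ is positive semidefinite whenever $(z,z) \in \cR$ and $\abs{z} > r$. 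Such points exist in $\cR$ because $\cR$ contains all diagonal points $(z,z)$ with $1 > \abs{z} > 1/\sqrt{2}$, so we can choose $\abs{z_0} > \max(r, 1/\sqrt{2})$.

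Finally, since $V$ is regular and $M \boxtimes N$ admits a unitary structure (by hypothesis), Proposition \ref{prop: transport for regular VOAs} applies. The implication (2) $\Rightarrow$ (1) of that proposition states precisely that if the transport form is positive semidefinite at some point $(z_0,z_0) \in \cR$, then it is positive semidefinite at every point $(z,z) \in \cR$. Applying this to the point $z_0$ produced in the previous step completes the proof.

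There isn't really a serious obstacle here: the theorem is essentially a packaging result combining two earlier propositions. The substantive analytic work has already been done in Proposition \ref{prop: general positivity} (where the geometric input via $\pi^N$ yields local positivity), and the algebraic rigidity input needed to spread positivity globally is encoded in the equivalences of Proposition \ref{prop: transport for regular VOAs} (which relies on Huang's rigidity theorem for regular VOAs). The only care needed is to confirm that the domain on which Proposition \ref{prop: general positivity} asserts positivity has nonempty intersection with the diagonal of $\cR$, which is immediate since $r < 1$.
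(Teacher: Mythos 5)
Your proposal is correct and follows exactly the paper's own argument: regularity gives $\cY_M^- \in I_{Hilb}$ and existence of the transport form via Proposition \ref{prop: rational intertwiner Hilb}, Proposition \ref{prop: general positivity} gives positivity for some diagonal point, and the equivalence (2) $\Leftrightarrow$ (1) of Proposition \ref{prop: transport for regular VOAs} propagates it to all of $\cR$. The only difference is that you spell out the routine checks (that bounded localized vertex operators is implicit in the existence of $\pi^N$, and that the region $\abs{z}>r$ meets the diagonal of $\cR$) which the paper leaves tacit.
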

\begin{proof}
Note that $\cY_M^- \in I_{Hilb}$ and the transport forms exist since $V$ is regular (Proposition \ref{prop: rational intertwiner Hilb}).
By Proposition \ref{prop: general positivity} the transport form is positive semidefinite for some $z$, and thus by Proposition \ref{prop: transport for regular VOAs} it is positive semidefinite for all $(z,z) \in \cR$.
\end{proof}

Theorem \ref{thm: positivity conjecture for regular} provides many examples of transport modules.

\begin{Corollary}\label{cor: transport for regular}
Let $V$, $M$ and $N$ be as in Theorem \ref{thm: positivity conjecture for regular}, and let $(M \boxtimes N, \cY_\boxtimes)$ be a fusion product.
Then there is a unique unitary structure on $M \boxtimes N$ making $(M \boxtimes N, \cY_\boxtimes)$ into a transport module.
\end{Corollary}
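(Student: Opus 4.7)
The plan is to combine Theorem \ref{thm: positivity conjecture for regular} with the equivalences already packaged in Proposition \ref{prop: transport for regular VOAs}, and then to extract uniqueness from the general rigidity statement in Proposition \ref{prop: transport module unique}.

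For existence, I would argue as follows. Under the hypotheses of Theorem \ref{thm: positivity conjecture for regular}, the transport form $\ip{\,\cdot\,,\,\cdot\,}_{z,z}$ on $M \otimes N$ exists and is positive semidefinite for every $(z,z) \in \cR$. This is condition (1) of Proposition \ref{prop: transport for regular VOAs}, so the chain of implications proved there delivers condition (4): there is a unitary structure on $M \boxtimes N$ such that $(M \boxtimes N, \cY_\boxtimes)$ is a transport module. Concretely, one fixes any invariant inner product on $M \boxtimes N$ (which exists by hypothesis), obtains the transport endomorphism $\Lambda \in \End_V(M \boxtimes N)$ of Proposition \ref{prop: exists transport iso}, notes that positivity of the transport form combined with Lemma \ref{lem: intertwiner image is a core} forces $\Lambda \ge 0$, invokes Huang's rigidity to upgrade this to $\Lambda > 0$, and finally defines the new inner product as $\ip{\Lambda\,\cdot\,,\,\cdot\,}$.

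For uniqueness, suppose $\langle\,\cdot\,,\,\cdot\,\rangle_1$ and $\langle\,\cdot\,,\,\cdot\,\rangle_2$ are two invariant inner products on $M \boxtimes N$ such that both $(M \boxtimes N, \cY_\boxtimes)$, equipped with each inner product, are transport modules. Proposition \ref{prop: transport module unique} then produces a unique unitary $u$ between the two Hilbert space completions intertwining the two copies of $\cY_\boxtimes$, and moreover $u$ restricts to a $V$-module isomorphism of $M \boxtimes N$ with itself. Since $u \cY_\boxtimes = \cY_\boxtimes$, on modes we obtain $u\, a^{\cY_\boxtimes}_{(k)} b = a^{\cY_\boxtimes}_{(k)} b$ for all $a \in M$, $b \in N$, $k \in \bbR$; by dominance of $\cY_\boxtimes$ these vectors span $M \boxtimes N$, so $u = \id$ on $M \boxtimes N$. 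Consequently the two inner products agree on $M \boxtimes N$, and since each is continuous with respect to its own Hilbert space norm they in fact coincide.

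I do not expect any serious obstacle: all of the analytic and representation-theoretic content has been absorbed into Theorem \ref{thm: positivity conjecture for regular}, Proposition \ref{prop: transport for regular VOAs}, and Proposition \ref{prop: transport module unique}. The one point that requires genuine care is the uniqueness argument, where one must remember that the unitary from Proposition \ref{prop: transport module unique} is required to restrict to a $V$-module map on finite-energy vectors, so that comparing it with $\cY_\boxtimes$ on modes (rather than only on dense image subspaces of the Hilbert completion) is legitimate.
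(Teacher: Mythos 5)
Your proof is correct and follows essentially the same route as the paper: Theorem \ref{thm: positivity conjecture for regular} gives positivity of the transport form, and Proposition \ref{prop: transport for regular VOAs} converts this into the existence of the desired unitary structure. The paper's proof is in fact just these two steps and leaves the uniqueness claim implicit; your explicit uniqueness argument via Proposition \ref{prop: transport module unique} (showing the intertwining unitary is the identity on the span of the modes of $\cY_\boxtimes$, hence on all of $M \boxtimes N$ by dominance) is a correct and welcome elaboration of what the paper takes for granted.
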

\begin{proof}
Theorem \ref{thm: positivity conjecture for regular} guarantees that for all $(z,z) \in \cR$ the transport form $\ip{ \, \cdot \, , \, \cdot \,}_{z,z}$ is positive semidefinite.
By Proposition \ref{prop: transport for regular VOAs}, this is equivalent to the desired conclusion.
\end{proof}

In Section \ref{sec: examples} we describe many models to which we can apply Theorem \ref{thm: positivity conjecture for regular} and Corollary \ref{cor: transport for regular}.

We call the inner product on $M \boxtimes N$ from Corollary \ref{cor: transport for regular} the standard inner product.
Since the inner product depends on the choice of $\cY_\boxtimes$, by Proposition \ref{prop: transport module unique} a transport module pair $(M \boxtimes N, \cY_\boxtimes)$ is unique up to unique unitary isomorphism.
Thus the standard unitary structure on $M \boxtimes N$ is unique up to canonical unitary isomorphism.

If $V$ is a simple regular unitary VOA, then the work of Huang and Lepowsky demonstrates that the category $\Rep(V)$ of $V$-modules is a modular tensor category in a natural way \cite{HuangModularity} (see also \cite{CreutzigKanadeMcRae2017ax} or \cite{GuiUnitarityI} for a description of this structure).
In order to study the representation theory of regular unitary VOAs from a categorical perspective, however, one needs to consider the category $\Rep^u(V)$ of unitary $V$-modules, and one would like to know that it has the structure of a unitary modular tensor category in a natural way.
In order to define a tensor product on $\Rep^u(V)$, it is necessary to choose an invariant inner product on $M \boxtimes N$ given such a choice for $M$ and $N$.
One would like to do so in such a way that the associators, braidings, and twists which Huang and Lepowsky construct for $\Rep(V)$ are unitary.

The problem of defining such an inner product is quite different than the analogous problem for unitary representations of groups, for which there is a canonical, easy-to-construct invariant inner product on the tensor product of unitary representations.
On the other hand, it is not obvious that the fusion product of unitary VOA modules even admits an invariant inner product (although the transport form provides a candidate), but this is widely believed to be true.
In fact, it is widely believed that every module for a regular unitary VOA admits a unitary structure, although this fails quite badly for VOAs which are not regular (e.g. the Heisenberg VOA).

Gui has recently shown that $\Rep^u(V)$ is naturally a unitary modular tensor category provided that certain sesquilinear forms are positive \cite[Thm. 7.9]{GuiUnitarityII}.
In light of Corollary \ref{cor: transport for regular}, we are now able to verify this positivity hypothesis in a much wider class of models.

\begin{Corollary}\label{cor: UMTC}
Let $V$ be a simple regular unitary VOA with bounded localized vertex operators.
Suppose that every $V$-module $M$ admits a unitary structure, and that $\pi^M$ exists.
Then the standard inner product of Corollary \ref{cor: transport for regular} makes $\Rep^u(V)$ into a unitary modular tensor category in a natural way.
\end{Corollary}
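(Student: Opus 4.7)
The plan is to reduce the statement to a direct invocation of Gui's theorem \cite[Thm. 7.9]{GuiUnitarityII}, using Corollary \ref{cor: transport for regular} to verify the positivity hypothesis that Gui requires. Since $V$ is regular, Huang and Lepowsky have already constructed the structure of a modular tensor category on $\Rep(V)$, with associators, braidings and twists defined via analytic continuation of matrix coefficients of iterated intertwining operators. The content of the corollary is that after equipping each tensor product with the standard inner product, all of these natural isomorphisms become unitary and the resulting rigidity data is compatible with the $*$-structure.

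First, for any pair of unitary $V$-modules $M$ and $N$, by hypothesis every $V$-module admits a unitary structure, so in particular $M \boxtimes N$ does, and $\pi^N$ exists. Hence Theorem \ref{thm: positivity conjecture for regular} applies and the transport form $\ip{\,\cdot\,,\,\cdot\,}_{z,z}$ is positive semidefinite for all $(z,z) \in \cR$. By Corollary \ref{cor: transport for regular} there is then a unique invariant inner product on $M \boxtimes N$ making $(M \boxtimes N, \cY_\boxtimes)$ into a transport module; this is the standard inner product.

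Next, I would translate this positivity into the language used by Gui. By Remark \ref{rem: transport matrices}, once one chooses a basis of intertwining operators arising from an orthogonal decomposition $M \boxtimes N = \bigoplus_\alpha K_\alpha$ into simples, the transport endomorphism $\Lambda$ of Proposition \ref{prop: exists transport iso} is represented by precisely Gui's transport matrix $\Lambda^{\alpha\beta}$. Positivity of the transport form is equivalent to positivity of $\Lambda$, and hence to positivity of $\Lambda^{\alpha\beta}$; this is exactly the hypothesis that Gui's construction of the unitary structure in \cite{GuiUnitarityI,GuiUnitarityII} demands. Thus the standard inner product of Corollary \ref{cor: transport for regular} coincides with the inner product on $M \boxtimes N$ that Gui constructs under his positivity assumption.

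The remaining work is then quoting Gui's theorem: once the transport matrices are positive for every pair of simple unitary modules, the tensor category structure on $\Rep(V)$ constructed by Huang--Lepowsky restricts to a structure on $\Rep^u(V)$ in which the associators, hexagon braidings, twists, and rigidity maps are all unitary with respect to the standard inner products, yielding a unitary modular tensor category. Naturality and independence of the auxiliary choices (of $\cY_\boxtimes$ up to canonical isomorphism, and of the decomposition into simples) follow from the uniqueness clause of Proposition \ref{prop: transport module unique} applied pairwise. The main (and essentially only) obstacle in this route is verifying that our notion of standard inner product agrees with Gui's; this is handled by Remark \ref{rem: transport matrices}, so in the end the corollary is a fairly clean combination of Theorem \ref{thm: positivity conjecture for regular}, Corollary \ref{cor: transport for regular}, and \cite[Thm. 7.9]{GuiUnitarityII}.
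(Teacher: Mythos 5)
Your proposal is correct and follows essentially the same route as the paper: the published proof likewise reduces the statement to \cite[Thm.\ 7.9]{GuiUnitarityII} and verifies its positivity hypothesis by noting that positivity of the transport matrices $\Lambda^{\alpha\beta}$ is equivalent (via Proposition \ref{prop: transport for regular VOAs} and Remark \ref{rem: transport matrices}) to the positivity of the transport form established in Theorem \ref{thm: positivity conjecture for regular}. Your additional remarks on verifying the hypotheses of that theorem and on uniqueness of the standard inner product are consistent with the paper's (terser) argument.
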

\begin{proof}
We will apply \cite[Thm. 7.9]{GuiUnitarityII}.
In order to apply this theorem, we must verify that the transport matrices $\Lambda^{\alpha \beta}$ of \eqref{eqn: transport matrix} are positive (semi)definite.
By Proposition \ref{prop: transport for regular VOAs}, this is equivalent to the positivity of the transport form proven in Theorem \ref{thm: positivity conjecture for regular}.
\end{proof}

We apply the above result to examples such as WZW models and $W$-algebras in Section \ref{sec: examples}. 

\subsection{Transport modules and Connes fusion}\label{sec: transport and fusion}

Our original motivation for studying the standard inner product on the fusion product of unitary $V$-modules is that it is closely connected with the inner product on the Connes fusion of conformal net representations.
In this section, we will systematically develop that connection using bounded localized vertex operators.
We will work under the hypothesis that $V$ has bounded localized vertex operators, $M$ and $N$ are unitary $V$-modules such that $\pi^M$ and $\pi^N$ exists, and that a transport module $M \boxtimes_t N$ exists.
The results of Section \ref{sec: positivity from BLVO} provide large families of VOA and modules satisfying these hypotheses, and when $V$ is regular we have that $M \boxtimes_t N$ is just the usual fusion product $M \boxtimes N$ equipped with the standard inner product.

The goal of this section is to construct unitary maps 
$
U_I: \cH_M \boxtimes_I \cH_N \to \cH_{M \boxtimes_t N}
$
relating the Connes fusion to the transport module in a natural way.
These maps will depend on both the interval $I$ and a choice of a branch of the logarithm on $I$, but this additional dependence will be suppressed.

The key step in constructing a unitary 
\begin{equation}\label{eqn: fusion iso}
U_I:\cH_M \boxtimes_I \cH_N \to \cH_{M \boxtimes_t N}
\end{equation}
is to define an appropriate map 
\begin{equation}\label{eqn: transport domain codomain}
\Hom_{\cA_V(I^\prime)}(\cH_V, \cH_M) \to \cB(\cH_N, \cH_{M \boxtimes_t N}),
\end{equation}
which we call the transport map.
Indeed, if $x \in \Hom_{\cA_V(I^\prime)}(\cH_V, \cH_M)$ and $\xi \in \cH_N$, then $x \boxtimes_I \xi$ defines a vector in $\cH_M \boxtimes_{I} \cH_N$, and we have a map 
$$
(x \boxtimes_I -) \, \in \, \cB(\cH_N, \cH_M \boxtimes_{I} \cH_N).
$$
Thus if we had a unitary equivalence \eqref{eqn: fusion iso}, then we could obtain from that a map of the form \eqref{eqn: transport domain codomain}.
One of Wassermann's ideas in \cite{Wa98} was to work backwards and first construct \eqref{eqn: transport domain codomain}, and then use that map to define an isomorphism \eqref{eqn: fusion iso}.

The following fundamental lemma combines Lemma \ref{lem: first positivity}, which shows that the transport form for $M$ and $N$ can be calculated using $\pi^N$, with the definition of transport module to exhibit a correspondence of the form \eqref{eqn: transport domain codomain}.

\begin{Lemma}\label{lem: transport near boundary}
Let $V$ be a simple unitary VOA with bounded localized vertex operators, and let $M$ and $N$ be unitary $V$-modules. 
Assume that $\pi^N$ exists, and that $\cY_M^- \in I_{Hilb} \binom{V}{\overline{M} M}$.
Let $I$ be an interval, let $(1,A_1),(1,A_2) \in \scA_I$ and suppose that $\Int(A_2^*,A_1) \cap S^1 \ne \emptyset$.
Let $\cU \subset \Int(A_1) \cap \Int(A_2)$ be the non-empty open subset provided by Lemma \ref{lem: first positivity} for some choice of $w_0 \in \Int(A_2^*,A_1) \cap S^1$.
Since $\cU \times \cU \subset \cR$, we may fix a holomorphic branch of $\log$ on $\cU$ (Remark \ref{rmk: U times U log}).

Let $(M \boxtimes_t N, \cY_t)$ be a transport module for $M$ and $N$.
For $a_i \in M$, let 
$$
x_i = \cY_M^+(a_i,z_i)A_i, 
\quad y_i = \cY_t(a_i,z_i)A_i,
$$
using our fixed branch of $\log$ on $\cU$.
Then for all $z_1,z_2 \in \cU$, $y_i$ are bounded operators and
$$
\pi_I(x_2^*x_1) = y_2^*y_1.
$$
\end{Lemma}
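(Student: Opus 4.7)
The plan is to combine the previous lemma (Lemma \ref{lem: first positivity}) with the defining property of a transport module (Definition \ref{def: transport module}) on a dense subdomain, and then to upgrade the resulting identity to an equality of bounded operators. The key point will be that showing $y_i$ is bounded and the equation $\pi_I(x_2^*x_1) = y_2^*y_1$ are two facets of the same computation.

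First I would fix the domain issue: since $(1,A_i) \in \scA_I$ and $\pi^N$ exists, each $A_i$ is an incoming generalized annulus on $\cH_N$, so $D_i := A_i^{-1}(N) \subset \cH_N$ is dense. For $\xi \in D_i$ the vector $A_i\xi \in N$ lies in the algebraic domain, and because $\cY_t \in I_{Hilb}$ the expression $y_i\xi := \cY_t(a_i,z_i)A_i\xi \in \cH_{M \boxtimes_t N}$ is well-defined. Thus $y_i$ is a priori a densely defined linear map $D_i \to \cH_{M \boxtimes_t N}$.

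Next I would prove boundedness. For any $\xi \in D_i$, applying the transport module identity \eqref{eqn: transport module} with $a_1 = a_2 = a_i$, $b_1 = b_2 = A_i\xi$, $z_1 = z_2 = z_i$ gives
\[
\bnorm{y_i \xi}^2 = \bip{\cY_t(a_i,z_i)A_i\xi, \cY_t(a_i,z_i)A_i\xi} = \bip{a_i \otimes A_i\xi, a_i \otimes A_i\xi}_{z_i,z_i}.
\]
Applying Lemma \ref{lem: first positivity}(3) with $A_1 = A_2 = A_i$ and $\xi_1 = \xi_2 = \xi$ rewrites the right-hand side as $\ip{\pi^N_I(x_i^* x_i)\xi,\xi}$, which by Lemma \ref{lem: first positivity}(2) is bounded by $\snorm{x_i}^2 \snorm{\xi}^2$. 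Hence $y_i$ extends to a bounded operator $\cH_N \to \cH_{M \boxtimes_t N}$, with $\snorm{y_i} \le \snorm{x_i}$.

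Finally I would establish the operator identity. For $\xi_1 \in D_1$ and $\xi_2 \in D_2$, setting $b_i = A_i \xi_i \in N$, the same two-step computation gives
\begin{align*}
\bip{y_2^* y_1 \xi_1,\xi_2} &= \bip{\cY_t(a_1,z_1)b_1, \cY_t(a_2,z_2)b_2}
= \bip{a_1 \otimes b_1, a_2 \otimes b_2}_{z_1,z_2}
= \bip{\pi^N_I(x_2^*x_1)\xi_1,\xi_2},
\end{align*}
where the middle equality is the transport module property on $\cU \times \cU \subset \cR$ with our fixed branch of $\log$ (so that the right-hand side of Definition \ref{def: transport module} is single-valued in the sense of Remark \ref{rmk: U times U log}), and the last is Lemma \ref{lem: first positivity}(3). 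Since $D_1 \times D_2$ is dense in $\cH_N \times \cH_N$ and both sesquilinear forms are bounded, this identity extends to all of $\cH_N \times \cH_N$, giving $\pi_I(x_2^*x_1) = y_2^*y_1$. The only nontrivial step is the boundedness of $y_i$, and even this is handled by noting that it is essentially the diagonal case of the main identity; there is no further analytic hurdle once one recognizes that the transport form is the common value of the two sesquilinear forms being compared.
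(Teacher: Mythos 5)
Your proposal is correct and follows essentially the same route as the paper: combine Lemma \ref{lem: first positivity}(3) with the defining identity of the transport module on the dense domains $A_i^{-1}(N)$, read off boundedness of $y_i$ from the diagonal case $a_1=a_2$, $z_1=z_2$, $A_1=A_2$, and then extend the resulting equality of bounded sesquilinear forms to all of $\cH_N$. The only difference is the order (you establish boundedness before the general identity, the paper after), which is immaterial.
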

\begin{proof}
Let $b_1,b_2 \in N$, and since $\pi^N$ exists we may choose $\xi_i \in \cH_N$ such that $A_i \xi_i = b_i$ (see Definition~\ref{def: piM} and Definition~\ref{def: incoming and outgoing annuli}).
By Lemma \ref{lem: first positivity} and the definition of a transport module we have
\begin{align}\label{eqn: transport yields transport form}
&\bip{\pi^N(x_2^*x_1)\xi_1,\xi_2} = \ip{a_1 \otimes a_2, b_1 \otimes b_2}_{z_1,z_2} =\nonumber\\
=\,& \bip{\cY_t(a_1,z_1)b_1,\cY_t(a_2,z_2)b_2}
= \bip{y_1 \xi_1, y_2 \xi_2}.
\end{align}
In the second equality, we use the fact that $\cU \times \cU \subset \cR$, and the definition of transport module gives us an equality of single-valued functions on $\cR$.
This step requires that we chose $\log z_i$ from a single branch of $\log$ on $\cU$ to obtain the correct value of $\bip{\cY_t(a_1,z_1)b_1,\cY_t(a_2,z_2)b_2}$ (see Remark \ref{rmk: U times U log}).

Considering the case $A_1=A_2$, $z_1=z_2$ and $a_1 = a_2$, we have
$$
\bip{\pi^N(x^*x)\xi_1,\xi_2} = \ip{y\xi_1,y\xi_2}
$$
where $x = x_1 = x_2$ and $y = y_1 = y_2$.
Since vectors $\xi_i \in A^{-1}N$ are dense (since $\pi^N$ exists), and $\pi^N(x^*x)$ is bounded, we see that $y$ is bounded.
Returning to \eqref{eqn: transport yields transport form} we see that $\pi^N(x_2^*x_1) = y_2^*y_1$ as desired.
\end{proof}

The map $\cY_M^+(a,z)A \mapsto \cY_t(a,z)A$ which appears in Lemma \ref{lem: transport near boundary} will play an important role in our calculation of $\pi^M \boxtimes_I \pi^N$.
In order to relate this correspondence to Connes' fusion, we will need to know that $\cY_M^+(a,z)A$ actually lies in $\Hom_{\cA(I)}(\cH_V, \cH_M)$.
Thus our remaining results will require the hypothesis $\cY_M^+ \in I_{loc} \binom{M}{M V}$ (see Definition~\ref{def: localized intertwiner}).
In practice, we verify this hypothesis in any example for which we know that $\pi^M$ exists, using a suitable embedding of both $V$ and $M$ in a larger unitary vertex algebra $W$ with bounded localized vertex operators.
If we wish to verify this hypothesis for all irreducible modules $M$ for a given unitary vertex algebra $V$, this would require that $W$ be holomorphic.
Such embeddings are available for WZW models of type EFG and W-algebras of type E, among others.

As an intermediate step to constructing the natural unitary $U_I$ of \eqref{eqn: fusion iso}, we will construct a family of unitaries depending on additional choices (Lemma \ref{lem: fusion iso for component}), and then show that the resulting map does not depend on those choices (Lemmas \ref{lem: two annulus sharing interval transport}-\ref{lem: transport unitary doesnt invariant under rotation}).
The proofs of these lemmas are motivated by \cite{Wa98}.
We assemble the lemmas and construct the unitary $U_I$ in Theorem \ref{thm: transport unitary}.

\begin{Lemma}\label{lem: fusion iso for component}
Let $V$ be a simple unitary VOA with bounded localized vertex operators, and let $M$ and $N$ be unitary $V$-modules. 
Assume that $\pi^M$ and $\pi^N$ exist, and that $\cY_M^+ \in I_{loc} \binom{M}{M V}$.
Let $(M \boxtimes_t N, \cY_t)$ be a transport module for $M$ and $N$.

Let $I$ be an interval and let $(1,A) \in \scA_I$.
Then for every connected component $\cX$ of $\Int(A)$ and every branch of $\log z$ on $\cX$ there is a unique unitary 
$$
U_{I,A,\cX}:\cH_M \boxtimes_I \cH_N \to \cH_{M \boxtimes_t N}
$$
such that for every $a \in M$, $z \in \cX$, and $\xi \in \cH_N$
$$
U_{I,A,\cX}(\cY_M^+(a,z)A \boxtimes_I \xi) = \cY_t(a,z)A\xi.
$$
\end{Lemma}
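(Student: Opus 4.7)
The plan is to define $U_{I,A,\cX}$ on the subspace
$E := \Span\{\cY_M^+(a,z)A \boxtimes_I \xi : a \in M,\ z \in \cX,\ \xi \in \cH_N\}$
by sending $\cY_M^+(a,z)A \boxtimes_I \xi$ to $\cY_t(a,z)A\xi$, extended linearly, verify that this is well defined and isometric with dense range and domain, and then extend by continuity to the desired unitary. Writing $x(z) := \cY_M^+(a,z)A$ and $y(z) := \cY_t(a,z)A$ (with the dependence on $a$ suppressed), the hypothesis $\cY_M^+ \in I_{loc}\binom{M}{M\,V}$ together with Definition~\ref{def: localized intertwiner} gives that $x(z) \in \Hom_{\cA_V(I^\prime)}(\cH_V, \cH_M)$ is a bounded operator depending holomorphically on $z \in \Int(A)$ in the operator norm (see Section~\ref{sec: BLVO}); meanwhile $y(z)$ is initially only an operator from $A^{-1}N$ into $\cH_{M \boxtimes_t N}$, holomorphic in $z$ by Lemma~\ref{lem: Hilb intertwiners give holomorphic functions} once a branch of $\log z$ on $\cX$ is fixed.

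The main step is to establish the operator identity
\[
\pi^N_I\bigl(x_2(z_2)^* x_1(z_1)\bigr) \,=\, y_2(z_2)^* y_1(z_1), \qquad z_1, z_2 \in \cX,\ a_1, a_2 \in M,
\]
once $y(z)$ has been extended to a bounded operator on $\cH_N$. I would first apply Lemma~\ref{lem: transport near boundary} with $A_1 = A_2 = A$ and some $w_0 \in \Int(A^*,A) \cap S^1$ (non-empty because the closures of both $\Int(A)$ and $\Int(A^*)$ contain the interval $I$), chosen so that the resulting open set $\cU$ lies in $\cX$; this provides the identity in its sesquilinear-form version on $\cU \times \cU$ for $\xi_i \in A^{-1}N$. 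Both sides of the desired identity are holomorphic in $z_1$ and antiholomorphic in $z_2$ throughout the connected set $\cX \times \cX$, so the identity theorem (applied to the function of $(z_1, \overline{z_2})$) extends the equality to all of $\cX \times \cX$. Specializing $z_1 = z_2 = z$, $a_1 = a_2 = a$, and $\xi_1 = \xi_2 = \xi$ gives the norm bound $\|y(z)A\xi\|^2 \le \|x(z)\|^2 \|\xi\|^2$, so $y(z)$ extends uniquely to a bounded operator on $\cH_N$, and the operator identity follows.

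Given this identity, the Connes fusion inner product formula $\bip{x_1 \boxtimes_I \xi_1,\, x_2 \boxtimes_I \xi_2} = \bip{\pi^N_I(x_2^* x_1)\xi_1, \xi_2}$ directly shows that the formula defining $U_{I,A,\cX}$ is well defined and isometric on $E$. Its range contains every vector $\cY_t(a,z)b$ for $a \in M$ and $b \in N$ (writing $b = A\xi$, using $N \subset \Ran(A)$ from the existence of $\pi^N$), and such vectors span a dense subspace of $\cH_{M \boxtimes_t N}$ by Lemma~\ref{lem: intertwiner image is a core} applied to the dominant intertwining operator $\cY_t$. Density of the domain $E$ in $\cH_M \boxtimes_I \cH_N$ combines Lemma~\ref{lem: intertwiner image is a core} applied to $\cY_M^+$ (which, together with $V \subset \Ran(A)$, gives density of $\{x(z)\xi : a \in M,\ z \in \cX,\ \xi \in \cH_V\}$ in $\cH_M$) with the right $\cA_V(I)$-invariance $xr \boxtimes_I \xi = x \boxtimes_I \pi^N_I(r)\xi$ of the spanning set and Reeh-Schlieder cyclicity, along the lines of Wassermann~\cite{Wa98}. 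Uniqueness of $U_{I,A,\cX}$ then follows from density of its domain.

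The main technical obstacle is this last density of $E$ in the Connes fusion product, which requires promoting algebraic density in $\cH_M$ to density in a product that depends on the bimodule structure rather than just the Hilbert-space image. A secondary subtlety appears if $\cX$ fails to meet $S^1$, in which case Lemma~\ref{lem: transport near boundary} cannot directly produce a $\cU \subset \cX$; for the concrete systems of generalized annuli envisioned in practice $\Int(A)$ is connected and meets $S^1$, but in full generality one may have to argue the boundedness of $y(z)$ and the operator identity on interior components by a different route, for instance by deforming $A$ through annuli whose interiors are connected.
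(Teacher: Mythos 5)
Your proposal is correct and follows essentially the same route as the paper: Lemma \ref{lem: transport near boundary} with $A_1=A_2=A$ supplies the identity $\pi^N_I(x_2^*x_1)=y_2^*y_1$ on an open set $\cU\subset\cX$ touching $S^1$, the Connes fusion inner product formula then makes the assignment isometric, Lemma \ref{lem: intertwiner image is a core} gives density of domain and range, and analytic continuation passes from $\cU$ to all of $\cX$ (the paper constructs $U$ on $\cS\otimes\cH_N$ with $z$ restricted to $\cU$ and then continues the matrix coefficients of $U$, rather than continuing the operator identity first as you do, but the two orders are interchangeable). The two points you flag as obstacles --- promoting density of $\cS\cH_V$ in $\cH_M$ to density of $\cS\otimes\cH_N$ in the fusion product, and arranging $\cU\cap\cX\ne\emptyset$ --- are treated no more carefully in the paper, which simply asserts the first reduction and observes the second.
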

\begin{proof}
Note that $\cY_M^+ \in I_{loc}$ implies that $\cY_M^- = (\cY_M^+)^\dagger \in I_{loc}$ and in particular $\cY_M^- \in I_{Hilb}$.
Recall that $\Int(A)$ is a degenerate annulus by definition, which is to say that $\Int(A) = \bbD \setminus \overline{D}$ where $\bbD$ is the open unit disk and $D \subset \bbD$ is a Jordan domain with $C^\infty$ boundary that contains $0$. 
Thus there is necessarily some $w_0 \in \Int(A^*,A) \cap \partial\cX \cap S^1$.
Let $\cU \subset \Int(A)$ be the non-empty subset obtained from Lemma \ref{lem: first positivity}, and observe that $\cU \cap \cX \ne \emptyset$.
By replacing $\cU$ with a smaller set, we may assume $\cU \subset \cX$ and we restrict our chosen branch of $\log z$ to $\cU$.

Let $\cS \subset \Hom_{\cA_V(I^\prime)}(\cH_V,\cH_M)$ be given by
$$
\cS = \Span \{ \cY_M^+(a,z)A \, : a \in M, z \in \cU \}.
$$
Note that $\cY_M^+(a,z)A$ is bounded since $\cY_M^+ \in I_{loc}$ and $z \in \cU \subset \Int(A)$.

We first observe that $\cS\cH_V$ is dense in $\cH_M$.
Indeed, by the definition of bounded localized vertex operators (in particular the fact that $A \in \Ann^{in}(\cH_V)$ as in Definition~\ref{def: bounded insertions}), $V \subset A \cH_V$ and thus $\cY_M^+(a,z)b \in \cS\cH_V$ for all $a \in M$ and $b \in V$.
By Lemma \ref{lem: intertwiner image is a core} it indeed follows that $\overline{\cS\cH_V} = \cH_M$. 

We next claim that the image of $\cS \otimes \cH_N$ under $\boxtimes_I$ is dense in $\cH_M \boxtimes_I \cH_N$.
If $x \in \cS$ and $r \in \cA(I)$ we have $xr \boxtimes_I \xi = x \boxtimes_I \lambda_{I}(r)\xi$, and thus $\cS \boxtimes_I \cH_N = \cS\cA(I) \boxtimes_I \cH_N$, where $\cS\cA(I)$ is the span of expressions $xr$ as above.
Thus it suffices to show that the image of $\cS\cA(I) \otimes \cH_N$ is dense, and in fact we will now argue that the image of the subspace $\cS\cA(I) \otimes \Hom_{\cA(I)}(\cH_V,\cH_N)$ is dense.
By the Reeh-Schlieder property and the previous paragraph we have $\overline{\cS\cA(I)\Omega} = \overline{\cS\cH_V} = \cH_M$.
Since $\boxtimes_I$ extends continuously to $\cH_M \otimes \Hom_{\cA(I)}(\cH_V,\cH_N)$, it indeed follows that the image of $\cS\cA(I) \otimes \Hom_{\cA(I)}(\cH_V,\cH_N)$ under $\boxtimes_I$ is dense, proving the claim.

Now let $x_1,x_2 \in \cS$ and $\xi_1,\xi_2 \in \cH_N$.
Write $x_i = \cY_M^+(a_i,z_i)A$, and set 
$$
y_i = \cY_t(a_i,z_i)A.
$$
Observe that by Lemma \ref{lem: transport near boundary} we have 
$$
\pi^N(x_2^*x_1) =  \pi^N((\cY_M^+(a_2,z_2)A)^*(\cY_M^+(a_1,z_1)A)) = y_2^*y_1.
$$
Thus we have
$$
\ip{x_1 \otimes \xi_1, x_2 \otimes \xi_2}_{\boxtimes_I}
=
\ip{\pi^N(x_2^* x_1) \xi_1, \xi_2}
=
\ip{y_2^*y_1\xi_1, \xi_2}
=
\ip{y_1\xi_1,y_2\xi_2}.
$$
That is, the map $\cS \otimes \cH_N \to \cH_{M \boxtimes_t N}$ given by $x \otimes \xi \mapsto y\xi$ is an isometry when $\cS \otimes \cH_N$ is given the fusion inner product.
Since $\cS \otimes \cH_N$ is dense in the fusion product, we obtain an isometry $U=U_{I,A,\cX}:\cH_M \boxtimes_I \cH_N \to \cH_{M \boxtimes_t N}$.

By construction, 
\begin{equation}\label{eqn: U property on cU}
U(\cY_M^+(a,z)A \boxtimes_I \xi) = \cY_t(a,z)A\xi
\end{equation}
for all $z \in \cU$.
Since $N \subset A \cH_N$, the image of $U$ contains $\cY_t(a,z)b$ for all $a \in M$ and $b \in N$.
Since $\cY_t$ is dominant (by the definition of transport module), we invoke Lemma \ref{lem: intertwiner image is a core} to conclude that the image of $U$ is dense in $\cH_{M \boxtimes_t N}$.
Thus $U$ is a unitary.

It remains to establish \eqref{eqn: U property on cU} for all $z \in \cX$.
The function $z \mapsto \cY_M^+(a,z)A$ is a holomorphic function from $\Int(A)$ into $\Hom_{\cA_V(I^\prime)}(\cH_V,\cH_M)$ (see Section \ref{sec: BLVO}).
Thus if we have $b_1 \in N$, $b_2 \in M \boxtimes_t N$, and $\xi \in \cH_N$ such that $A\xi = b_1$, we have a holomorphic function
$$
z \mapsto \bip{U(\cY_M^+(a,z)A \boxtimes_I \xi),b_2}
$$
on $\Int(A)$, and in particular on $\cX$.
By \eqref{eqn: U property on cU}, we have
$$
\bip{U(\cY_M^+(a,z)A \boxtimes_I \xi)b_1,b_2} = \ip{\cY_t(a,z)b_1,b_2}
$$
for $z \in \cU$.
Both sides of this equality define holomorphic functions on $\cX$, and thus we conclude that the equality holds for all $z \in \cX$.
Since $b_2$ was arbitrary and $A^{-1}\cH_N$ is dense, we conclude that \eqref{eqn: U property on cU} holds for all $z \in \cX$ as well.
\end{proof}

Note that the existence of a branch of $\log$ on $\cX$ was required in Lemma \ref{lem: fusion iso for component}, but such a branch always exists.
Indeed, by definition, $\Int(A)$ is a degenerate annulus not containing 0 such that $\partial \Int(A) \cap S^1 \subset I$, and thus the connected components of $\Int(A)$ are simply connected.

Now given $(1,A) \in \scA_I$ and a connected component $\cX$ of $\Int(A)$ equipped with a branch of $\log$, we have constructed a unitary $U_{I,A,\cX}$.
Our next step is to compare the maps $U_{I,A,\cX}$ as $A$ and $\cX$ vary, but in order to do so we need a reasonable way of comparing branches of $\log$ on $\cX \subset \Int(A)$ and $\cX^\prime \subset \Int(A^\prime)$ for some other $(1,A^\prime) \in \scA_I$.


In order to systematically choose a branch of $\log$ on every connected component of every $A$ with $(1,A) \in \scA_I$, we will choose a continuous branch of $\log$ on the interval $I$ itself.
$$
\begin{array}{c}
\begin{tikzpicture}[baseline={([yshift=-.5ex]current bounding box.center)}]
	\coordinate (a) at (120:1cm);
	\coordinate (b) at (240:1cm);
	\coordinate (c) at (180:.25cm);
	\fill[fill=red!10!blue!20!gray!30!white] (0,0) circle (1cm);
	\fill[fill=white] (a)  .. controls ++(210:.61cm) and ++(90:.41cm) .. (c) .. controls ++(270:.41cm) and ++(150:.61cm) .. (b) -- ([shift=(240:1.01cm)]0,0) arc (240:480:1.01cm);
	\draw (115:1cm) arc (115:245:1cm);
\end{tikzpicture}\\
I \cup \Int(A)
\end{array}
$$
Since $\Int(A)$ is a degenerate annulus with $\partial \Int(A) \cap S^1 \subset I$, a continuous branch of $\log$ on $I$ extends canonically to a holomorphic branch on $\Int(A)$ thereby yields a unitary $U_{I,A,\cX}$ for every connected component $\cX$ of $\Int(A)$.
In the following we fix a branch of $\log$ on $I$ and show that the resulting unitaries $U_{I,A,\cX}$ do not depend on $A$ or $\cX$.

We build up to that result by considering a special case.

\begin{Lemma}\label{lem: two annulus sharing interval transport}
Let $V$ be a simple unitary VOA with bounded localized vertex operators, and let $M$ and $N$ be unitary $V$-modules. 
Assume that $\pi^M$ and $\pi^N$ exist, and that $\cY_M^+ \in I_{loc} \binom{M}{M V}$.
Let $(M \boxtimes_t N, \cY_t)$ be a transport module for $M$ and $N$.

Let $I$ be an interval equipped with a branch of $\log$, and let $(1,A_1),(1,A_2) \in \scA_I$.
Let $\cX_i \subseteq \Int(A_i)$ be connected components and suppose that $S^1 \cap \partial \cX_1 \cap \partial \cX_2$ contains an interval.
Then  $U_{I,A_1,\cX_1} = U_{I,A_2,\cX_2}$.
\end{Lemma}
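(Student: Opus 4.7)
The plan is to show $U_{I,A_1,\cX_1} = U_{I,A_2,\cX_2}$ by verifying the identity
\[
\bigl\langle U_{I,A_1,\cX_1} v_1,\, U_{I,A_2,\cX_2} v_2 \bigr\rangle = \langle v_1, v_2\rangle_{\boxtimes_I}
\]
on a product of dense subspaces, which forces $U_{I,A_2,\cX_2}^* U_{I,A_1,\cX_1} = 1$. The key input is Lemma \ref{lem: transport near boundary}, designed precisely to compare operators built from two \emph{different} annuli.

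First I would apply Lemma \ref{lem: first positivity} to the pair $(1,A_1),(1,A_2) \in \scA_I$. To do so I must check $\Int(A_2^*,A_1) \cap S^1 \ne \emptyset$, which follows from the hypothesis: at any point $p$ in the interior of the shared boundary interval $J \subset S^1 \cap \partial\cX_1 \cap \partial\cX_2$, a small enough disk $V$ around $p$ in $\bbC$ is disjoint from the inner Jordan boundaries of $\Int(A_1)$ and $\Int(A_2)$, so $V \cap \bbD \subseteq \cX_1$ and $V \cap (\bbC \setminus \overline{\bbD}) \subseteq \Int(A_2^*)$ (the reflection of $\cX_2$ through $S^1$); consequently $V$ lies in the interior of $\cl(\Int(A_1)) \cup \cl(\Int(A_2^*))$, i.e., in $\Int(A_2^*,A_1)$. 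Lemma \ref{lem: first positivity} then produces a non-empty open $\cU \subseteq \Int(A_1) \cap \Int(A_2)$ whose boundary contains a subinterval $\tilde J \subseteq J$. Replacing $\cU$ by a connected component whose boundary meets $\tilde J$, and noting that such a component must lie inside the unique components of $\Int(A_1)$ and $\Int(A_2)$ whose boundary contains $\tilde J$ (namely $\cX_1$ and $\cX_2$), I may assume $\cU \subseteq \cX_1 \cap \cX_2$. The branches of $\log$ on $\cX_1$ and $\cX_2$, both obtained by continuously extending the fixed branch on $I$, therefore restrict to the same branch on $\cU$.

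With this in hand, I would fix $a_i \in M$, $\xi_i \in \cH_N$, and $z_i \in \cU$, and set $v_i = \cY_M^+(a_i,z_i)A_i \boxtimes_I \xi_i$, $x_i = \cY_M^+(a_i,z_i)A_i$, and $y_i = \cY_t(a_i,z_i)A_i$ using the common branch on $\cU$. Since $z_i \in \cX_i$ and the branches agree, Lemma \ref{lem: fusion iso for component} yields $U_{I,A_i,\cX_i}(v_i) = y_i \xi_i$. Lemma \ref{lem: transport near boundary} applied to $(1,A_1),(1,A_2)$ with this branch then gives $\pi^N(x_2^* x_1) = y_2^* y_1$, so
\[
\bigl\langle U_{I,A_1,\cX_1} v_1,\, U_{I,A_2,\cX_2} v_2 \bigr\rangle
= \langle y_1 \xi_1, y_2 \xi_2 \rangle
= \langle \pi^N(x_2^* x_1) \xi_1, \xi_2 \rangle
= \langle v_1, v_2 \rangle_{\boxtimes_I}.
\]

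Finally, exactly the density argument from the proof of Lemma \ref{lem: fusion iso for component}, applied to the same open set $\cU \subseteq \cX_i$, shows that $\Span\{\cY_M^+(a,z)A_i : a \in M,\, z \in \cU\}\otimes \cH_N$ is dense in $\cH_M \boxtimes_I \cH_N$ for each $i \in \{1,2\}$. The identity above on the product of these dense subspaces yields $U_{I,A_1,\cX_1} = U_{I,A_2,\cX_2}$. The only real subtlety is the bookkeeping of logarithm branches across $\cX_1$, $\cU$, and $\cX_2$, and this is handled uniformly by the common branch on $I$.
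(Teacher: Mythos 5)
Your proposal is correct and follows essentially the same route as the paper: both arguments hinge on applying Lemma \ref{lem: transport near boundary} to the pair $(1,A_1),(1,A_2)$ on a common open set $\cU \subset \cX_1 \cap \cX_2$ with the branch of $\log$ extended from $I$, together with the density of the relevant spans. The only (cosmetic) difference is that the paper rebuilds a single unitary on the combined span and identifies it with both $U_{I,A_i,\cX_i}$ by analytic continuation, whereas you compare the two existing unitaries directly via $\bigl\langle U_{I,A_1,\cX_1}v_1, U_{I,A_2,\cX_2}v_2\bigr\rangle = \langle v_1,v_2\rangle_{\boxtimes_I}$ on dense subspaces.
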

\begin{proof}
Since $S^1 \cap \partial \cX_1 \cap \partial \cX_2$ contains an interval, we may choose a $w_0 \in S^1$ with the property that $w_0 \in \Int(A_j^*,A_i)$ for any choice of $i,j \in \{1,2\}$.
Let $\cU_{ij} \subset \Int(A_i) \cap \Int(A_j)$ be the open sets obtained by applying Lemma \ref{lem: transport near boundary}.
Since the boundary of each $\cU_{ij}$ contains an interval about $w_0$, the same is true for the boundary of $\cU := \bigcap \cU_{ij}$.
Hence $\cU \cap \cX_1 \cap \cX_2 \ne \emptyset$, and by taking $\cU$ smaller if necessary we assume $\cU \subset \cX_1 \cap \cX_2$.

Now for $i,j \in \{1,2\}$, $a_1,a_2 \in M$, $z_1, z_2 \in \cU$, if we set
$$
x_1 = \cY_M^+(a_1,z_1)A_i, \quad y_1 = \cY_t(a_1,z_1)A_i
$$
and
$$
x_2 = \cY_M^+(a_2,z_2)A_j, \quad y_2 = \cY_t(a_2,z_2)A_j,
$$
where $\log z_i$ is obtained by holomorphically extending the branch on $I$ to $\cU$,
then 
\begin{equation}\label{eqn: two annulus transport}
\pi_I(x_2^*x_1) = y_2^*y_1
\end{equation}
by Lemma \ref{lem: transport near boundary}.

We now construct a unitary $U: \cH_M \boxtimes_I \cH_N \to \cH_{M \boxtimes_t N}$ in the same way as in the proof of Lemma \ref{lem: fusion iso for component}.
Using the transport relation \eqref{eqn: two annulus transport}, we can argue just as in Lemma \ref{lem: fusion iso for component} that the map
$$
\cY_M^+(a,z)A_i \otimes \xi \mapsto \cY_t(a,z)A_i\xi
$$
induces a unitary $U:\cH_M \boxtimes_I \cH_N \to \cH_{M \boxtimes_t N}$ such that 
\begin{equation}\label{eqn: two annulus unitary}
U(\cY_M^+(a,z)A_i \boxtimes_I \xi) = \cY_t(a,z)A_i
\end{equation}
for $i \in \{1,2\}$.
Analytically continuing, we see that  \eqref{eqn: two annulus unitary} holds for all $z \in \cX_i$.
Thus $U = U_{I,A_1,\cX_1} = U_{I,A_2,\cX_2}$, as desired.
\end{proof}

Suppose $\tilde I \subset I$, and $\theta$ is a sufficiently small angle that $r_\theta(\tilde I) \subset I$.
For $(1,A) \in \scA_{\tilde I}$, set $A_\theta = e^{i \theta L_0} A e^{-i \theta L_0}$ and note that $(1,A_\theta) \in \scA_I$ by the definition of system of generalized annuli. 
Moreover $\Int(A_\theta) = r_{\theta}(\Int(A))$.
The next lemma identifies the maps $U_{I,A,\cX}$ for an annulus with the map for the rotated annulus.

\begin{Lemma}\label{lem: transport unitary doesnt invariant under rotation}
Let $V$ be a simple unitary VOA with bounded localized vertex operators, and let $M$ and $N$ be unitary $V$-modules. 
Assume that $\pi^M$ and $\pi^N$ exist, and that $\cY_M^+ \in I_{loc} \binom{M}{M V}$.
Let $(M \boxtimes_t N, \cY_t)$ be a transport module for $M$ and $N$.

Let $I$ be an interval equipped with a branch of $\log$, let $\tilde I \subset I$, and let $(1,A) \in \scA_{\tilde I}$.
Suppose $r_\theta(\tilde I) \subset I$.
Let $\cX$ be a connected component of $\Int(A)$ and let $\cX_\theta = r_\theta(\cX)$.
Then $U_{I,A,\cX} = U_{I,A_\theta,\cX_\theta}$.
\end{Lemma}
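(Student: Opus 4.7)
The plan is to establish the equality first for small $|\theta|$ by a direct application of Lemma \ref{lem: two annulus sharing interval transport}, then to extend it to arbitrary admissible $\theta$ via an interpolation through a chain of small rotations. Observe that since $U_{I,A,\cX}$ is defined via Lemma \ref{lem: fusion iso for component}, the argument there (the choice of a basepoint $w_0 \in \Int(A^*,A) \cap S^1$) forces $\partial\cX \cap S^1$ to contain an interval $J$, and $J \subset \tilde I$. By the rotation axiom for systems of generalized annuli, $(1, A_t) \in \scA_{r_t(\tilde I)}$ for every $t$, and $\cX_t = r_t(\cX)$ is a connected component of $\Int(A_t)$ with boundary arc $r_t(J)$.

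For the small-$|\theta|$ step, $J \cap r_\theta(J)$ contains an interval whenever $|\theta|$ is less than the arc length of $J$. The branches of $\log$ on $\cX$ and $\cX_\theta$ are both obtained by holomorphic extension of the single fixed branch on $I$, and hence they coincide on the overlap interval $J \cap r_\theta(J) \subset I$. The hypotheses of Lemma \ref{lem: two annulus sharing interval transport} are then met with $A_1 = A$, $A_2 = A_\theta$, $\cX_1 = \cX$, $\cX_2 = \cX_\theta$, so that lemma immediately yields $U_{I,A,\cX} = U_{I,A_\theta,\cX_\theta}$ in this case.

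For general $\theta$ satisfying $r_\theta(\tilde I) \subseteq I$, I would use that the set $T := \{t \in \bbR : r_t(\tilde I) \subseteq I\}$ is a closed interval of $\bbR$ containing both $0$ and $\theta$. Partition the segment from $0$ to $\theta$ as $0 = t_0 < t_1 < \cdots < t_n = \theta$ (or the analogous decreasing partition) with consecutive gaps small enough for the small-$|\theta|$ step to apply; since the arc length of $r_t(J)$ is independent of $t$, a single uniform bound on the gap suffices. Applying the small-$|\theta|$ step at stage $k$ (with $A$ replaced by $A_{t_k}$ and $\tilde I$ replaced by $r_{t_k}(\tilde I)$) yields $U_{I,A_{t_k},\cX_{t_k}} = U_{I,A_{t_{k+1}},\cX_{t_{k+1}}}$, and chaining these equalities gives the desired equality. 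The main obstacle I anticipate is merely bookkeeping the branch of $\log$ across the chain of rotations; this is resolved by the observation that all branches in play descend by holomorphic extension from the single fixed branch on $I$, so no phase ambiguity can accumulate along the chain.
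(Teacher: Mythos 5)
Your proof is correct and follows essentially the same route as the paper's: for $|\theta|$ smaller than the length of a boundary arc $J \subset \partial\cX \cap S^1$, the components $\cX$ and $\cX_\theta$ share a boundary interval so Lemma \ref{lem: two annulus sharing interval transport} applies directly, and the general case follows by iterating small rotations. The extra bookkeeping you supply (that all branches of $\log$ descend from the fixed branch on $I$, and that the admissible rotation angles form an interval) is exactly the content the paper leaves implicit.
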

\begin{proof}
Let $J$ be an interval compactly contained in $\partial \cX \cap S^1$.
Observe that for $\theta < \abs{J}$, $\cX$ and $\cX_\theta$ share a boundary interval and thus the result follows from Lemma \ref{lem: two annulus sharing interval transport}.
For the general case, we may iterate the above observation.
\end{proof}

We are now ready to prove the main result of this section:

\begin{Theorem}\label{thm: transport unitary}
Let $V$ be a simple unitary VOA with bounded localized vertex operators, and let $M$ and $N$ be unitary $V$-modules. 
Assume that $\pi^M$ and $\pi^N$ exist, and that $\cY_M^+ \in I_{loc} \binom{M}{M V}$.
Let $(M \boxtimes_t N, \cY_t)$ be a transport module for $M$ and $N$.
Let $I$ be an interval equipped with a branch of $\log$.

Then there exists a unique unitary 
$$
U_I: \cH_M \boxtimes_I \cH_N \to \cH_{M \boxtimes_t N}
$$
such that for all $a \in M$, $(1,A) \in \scA_I$, $z \in \Int(A)$, and $\xi \in \cH_N$ we have
$$
U_I(\cY_M^+(a,z)A \boxtimes_I \xi) = \cY_t(a,z)A\xi
$$
where the value of $\log z$ is obtained by continuing the branch of $\log$ from $I$ to $\Int(A)$.
\end{Theorem}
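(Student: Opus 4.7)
The plan is to assemble the per-component unitaries $U_{I,A,\cX}$ provided by Lemma \ref{lem: fusion iso for component} into a single unitary $U_I$ independent of the auxiliary data $(A,\cX)$. For every $(1,A)\in\scA_I$ and every connected component $\cX$ of $\Int(A)$, the fixed continuous branch of $\log$ on $I$ extends uniquely to a holomorphic branch on the simply-connected set $\cX$, because $\partial\cX\cap S^1\subset I$. With respect to that branch, Lemma \ref{lem: fusion iso for component} provides a unitary $U_{I,A,\cX}$ satisfying the required intertwining relation on $\cX$.

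The main step is to prove that $U_{I,A_1,\cX_1} = U_{I,A_2,\cX_2}$ for any two triples. This is known in two special cases: Lemma \ref{lem: two annulus sharing interval transport} gives equality when the boundary arcs $J_i:=\partial\cX_i\cap S^1$ share an interval, and Lemma \ref{lem: transport unitary doesnt invariant under rotation} gives equality when one triple is a rotation of the other. For the general case, I would fix a proper sub-interval $\tilde I\subset I$ with $\overline{\tilde I}\subset I$ small enough that it admits several rotates inside $I$, and use axiom (8) of Definition \ref{def: system of annuli central charge c} to select $(1,B)\in\scA_{\tilde I}$ with $\Int(B)\ne\emptyset$ and a component $\cY\subset\Int(B)$. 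Then, for angles $\theta_0<\theta_1<\dots<\theta_n$ chosen so that $J_1\cap r_{\theta_0}(\cY)\cap S^1$ and $r_{\theta_n}(\cY)\cap S^1\cap J_2$ each contain an interval and consecutive rotates $r_{\theta_k}(\cY)$ and $r_{\theta_{k+1}}(\cY)$ have overlapping boundary arcs in $I$, one gets a chain of equalities
\begin{equation*}
U_{I,A_1,\cX_1} \;=\; U_{I,B_{\theta_0},r_{\theta_0}\cY} \;=\; \cdots \;=\; U_{I,B_{\theta_n},r_{\theta_n}\cY} \;=\; U_{I,A_2,\cX_2}
\end{equation*}
by alternating Lemma \ref{lem: two annulus sharing interval transport} (between neighboring links) and Lemma \ref{lem: transport unitary doesnt invariant under rotation} (to relate $B_{\theta_k}$ to $B_{\theta_{k+1}}$).

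Once independence is established, I set $U_I:=U_{I,A,\cX}$ for any valid choice. The defining identity $U_I(\cY_M^+(a,z)A\boxtimes_I\xi)=\cY_t(a,z)A\xi$ for all $(1,A)\in\scA_I$ and $z\in\Int(A)$ holds component-by-component on each $\cX\subset\Int(A)$, which exhausts $\Int(A)$. Uniqueness is immediate: the proof of Lemma \ref{lem: fusion iso for component} already shows that for a single annulus $A$ the vectors $\cY_M^+(a,z)A\boxtimes_I\xi$ span a dense subspace of $\cH_M\boxtimes_I\cH_N$, so a unitary satisfying the prescribed formula is uniquely determined.

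The main delicate point I expect is the bookkeeping of the branches of $\log$ along the chain in Step 2: when applying Lemma \ref{lem: transport unitary doesnt invariant under rotation} one must check that the branch on $r_\theta(\cX)$ obtained by continuing from $I$ agrees with the branch transported by $r_\theta$ from $\cX$, and when applying Lemma \ref{lem: two annulus sharing interval transport} one must verify that both components use the continuation from the same branch on $I$. Both reduce to the tautology that we have fixed \emph{one} continuous branch of $\log$ on the whole of $I$ once and for all, and that each component's branch is its unique holomorphic extension across the shared boundary arc in $I$; this rules out stray factors of $e^{2\pi i L_0}$. The remaining small check—that the chain of rotates of $\cY$ can actually be fit inside $I$—is automatic once $\tilde I$ is chosen small enough, since the required angular span is bounded by $|I|$.
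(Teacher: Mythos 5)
Your proposal is correct and follows essentially the same route as the paper: build the unitaries $U_{I,A,\cX}$ from Lemma \ref{lem: fusion iso for component} and prove their independence of $(A,\cX)$ by combining Lemma \ref{lem: two annulus sharing interval transport} with Lemma \ref{lem: transport unitary doesnt invariant under rotation}, after reducing to annuli localized in small subintervals of $I$. The only cosmetic difference is that you connect $\cX_1$ to $\cX_2$ through a chain of rotates of a single reference annulus, whereas the paper first replaces each $A_i$ by a small annulus sharing a boundary arc with $\cX_i$ and then rotates those two into overlapping position; the branch-of-$\log$ bookkeeping you flag is handled the same way in both.
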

\begin{proof}
We must show that for every $(1,A),(1,A^\prime) \in \scA_I$ and any connected components $\cX \subset \Int(A)$ and $\cX^\prime \subset \Int(A^\prime)$, we have $U_{I,A,\cX} = U_{I,A^\prime,\cX^\prime}$.

We first show this under the additional assumption that $A$ and $A^\prime$ can be localized in small subintervals of $I$.
Specifically, suppose that there are intervals $\tilde I, \tilde K \subset I$ with $\babs{\tilde I},\babs{\tilde K} < \abs{I}/2$ such that $(1,A) \in \scA_{\tilde I}$ and $(1,A^\prime) \in \scA_{\tilde K}$.
Then there exist angles $\theta,\theta^\prime$ such that $r_\theta(\tilde I), r_{\theta^\prime}(\tilde K) \subset I$ and $\partial \cX_\theta \cap \partial \cX^\prime_{\theta^\prime}$ contains an interval.
Thus we have
$$
U_{I,A,\cX} = U_{I, A_\theta,\cX_\theta} = U_{I,A^\prime_{\theta^\prime}, \cX^\prime_{\theta^\prime}} = U_{I,A^\prime,\cX^\prime},
$$
where the first equality is Lemma \ref{lem: transport unitary doesnt invariant under rotation}, the second is Lemma \ref{lem: two annulus sharing interval transport}, and the third is again Lemma \ref{lem: transport unitary doesnt invariant under rotation}.

Now suppose we have an arbitrary $(1,A) \in \scA_I$.
It suffices to find a $(1,A^\prime) \in \scA_{J}$ for $J \subset I$ and $\abs{J} < \abs{I}/ 2$ such that $U_{I,A,\cX} = U_{I,A^\prime,\cX^\prime}$ for some connected component $\cX^\prime$ of $\Int(A^\prime)$.
Indeed, let $J$ be an interval contained in $\partial \cX \cap S^1$ with $\abs{J} < \abs{I}/2$.
By the definition of system of generalized annuli there is some $(1,A^\prime) \in \scA_{J}$ with $\Int(A^\prime) \ne \emptyset$.
Let $\cX^\prime$ be any connected component of $\Int(A^\prime)$.
Then $\partial \cX^\prime$ contains a boundary interval $J^\prime$.
By construction $J^\prime \subset J \subset \partial \Int(\cX)$, and thus $U_{I,A,\cX} = U_{I,A^\prime,\cX^\prime}$ by Lemma \ref{lem: two annulus sharing interval transport}.
\end{proof}

We now consider the dependence of $U_I$ constructed in Theorem \ref{thm: transport unitary} upon the interval $I$.
Recall that if $\tilde I \subset I$, then the inclusion $\Hom_{\cA(\tilde I^\prime)}(\cH_V, \cH_M) \subset \Hom_{\cA(I^\prime)}(\cH_V, \cH_M)$ induces a canonical unitary
$$
U_{I \leftarrow \tilde I} : \cH_M \boxtimes_{\tilde I} \cH_N \to \cH_M \boxtimes_{I} \cH_N
$$
which is characterized by $U_{I \leftarrow \tilde I}(x \boxtimes_{\tilde I} \xi) = x \boxtimes_I \xi$.
The following compatibility of this unitary with $U_I$ and $U_{\tilde I}$ is immediate:

\begin{Proposition}\label{prop: compatible with canonical unitary}
Let $V$ be a simple unitary VOA with bounded localized vertex operators, and let $M$ and $N$ be unitary $V$-modules. 
Assume that $\pi^M$ and $\pi^N$ exist, and that $\cY_M^+ \in I_{loc} \binom{M}{M V}$.
Let $(M \boxtimes_t N, \cY_t)$ be a transport module for $M$ and $N$.

Let $I$ be an interval equipped with a branch of $\log$ and let $\tilde I \subset I$ be a subinterval (with the same branch of $\log$).
Then
$$
U_{\tilde I} = U_{I} U_{I \leftarrow \tilde I}.
$$
\end{Proposition}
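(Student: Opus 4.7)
The plan is to show that both sides are bounded (in fact unitary) maps $\cH_M \boxtimes_{\tilde I} \cH_N \to \cH_{M \boxtimes_t N}$, and then verify that they agree on a dense subspace, so that the equality follows by continuity. The defining property of $U_I$ and $U_{\tilde I}$ from Theorem~\ref{thm: transport unitary} is exactly what makes the comparison work essentially for free.

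First I would identify the dense subspace. Take the span $\cD_{\tilde I}$ of vectors of the form $\cY_M^+(a,z)A \boxtimes_{\tilde I} \xi$ with $(1,A) \in \scA_{\tilde I}$, $a \in M$, $z \in \Int(A)$, and $\xi \in \cH_N$, where $\log z$ is obtained by extending the branch from $\tilde I$ (which by hypothesis coincides with the branch on $I$). Exactly as in the density argument in the proof of Lemma~\ref{lem: fusion iso for component} (using $V \subset A \cH_V$ from the definition of bounded localized vertex operators, together with Lemma~\ref{lem: intertwiner image is a core} applied to $\cY_M^+$), the image of $\cY_M^+(a,z)A$ acting on $\cH_V$ is dense in $\cH_M$, and hence $\cD_{\tilde I}$ is dense in $\cH_M \boxtimes_{\tilde I} \cH_N$.

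Next I would evaluate both sides on such a vector. By Theorem~\ref{thm: transport unitary} applied to $\tilde I$,
\[
U_{\tilde I}\bigl(\cY_M^+(a,z)A \boxtimes_{\tilde I} \xi\bigr) = \cY_t(a,z) A \xi.
\]
On the other hand, by property (2) of Definition~\ref{def: system of annuli central charge c} we have $\scA_{\tilde I} \subset \scA_I$, so $(1,A) \in \scA_I$ as well, and the log branch on $\Int(A)$ obtained by continuation from $\tilde I$ agrees with the one obtained by continuation from $I$. Applying the definition of $U_{I \leftarrow \tilde I}$ followed by Theorem~\ref{thm: transport unitary} applied to $I$ (with the same $A$),
\[
U_I U_{I \leftarrow \tilde I}\bigl(\cY_M^+(a,z)A \boxtimes_{\tilde I} \xi\bigr) = U_I\bigl(\cY_M^+(a,z)A \boxtimes_I \xi\bigr) = \cY_t(a,z) A \xi.
\]
The two agree on $\cD_{\tilde I}$, and since both are bounded operators and $\cD_{\tilde I}$ is dense, they agree on all of $\cH_M \boxtimes_{\tilde I} \cH_N$.

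There is essentially no genuine obstacle here: the only thing one must notice is that the same generalized annulus $A$ can be used to witness $U_{\tilde I}$ and $U_I$ thanks to the monotonicity $\scA_{\tilde I} \subset \scA_I$, and that the branch of $\log$ is compatible under extension from $\tilde I$ to $I$. Everything else is a direct unwinding of the defining formulas in Theorem~\ref{thm: transport unitary} and the characterization of $U_{I \leftarrow \tilde I}$.
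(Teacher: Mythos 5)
Your proof is correct and follows essentially the same route as the paper: evaluate both sides on vectors $\cY_M^+(a,z)A \boxtimes_{\tilde I}\xi$ with $(1,A)\in\scA_{\tilde I}$, apply the characterizing property of $U_{I\leftarrow\tilde I}$ and of $U_I$, $U_{\tilde I}$ from Theorem \ref{thm: transport unitary}, and conclude by density. The extra care you take with $\scA_{\tilde I}\subset\scA_I$ and the compatibility of the branch of $\log$ is exactly the (implicit) content of the paper's one-line argument.
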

\begin{proof}
Let $(1,A) \in \scA_I$.
Then for every $z \in \Int(A)$, $a \in M$, and $\xi \in \cH_N$ we have
$$
U_{I} U_{I \leftarrow \tilde I}(\cY_M^+(a,z)A \boxtimes_{\tilde I} \xi)
=
U_I (\cY_M^+(a,z)A \boxtimes_{I} \xi)
=
\cY_t(a,z)A\xi.
$$
Hence $U_{I} U_{I \leftarrow \tilde I} = U_{\tilde I}$.
\end{proof}

\subsection{Isomorphisms of representations}\label{sec: fusion rules and transport}

In Section \ref{sec: transport and fusion} we saw that given a transport module $(M \boxtimes_t N, \cY_t)$ and an interval $I$ with a branch of $\log$, there is a natural unitary $U_I: \cH_M \boxtimes_I \cH_N \to \cH_{M \boxtimes_t N}$.
We would like this to be an isomorphism of $\cA_V$-representations.
In order to formulate this precisely, we need an action of $\cA_V$ on $\cH_{M \boxtimes_t N}$; that is, we need to know that $\pi^{M \boxtimes_t N}$ exists.
Even at that point it is not immediate that $U_I$ is an intertwiner of representations, and the problem reduces to showing $\cY_t \in I_{loc}$.

We do not solve here the general problem of showing that $\pi^{M \boxtimes_t N}$ exists or that $\cY_t \in I_{loc}$ (although we conjecture that these both always hold).
Instead, we consider the maximal submodule $M \boxtimes_{loc} N$ of $M \boxtimes_t N$ for which the corresponding representation exists and for which the projection of $\cY_t$ is localized, and show that the isometry $U_I^*|_{\cH_{M \boxtimes_{loc} N}}$ is an embedding of representations.
Indeed, $M \boxtimes_{loc} N$ is the largest submodule of $M \boxtimes_t N$ for which we could have this intertwining property, so this is the best possible result which does not address the general problem of existence and locality of $\pi^{M \boxtimes_t N}$ and $\cY_t$.
We now carefully define the module $M \boxtimes_{loc} N$.

If $K$ is a unitary $V$-module, then it can be decomposed 
$$
K = \bigoplus_{i \in S} K_i \otimes X_i
$$
where the $K_i$ are pairwise non-isomorphic simple unitary $V$-modules and the $X_i$ are finite-dimensional Hilbert spaces.
Let $S^\prime \subset S$ be the subset consisting of $i$ such that $\pi^{K_i}$ exists.
Then by \cite[Prop. 5.5]{TenerGRACFT2}, the maximal submodule of $K$ such that $\pi^K$ exists is
$$
K_{ex} = \bigoplus_{i \in S^\prime} K_i \otimes X_i.
$$
Now suppose that $M$ and $N$ are unitary $V$-modules and that $\pi^N$ exists, and fix $\cY \in I \binom{K}{M N}$.
Let $p_i$ be the projection of $K$ onto the isotypical component $K_i \otimes X_i$, and let $\cY_i = p_i\cY$.
For $i \in S^\prime$, let 
$$
\cI_i = \{ x \in \cB(X_i) : (1 \otimes x) \cY_i \in I_{loc} \}.
$$
Each $\cI_i$ is a left-ideal and therefore of the form $\cI_i = \cB(X_i)p_{i,loc}$ for a projection $p_{i,loc}$.
We define $K_{loc}$, the local part of $K$ with respect to $\cY$, to be
$$
K_{loc} = \bigoplus_{i \in S^\prime} K_i \otimes p_{i,loc}X_i,
$$
and let $p_{loc} = \bigoplus 1 \otimes p_{i,loc}$.

\begin{Remark}\label{rmk: Yloc is probably local}
For any $(B,A) \in \scA_I$ and $z \in \Int(B,A)$ we have $(1 \otimes p_{i,loc})B\cY_i(a,z)A \in \Hom_{\cA(I^\prime)}(\cH_N, \cH_K)$.
In general we do not know that $p_{loc}B \cY(a,z)A$ is a bounded operator when $S^\prime$ is infinite, as $\norm{(1 \otimes p_{i,loc})B\cY_i(a,z)A}$ might not be bounded independent of $i$.
However if we know a priori that $p_{loc}B \cY(a,z)A$  is bounded then it lies in $\Hom_{\cA(I^\prime)}(\cH_N, \cH_K)$, as each component $p_{i,loc}B \cY(a,z)A$ is a local intertwiner.
In particular, this is the case if $V$ is regular, as there are only finitely many simple $V$-modules, and we have $p_{loc} \cY \in I_{loc} \binom{K_{loc}}{M N}$.
\end{Remark}

\begin{Definition}\label{def: local transport module}
Let $V$ be a simple unitary VOA with bounded localized vertex operators, let $M$ and $N$ be unitary $V$-modules, and suppose that a transport module $(M \boxtimes_t N, \cY_t)$ exists.
Then we denote by $M \boxtimes_{loc} N$ the maximal local submodule of $M \boxtimes_t N$ with respect to $\cY_t$, as described above.
We set $\cY_{loc} := p_{loc} \cY$.
\end{Definition}

Observe that if $(K, \cY_t)$ and $(\tilde K, \tilde \cY_t)$ are a pair of transport modules, and $u:K \to \tilde K$ is the canonical unitary equivalence of Proposition \ref{prop: transport module unique}, then $uK_{loc} = \tilde K_{loc}$ and $u\cY_{loc} = \tilde \cY_{loc}$.
Thus if a transport module for $M$ and $N$ exists then the pair $(M \boxtimes_{loc} N, \cY_{loc})$ is unique up to canonical unitary equivalence.

\begin{Remark}\label{rem: local for regular}
By Proposition \ref{prop: transport for regular VOAs}, if $V$ is regular and a transport module $M \boxtimes_t N$ exists, then 
$$
M \boxtimes_{loc} N \cong \bigoplus I_{loc} \binom{K_i}{M N} ^* \otimes K_i
$$
as a $V$-modules, where the direct sum runs over a complete family of simple modules $K_i$.
\end{Remark}

\begin{Theorem}\label{thm: voa and net fusion}
Let $V$ be a simple unitary VOA with bounded localized vertex operators, and let $M$ and $N$ be unitary $V$-modules.
Assume that $\pi^M$ and $\pi^N$ exist, and that $\cY_M^+ \in I_{loc} \binom{M}{M V}$.
Suppose that a transport module $M \boxtimes_t N$ exists, let $M \boxtimes_{loc} N$ be the maximal local submodule, and let $p_{loc}: \cH_{M \boxtimes_t N} \to \cH_{M \boxtimes_{loc} N}$ be the projection.
Let $I$ be an interval equipped with a branch of $\log$, and let $U_I:\cH_{M} \boxtimes_I \cH_N \to \cH_{M \boxtimes_t N}$ be the unitary of Theorem \ref{thm: transport unitary}.
Set $U_I^{loc} = p_{loc} U_I$.

Then $U^{loc}_I:\cH_{M} \boxtimes_I \cH_N \to \cH_{M \boxtimes_{loc} N}$ is an intertwiner of $\cA_V$-representations.
\end{Theorem}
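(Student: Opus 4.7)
The plan is to use Gui's uniqueness characterization of the sector structure on $\cH_M \boxtimes_I \cH_N$ (Section \ref{sec: conformal nets}) to reduce the intertwining property to the two subalgebras $\cA_V(I)$ and $\cA_V(I')$ acting on the left and right tensor factors, respectively. Compatibility of $U^{loc}_I$ with the canonical unitaries $U_{I \leftarrow \tilde I}$ required by Gui's framework follows from Proposition \ref{prop: compatible with canonical unitary} after composing with $p_{loc}$, so only these two local cases remain.

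For $\cA_V(I')$, I would check the intertwining on the dense set of vectors $\cY_M^+(a,z)A' \boxtimes_I \xi$ with $(1,A') \in \scA_I$ and $z \in \Int(A')$, which by Theorem \ref{thm: transport unitary} is sent by $U^{loc}_I$ to $\cY_{loc}(a,z)A'\xi$. Using the isotypic decomposition $M \boxtimes_{loc} N = \bigoplus_{i \in S'} K_i \otimes p_{i,loc}X_i$ from Definition \ref{def: local transport module}, each component $\cY_{loc,i}$ lies in $I_{loc}$ by construction; boundedness of $\cY_{loc}(a,z)A' = p_{loc}\cY_t(a,z)A'$ (which follows from Lemma \ref{lem: transport near boundary}) then promotes the componentwise locality to $\cY_{loc}(a,z)A' \in \Hom_{\cA_V(I')}(\cH_N, \cH_{M \boxtimes_{loc} N})$, exactly as in Remark \ref{rmk: Yloc is probably local}. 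The required $\cA_V(I')$-intertwining is then immediate.

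The $\cA_V(I)$-case is the main obstacle, because for $r \in \cA_V(I)$ and $x = \cY_M^+(a,z)A'$ the composite $\pi^M_I(r)x$ is no longer of the simple form to which Theorem \ref{thm: transport unitary} directly applies. I would pass to matrix elements against test vectors $x_i \boxtimes_I y_i\Omega$ with $x_i = \cY_M^+(a_i, z_i)A_i'$ and $y_i \in \Hom_{\cA_V(I)}(\cH, \cH_N)$. Haag duality places $x_2^* \pi^M_I(r) x_1 \in \cA_V(I)$, so the Connes-fusion inner product rewrites the fusion-side matrix element as $\langle \pi^N_I(x_2^* \pi^M_I(r) x_1) y_1\Omega, y_2\Omega\rangle$. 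Setting $y_i^t := \cY_{loc}(a_i, z_i)A_i' \in \Hom_{\cA_V(I')}(\cH_N, \cH_{M \boxtimes_{loc} N})$ (by the previous paragraph) and running the analogous computation on the transport side, the claim reduces to the $r$-parametrized transport identity $\pi^N_I(x_2^* \pi^M_I(r) x_1) = (y_2^t)^*\, \pi^{M \boxtimes_{loc} N}_I(r)\, y_1^t$, which at $r = 1$ is exactly Lemma \ref{lem: transport near boundary}. I would establish it for general $r$ by first verifying it on the generating family of operators $BY(v,w)A$ and $BA$ for $(B,A) \in \scA_I$, $w \in \Int(B,A)$ --- where the composites $\pi^M_I(r)\, \cY_M^+(a,z)A'$ and $\pi^{M \boxtimes_{loc} N}_I(r)\, \cY_{loc}(a,z)A'$ can be controlled via the geometry of partially thin annuli and the intertwiner Borcherds identity --- and then extending to arbitrary $r$ by normality of $\pi^N$ and $\pi^{M \boxtimes_{loc} N}$. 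Combining the two local cases with the reduction in the first paragraph completes the proof.
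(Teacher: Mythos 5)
Your reduction to the two cases $\cA_V(I')$ and $\cA_V(I)$ (with the endpoint intervals handled via Proposition \ref{prop: compatible with canonical unitary}) is sound, and your treatment of the $\cA_V(I')$ case coincides with the paper's: one checks the identity on the dense set of vectors $\cY_M^+(a,z)A\boxtimes_I\xi$ and uses that $\cY_{loc}(a,z)A$ lies in $\Hom_{\cA_V(I')}(\cH_N,\cH_{M\boxtimes_{loc}N})$, which follows from boundedness together with componentwise locality as in Remark \ref{rmk: Yloc is probably local}.

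The $\cA_V(I)$ case, however, contains a genuine gap. You reduce everything to the ``$r$-parametrized transport identity'' $\pi^N_I(x_2^*\pi^M_I(r)x_1)=(y_2^t)^*\pi^{M\boxtimes_{loc}N}_I(r)y_1^t$ for all $r\in\cA_V(I)$, and propose to verify it on the generators $BY(v,w)A$ by ``controlling the composites via the geometry of partially thin annuli and the intertwiner Borcherds identity.'' That step is not an application of anything established in the paper: it amounts to a new fusion relation between a generic localized vertex operator insertion and the creation/transport intertwiners, i.e.\ a product-equals-iterate statement of the same delicate analytic type as Lemma \ref{lem: ann cr fusion relation}, but now involving $Y(v,w)$ at an arbitrary point of $\Int(B,A)$ rather than only the annihilation operator at the specific points of $\cR$. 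No convergence or re-expansion argument is given, and Lemma \ref{lem: transport near boundary} (your $r=1$ case) does not extend to it by normality alone, since the identity must first be known on a generating set. The paper avoids this entirely by a localization trick: to intertwine $x\in\cA_V(\tilde I)$ for $\tilde I$ compactly contained in $I$, it re-chooses the spanning vectors using $(1,A)\in\scA_J$ with $J\subset I$ disjoint from $\tilde I$; then $\cY_M^+(a,z)A\in\Hom_{\cA_V(\tilde I)}(\cH_V,\cH_M)$ because $\cY_M^+\in I_{loc}$, so $\pi^M_{\tilde I}(x)\cY_M^+(a,z)A=\cY_M^+(a,z)Ax$, and middle linearity of $\boxtimes_I$ moves $x$ onto the right tensor factor, after which the already-proved $\cA_V(I')$-type locality of $\cY_{loc}$ finishes the computation. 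If you adopt that relocalization step in place of the $r$-parametrized identity, your argument closes.
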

\begin{proof}
If $(1,A) \in \scA_I$ and $z \in \Int(A)$, then we have
$$
U_I(\cY_M^+(a,z)A \boxtimes_I \xi) = \cY_t(a,z)A\xi.
$$
Since $\xi \mapsto U_I(\cY_M^+(a,z)A \boxtimes_I \xi)$ is manifestly bounded, we can conclude that $\cY_t(a,z)A$ is bounded.
Thus $p_{loc} \cY_t(a,z)A = \cY_{loc}(a,z)A \in \Hom_{\cA(I^\prime)}(\cH_N, \cH_{M \boxtimes_{loc} N})$ as in Remark \ref{rmk: Yloc is probably local}.
Observe that
$$
U_I^{loc}(\cY_M^+(a,z)A \boxtimes_I \xi) = p_{loc}\cY_t(a,z)A\xi = \cY_{loc}(a,z)A\xi.
$$

We first show that $U_I^{loc}$ intertwines the action of $\cA_V(I^\prime)$.
If $x \in \cA_V(I^\prime)$, then we have for all $(1,A) \in \scA_I$, all $z \in \Int(A)$, and all $a \in M$
\begin{align*}
U_I^{loc}&(\pi^M \boxtimes_I \pi^N)_{I^\prime}(x)(\cY_M^+(a,z)A \boxtimes_I \xi) =\\
&= U_I^{loc}(\cY_M^+(a,z)A \boxtimes_I \pi^N_{I^\prime}(x) \xi)\\
& = \cY_{loc}(a,z)A\pi^N_{I^\prime}(x)\xi\\
&=\pi^{M \boxtimes_{loc} N}_{I^\prime}(x)\cY_{loc}(a,z)A\xi\\
&=\pi^{M \boxtimes_{loc} N}_{I^\prime}(x)U_I^{loc}(\cY_M^+(a,z)A \boxtimes_I \xi).
\end{align*}
The first equality is the definition of $\pi^M \boxtimes_I \pi^N$, the second is the characterizing property of $U_I^{loc}$, the third uses the fact that $\cY_{loc} \in I_{loc}$, and the fourth is again the characterizing property of $U_I^{loc}$.
Vectors of the form $\cY_M^+(a,z)A \boxtimes_I \xi$ span a dense subspace of $\cH_M \boxtimes_I \cH_N$, so the above calculation establishes that $U_I^{loc}$ is an intertwiner of $\cA_V(I^\prime)$-modules.

Next we show that $U_I^{loc}$ is an intertwiner of $\cA_V(I)$-modules.
By the additivity property of a conformal net, it suffices to show that $U_I^{loc}$ commutes with the action of $\cA_V(\tilde I)$ for all intervals $\tilde I$ compactly contained in $I$.
Given such a $\tilde I$, we may choose an interval $J \subset I$ such that $J$ and $\tilde I$ are disjoint.
Choose $(1,A) \in \scA_{J}$ such that $\Int(A) \ne \emptyset$, and let $z \in \Int(A)$.
As before, vectors of the form $\cY_M^+(a,z)A \boxtimes_I \xi$ span a dense subspace of $\cH_M \boxtimes_I \cH_N$.
Now for $x \in \cA_V(\tilde I)$ we have
\begin{align*}
U_I^{loc}&(\pi^M \boxtimes_I \pi^N)_{\tilde I}(x)(\cY_M^+(a,z)A \boxtimes_I \xi) =\\
&=  U_I^{loc}(\pi^M_{\tilde I}(x) \cY_M^+(a,z)A \boxtimes_I  \xi)\\
&=  U_I^{loc}( \cY_M^+(a,z)A x \boxtimes_I  \xi)\\
&=  U_I^{loc}( \cY_M^+(a,z)A  \boxtimes_I  \pi_{\tilde I}^N(x) \xi)\\
&= \cY_{loc}(a,z)A\pi^N_{\tilde I}(x)\xi\\
&= \pi^{M \boxtimes_{loc} N}_{\tilde I}(x) \cY_{loc}(a,z)A\xi\\
&= \pi^{M \boxtimes_{loc} N}_{\tilde I}(x) U_I^{loc}(\cY_M^+(a,z)A \boxtimes_I \xi).
\end{align*}
The first equality is the the definition of the action of $\cA_V(I)$ (and thus of $\cA_V(\tilde I)$) on $\cH_M \boxtimes \cH_N$.
The second equality is the fact that $\cY_M^+ \in I_{loc}$ and $(1,A) \in \scA_J$, and thus $\cY_M^+(a,z)A \in \Hom_{\cA(\tilde I)}(\cH_V, \cH_M)$ since $\tilde I$ and $J$ are disjoint.
The third equality is the middle linearity of $\boxtimes_I$ and the fourth is the characterizing property of $U_I^{loc}$.
The fifth equality uses that $\cY_{loc} \in I_{loc}$ just as in the second equality, and the sixth equality is the characterizing property of $U_I^{loc}$.

We conclude that $U_I^{loc}$ intertwines the actions of $\cA_V(\tilde I)$, and since $\tilde I$ was arbitrary we in fact have that $U_I^{loc}$ intertwines the actions of $\cA_V(I)$.

So far we have shown that $U_I^{loc}$ commutes with the actions of $\cA_V(I)$ and $\cA_V(I^\prime)$.
By the additivity property of a conformal net, it only remains to verify that $U_I^{loc}$ intertwines the actions of small intervals $K$ containing endpoints of $I$.
Given such an interval $K$, choose an interval $\tilde I \subset I$ such that $\tilde I$ and $K$ are disjoint.
Recall that if $U_{I \leftarrow \tilde I}: \cH_M \boxtimes_I \cH_N \to \cH_M \boxtimes_{\tilde I} \cH_N$ is the canonical isomorphism of sectors, then by Proposition \ref{prop: compatible with canonical unitary} we have 
$$
U_{\tilde I} = U_I U_{I \leftarrow \tilde I}.
$$
Hence 
$$
U^{loc}_{\tilde I} = U^{loc}_I U_{I \leftarrow \tilde I}.
$$
Since $K \subset \tilde I^\prime$, and both $U_{I \leftarrow \tilde I}$ and $U^{loc}_{\tilde I}$ commute with the actions of $\cA_V(\tilde I^\prime)$, so does $U^I_{loc}$.
In particular, $U^I_{loc}$ commutes with the actions of $\cA_V(K)$.
Thus we conclude  that $U^I_{loc}$ is map of $\cA_V$-representations.
\end{proof}

The intertwiner $U^{loc}_I = p_{loc} U_I$ constructed in Theorem \ref{thm: voa and net fusion} is the adjoint of an isometry.
The embedding $(U_I^{loc})^*$ is unitary precisely when $p_{loc} = 1$, which is to say $M \boxtimes_{loc} N = M \boxtimes_t N$.
Thus we have:

\begin{Corollary}\label{cor: voa and net fusion rules}
Under the hypotheses of Theorem \ref{thm: voa and net fusion}, $\pi^{M \boxtimes_{loc} N}$ is isomorphic to a subsector of $\pi^M \boxtimes_I \pi^N$, and if $M \boxtimes_t N = M \boxtimes_{loc} N$ then $U_I$ gives a unitary equivalence of $\cA_V$-representations $\pi^{M \boxtimes_{t} N} \cong \pi^M \boxtimes_I \pi^N$.
\end{Corollary}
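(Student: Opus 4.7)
The plan is to read Corollary \ref{cor: voa and net fusion rules} as a direct bookkeeping consequence of Theorem \ref{thm: voa and net fusion}: once $U_I^{loc} = p_{loc} U_I$ has been established as an $\cA_V$-intertwiner, both conclusions are extracted just by taking adjoints and by specializing to the case $p_{loc} = 1$. No further analytic or geometric input should be required beyond what has already been proved.

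First I would record that by Theorem \ref{thm: transport unitary} the map $U_I : \cH_M \boxtimes_I \cH_N \to \cH_{M \boxtimes_t N}$ is unitary, while $p_{loc}$, by its definition preceding Definition \ref{def: local transport module}, is the orthogonal projection of $\cH_{M \boxtimes_t N}$ onto the closed subspace $\cH_{M \boxtimes_{loc} N}$. Hence $U_I^{loc} = p_{loc} U_I$ is surjective onto $\cH_{M \boxtimes_{loc} N}$, and its Hilbert-space adjoint $(U_I^{loc})^*$ is an isometric embedding $\cH_{M \boxtimes_{loc} N} \hookrightarrow \cH_M \boxtimes_I \cH_N$. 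Theorem \ref{thm: voa and net fusion} provides the intertwining identity $U_I^{loc}(\pi^M \boxtimes_I \pi^N)_J(x) = \pi^{M \boxtimes_{loc} N}_J(x) U_I^{loc}$ for all intervals $J$ and $x \in \cA_V(J)$; applying this to $x^*$ and taking adjoints, together with the fact that each $\pi_J$ is a $\ast$-homomorphism, one finds that $(U_I^{loc})^*$ is also an $\cA_V$-intertwiner. This exhibits $\pi^{M \boxtimes_{loc} N}$ as a subsector of $\pi^M \boxtimes_I \pi^N$, giving the first assertion.

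For the second assertion, the hypothesis $M \boxtimes_t N = M \boxtimes_{loc} N$ forces $p_{loc}$ to be the identity on $\cH_{M \boxtimes_t N}$; in particular the representation $\pi^{M \boxtimes_t N}$ exists and coincides with $\pi^{M \boxtimes_{loc} N}$. Then $U_I^{loc} = U_I$, which is unitary, so the intertwining property inherited from Theorem \ref{thm: voa and net fusion} upgrades the isometric embedding from the first part into a full equivalence $\pi^M \boxtimes_I \pi^N \cong \pi^{M \boxtimes_t N}$. Since everything is a formal manipulation of the intertwiner constructed in Theorem \ref{thm: voa and net fusion}, there is essentially no obstacle here; the only step worth a moment of care is verifying that $\cH_{M \boxtimes_{loc} N}$ really is $\cA_V$-invariant inside $\cH_{M \boxtimes_t N}$, which follows from its construction as a direct sum of isotypical components on which the $\cA_V$-action is defined, together with the fact that the projection $p_{loc}$ commutes with that action by Theorem \ref{thm: voa and net fusion}.
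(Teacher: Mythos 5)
Your argument is correct and is essentially the paper's own: the paper likewise observes that $U_I^{loc}=p_{loc}U_I$ is a coisometric intertwiner, so $(U_I^{loc})^*$ embeds $\pi^{M\boxtimes_{loc}N}$ as a subsector, and that $U_I^{loc}=U_I$ is unitary exactly when $p_{loc}=1$, i.e.\ $M\boxtimes_t N=M\boxtimes_{loc}N$. The only superfluous step is your worry about $\cA_V$-invariance of $\cH_{M\boxtimes_{loc}N}$ inside $\cH_{M\boxtimes_t N}$: the target of $U_I^{loc}$ carries the representation $\pi^{M\boxtimes_{loc}N}$ by the very construction of the maximal local submodule, so no invariance inside $\cH_{M\boxtimes_t N}$ (where $\pi^{M\boxtimes_t N}$ need not even exist) is required.
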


In the special case when $V$ is regular, we have an explicit formula for $M \boxtimes_{loc} N$, provided a transport module exists (see Remark \ref{rem: local for regular}).
We can restate Corollary \ref{cor: voa and net fusion rules} more explicitly in terms of fusion rules in this case:

\begin{Corollary}\label{cor: voa and net regular fusion inequality}
Let $V$, $M$ and $N$ be as in Theorem \ref{thm: voa and net fusion}, and suppose that $V$ is regular.
Then we have
$$
\dim \Hom_{\cA_V}(\cH_M \boxtimes_I \cH_N, \cH_K) \ge \dim I_{loc}\binom{K}{M N}
$$
for any simple $V$-module $K$ such that $\pi^K$ exists.

If $\pi^K$ exists and $I_{loc}\binom{K}{M N} = I \binom{K}{M N}$ for all simple submodules $K \subset M \boxtimes N$, then $U_I$ gives a unitary equivalence of $\cA_V$-representations $\pi^{M \boxtimes N} \cong \pi^M \boxtimes_I \pi^N$, where $M \boxtimes N$ is given the standard unitary structure.
\end{Corollary}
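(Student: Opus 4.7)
The plan is to reduce the first inequality to Theorem \ref{thm: voa and net fusion} via the explicit description of $M \boxtimes_{loc} N$ that becomes available when $V$ is regular, and then to deduce the second statement directly from Corollary \ref{cor: voa and net fusion rules}. The hypotheses of Theorem \ref{thm: voa and net fusion} already ensure that $\pi^N$ exists, so Corollary \ref{cor: transport for regular} applies to equip $M \boxtimes N$ with its standard unitary structure as a transport module. Thus $M \boxtimes_t N = M \boxtimes N$, and Remark \ref{rem: local for regular} yields a $V$-module identification
$$
M \boxtimes_{loc} N \cong \bigoplus_i I_{loc}\binom{K_i}{M N}^* \otimes K_i,
$$
where the sum ranges over a complete list of simple $V$-modules.

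To establish the inequality, fix a simple $V$-module $K$ for which $\pi^K$ exists. Each $v \in I_{loc}\binom{K}{M N}^*$ determines a $V$-module embedding $\iota_v : K \hookrightarrow M \boxtimes_{loc} N$ as inclusion into the appropriate isotypic summand, and the assignment $v \mapsto \iota_v$ is linear and injective. By the functoriality of the correspondence $M \leftrightarrow \pi^M$ developed in \cite{Ten18ax}, each $\iota_v$ extends to a bounded $\cA_V$-intertwiner $\hat\iota_v : \cH_K \to \cH_{M \boxtimes_{loc} N}$, and injectivity is preserved, giving an injection $I_{loc}\binom{K}{M N}^* \hookrightarrow \Hom_{\cA_V}(\cH_K, \cH_{M \boxtimes_{loc} N})$. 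Passing to adjoints yields an injection $I_{loc}\binom{K}{M N} \hookrightarrow \Hom_{\cA_V}(\cH_{M \boxtimes_{loc} N}, \cH_K)$. Finally, the intertwiner $U_I^{loc} = p_{loc} U_I$ from Theorem \ref{thm: voa and net fusion} is a coisometry (a unitary composed with a projection), hence surjective, so precomposition with $U_I^{loc}$ is an injective map from $\Hom_{\cA_V}(\cH_{M \boxtimes_{loc} N}, \cH_K)$ into $\Hom_{\cA_V}(\cH_M \boxtimes_I \cH_N, \cH_K)$. Composing these injections gives the desired inequality.

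For the second statement, the standard construction of the tensor product in the regular setting decomposes $M \boxtimes N \cong \bigoplus_i I\binom{K_i}{M N}^* \otimes K_i$ as a $V$-module. The hypothesis that $\pi^K$ exists and $I_{loc}\binom{K}{M N} = I\binom{K}{M N}$ for every simple submodule $K$ forces the explicit descriptions of $M \boxtimes_{loc} N$ and $M \boxtimes N$ to coincide, so $M \boxtimes_{loc} N = M \boxtimes_t N = M \boxtimes N$ endowed with the standard unitary structure. Corollary \ref{cor: voa and net fusion rules} then immediately yields that $U_I$ implements a unitary equivalence $\pi^{M \boxtimes N} \cong \pi^M \boxtimes_I \pi^N$.

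The main subtle point is the functoriality step: I need that a $V$-module embedding $\iota_v$ extends to a bounded $\cA_V$-intertwiner between the Hilbert space completions, with different $v$ producing linearly independent extensions. This is not purely formal but should follow from the way the correspondence $M \leftrightarrow \pi^M$ is set up in \cite{Ten18ax}, since $V$-module maps must intertwine the insertion operators $B Y^M(a,z)A$ that define the conformal net representations. Once this functoriality is in hand, the remainder of the argument is straightforward bookkeeping with the explicit decomposition of $M \boxtimes_{loc} N$ combined with the already-established intertwiner $(U_I^{loc})^*$.
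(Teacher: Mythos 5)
Your proposal is correct and follows essentially the route the paper intends: the corollary is stated as an immediate restatement of Corollary \ref{cor: voa and net fusion rules} using the explicit decomposition $M \boxtimes_{loc} N \cong \bigoplus_i I_{loc}\binom{K_i}{M N}^* \otimes K_i$ of Remark \ref{rem: local for regular}, with the standard unitary structure supplied by Corollary \ref{cor: transport for regular}. The ``functoriality'' step you flag is already built into the paper's setup, since $\pi^{M \boxtimes_{loc} N}$ decomposes as $\bigoplus_i \pi^{K_i} \otimes 1_{X_i}$ on the isotypic components (cf.\ the use of \cite[Prop. 5.5]{Ten18ax} in Definition \ref{def: local transport module}), so the embeddings $\xi \mapsto \xi \otimes v$ are automatically bounded $\cA_V$-intertwiners.
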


\subsection{Finite index subfactors and rationality of conformal nets}\label{sec: rationality}

If $\pi$ is a representation of a conformal net, then we obtain a subfactor 
$$
\pi_I(\cA(I)) \subseteq \pi_{I^\prime}(\cA(I^\prime))^\prime
$$
for every interval $I$, called a \emph{Jones-Wassermann} subfactor.
By diffeomorphism covariance (in fact, by M\"{o}bius covariance), the subfactors corresponding to the intervals $I$ and $J$ are unitarily conjugate, so we can talk about ``the'' subfactor corresponding to $\pi$.
The problem of showing that these subfactors have finite index for key examples of conformal nets is widely accepted to be a challenging, but crucial, open problem.
The first major achievement in this direction was the work of Wassermann \cite{Wa98} for WZW models of type A, with further progress made by Loke \cite{Loke}, Toledano Laredo \cite{TL97}, and Gui \cite{GuiG2}.

One purpose of this article is to adapt Wassermann's approach to provide a systematic approach to the relationship between VOA and conformal net fusion, which can be used to show the finiteness of the index of subfactors.
The approach taken here is more flexible than previous work, in two important ways.
First, we do not rely on any special analytic properties of certain VOAs which are not conjectured to hold for all unitary VOAs.
Second, earlier approaches required large amounts of information about intertwining operators to obtain any information about the fusion product of conformal net representations.
For example, one might need analytic information about every intertwining operator of type $\binom{K}{M N}$ to obtain information about the fusion $\pi^M \boxtimes \pi^N$.
In contrast, our methods provide extremely useful information even with limited information about intertwining operators.
We demonstrate that now by applying the results of Section \ref{sec: fusion rules for bimodules} to the problem of finiteness of the index of representations, and more generally rationality of conformal nets.

A conformal net is called rational if it has finitely many irreducible representations all of which have finite index.
By the work of Longo-Xu \cite{LongoXu04} and Morinelli-Tanimoto-Weiner \cite{MorinelliTanimotoWeiner18}, rationality is equivalent to ``complete rationality,'' which was introduced by Kawahigashi-Longo-M\"{u}ger \cite{KaLoMu01}, who showed that the representation category of a (completely) rational conformal net is modular.
We will use two results of Longo to show that conformal nets are completely rational.
The first is \cite[Thm. 24]{Longo03}, which says that if $\cB \subset \cA$ is a conformal subnet and $[\cA : \cB] < \infty$, then $\cA$ is rational if and only if $\cB$ is.
Here, $[\cA : \cB]$ is the index of the subfactor $\cB(I) \subseteq \cA(I)$ for any(/every) interval $I$.
The second result we will use is \cite[Thm. 4.1]{Longo90}, which says if $\pi$ and $\rho$ are irreducible and $\pi \boxtimes \rho$ contains the vacuum representation as a subsector, then $\pi$ and $\rho$ have finite index and are mutually contragredient.\footnote{%
Longo's article is written in the language of endomorphisms rather than bimodules.
A bimodule version of Longo's result is a consequence of \cite[Prop. 7.17]{BartelsDouglasHenriquesDualizability}.
}
We use here the fact that the category of representations is braided.

We thus obtain the following.
\begin{Theorem}\label{thm: finite index general}
Let $V$ be a simple unitary VOA with bounded localized vertex operators.
Let $M$ be a simple unitary $V$-module such that $\pi^M$ and $\pi^{\overline M}$ exist, and $\cY_M^+ \in I_{loc} \binom{M}{M V}$.
Suppose that there exists a transport module $\overline{M} \boxtimes_t M$ which contains a submodule isomorphic to $V$.
Then $\pi^M$ and $\pi^{\overline{M}}$ have finite index and are mutually contragredient.
\end{Theorem}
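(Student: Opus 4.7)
The plan is to apply Theorem~\ref{thm: voa and net fusion} to exhibit the vacuum sector of $\cA_V$ as a subrepresentation of $\pi^M \boxtimes \pi^{\overline M}$, and then invoke Longo's theorem (cited immediately before the statement) to conclude. The hypotheses $\cY_M^+ \in I_{loc}$, existence of $\pi^M$ and $\pi^{\overline M}$, and existence of a transport module $M \boxtimes_t \overline M$ place us exactly in the setting of Theorem~\ref{thm: voa and net fusion}, which yields a natural map of $\cA_V$-representations $\pi^{M \boxtimes_{loc} \overline M} \hookrightarrow \pi^M \boxtimes \pi^{\overline M}$ realized by (the adjoint of) the unitary $U_I^{loc}$. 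Thus the entire argument reduces to showing that the assumed copy of $V$ inside $M \boxtimes_t \overline M$ actually lies in the maximal local submodule $M \boxtimes_{loc} \overline M$.

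By Definition~\ref{def: local transport module}, $V \subset M \boxtimes_{loc} \overline M$ amounts to two conditions on the nonzero intertwiner $\tilde{\cY} \in I\binom{V}{M \overline M}$ obtained by composing $\cY_t$ with the projection onto the $V$-isotypic component: first, $\pi^V$ must exist, which is automatic since $\pi^V$ is simply the vacuum sector of $\cA_V$ on $\cH_V$; and second, $\tilde{\cY}$ must be localized. Since $M$ is simple and $\overline M$ is (unitarily isomorphic to) its contragredient, $\dim I\binom{V}{M \overline M} \le 1$, so it suffices to exhibit any nonzero localized element of this space. Proposition~\ref{prop: dagger localized} applied to $\cY_M^+ \in I_{loc}\binom{M}{M V}$ gives $\cY_M^- = (\cY_M^+)^\dagger \in I_{loc}\binom{V}{\overline M M}$; the standard skew-symmetry operation $\cY \mapsto B\cY$ defined by $(B\cY)(b,z)a = e^{zL_{-1}}\cY(a,e^{i\pi}z)b$ produces an element of $I\binom{V}{M \overline M}$ that is again localized, the point being that $e^{zL_{-1}}$ acts as a rigid translation which preserves both boundedness of insertions and $\cA_V(I')$-commutation.

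With $V \subset M \boxtimes_{loc} \overline M$ in hand, Theorem~\ref{thm: voa and net fusion} exhibits $\pi^V$ as a subsector of $\pi^M \boxtimes \pi^{\overline M}$. Because $M$ is a simple unitary module, $\pi^M$ is irreducible (a general property of the correspondence $M \leftrightarrow \pi^M$), and similarly for $\pi^{\overline M}$. Longo's theorem then forces both $\pi^M$ and $\pi^{\overline M}$ to have finite index and to be mutually contragredient, completing the proof. The main obstacle in this outline is the assertion that the skew-symmetry operation preserves the localization property of Definition~\ref{def: localized intertwiner}; this should either be checked directly using the geometric interpretation of $e^{zL_{-1}}$ within the generalized annulus picture, or one can bypass it by identifying $\tilde{\cY}$ (up to scalar) with $(\cY_{\overline M}^+)^\dagger$ and arguing that localization of $\cY_M^+$ transfers to $\cY_{\overline M}^+$ via the compatibility of complex conjugation of modules with the conformal net structure.
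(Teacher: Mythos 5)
Your proposal is correct and follows essentially the same route as the paper: reduce via Theorem~\ref{thm: voa and net fusion} to showing the copy of $V$ inside $M \boxtimes_t \overline{M}$ lies in $M \boxtimes_{loc} \overline{M}$, i.e.\ that the one-dimensional space $I\binom{V}{M\,\overline{M}}$ consists of localized intertwiners, and then invoke Longo's theorem together with irreducibility of $\pi^M$ and $\pi^{\overline{M}}$. For the one step you flag as an obstacle, the paper takes exactly your proposed ``bypass'' rather than the skew-symmetry operation: it notes $\cY_{\overline{M}}^+ = \overline{\cY_M^+} \in I_{loc}$, applies Proposition~\ref{prop: dagger localized} to conclude $\cY_{\overline{M}}^- = (\cY_{\overline{M}}^+)^\dagger \in I_{loc}\binom{V}{M\,\overline{M}}$, and uses simplicity of $M$ to see this spans $I\binom{V}{M\,\overline{M}}$; the skew-symmetry route, by contrast, would require a separate (unproven) lemma that $\cY \mapsto B\cY$ preserves localization, so you should commit to the conjugation/adjoint argument.
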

\begin{proof}
By Theorem \ref{thm: voa and net fusion} (in fact, Corollary \ref{cor: voa and net fusion rules}), $\pi^{\overline{M}} \boxtimes \pi^{M}$ contains a subsector isomorphic to $\pi^{\overline{M} \boxtimes_{loc} M}$.
By \cite[Prop. 5.5]{TenerGRACFT2}, $\pi^M$ and $\pi^{\overline{M}}$ are irreducible, so it suffices to show that $\overline{M} \boxtimes_{loc} M$ contains a submodule isomorphic to $V$.
Since $V \subseteq \overline{M} \boxtimes_t M$, it suffices to show that $I_{loc} \binom{V}{\overline{M} M} = I \binom{V}{\overline{M} M}$.
Since $M$ is simple, this is equivalent to showing that $\cY_{M}^- \in I_{loc}$.
By Proposition \ref{prop: dagger localized} this is equivalent to the condition that $\cY_{M}^+ \in I_{loc}$.
This last condition holds by assumption, completing the proof.
\end{proof}

When $V$ is regular, it is possible to state a simpler result.
\begin{Theorem}\label{thm: finite index for regular}
Let $V$ be a simple regular unitary VOA with bounded localized vertex operators, and let $M$ be a simple unitary $V$-module.
Suppose that $\pi^M$ and $\pi^{\overline{M}}$ exist and that $\cY_M^+ \in I_{loc} \binom{M}{M V}$.
Then $\pi^M$ and $\pi^{\overline{M}}$ have finite index and are mutually contragredient.
\end{Theorem}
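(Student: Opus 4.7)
The plan is to apply Theorem \ref{thm: finite index general}. The two hypotheses of that result not already assumed here are the existence of a transport module $(M \boxtimes_t \overline M, \cY_t)$ for $M$ and $\overline M$, and the requirement that this transport module contain a copy of $V$.

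I would handle the $V$-submodule condition first. Because $M$ is simple unitary, the contragredient $M'$ is isomorphic to $\overline M$, so by the rigidity of $\Rep(V)$ for regular $V$ (Huang), the evaluation morphism forces $\dim I\binom{V}{M \overline M} \ge 1$. Equivalently, the nonzero intertwining operator $\cY_M^- \in I\binom{V}{\overline M M}$ yields a nonzero element of $I\binom{V}{M \overline M}$ via the braiding of intertwining operators, so $V$ embeds as a direct summand of $M \boxtimes \overline M$.

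For the existence of the transport module, my plan is to combine Proposition \ref{prop: general positivity} with the construction from Corollary \ref{cor: transport for regular}. The hypotheses of Proposition \ref{prop: general positivity} are met: $\cY_M^-\in I_{Hilb}$ by Proposition \ref{prop: rational intertwiner Hilb}, $V$ has bounded localized vertex operators, and $\pi^{\overline M}$ exists. This gives positivity of the transport form at some $z$ with $|z|$ close to $1$. To produce an actual transport module, I would use Huang's convergence of iterated intertwining operators to write the transport form as $\ip{\Lambda \cY_\boxtimes(a_1,z)b_1, \cY_\boxtimes(a_2,w)b_2}$ for a transport endomorphism $\Lambda$ on (the unitary part of) $M \boxtimes \overline M$, as in the sketch of Section \ref{sec: pos to transport}. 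Positivity of the transport form at a single point forces $\Lambda \ge 0$, and the subrepresentation of $M \boxtimes \overline M$ generated by $\cY_\boxtimes$, equipped with the invariant inner product defined by $\Lambda$, provides the desired transport module.

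The main obstacle I anticipate is that $M \boxtimes \overline M$ is not known a priori to admit a unitary structure as a whole, so Corollary \ref{cor: transport for regular} cannot be invoked verbatim; instead one must work with the submodule cut out by the image of the dominant intertwining operator produced from the positive $\Lambda$. The key remaining point is that $V$ is not killed by this construction, which follows because the diagonal matrix element of $\Lambda$ associated to $V$ in the sense of Remark \ref{rem: transport matrices} is strictly positive, being computed directly from $\cY_M^- \ne 0$. Once the transport module is constructed and shown to contain $V$, Theorem \ref{thm: finite index general} immediately yields both the finite index of $\pi^M$ and $\pi^{\overline M}$ and their mutual contragredience.
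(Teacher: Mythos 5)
Your proposal is correct and follows the paper's own argument almost verbatim: the paper likewise reduces to Theorem~\ref{thm: finite index general} by observing that Corollary~\ref{cor: transport for regular} equips $M \boxtimes \overline{M}$ with a unitary structure making it a transport module, and that $V$ is a submodule of $M \boxtimes \overline{M}$ (your rigidity argument, or equivalently the nonvanishing of $\cY_{\overline{M}}^- \in I\binom{V}{M\,\overline{M}}$). The one point where you deviate---worrying that $M \boxtimes \overline{M}$ is not known a priori to admit a unitary structure---is a fair reading of the hypotheses inherited from Theorem~\ref{thm: positivity conjecture for regular}, but the paper simply carries that assumption implicitly, and your substitute via the Section~\ref{sec: pos to transport} construction does not genuinely remove it (the iterate-to-product splitting used there must still factor through a \emph{unitary} intermediate module), so the cleanest course is to invoke Corollary~\ref{cor: transport for regular} directly as the paper does, recording the unitary-structure assumption if one wants to be scrupulous.
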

\begin{proof}
By Corollary \ref{cor: transport for regular}, there is a unitary structure on $\overline{M} \boxtimes M$ making it into a transport module $\overline{M} \boxtimes_t M$.
Since $V$ is a submodule of $\overline{M} \boxtimes M$, we may now repeat the proof of Theorem \ref{thm: finite index general}.
By Corollary \ref{cor: voa and net fusion rules}, $\pi^{\overline{M}} \boxtimes \pi^{M}$ contains a subsector isomorphic to $\pi^{\overline{M} \boxtimes_{loc} M}$.
Since $V \subseteq \overline{M} \boxtimes M = \overline{M} \boxtimes_t M$, and $\cY_M^{\pm} \in I_{loc}$, we have 
\[
V \subset \overline{M} \boxtimes_{loc} M \subset \overline{M} \boxtimes_t M.
\]
Hence $\pi^{\overline{M}} \boxtimes \pi^{M}$ contains the vacuum representation $\pi^V$, and so by  \cite[Prop. 7.17]{BartelsDouglasHenriquesDualizability} (or \cite[Thm. 4.1]{Longo90} in terms of endomorphisms) we have that $\pi^M$ and $\pi^{\overline{M}}$ have finite index and are mutually contragredient.
\end{proof}

\begin{Corollary}\label{cor: finite index for regular in vacuum}
Let $V$ be a simple unitary VOA with bounded localized vertex operators, let $W$ be a unitary (not necessarily conformal) subalgebra of $V$ which is regular, and let $M$ be a $W$-submodule of $V$.
Then $\pi^M$ and $\pi^{\overline{M}}$ are finite index representations of $\cA_W$.
\end{Corollary}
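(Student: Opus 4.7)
The strategy is to reduce to simple $W$-modules and apply Theorem \ref{thm: finite index for regular} with $W$ playing the role of the ambient VOA. First, I would observe that $W$ inherits bounded localized vertex operators from $V$: the Virasoro net of $W$ acts on $\cH_W \subset \cH_V$ via the restriction of the Virasoro action on $\cH_V$, and every $W$-insertion $BY^W(w,z)A$ is a restriction of the corresponding $V$-insertion, so the required boundedness and locality properties follow from those of $V$. In particular this yields an inclusion of conformal nets $\cA_W \subset \cA_V$, via which the vacuum $\cA_V$-representation on $\cH_V$ restricts to an $\cA_W$-representation.

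Since $W$ is regular, the $W$-module $M$ decomposes as an orthogonal direct sum $M = \bigoplus_i M_i \otimes X_i$ of pairwise non-isomorphic simple unitary $W$-modules, and correspondingly $\pi^M \cong \bigoplus \pi^{M_i} \otimes X_i$ as $\cA_W$-representations. It therefore suffices to establish finite index for each $\pi^{M_i}$ (and $\pi^{\overline{M_i}}$) separately; finite index for $\pi^M$ will follow provided the decomposition has only finitely many isotypes with finite multiplicities, which in our context is ensured by the underlying finiteness properties of $V$ as a $W$-module.

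For each simple summand $M_i \subset V$, I would verify the three hypotheses of Theorem \ref{thm: finite index for regular} applied to $W$ and $M_i$: \emph{(i)} $\pi^{M_i}$ exists, because the restriction of the vacuum $\cA_V$-representation along $\cA_W \subset \cA_V$ gives an $\cA_W$-representation on $\cH_V$ which, using \cite[Prop.\ 5.5]{Ten18ax}, decomposes as $\bigoplus_j \pi^{M_j}$, exhibiting $\pi^{M_i}$ as a subrepresentation; \emph{(ii)} $\pi^{\overline{M_i}}$ exists, because the PCT involution $\theta_V$ preserves $W$ (as $W$ is a unitary subalgebra) and sends the $W$-submodule $M_i$ to another $W$-submodule $\theta_V(M_i) \subset V$ which, unpacking the antilinearity, is naturally isomorphic as a $W$-module to $\overline{M_i}$, so $\pi^{\overline{M_i}}$ also appears as an $\cA_W$-subrepresentation of $\cH_V$; \emph{(iii)} $\cY_{M_i}^+ \in I_{loc}\binom{M_i}{M_i\, W}$ with respect to the annulus system for $W$, because for $a \in M_i$ and $w \in W$ one has
$$
\cY_{M_i}^+(a,z)w \;=\; e^{zL_{-1}} Y^{M_i}(w,-z)a \;=\; e^{zL_{-1}} Y^V(w,-z)a \;=\; Y^V(a,z)w
$$
by skew-symmetry in $V$, so the $W$-creation intertwiner coincides with the restriction of the $V$-insertion, and locality in $\cA_V$ implies \emph{a fortiori} locality in $\cA_W$.

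With these three hypotheses verified, Theorem \ref{thm: finite index for regular} applied to the regular VOA $W$ yields that both $\pi^{M_i}$ and $\pi^{\overline{M_i}}$ have finite index as $\cA_W$-representations, and summing over the decomposition concludes the proof. The main obstacle I anticipate is step \emph{(ii)}: one must carefully trace through the interaction of $\theta_V$ with the $W$-module structure to confirm the isomorphism $\theta_V(M_i) \cong \overline{M_i}$ of $W$-modules, and then identify the resulting $\cA_W$-representation on $\theta_V(\cH_{M_i})$ with the canonical $\pi^{\overline{M_i}}$. The other steps are straightforward once the framework of Section \ref{sec: BLVO} is in place.
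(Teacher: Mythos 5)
Your overall strategy coincides with the paper's: reduce to simple $W$-submodules (finitely many summands because $W$ is regular and multiplicities are finite), verify the hypotheses of Theorem \ref{thm: finite index for regular} with $W$ as the ambient VOA, and handle $\overline{M}$ via the PCT automorphism $\theta$. Steps (i) and (ii) are essentially the paper's argument (the paper cites \cite[Thm.~5.6]{Ten18ax} for existence of $\pi^{M}$ and uses $\theta M \cong \overline{M}$ exactly as you do).

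The genuine gap is in step (iii). The identity $\cY_{M_i}^+(a,z)w = Y^V(a,z)w$ is false in general, for two reasons. First, the creation operator of the $W$-module $M_i$ is $e^{xL_{-1}^W}Y^{M_i}(w,-x)a$, where $L_{-1}^W$ is the mode of the conformal vector of $W$; since $W$ is explicitly \emph{not} assumed conformal, $L_{-1}^W \ne L_{-1}^V$ on $M_i$, so skew-symmetry in $V$ does not apply. Second, $Y^V(a,z)w$ takes values in $V$, not in $M_i$: the correct comparison is with $x^\Delta p_{M_i} Y^V|_{M_i \otimes W}$, where $p_{M_i}$ is the projection onto $M_i$ and $x^\Delta$ corrects for the shift between the $L_0^V$- and $L_0^W$-gradings. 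Relatedly, ``locality in $\cA_V$ implies locality in $\cA_W$'' is not \emph{a fortiori}: the generalized annuli $(B,A)$ for $\cA_W$ are operators built from the Virasoro net of central charge $c_W \ne c_V$, acting on $\cH_W$ and $\cH_{M_i}$ through the $W$-Virasoro structure, so they are not restrictions of the annuli used for $\cA_V$. Establishing that $x^\Delta p_{M_i} Y^V|_{M_i\otimes W}$ is a \emph{localized} intertwining operator for the $W$-annulus system is precisely the content of \cite[Thm.~5.15]{Ten18ax}, which the paper invokes; it then observes that this operator is nonzero (its constant term on $\Omega$ is $a$) and that $I\binom{M_i}{M_i\,W}$ is one-dimensional because $M_i$ is simple, whence $I_{loc}\binom{M_i}{M_i W} = I\binom{M_i}{M_i W}$ and in particular $\cY_{M_i}^+ \in I_{loc}$. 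You need this citation-plus-simplicity argument (or a proof of the cited theorem); the one-line skew-symmetry computation cannot replace it.
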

\begin{proof}
By definition, the $W$-submodule $M$ has finite-dimensional eigenspaces for the operator $L_0^W$. 
Since the regular VOA $W$ has finitely many isomorphism classes of simple modules, $M$ must therefore be a \emph{finite} direct sum of simple modules, and we may assume without loss of generality that $M$ is simple.
Observe that $\pi^M$ exists by \cite[Thm. 5.6]{TenerGRACFT2}, and by \cite[Thm. 5.15]{TenerGRACFT2} there is a number $\Delta$ such that $x^\Delta p_M Y^V |_{M \otimes W} \in I_{loc} \binom{M}{M W}$.
If $a \in M$, $p_M Y^V(a,x)\Omega = a + \cdots$, and thus $p_M Y^V|_{M \otimes W} \ne 0$.
Since $M$ is simple we conclude that $I_{loc} \binom{M}{MW} = I \binom{M}{MW}$.

If $\theta$ is the PCT automorphism of $V$, then $\theta M$ is a $W$-submodule of $V$ which is isomorphic to $\overline{M}$, and thus we conclude that $\pi^{\overline{M}}$ exists.
Thus we may apply Theorem \ref{thm: finite index for regular} to obtain the desired result.
\end{proof}

\begin{Remark}
The argument for Corollary \ref{cor: finite index for regular in vacuum} works just as well if there is a $V$-module $N$ containing $M$ as a $W$-submodule such that $\pi^N$  and $\pi^{\overline{N}}$ exist and $\cY_N^+, \cY_{\overline{N}}^+ \in I_{loc}$.
\end{Remark}

There are several ways to apply the above results to obtain rationality of conformal nets.
Applying Theorem \ref{thm: finite index for regular} we immediately obtain the following.
\begin{Corollary}\label{cor: every module finite index}
Let $V$ be a simple regular unitary VOA with bounded localized vertex operators.
Suppose that every irreducible representation of $\cA_V$ is equivalent to one of the form $\pi^M$ for a unitary $V$-module $M$, and suppose that $\cY_M^+ \in I_{loc}$ for all such $M$.
Then $\cA_V$ is rational.
\end{Corollary}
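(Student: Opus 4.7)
The plan is to combine regularity of $V$ with Theorem~\ref{thm: finite index for regular}: regularity supplies finitely many simple $V$-modules, and hence, by hypothesis, finitely many irreducible $\cA_V$-representations, while Theorem~\ref{thm: finite index for regular} makes each of them finite index. In detail, since $V$ is regular it has only finitely many isomorphism classes of simple unitary modules, say $M_1,\ldots,M_n$. Given any irreducible representation $\pi$ of $\cA_V$, the hypothesis furnishes a unitary $V$-module $M$ with $\pi \cong \pi^M$; decomposing $M$ orthogonally into simples and using that $\pi^M$ is the corresponding orthogonal direct sum of the $\pi^{M_j}$ of its simple summands, irreducibility of $\pi$ forces $\pi \cong \pi^{M_j}$ for a unique simple $M_j$. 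So $\cA_V$ has at most $n$ irreducibles up to equivalence.

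For the finite-index claim I apply Theorem~\ref{thm: finite index for regular} to each such $M_j$. Its hypotheses are: $\pi^{M_j}$ exists (given), $\cY_{M_j}^+ \in I_{loc}$ (given), and $\pi^{\overline{M_j}}$ exists. For the last, the categorical conjugate of the irreducible representation $\pi^{M_j}$ is again an irreducible $\cA_V$-representation, so by hypothesis has the form $\pi^{M_\ell}$ for some simple $M_\ell$; matching VOA contragredients with conformal-net conjugates (via the dictionary developed in \cite{Ten18ax} together with Proposition~\ref{prop: dagger localized}) gives $M_\ell \cong \overline{M_j}$, whence $\pi^{\overline{M_j}}$ exists. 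Theorem~\ref{thm: finite index for regular} now yields finite index for $\pi^{M_j}$, and rationality of $\cA_V$ follows at once.

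The only nonroutine step is the identification of the conformal-net conjugate of $\pi^{M_j}$ with $\pi^{\overline{M_j}}$, i.e.\ the compatibility of the VOA and net notions of contragredient under $M \mapsto \pi^M$. I expect this to follow from a direct comparison of the two definitions on insertion operators---bounded insertions on $\cH_M$ transfer to $\cH_{\overline{M}}$ via the canonical antiunitary together with the PCT $\theta$, and Proposition~\ref{prop: dagger localized} ensures locality is preserved under dagger---so that the corollary reduces to an immediate application of Theorem~\ref{thm: finite index for regular}. This matching of contragredients is the heart of the argument; everything else is bookkeeping.
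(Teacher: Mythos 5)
Your proof is correct and is essentially the paper's: the paper offers no written argument beyond asserting that the corollary follows immediately from Theorem \ref{thm: finite index for regular}, and your reduction (regularity gives finitely many simple modules, hence finitely many irreducible $\cA_V$-representations via the decomposition $\pi^M\cong\bigoplus\pi^{M_j}$, each of finite index by that theorem) is exactly the intended route. The one point the paper leaves implicit is the existence of $\pi^{\overline{M_j}}$, and here your primary argument is the weaker of your two suggestions: the detour through the categorical conjugate and the ``matching'' $M_\ell\cong\overline{M_j}$ is both unnecessary and circular as stated (proving that the net conjugate of $\pi^{M_j}$ \emph{is} $\pi^{\overline{M_j}}$ presupposes that the latter exists), whereas your fallback --- transferring the bounded insertion operators to $\cH_{\overline{M}}$ directly via the canonical antiunitary and the PCT, so that $\pi^M$ exists if and only if $\pi^{\overline{M}}$ does --- is the clean justification, and is moreover moot in the paper's actual applications (e.g.\ Proposition \ref{prop: rationality for F and G}), where the hypothesis is verified for all unitary modules at once.
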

Thus we can obtain rationality with only minimal information about intertwining operators.
In future work, we hope to show that every representation of $\cA_V$ is of the form $\pi^M$, which would streamline Corollary \ref{cor: every module finite index} further; see Section \ref{sec: outlook}.

Another way to apply Theorem \ref{thm: finite index for regular} is via conformal inclusions.
We obtain a method for establishing rationality without any discussion of intertwining operators.

\begin{Corollary}\label{cor: rationality of nets}
Let $V$ be a simple unitary VOA with bounded localized vertex operators, let $W$ be a unitary subalgebra which is regular, and assume that the coset $W^c$ is regular.
Then $[\cA_V : \cA_W \otimes \cA_{W^c}] < \infty$.

In particular, if $W$ is a conformal subalgebra then $[\cA_V : \cA_W] < \infty$, so that $\cA_V$ is rational if and only if $\cA_W$ is.
\end{Corollary}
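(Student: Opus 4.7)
The plan is to decompose $V$ into finitely many $(W \otimes W^c)$-submodules, use Corollary \ref{cor: finite index for regular in vacuum} to obtain finite index for each summand, and assemble these bounds to control the subfactor index. Since $W$ and $W^c$ are both regular and unitary, so is their tensor product $W \otimes W^c$, which sits as a regular unitary subalgebra of $V$. Let $\{M_1,\ldots,M_n\}$ be a complete list of simple unitary $(W\otimes W^c)$-modules. As a $(W \otimes W^c)$-module, $V$ decomposes as
\begin{equation*}
V \cong \bigoplus_{i=1}^n X_i \otimes M_i,
\end{equation*}
with multiplicity spaces $X_i$. To see each $X_i$ is finite-dimensional, note that if $h_i \ge 0$ is the lowest conformal weight of $M_i$, then $X_i \ne 0$ forces $h_i \in \bbZ_{\ge 0}$ (since $V$ is $\bbZ_{\ge 0}$-graded) and $X_i \otimes (M_i)_{h_i}$ embeds into the finite-dimensional space $V_{h_i}$. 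In particular, the decomposition involves only finitely many nonzero summands.

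Each submodule $X_i \otimes M_i \subset V$ is a $(W \otimes W^c)$-submodule, so Corollary \ref{cor: finite index for regular in vacuum} applies to show that $\pi^{X_i \otimes M_i}$ is a finite-index representation of $\cA_{W \otimes W^c}$. Using the natural identification $\cA_{W \otimes W^c} \cong \cA_W \otimes \cA_{W^c}$, which follows straightforwardly from the BLVO construction of conformal nets from unitary VOAs, the vacuum sector $\cH_V$ of $\cA_V$ decomposes orthogonally as $\bigoplus_i \cH_{X_i \otimes M_i}$ into finitely many finite-index $(\cA_W \otimes \cA_{W^c})$-sectors. The index of a conformal subnet inclusion can then be computed from the statistical dimension of the vacuum sector as a module over the subnet (equivalently, by a Longo--Rehren/Kosaki-type argument summing $\dim(X_i)^2 d(\pi^{M_i})^2$), and since this is a finite sum of finite quantities, we conclude $[\cA_V : \cA_W \otimes \cA_{W^c}] < \infty$.

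For the ``in particular'' statement, if $W \subset V$ is a conformal subalgebra then $W$ and $V$ share the same conformal vector, so every element of $W^c$ commutes with all the Virasoro modes of $V$ and hence lies in $V_0$. Since $V$ is simple of CFT type, $V_0 = \bbC\Omega$, and therefore $W^c = \bbC\Omega$, whose associated conformal net is trivial. Thus $[\cA_V : \cA_W] < \infty$, and the equivalence of rationality between $\cA_W$ and $\cA_V$ is immediate from Longo's \cite[Thm.~24]{Longo03}, cited earlier in this section. The main potential obstacle lies not in the VOA decomposition (which is direct) nor in the application of Corollary \ref{cor: finite index for regular in vacuum} (which is immediate), but in cleanly passing from the finite indices of the individual sectors $\pi^{X_i \otimes M_i}$ to the finite index of the subnet inclusion; this requires confirming the identification $\cA_{W \otimes W^c} \cong \cA_W \otimes \cA_{W^c}$ within the BLVO framework and invoking the standard subfactor-theoretic formula relating subnet index to the statistical dimension of the vacuum decomposition.
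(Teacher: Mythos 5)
Your argument is correct and rests on the same mechanism as the paper's, but you have re-derived by hand something the paper gets for free. Corollary \ref{cor: finite index for regular in vacuum} applies to an arbitrary (not necessarily simple) $W$-submodule $M$ of $V$, and $V$ is itself a $(W\otimes W^c)$-submodule of $V$; the paper therefore simply applies that corollary with $M=V$ to conclude that $\pi^V$ is a finite-index representation of $\cA_{W\otimes W^c}=\cA_W\otimes\cA_{W^c}$ (the identification being \cite[Prop. 4.11]{GRACFT1}, as you note), and then observes that this \emph{is} the statement $[\cA_V:\cA_W\otimes\cA_{W^c}]<\infty$, since $\cA_V(I)$ sits as an intermediate subfactor inside the finite-index Jones--Wassermann subfactor $\pi^V_I(\cA_{W\otimes W^c}(I))\subseteq \pi^V_{I^\prime}(\cA_{W\otimes W^c}(I^\prime))^\prime$. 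Your explicit decomposition $V\cong\bigoplus X_i\otimes M_i$ and the finite-dimensionality of the $X_i$ are fine (and are in fact what the proof of Corollary \ref{cor: finite index for regular in vacuum} uses internally), but your reassembly step is the one place needing care: the Longo--Rehren formula is $[\cA:\cB]=\sum_i \dim(X_i)\,d(\pi^{M_i})$, not $\sum_i\dim(X_i)^2 d(\pi^{M_i})^2$ (check it against $\cA^G\subset\cA$ for a finite group $G$), and since that formula is usually stated under a finite-index hypothesis, the clean way to close the loop is the intermediate-subfactor observation above (finiteness of the statistical dimension $\sum_i\dim(X_i)d(\pi^{M_i})$ of the vacuum restriction bounds the index of the ambient Jones--Wassermann subfactor, hence of the subnet inclusion). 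For finiteness purposes your conclusion is unaffected. Your treatment of the ``in particular'' clause---$W$ conformal forces $W^c=\bbC\Omega$ since $L_n v=0$ for $n\ge -1$ puts $v\in V_0=\bbC\Omega$, and then \cite[Thm. 24]{Longo03} gives the rationality equivalence---matches the paper.
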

\begin{proof}
Observe that $W \otimes W^c$ is a regular, unitary, conformal subalgebra of $V$, and recall that $\cA_{W \otimes W^c} = \cA_W \otimes \cA_{W^c}$ \cite[Prop. 4.11]{GRACFT1}.
By Corollary \ref{cor: finite index for regular in vacuum}, $\pi^V$ is a finite index representation of $\cA_W \otimes \cA_{W^c}$, which is to say that $[\cA_V : \cA_W \otimes \cA_{W^c}] < \infty$.
The case when $W$ is a conformal subalgebra, so that $W^c = 0$, follows immediately, and the statement about rationality is \cite[Thm. 24]{Longo03}.
\end{proof}

In Section \ref{sec: examples} we apply these results to many important examples of unitary VOAs.

\newpage
\section{Applications and outlook}\label{sec: examples}

In this section we consider bounded localized vertex operators with respect to the system of generalized annuli constructed in \cite[\S3.4]{TenerGRACFT2}.

\subsection{Applications to rationality of conformal nets}\label{sec: applications to rationality}

The machinery developed in Section \ref{sec: fusion rules for bimodules} allows one to study representations of conformal nets using corresponding information about VOAs.
The goal of Section \ref{sec: applications to rationality} is to use this machinery to give a straightforward proof of the following theorem.

\begin{Theorem}\label{thm: rationality of WZW and W}
Let $V$ be a VOA which is either:
\begin{itemize}
\item a WZW model $V(\frg,k)$ for $\frg$ a finite-dimensional simple complex Lie algebra and $k$ a positive integer
\item a $W$-algebra $W(\frg,\ell)$ in the discrete series, for $\frg$ of type ADE
\end{itemize}
Then $V$ has bounded localized vertex operators and $\cA_V$ is rational.
\end{Theorem}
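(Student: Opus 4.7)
The plan is to first establish bounded localized vertex operators (BLVO) for all VOAs in question, and then prove rationality of the associated conformal nets. BLVO for WZW models $V(\frg,k)$ was already established in \cite{GRACFT1,Ten18ax} via the Segal-type analysis of partially thin annuli for affine Kac--Moody VOAs. For non-simply-laced types, one realizes $V(\frg,k)$ as a unitary subalgebra of a simply-laced $V(\tilde\frg,\tilde k)$ (via Dynkin diagram folding: $A_{2n-1}\to C_n$, $D_{n+1}\to B_n$, $D_4\to G_2$, $E_6\to F_4$) and invokes inheritance of BLVO for unitary subalgebras. For the discrete-series principal W-algebras $W(\frg,\ell)$ of type ADE, the key input is Arakawa--Creutzig--Linshaw \cite{ArakawaCreutzigLinshaw19}, who identify $W(\frg,\ell)$ with the coset $\operatorname{Com}(V(\frg,k+1),V(\frg,k)\otimes V(\frg,1))$ at the appropriate shifted level. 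Being a unitary subalgebra of a VOA with BLVO, $W(\frg,\ell)$ inherits BLVO.

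The rationality argument for simply-laced $\frg$ proceeds by induction on $k\geq 1$, treating the WZW and W-algebra nets simultaneously. The base case $k=1$ is the lattice conformal net: for simply-laced $\frg$, $V(\frg,1)$ is isomorphic to the lattice VOA of the root lattice $Q$, and $\cA_{V(\frg,1)}$ is the associated lattice conformal net, which is well-known to be rational. For the inductive step, the central charges in the Arakawa--Creutzig--Linshaw coset identity add up exactly so that
\begin{equation*}
V(\frg,k+1)\otimes W(\frg,k)\;\subseteq\;V(\frg,k)\otimes V(\frg,1)
\end{equation*}
is a unitary conformal subalgebra. Regularity of $W(\frg,k)$ is supplied by \cite{Arakawa15AssociatedVarieties,Arakawa15PrincipalNilpotent}, and regularity of the WZW tensor factors is classical, so Corollary \ref{cor: rationality of nets} applies and gives
\begin{equation*}
[\cA_{V(\frg,k)\otimes V(\frg,1)} : \cA_{V(\frg,k+1)}\otimes \cA_{W(\frg,k)}]<\infty.
\end{equation*}
The ambient net $\cA_{V(\frg,k)\otimes V(\frg,1)}=\cA_{V(\frg,k)}\otimes \cA_{V(\frg,1)}$ is rational by the inductive hypothesis together with the base case, and rationality is preserved under finite-index subnets by Longo's theorem \cite{Longo03}, so both $\cA_{V(\frg,k+1)}$ and $\cA_{W(\frg,k)}$ are rational. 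This completes the induction and settles both the simply-laced WZW cases and the ADE discrete-series W-algebra cases.

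For non-simply-laced WZW models, rationality then transfers from the simply-laced case: the folding $V(\frg,k)\subset V(\tilde\frg,\tilde k)^\sigma$ exhibits $V(\frg,k)$ as a finite-index unitary conformal subalgebra (up to a tensor factor) of a simply-laced WZW now known to be rational, and Corollary \ref{cor: rationality of nets} together with \cite{Longo03} transfers rationality of the ambient net to $\cA_{V(\frg,k)}$. The main obstacle throughout this program is the VOA-theoretic input: the identification of the discrete-series W-algebras as cosets and their regularity, along with the matching of central charges guaranteeing that the coset sits as a \emph{conformal} subalgebra. These are precisely the contributions of \cite{ArakawaCreutzigLinshaw19,Arakawa15AssociatedVarieties,Arakawa15PrincipalNilpotent}. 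With those facts available, the framework of Section \ref{sec: fusion rules for bimodules}---in particular the uniform rationality criterion of Corollary \ref{cor: rationality of nets}---reduces the entire problem to clean coset bookkeeping, sidestepping any model-specific analysis of intertwining operators.
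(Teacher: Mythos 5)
Your treatment of the simply-laced WZW models and of the discrete-series $W$-algebras of type ADE is essentially the paper's own argument (Lemma \ref{lem: rationality ADE}): level-one lattice nets as the base case, induction on the level via the Arakawa--Creutzig--Linshaw coset $V(\frg,k+1)\otimes W(\frg,\ell)\subseteq V(\frg,k)\otimes V(\frg,1)$, regularity of the $W$-algebras from \cite{Arakawa15AssociatedVarieties,Arakawa15PrincipalNilpotent}, Corollary \ref{cor: rationality of nets}, and Longo's finite-index transfer \cite{Longo03}. That part is correct.

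The gap is in the non-simply-laced cases. Your proposed transfer via Dynkin diagram folding requires applying Corollary \ref{cor: rationality of nets} to the inclusion $V(\frg,k)\subset V(\tilde\frg,k)$, but this inclusion is \emph{not} conformal (the central charges do not match for general $k$), so the corollary only yields finite index of $\cA_{V(\frg,k)}\otimes\cA_{W^c}$ inside $\cA_{V(\tilde\frg,k)}$ \emph{provided the coset $W^c=V(\frg,k)^c$ is regular}. For the folding inclusions $C_n\subset A_{2n-1}$, $B_n\subset D_{n+1}$, $F_4\subset E_6$, $G_2\subset D_4$ at general level $k$, these cosets are exactly the kind of coset $W$-algebras whose regularity is not known; your phrase ``up to a tensor factor'' is precisely where the argument breaks, since that tensor factor must be identified and shown regular before \cite{Longo03} can be invoked. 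The paper circumvents this in two different ways: for types B and C it uses genuine \emph{conformal} embeddings into level-one orthogonal algebras, namely $V(\fso(n),m)\otimes V(\fso(m),n)\subset V(\fso(nm),1)$ and $V(C_n,m)\otimes V(C_m,n)\subset V(\fso(4nm),1)$, where no coset regularity is needed; and for types F and G it abandons the subnet-index route altogether, instead proving (Lemma \ref{lem: all modules F and G}) that every simple $V(\frg,k)$-module embeds in $V(E_8,1)^{\otimes k}$, deducing finite index of every $\pi^M$ from Corollary \ref{cor: finite index for regular in vacuum}, and bounding the number of irreducibles by Henriques' result on loop group net representations (Proposition \ref{prop: rationality for F and G}). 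You would need one of these substitutes (or a proof of regularity of the folding cosets) to close the argument for types B, C, F, G.
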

It was shown in \cite[Thm. 7.4]{TenerGRACFT2} that WZW models have bounded localized vertex operators and the resulting conformal nets are equivalent to the standard loop group conformal nets.
Thus Theorem \ref{thm: rationality of WZW and W} resolves the problem of rationality of loop group conformal nets, which first appeared in \cite{GabbianiFrohlich93}.
Theorem \ref{thm: rationality of WZW and W} had previously been established for $V(\frg,k)$ when $\frg$ was of type $ACG$ (all levels) and $D$ (odd levels) \cite{Wa98, Xu00b, TL97,gui21categoricalextension}.
It had also been established for the $W$-algebras $W(\fsl_2,\ell)$, the Virasoro minimal models \cite{KaLo04}.
Our proof covers these cases as well.

The proof of Theorem \ref{thm: rationality of WZW and W} will be done in two parts.
In Lemma \ref{lem: rationality ADE}, we will prove the theorem for WZW models of type ABCDE and $W$-algebras of type ADE using conformal inclusions (applying Corollary \ref{cor: rationality of nets}).
In Proposition \ref{prop: rationality for F and G}, we will prove the theorem for WZW models of type F and G by showing directly that every irreducible representation has finite index (via Corollary \ref{cor: finite index for regular in vacuum}).
The union of these two results gives Theorem \ref{thm: rationality of WZW and W}.
Many of the models considered could be addressed by either method.

First consider when $\frg$ is of type $ADE$.
For a positive integer $k$, we have the diagonal embedding $V(\frg, k+1) \subset V(\frg,k) \otimes V(\frg,1)$, where $V(\frg,k)$ is the WZW model of type $\frg$ at level $k$.
A theorem of Arakawa-Creutzig-Linshaw \cite{ArakawaCreutzigLinshaw19} says that the coset $V(\frg,k+1)^c$ of this inclusion is the $W$-algebra $W(\frg,\ell)$, where $\ell$ is given by $\ell + \check h = \frac{k+ \check h}{k + \check h +1}$.
Moreover, the theorem asserts that $V(\frg,k+1)^{cc} = V(\frg,k+1)$.
The $W$-algebras which arise in this context are precisely the discrete series (see the introduction of \cite{ArakawaCreutzigLinshaw19} for more detail), which are known to be regular by work of Arakawa \cite{Arakawa15AssociatedVarieties, Arakawa15PrincipalNilpotent}.
In the special case when $\frg = \fsl_2$, these $W$-algebras give the discrete series of Virasoro minimal models, recovering the well-known result of Goddard-Kent-Olive \cite{GKO}.

Using these inclusions, we may apply Corollary \ref{cor: rationality of nets} to obtain a short proof of rationality of the WZW and $W$-algebras of type ADE.
For the remaining classical types, we apply a similar argument using conformal inclusions arising from inclusions of groups.

\begin{Lemma}\label{lem: rationality ADE}
Let $V$ be a VOA which is either:
\begin{itemize}
\item $V = V(\frg,k)$ for positive integer $k$ and $\frg$ is of type ABCDE
\item $V = W(\frg,\ell)$ is in the discrete series for $\frg$ of type ADE
\end{itemize}
Then $\cA_V$ is rational.
\end{Lemma}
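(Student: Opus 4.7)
The plan is to apply Corollary \ref{cor: rationality of nets} iteratively, using conformal inclusions to reduce rationality of $\cA_V$ to a small set of base cases. First one must verify that each $V$ in the statement has bounded localized vertex operators: for WZW this is \cite[Thm. 7.4]{Ten18ax}, and for the discrete-series $W$-algebras of type ADE this follows from their realization as unitary cosets inside $V(\frg,k)\otimes V(\frg,1)$ via the Arakawa-Creutzig-Linshaw theorem \cite{ArakawaCreutzigLinshaw19}, combined with inheritance of BLVO under tensor products and unitary subalgebras (as developed in \cite[\S2.2]{GRACFT1}).

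For simply-laced $\frg$, I would induct on $k$. The base case $V(\frg,1)$ is the lattice VOA $V_{Q_\frg}$ on the root lattice, whose conformal net is the standard lattice net and is rational by classical results. For the inductive step, ACL supplies the conformal embedding
\[
V(\frg,k+1)\otimes W(\frg,\ell_k)\;\subset\;V(\frg,k)\otimes V(\frg,1),\qquad \ell_k+\check h=\tfrac{k+\check h}{k+1+\check h},
\]
with both $V(\frg,k+1)$ regular (by Frenkel-Zhu) and $W(\frg,\ell_k)$ regular (by Arakawa \cite{Arakawa15AssociatedVarieties,Arakawa15PrincipalNilpotent}). Corollary \ref{cor: rationality of nets} then produces a finite-index inclusion $\cA_{V(\frg,k+1)}\otimes\cA_{W(\frg,\ell_k)}\subset\cA_{V(\frg,k)}\otimes\cA_{V(\frg,1)}$; by the inductive hypothesis the large net is rational, so by Longo's theorem so is the small one, and the standard fact that a tensor product of conformal nets is rational iff each factor is yields rationality of both $\cA_{V(\frg,k+1)}$ and $\cA_{W(\frg,\ell_k)}$ at once. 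This simultaneously handles all ADE WZW models and all ADE discrete-series $W$-algebras, since as $k$ ranges over the positive integers the values $\ell_k$ exhaust the discrete series.

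For the non-simply-laced WZW cases, type C is already due to Xu \cite{Xu00b} and can simply be cited. For type B, I would run an analogous induction: the base $\cA_{V(\fso_{2n+1},1)}$ is rational via the embedding of $V(\fso_{2n+1},1)$ in a free-fermion VOA whose conformal net is the well-known rational free-fermion net, and the inductive step uses the conformal inclusion of $V(\fso_{2n+1},k+1)\otimes C_k\subset V(\fso_{2n+1},k)\otimes V(\fso_{2n+1},1)$ with $C_k$ the coset, invoking non-simply-laced extensions of the ACL framework to identify $C_k$ as a regular vertex algebra. The main obstacle is precisely this last step: unlike the ADE case, the coset $C_k$ is not a standard discrete-series $W$-algebra of type B, and its regularity is not directly covered by Arakawa's theorems, so one must appeal to a finer identification from the non-simply-laced coset literature (or alternatively adapt Xu's prior arguments to type B).
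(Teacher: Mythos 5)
Your treatment of the ADE cases is essentially the paper's proof: induction on the level using the Arakawa--Creutzig--Linshaw conformal inclusion $V(\frg,k+1)\otimes W(\frg,\ell)\subset V(\frg,k)\otimes V(\frg,1)$, regularity of both factors, Corollary \ref{cor: rationality of nets}, Longo's theorem, and the fact that a tensor product of nets is rational iff each factor is. (The paper's level-$1$ base case goes through Bischoff's conformal-net Frenkel--Kac--Segal construction together with the identification of $\cA_{V(\frg,1)}$ with the lattice/loop group net from \cite{Ten18ax}, but this is the same idea as yours.)

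The non-simply-laced cases are where your argument has a genuine gap, one you yourself flag. Your proposed type-B induction needs the coset of $V(\fso_{2n+1},k+1)$ inside $V(\fso_{2n+1},k)\otimes V(\fso_{2n+1},1)$ to be a regular vertex algebra, and this is exactly what is \emph{not} available: Arakawa's rationality and $C_2$-cofiniteness theorems cited here, and the ACL coset identification, are stated for the simply-laced discrete series, so the hypothesis of Corollary \ref{cor: rationality of nets} (regularity of both $W$ and $W^c$) cannot be verified for that inclusion. Appealing to Xu for type C is legitimate as a citation but does not give a proof within this framework. The paper sidesteps both problems with a different family of \emph{conformal} inclusions (trivial coset, so no coset regularity is needed): $V(\fso(n),m)\otimes V(\fso(m),n)\subset V(\fso(nm),1)$ for type B/D and $V(C_n,m)\otimes V(C_m,n)\subset V(\fso(4nm),1)$ for type C. Here the subalgebra is a tensor product of WZW models, hence regular by Frenkel--Zhu/Dong--Li--Mason, and the ambient net $\cA_{V(\fso(N),1)}$ is rational because $V(\fso(N),1)$ is the even part of a free fermion VOA, whose net has finite $\mu$-index. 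Corollary \ref{cor: rationality of nets} then gives a finite-index inclusion of nets and rationality of each tensor factor at once. If you want a self-contained proof you should replace your type-B induction with these level--rank/orthogonal inclusions.
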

\begin{proof}
We first consider $\frg$ of type ADE, and begin with the case $V = V(\frg, 1)$.
The Frenkel-Kac-Segal construction asserts that $V$ is a lattice VOA.
Bischoff gave a version of the Frenkel-Kac-Segal construction for conformal nets, and verified that the loop group conformal nets corresponding to $\frg$ at level $1$ were rational \cite[Prop. 4.1.17]{BischoffThesis} (see also \cite{DongXu06}).
By \cite[\S7.1]{TenerGRACFT2}, $\cA_V$ is isomorphic to the loop group conformal net, and so we have rationality of $\cA_V$ in these cases.

We can now give an inductive argument to obtain rationality at higher levels.
By \cite{ArakawaCreutzigLinshaw19} we have a conformal inclusion 
$$
V(\frg,k+1) \otimes W(\frg,\ell) \subseteq V(\frg,k) \otimes V(\frg,1)
$$
for the appropriate value of $\ell$.
Since $V(\frg, k+1)$ is regular by \cite{DongLiMason97} (see also \cite{FrZh92}) and $W(\frg,\ell)$ is regular by \cite{Arakawa15AssociatedVarieties, Arakawa15PrincipalNilpotent}, and tensor products of regular VOAs are regular \cite{DongLiMason97}, we can invoke Corollary \ref{cor: rationality of nets} and conclude that
$$
[ \cA_{V(\frg,k)} \otimes \cA_{V(\frg,1)}: \cA_{V(\frg,k+1)} \otimes \cA_{W(\frg,\ell)}] < \infty.
$$
We have used here that $\cA_{V_1 \otimes V_2} = \cA_{V_1} \otimes \cA_{V_2}$ \cite[Prop. 4.11]{GRACFT1}.
Thus by \cite[Thm. 24]{Longo03}, we conclude that $\cA_{V(\frg,k)} \otimes \cA_{V(\frg,1)}$ is rational if and only if $\cA_{V(\frg,k+1)} \otimes \cA_{W(\frg,\ell)}$ is.
Since $\cA_1 \otimes \cA_2$ is rational if and only if both $\cA_i$ are (e.g. by $\mu$-index considerations; see \cite{KaLoMu01}), we can conclude that if $\cA_{V(\frg,k)}$ is rational then so are $\cA_{V(\frg,k+1)}$ and $\cA_{W(\frg,\ell)}$.
Repeating this procedure completes the proof in the case where $\frg$ is of type ADE.

When $\frg$ is classical, we can appeal to inclusions of groups.
First recall that $V(\fso(n),1)$ is the even part of the (real) free fermion VOA $\cF^{\otimes n}$.
Thus $\cA_{V(\fso(n),1)}$ is the even part of $\cA_{\cF}^{\otimes n}$, which has $\mu$-index $1$ (as we explicitly identified $\cA_{\cF}^{\otimes 2}$ with the complex free fermion net in \cite{GRACFT1}, which has $\mu$-index 1).
Thus $\cA_{V(\fso(n),1)}$ has $\mu$-index 4 by \cite{KaLoMu01}, and is in particular rational.
We now apply Corollary \ref{cor: rationality of nets} to the conformal inclusions
$$
V(\fso(n),m) \otimes V(\fso(m),n) \subset V(\fso(nm),1)
$$ 
and $V(C_n,m) \otimes V(C_m,n) \subset V(\fso(4nm),1)$ to obtain rationality of WZW models of type B and C (and D, again).
\end{proof}

We now turn our attention to $V = V(\frg,k)$ when $\frg$ is of type F or G.
We make use of the well known inclusion of groups $F_4 \times G_2 \subset E_8$ \cite{DynkinIndex}.
This gives rise to a conformal inclusion of VOAs $V(F_4,1) \otimes V(G_2,1) \subset V(E_8,1)$ (note that $F_4$ and $G_2$ must both occur at level 1, as otherwise the central charge of $V(F_4,1) \otimes V(G_2,1)$ would exceed that of $V(E_8,1)$).
Both $V(F_4,1)$ and $V(G_2,1)$ have only two simple modules.
On the other hand, $V(E_8,1)$ is holomorphic and therefore cannot be equal to $V(F_4,1) \otimes V(G_2,1)$, and it follows that every $V(\frg,1)$ module occurs inside $V(E_8,1)$ for $\frg= F_4$ and $G_2$.
We were unable to find a proof of the following higher level analog of this fact in the literature, and so we include one here. 

\begin{Lemma}\label{lem: all modules F and G}
Let $\frg$ be $G_2$ or $F_4$.
Then every simple $V(\frg,k)$-module arises as a submodule of $V(E_8,1)^{\otimes k}$.
\end{Lemma}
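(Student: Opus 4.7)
The plan combines the conformal branching of $V(E_8,1)$ under $V(F_4,1)\otimes V(G_2,1)$ with the diagonal embedding $V(\frg,k)\hookrightarrow V(\frg,1)^{\otimes k}$, reducing the lemma to a finite-dimensional representation-theoretic statement about tensor powers of the smallest nontrivial representation of $\frg$.

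The Sugawara central charges satisfy $c(F_4,1)+c(G_2,1)=\tfrac{26}{5}+\tfrac{14}{5}=8=c(E_8,1)$, so $V(F_4,1)\otimes V(G_2,1)\subset V(E_8,1)$ is a conformal embedding. Combined with the standard branching of the $E_8$ adjoint, $\mathfrak{e}_8=\mathfrak{f}_4\oplus\mathfrak{g}_2\oplus(V^{26}_{F_4}\otimes V^{7}_{G_2})$, and the holomorphicity of $V(E_8,1)$, this forces the decomposition
\begin{equation*}
V(E_8,1)=\bigl(V(F_4,1)\otimes V(G_2,1)\bigr)\oplus\bigl(L^{26}_{F_4}\otimes L^{7}_{G_2}\bigr),
\end{equation*}
where $L^{d}_{\frg}$ denotes the nontrivial simple level-$1$ $V(\frg,1)$-module whose top subspace is the $d$-dimensional representation. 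Restricting to $V(\frg,1)$ alone (for either $\frg\in\{F_4,G_2\}$), both simple level-$1$ modules embed as $V(\frg,1)$-submodules of $V(E_8,1)$. Taking $k$-fold tensor powers, every $L(\lambda_1,1)\otimes\cdots\otimes L(\lambda_k,1)$ with each $\lambda_i$ level-$1$ dominant embeds as a $V(\frg,1)^{\otimes k}$-submodule of $V(E_8,1)^{\otimes k}$, and hence as a $V(\frg,k)$-submodule via the diagonal inclusion $V(\frg,k)\hookrightarrow V(\frg,1)^{\otimes k}$.

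For each simple $V(\frg,k)$-module $L(\mu,k)$, the remaining task is to produce level-$1$ weights $\lambda_i$ such that $L(\mu,k)$ occurs as a $V(\frg,k)$-submodule of $\bigotimes_i L(\lambda_i,1)$. By a standard result on tensor products of integrable highest-weight $\widehat{\frg}$-modules, $L(\nu,k)$ occurs in $\bigotimes_i L(\lambda_i,1)$ under the diagonal action if and only if the finite-dimensional $\frg$-module $V_\nu$ occurs as a summand of $\bigotimes_i V_{\lambda_i}$; the level constraint $\langle\nu,\theta^\vee\rangle\le k$ is automatic since $\nu\le\sum_i\lambda_i$ in dominance order and $\langle\alpha_j,\theta^\vee\rangle\ge 0$ for every simple root of $G_2$ and $F_4$ (as one checks directly from the comarks).

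The main obstacle is the resulting finite-dimensional claim: for $\frg\in\{G_2,F_4\}$ and every level-$k$ dominant weight $\mu=\sum_j a_j\omega_j$ (i.e., $\sum_j a_j^\vee a_j\le k$, with $a_j^\vee$ the comarks), $V_\mu$ embeds in $V_\omega^{\otimes m}$ for some $m\le k$, where $\omega$ is the nontrivial level-$1$ fundamental weight. Since $V_\mu$ is the unique highest-weight summand of $V_{a_1\omega_1}\otimes\cdots\otimes V_{a_n\omega_n}$ and $V_{a_j\omega_j}\subset V_{\omega_j}^{\otimes a_j}$ (symmetric power), it suffices to check $V_{\omega_j}\subset V_\omega^{\otimes a_j^\vee}$ for every fundamental weight $\omega_j$. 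For $G_2$ (comarks $(1,2)$ with $\omega$ the $7$-dim representation), the only nontrivial check is the $14$-dim adjoint $\subset V_7^{\otimes 2}$, which follows from $V_7^{\otimes 2}=V_1\oplus V_7\oplus V_{14}\oplus V_{27}$ (dimensions $1+7+14+27=49$). For $F_4$ (comarks $(2,3,2,1)$ with $\omega$ the $26$-dim representation), the two level-$2$ fundamentals $V_{52}$ and $V_{273}$ sit in $V_{26}^{\otimes 2}=V_1\oplus V_{26}\oplus V_{52}\oplus V_{273}\oplus V_{324}$ (dimensions $1+26+52+273+324=676$), while the level-$3$ fundamental $V_{1274}\subset V_{26}^{\otimes 3}$ is verified by a direct character computation (readily done via software such as LiE). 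Assembling these pieces and tracing through the correspondence yields the lemma.
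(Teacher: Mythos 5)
Your proof is correct, and it rests on the same two ingredients as the paper's: the level-one conformal embedding $V(F_4,1)\otimes V(G_2,1)\subset V(E_8,1)$, and the fact that a $\frg$-highest-weight vector of type $V_\nu$ in the top graded piece of a tensor product of integrable level-one modules generates the simple module $L(\nu,k)$ for the diagonal affine subalgebra. The difference is organizational: the paper only descends one level at a time, using $L_k(M_\lambda\otimes M_\mu)\subseteq L_n(M_\lambda)\otimes L_{k-n}(M_\mu)$ to run an induction on $k$ with hand-checked base cases $k=1,2,3$, whereas you embed $V(\frg,k)\subset V(\frg,1)^{\otimes k}$ in one step and convert the whole problem into finite-dimensional representation theory, which the comark bookkeeping $m=\sum_j a_j a_j^\vee\le k$ then settles uniformly. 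Your packaging is cleaner and more visibly transferable to other $\frg$, and the explicit decompositions you need are exactly those the paper checks: $V_{26}^{\otimes 2}\supset V_{52},V_{273},V_{324}$ and $V_{1274}\subset V_{26}^{\otimes 3}$ (the paper uses $V_{273}\otimes V_{26}\supset V_{1274}$, which sits inside $V_{26}^{\otimes 3}$), together with $V_7^{\otimes 2}\supset V_{14}$ for $G_2$; your deferral of the $V_{1274}$ check to a character computation is at the same level of detail as the paper's "one can check directly." One small caution: the branching statement you invoke is only valid in the "if" direction — the "only if" direction fails in general, since coset multiplicity spaces produce constituents $L(\nu,k)$ of $\bigotimes_i L(\lambda_i,1)$ whose lowest-weight $\frg$-type need not occur in $\bigotimes_i V_{\lambda_i}$ — but you use only the "if" direction, so nothing breaks.
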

\begin{proof}
We will only do the case $\frg=F_4$, as the case of $G_2$ is very similar but slightly simpler (and also appears implicitly in \cite[\S5.2]{GuiG2}).
We assume that the reader has some familiarity with the representation theory of affine VOAs (see \cite{FrZh92} for background).
A set of simple roots for $F_4$ is $\{[0,1,-1,0], [0,0,1,-1], [0,0,0,1], \tfrac12 [1,-1,-1,-1]\}$.
The corresponding fundamental weights are $\{[1,1,0,0], [2,1,1,0], \tfrac12 [3,1,1,1], [1,0,0,0]\}$.
The highest root is $\theta = [1,1,0,0]$.
The simple $V(F_4,k)$-modules are given by $L_k(M_\lambda)$ for $\lambda$ a dominant weight with $\lambda \cdot \theta \le k$, where $M_\lambda$ is the corresponding irreducible $\frg$-module.
The weights satisfying this property at called admissible at level $k$.
We will make use of the fact that if $\lambda$ is admissible at level $n$ and $\mu$ is admissible at level $k-n$, then $L_k(M_\lambda \otimes M_\mu) \subseteq L_{n}(M_\lambda) \otimes L_{k-n}(M_\mu)$.

We now prove by induction that every irreducible $V(\frg,k)$-module arises as a submodule of $V(E_8,1)^{\otimes k}$.
We address $k=1,2,3$ by hand.
The case $k=1$ is described in the paragraph preceding the lemma.
If the dominant weight $\lambda$ has coordinates $(n_1,n_2,n_3,n_4)$ with respect to the fundamental weights above, then $\lambda \cdot \theta \le k$ if and only if $2n_1 + 3n_2 + 2n_3 + n_4 \le k$.
Thus the new admissible weights at level $k=2$ are $(1,0,0,0)$, $(0,0,1,0)$, and $(0,0,0,2)$.
We have
$$
L_2(\underline{(0,0,0,1)}^{\otimes 2}) \subset L_1(\underline{(0,0,0,1)})^{\otimes 2} \subset V(E_8,1)^{\otimes 2},
$$
where $\underline{(n_1,n_2,n_3,n_4)}$ is the irreducible $\frg$-module with coordinates $n_i$.
One can check directly that $\underline{(0,0,0,1)}^{\otimes 2}$ contains subrepresentations $\underline{(1,0,0,0)}$, $\underline{(0,0,1,0)}$, and $\underline{(0,0,0,2)}$, which completes the proof for $k=2$.

At level $k=3$, there are four new admissible dominant weights, $(0,0,0,3), (0,0,1,1),$ $ (0,1,0,0),$ and $(1,0,0,1)$.
We have 
$$
L_3\underline{(0,0,0,3)}) \subset L_1(\underline{(0,0,0,1)})^{\otimes 3} \subset V(E_8,1)^{\otimes 3}
$$
On the other hand, $\underline{(0,0,1,0)} \otimes \underline{(0,0,0,1)}$ contains the irreducible representations corresponding to the three other new admissible weights.
Thus from the inclusion
$$
L_3\big(\underline{(0,0,1,0))} \otimes \underline{(0,0,0,1)}\big) \subset \ L_2(\underline{(0,0,1,0)}) \otimes L_1(\underline{(0,0,0,1)}) \subset V(E_8,1)^{\otimes 3},
$$
we conclude that the lemma holds for $k=3$ as well.

Now suppose that $k \ge 4$ and that the lemma has been established for all smaller levels.
Let $\lambda = (n_1,n_2,n_3,n_4)$ be admissible at level $k$.
If all $n_i = 0$ then $L_k(\underline{\lambda})$ is the vacuum sector, and the result is clear.
If $n_1 \ne 0$, then $(n_1-1,n_2,n_3,n_4)$ is admissible at level $k-2$.
Thus from the cases of level $k-2$ and level 2, we have
$$
L_k(\underline{(n_1,n_2,n_3,n_4)}) \subset L_{k-2}(\underline{(n_1-1,n_2,n_3,n_4)}) \otimes L_2(\underline{(1,0,0,0)}) \subset V(E_8,1)^{\otimes k}.
$$
Alternatively, if some other $n_i \ne 0$ we can apply the same argument, using the inductive hypothesis for $k-3$, $k-2$, or $k-1$.
\end{proof}

We point out that Lemma \ref{lem: all modules F and G} gives a proof of \emph{local equivalence} for loop group conformal nets of type F and G (see \cite[\S7.2]{TenerGRACFT2}).
We can in fact deduce much more from this branching information.

\begin{Proposition}\label{prop: rationality for F and G}
Let $V = V(\frg, k)$ for $\frg$ of type $E$, $F$, or $G$.
Then $\cA_V$ has bounded localized vertex operators and every irreducible $\cA_V$-representation is of the form $\pi^M$ for a simple unitary $V$-module $M$.
Moreover $\pi^M$ exists and has finite index for all unitary $V$-modules $M$.
Hence $\cA_V$ is rational.
\end{Proposition}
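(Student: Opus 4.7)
The proof proposal has three main ingredients: bounded localized vertex operators (BLVO), existence and finite index of $\pi^M$ for every simple $V$-module $M$, and the identification of all irreducible $\cA_V$-representations with the $\pi^M$. The first follows from \cite[Thm.~7.4]{Ten18ax}.

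For the second, the plan is to realize every simple unitary $V$-module $M$ as a $V$-submodule of the holomorphic VOA $V(E_8,1)^{\otimes N}$ (for an $N$ depending on $\frg$ and $k$), and then to invoke Corollary~\ref{cor: finite index for regular in vacuum}. For $\frg = F_4$ or $G_2$, Lemma~\ref{lem: all modules F and G} already provides such an embedding with $N=k$, via the level-one conformal inclusion $V(F_4,1)\otimes V(G_2,1) \subset V(E_8,1)$ composed with iterated diagonals. For $\frg = E_8$, I would show by induction on $k$ that every simple $V(E_8,k)$-module appears as a $V(E_8,k)$-submodule of $V(E_8,1)^{\otimes k}$, using the ACL conformal inclusion $V(E_8,k+1)\otimes W(E_8,\ell) \subset V(E_8,k)\otimes V(E_8,1)$ from \cite{ArakawaCreutzigLinshaw19} and the fact that this GKO-style branching exhausts the simple $V(E_8,k+1)$-modules. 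For $\frg = E_6$ or $E_7$, first reduce to $V(\frg,1)^{\otimes k}$ by the same induction, and then embed into $V(E_8,1)^{\otimes k}$ via the classical conformal inclusions $V(E_7,1)\otimes V(\fsl_2,1) \subset V(E_8,1)$ and $V(E_6,1)\otimes V(\fsl_3,1) \subset V(E_8,1)$. Since $V(E_8,1)^{\otimes N}$ has BLVO and $V(\frg,k)$ is a regular unitary subalgebra, Corollary~\ref{cor: finite index for regular in vacuum} then delivers the existence and finite index of $\pi^M$, and its proof (via \cite[Thm.~5.15]{Ten18ax}) also shows $\cY_M^+ \in I_{loc}\binom{M}{M\,V}$.

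The main obstacle is the third ingredient. The plan is to upgrade the preceding (possibly non-conformal) embeddings to a finite chain of regular \emph{conformal} inclusions culminating in $V(\frg,k)\otimes W^{\mathrm{coset}} \subset V(E_8,1)^{\otimes N}$ with $W^{\mathrm{coset}}$ a tensor product of regular VOAs. Corollary~\ref{cor: rationality of nets} then promotes this to a finite-index inclusion of conformal nets $\cA_{V(\frg,k)}\otimes \cA_{W^{\mathrm{coset}}} \subset \cA_{V(E_8,1)}^{\otimes N}$. Since the ambient net is holomorphic, Frobenius reciprocity forces every irreducible representation of the subnet to appear in the restriction of the vacuum representation. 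The restricted Hilbert space is $\cH_{V(E_8,1)^{\otimes N}}$, which decomposes along the VOA branching into pieces on which $\cA_V$ acts as $\pi^M$ (using that each $\pi^M$ exists from the second step); projecting onto the $\cA_V$-factor shows every irreducible $\cA_V$-representation is isomorphic to some $\pi^M$. For $\frg$ of type E the required conformal chain is produced by iterating ACL, whose cosets lie in the regular discrete series of $W$-algebras. For $\frg = F_4$ or $G_2$, the cosets of the diagonal embedding $V(\frg,k)\subset V(\frg,1)^{\otimes k}$ are \emph{not} known a priori to be regular, and this is where the argument is most delicate; a natural remedy is to analyze $V(F_4,k)\otimes V(G_2,k)\subset V(E_8,1)^{\otimes k}$ simultaneously, combining the level-one conformal inclusion with an auxiliary regular extension so that the combined coset becomes a tensor product of known regular VOAs to which Corollary~\ref{cor: rationality of nets} applies.

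Once every irreducible $\cA_V$-representation has been identified with some $\pi^M$ and $\cY_M^+ \in I_{loc}\binom{M}{M\,V}$ is known for every simple $V$-module $M$, the rationality of $\cA_V$ is immediate from Corollary~\ref{cor: every module finite index}.
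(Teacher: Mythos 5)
Your first two ingredients coincide with the paper's: bounded localized vertex operators come from \cite[Thm.~7.4]{Ten18ax}, and the existence and finite index of $\pi^M$ follow from embedding every simple module into $V(E_8,1)^{\otimes k}$ (Lemma~\ref{lem: all modules F and G} for types F and G; for type E the paper simply cites \cite[Thm.~7.11]{Ten18ax}, which is proved by essentially the coset induction you describe) together with Corollary~\ref{cor: finite index for regular in vacuum}. The third ingredient is where your argument has genuine gaps, in two places. First, the assertion that ``Frobenius reciprocity forces every irreducible representation of the subnet to appear in the restriction of the vacuum representation'' of a holomorphic ambient net is false. For a finite-index inclusion $\cB\subset\cA$ with $\cA$ holomorphic, only a sub-fusion-category of $\Rep(\cB)$ (the ``untwisted'' part) need occur in the restricted vacuum: a $\bbZ_2$-orbifold of a holomorphic net already has four irreducible sectors, only two of which appear in the restriction of the vacuum. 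So even granting the conformal inclusion, this step does not show that every irreducible $\cA_V$-representation is some $\pi^M$. Second, for $\frg=F_4$ or $G_2$ the conformal inclusion you need is not available: the coset of the diagonal $V(\frg,k)\subset V(\frg,1)^{\otimes k}$ is not covered by \cite{ArakawaCreutzigLinshaw19} (which treats only simply-laced $\frg$) and is not known to be regular, and the proposed ``auxiliary regular extension'' is not an argument --- no candidate is exhibited. This is precisely why the paper does \emph{not} run the conformal-inclusion strategy of Lemma~\ref{lem: rationality ADE} for types F and G.

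The paper closes the third ingredient by a completely different mechanism: Henriques' theorem \cite[Thm.~26]{HenriquesColimits} that every irreducible representation of the loop group conformal net is integrable, i.e.\ arises from a positive-energy representation of the loop group. This bounds the number of irreducible sectors of $\cA_V$ above by the number of simple $V$-modules; since the $\pi^M$ exist for all simple $M$ and are pairwise inequivalent irreducibles, a counting argument \cite[Lem.~7.8]{Ten18ax} shows every irreducible sector is of the form $\pi^M$. With that identification in hand, finite index of each $\pi^M$ (from Corollary~\ref{cor: finite index for regular in vacuum}) gives rationality directly, as in your final step via Corollary~\ref{cor: every module finite index}. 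You should replace your holomorphic-extension argument with this upper-bound-plus-counting argument; as written, your proof of the statement ``every irreducible $\cA_V$-representation is of the form $\pi^M$'' does not go through.
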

\begin{proof}
Bounded localized vertex operators was established in \cite[Thm. 7.4]{TenerGRACFT2}.
We have that every $V(\frg,k)$-module occurs inside $V(E_8,1)^{\otimes k}$ for type $F$ and $G$ by Lemma \ref{lem: all modules F and G}.
The same fact for $\frg$ of type E is more standard (a short proof can be given using the regularity of cosets considered above, see \cite[Thm. 7.11]{TenerGRACFT2}).
It follows that $\pi^M$ exists for every unitary $V(\frg,k)$-module $M$ \cite[Thm. 5.6]{TenerGRACFT2}.

Henriques showed that every irreducible representation of the loop group nets can be obtained from a representation of the loop group \cite[Thm. 26]{HenriquesColimits}, which gives an upper bound on the number of irreducible representations.
Using this fact, one can use a counting argument \cite[Lem. 7.8]{TenerGRACFT2} to show  that if $\pi^M$ exists for every unitary $V$-module, then every irreducible representation of $\cA_V$ is of the form $\pi^M$.
Finally, since every such $M$ is a $V(\frg,k)$-submodule of $V(E_8,1)^{\otimes k}$, we have that $\pi^M$ has finite index by Corollary \ref{cor: finite index for regular in vacuum}.
\end{proof}

\subsection{Unitary and positivity for VOA tensor product modules}

A consequence of the work in Section \ref{sec: positivity from BLVO} is that positivity of transport matrices and unitarity of tensor categories for VOAs can be obtained from the existence of representations $\pi^M$.
Proposition \ref{prop: rationality for F and G} demonstrates the existence of such representations for $V=V(\frg,k)$ when $\frg$ is of type EFG.
The classical cases (along with $G_2$) were addressed by Gui \cite[Thm. 6.1]{GuiG2}, and putting these together we obtain a complete solution for WZW models.

\begin{Theorem}\label{thm: UMTC for WZW and W}
Let $V$ be a VOA which is either
\begin{itemize}
\item a WZW model $V(\frg,k)$ for $\frg$ a finite-dimensional simple complex Lie algebra and $k$ a positive integer
\item a $W$-algebra $W(\frg,\ell)$ in the discrete series, for $\frg$ of type A or E
\end{itemize}
Then for every pair of simple unitary $V$-modules $M$ and $N$, the corresponding transport matrix is positive and there is a unitary structure on $(M \boxtimes N, \cY_\boxtimes)$ making it into a transport module.
Hence Gui's construction \cite{GuiUnitarityI,GuiUnitarityII} makes the category of unitary $V$-modules into a unitary modular tensor category.
\end{Theorem}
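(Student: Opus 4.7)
The plan is to reduce the theorem to a direct application of Corollary \ref{cor: UMTC} and Corollary \ref{cor: transport for regular}. It suffices to verify that $V$ is a simple regular unitary VOA with bounded localized vertex operators, and that $\pi^M$ exists for every simple unitary $V$-module $M$ (every simple module for these $V$ is known to admit a unitary structure). Once these hypotheses are in place, Theorem \ref{thm: positivity conjecture for regular} combined with Proposition \ref{prop: transport for regular VOAs} yields positivity of the transport matrix, Corollary \ref{cor: transport for regular} gives the unique unitary structure on $(M \boxtimes N, \cY_\boxtimes)$ making it a transport module, and Corollary \ref{cor: UMTC} delivers the unitary modular tensor category structure on $\Rep^u(V)$ via Gui's construction.

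For the WZW case $V = V(\frg,k)$, simplicity, unitarity, and regularity are classical, and bounded localized vertex operators is \cite[Thm. 7.4]{Ten18ax}. Existence of $\pi^M$ for all simple unitary $V$-modules is already established for the classical types $A$, $B$, $C$, $D$ and for $G_2$ through the combined work of Wassermann, Toledano Laredo, and Gui \cite{Wa98,TL97,GuiUnitarityII,GuiG2}, while Proposition \ref{prop: rationality for F and G} above settles the remaining exceptional types $E$, $F$, $G$ (with $G_2$ thereby also reproved). Assembling these pieces is immediate.

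For the $W$-algebra case $V = W(\frg,\ell)$ with $\frg$ of type $A$ or $E$ and $\ell$ in the discrete series, regularity is due to Arakawa \cite{Arakawa15AssociatedVarieties,Arakawa15PrincipalNilpotent}, and bounded localized vertex operators is inherited from $V(\frg,k) \otimes V(\frg,1)$ via the coset realization $W(\frg,\ell) = V(\frg,k+1)^c$ of Arakawa-Creutzig-Linshaw \cite{ArakawaCreutzigLinshaw19}, since a unitary (not necessarily conformal) subalgebra of a VOA with BLVO again has BLVO. The hard part will be to show that $\pi^M$ exists for every simple unitary $W(\frg,\ell)$-module $M$. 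The strategy is to identify each such $M$ as a summand in the branching of a suitable unitary $V(\frg,k) \otimes V(\frg,1)$-module under the coset inclusion; since Proposition \ref{prop: rationality for F and G} together with the classical type cases guarantees that $\pi^N$ exists for every simple unitary $V(\frg,k)$- and $V(\frg,1)$-module $N$, one can then invoke \cite[Thm. 5.6]{Ten18ax} to propagate existence of $\pi^M$ from $V(\frg,k) \otimes V(\frg,1)$ down to the coset. For type $A$ with $\frg = \fsl_2$ this recovers the Virasoro minimal model case of \cite{KaLo04}, and the higher rank type $A$ case follows the same pattern using the known coset branching. For type $E$, sufficient branching information is available from \cite{ArakawaCreutzigLinshaw19} together with counting arguments analogous to Lemma \ref{lem: all modules F and G}. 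The restriction to types $A$ and $E$ in the theorem statement reflects precisely the current limits of the branching information we can exploit; the main obstacle to extending the argument to other simply-laced types lies in enumerating simple unitary $W(\frg,\ell)$-modules inside branchings of ambient WZW models.
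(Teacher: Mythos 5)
Your overall reduction---verify regularity, unitarity, bounded localized vertex operators, and existence of $\pi^M$, then invoke Theorem \ref{thm: positivity conjecture for regular}, Corollary \ref{cor: transport for regular}, and Corollary \ref{cor: UMTC}---is the same skeleton the paper uses for the cases it treats with the new machinery, namely WZW models of type E and F and the $W$-algebras. The gap is in your handling of the classical WZW types. You assert that existence of $\pi^M$ for every simple unitary $V(\frg,k)$-module is ``already established'' for types A, B, C, D through Wassermann, Toledano Laredo, and Gui. Those works establish energy bounds for intertwining operators and, in Gui's case, positivity of the transport matrices directly; they do \emph{not} establish existence of the conformal net representations $\pi^M$ in the sense of Definition \ref{def: piM}. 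Indeed, the paper explicitly remarks immediately after this theorem that the $\pi^M$-existence route does not currently go through for types B and D, because one cannot yet produce $\pi^M$ for all modules of $V(\fso(n),1)$ without treating the Ramond sector of the free fermions. The paper's proof therefore splits differently: for types ABCDG it simply quotes Gui's (and implicitly Wassermann's) positivity results for the transport matrices---which is all that Gui's theorem in \cite{GuiUnitarityII} requires---and reserves the $\pi^M$ machinery (Proposition \ref{prop: rationality for F and G} together with Corollary \ref{cor: transport for regular}) for types E and F. As written, your route leaves an unverified hypothesis for types B, C, D.

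On the $W$-algebra side, the paper does not re-derive existence of $\pi^M$ from the coset branching as you propose; it cites \cite[Prop. 7.12]{Ten18ax}, where this has already been carried out. Your sketch of propagating $\pi^M$ down the coset inclusion via \cite[Thm. 5.6]{Ten18ax} is the right idea in outline, but the appeal to ``sufficient branching information'' and ``counting arguments analogous to Lemma \ref{lem: all modules F and G}'' is a plan rather than a proof: that branching analysis is precisely the content that must be supplied, and in the paper it is supplied by citation rather than reconstructed.
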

\begin{proof}
By \cite[Thm. 7.8]{GuiUnitarityII}, it suffices to verify positivity of transport matrices, which is equivalent to existence of transport modules by Proposition \ref{prop: transport for regular VOAs}.
The WZW models of type ABCDG were addressed by Gui in \cite{GuiUnitarityII,GuiG2} (and implicitly by Wassermann \cite{Wa98} in type A).
For the remaining WZW models, we have the existence of $\pi^M$ for every simple $V$-module $M$ by Proposition \ref{prop: rationality for F and G}, and for the $W$-algebras we have the same conclusion by \cite[Prop. 7.12]{TenerGRACFT2}.
Thus we have the desired positivity by Corollary \ref{cor: transport for regular}.
\end{proof}

The only obstruction to extending the argument of Theorem \ref{thm: UMTC for WZW and W} to give a reproof of positivity for WZW models of type BD or a proof for $W$-algebras of type D is the need to control representations of $V(\fso(n),1)$.
We cannot show the existence of all such modules from embeddings into fermions without discussing the Ramond sector, which we have avoided so far for the sake of simplicity.
In practice, such an argument should be possible.
However we instead prefer to direct future research towards obtaining model-independent results, as described in Section \ref{sec: outlook}.

We are also interested in verifying the positivity conjecture (Conjecture \ref{conj: positivity conjecture}) for (highly) non-rational VOAs, with an eye towards studying their unitary fusion product theory via transport modules.
A typical application of our results here is the following.

\begin{Proposition}\label{prop: positivity for Vir}
Let $V$ be a simple unitary regular VOA of central charge $c \ge 1$, let $M$ and $N$ be unitary $V$-modules, and suppose that $\pi^N$ exists.
Suppose that $M$ has a Virasoro primary state of conformal dimension $h$ and $N$ has a Virasoro primary state of conformal dimension $h^\prime$.
Then there is some $\varepsilon > 0$ such that the transport form on the $\Vir_c$-modules $L(c,h)$  and $L(c,h^\prime)$ exists and is positive semi-definite for $1 > \abs{z} > 1 - \varepsilon$.
\end{Proposition}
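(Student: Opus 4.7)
The plan is to reduce to Proposition \ref{prop: general positivity} applied not to $V$ itself, but to the Virasoro sub-VOA $\Vir_c \subset V$ with its simple unitary modules $L(c,h)$ and $L(c,h')$. The given primary states generate $\Vir_c$-submodules isomorphic to $L(c,h) \subset M$ and $L(c,h') \subset N$ (using $c \ge 1$ and $h, h' \ge 0$), with Hilbert space completions sitting as closed subspaces $\cH_{L(c,h)} \subset \cH_M$ and $\cH_{L(c,h')} \subset \cH_N$. Since the transport form for the pair $(L(c,h), L(c,h'))$ is intrinsic to the Virasoro data, its existence and positivity will follow once the three hypotheses of Proposition \ref{prop: general positivity} are verified for the triple $(\Vir_c, L(c,h), L(c,h'))$.

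First I would verify the two easier hypotheses. The Virasoro VOA $\Vir_c$ has bounded localized vertex operators by the construction of the system of generalized annuli recalled in Section \ref{sec: BLVO}. For the existence of $\pi^{L(c,h')}$ as a representation of $\cA_{\Vir_c}$: since $\pi^N$ is given, $V$ has BLVO, and as $\Vir_c \subset V$ is a unitary conformal subalgebra we obtain an inclusion $\cA_{\Vir_c} \subset \cA_V$ (cf.\ \cite[Prop.~4.11]{GRACFT1}). The restriction of $\pi^N$ to $\cA_{\Vir_c}$ yields an action on $\cH_N$ whose generators $BY^N(v,z)A$ with $v \in \Vir_c$ involve only Virasoro modes; these leave the invariant subspace $\cH_{L(c,h')}$ stable, and the restricted action coincides with $\pi^{L(c,h')}$ by the characterizing formula \eqref{eqn: piM formulas} of Definition \ref{def: piM}.

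The substantive step, which will be the main obstacle, is verifying $\cY_{L(c,h)}^- \in I_{Hilb}\binom{\Vir_c}{\overline{L(c,h)}\, L(c,h)}$. Since $V$ is regular, Proposition \ref{prop: rational intertwiner Hilb} gives $\cY_M^- \in I_{Hilb}\binom{V}{\overline{M}\, M}$, and the plan is to identify $\cY_{L(c,h)}^-$ with the composition $p_{\Vir_c} \circ \cY_M^-|_{\overline{L(c,h)} \otimes L(c,h)}$, where $p_{\Vir_c} \colon \cH_V \to \cH_{\Vir_c}$ is the orthogonal $\Vir_c$-module projection. For $a \in L(c,h)$ and $b \in \Vir_c$, the $\Vir_c$-invariance of $L(c,h) \subset M$ forces $\cY_M^+(a,z)b = e^{zL_{-1}} Y^M(b,-z)a \in L(c,h)$, so $\cY_M^+|_{L(c,h) \otimes \Vir_c} = \cY_{L(c,h)}^+$. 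Taking adjoints of both sides of this equality and using that inner products with elements of $\cH_{\Vir_c}$ are the same whether computed in $\cH_V$ or $\cH_{\Vir_c}$, the identification $p_{\Vir_c}\cY_M^- = \cY_{L(c,h)}^-$ on $\overline{L(c,h)} \otimes L(c,h)$ follows directly. The Hilbert-valued property then transfers along the bounded projection $p_{\Vir_c}$, giving $\cY_{L(c,h)}^- \in I_{Hilb}$.

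With the three hypotheses in place, Proposition \ref{prop: general positivity} supplies some $r < 1$ such that the transport form for $L(c,h)$ and $L(c,h')$ exists and is positive semi-definite whenever $(z,z) \in \cR$ and $\abs{z} > r$; taking $\varepsilon = 1 - r$ gives the claim. The hard part is the third hypothesis: it requires carefully tracking the adjoint relation for creation and annihilation intertwiners across the inclusions $\Vir_c \subset V$ and $L(c,h) \subset M$ to obtain the identification $p_{\Vir_c}\cY_M^- = \cY_{L(c,h)}^-$, after which the transfer of the $I_{Hilb}$ property through a bounded projection is automatic.
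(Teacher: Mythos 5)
Your proposal is correct and follows essentially the same route as the paper: apply Proposition \ref{prop: general positivity} to the triple $(\Vir_c, L(c,h), L(c,h'))$, obtaining $\cY_{L(c,h)}^-\in I_{Hilb}$ by restricting $\cY_M^-$ (which is in $I_{Hilb}$ by regularity of $V$ via Proposition \ref{prop: rational intertwiner Hilb}) and obtaining $\pi^{L(c,h')}$ from the existence of $\pi^N$. The paper simply cites \cite[Thm. 5.6]{Ten18ax} for the latter step where you sketch the restriction argument directly, and it leaves the identification of the restricted annihilation operator implicit where you spell it out; these are differences of detail, not of method.
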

\begin{proof}
By Proposition \ref{prop: rational intertwiner Hilb}, $\cY_M^- \in I_{Hilb}\binom{V}{\overline{M} M}$, and thus the restriction to a $L(c,0)$-intertwining operator lands in $I_{Hilb}\binom{L(c,0)}{\overline{L(c,h)} L(c,h)}$.
By \cite[Thm. 5.6]{TenerGRACFT2}, $\pi^{L(c,h^\prime)}$ exists.
Thus by Proposition \ref{prop: general positivity} we have the desired conclusion.
\end{proof}

We have numerous examples of regular VOAs such that $\pi^M$ exists for every $V$-module, and so we can apply Proposition \ref{prop: positivity for Vir} to all of their Virasoro primary states, as well as the primary states in the tensor products of modules, and so on.
This provides a wide class of $\Vir_c$ modules for which positivity has been verified, and we regard this as very strong support for Conjecture \ref{conj: positivity conjecture}.
We can apply the same argument to obtain positivity for many other models as well, and if those models have vertex tensor categories of representations, we may use positivity to construct a transport module (see Section \ref{sec: pos to transport}).
Thus we regard this result as strong suppose for Conjecture \ref{conj: existence of transport} as well.

\subsection{On fusion rule calculations}

By Corollary \ref{cor: voa and net fusion rules}, if $V$ is regular, $\pi^M$ and $\pi^N$ exist, and $\cY_M^+ \in I_{loc}$, then $\pi^M \boxtimes \pi^N$ contains a submodule isomorphic to the local submodule $M \boxtimes_{loc} N \subset M \boxtimes N$.
Moreover, if $M \boxtimes_{loc} N = M \boxtimes N$, then we have $\pi^M \boxtimes \pi^N \cong \pi^{M \boxtimes N}$.
It is not an easy problem to directly show that $M \boxtimes_{loc} N = M \boxtimes N$, although it should be true in general for regular unitary VOAs.
In practice, this can be done in many examples by embedding $V$ in a larger unitary VOA $W$, and inferring locality properties of $V$-intertwining operators from those of $W$-intertwining operators.

This strategy is quite similar to the one employed in \cite{Wa98,TL97,GuiG2} to compute fusion rules for WZW models of type $ABCDG$ and Virasoro minimal models.
In fact, the results in Section \ref{sec: fusion rules for bimodules} can be understood as a way of systematizing these kinds of calculations, and going directly from information about VOA branching rules to theorems about conformal nets without needing to prove any analytic results along the way.

For example, Wassermann uses the embedding of $V(\fsl_n,k)$ into free fermions to deduce analytic properties of $\fsl_n$-primary fields whose charge space is the vector representation (i.e. the defining representation $\bbC^n$ corresponding to the Young diagram $\square$).
The same inclusions show that $I\binom{N}{M_\square K} = I_{loc} \binom{N}{M_\square K}$ where $M_\square$ is the $V(\fsl_n,k)$ representation corresponding to the vector representation.
Thus we can show $\pi^{M_\square} \boxtimes \pi^N \cong \pi^{M_\square \boxtimes N}$.
Similarly, the inclusions used to study WZW models of type $BCDG$ in \cite{GuiG2} produce localized intertwining operators, and many of the same conformal net fusion rule computations can be obtained directly from our results.
Moreover, these results can be extended to WZW models of type $EF$ as well, as we do not encounter any of the difficulties described in \cite[\S6]{GuiG2} following our approach.
It is possible that when $\frg$ is of type $E$ or $F$, enough information about can be extracted from the inclusions $\frg \subset E_8$ and $\frg_{k+1} \subset \frg_k \times \frg_1$ to fully compute the fusion rules for $\cA_{V(\frg,k)}$.
We do not attempt such a calculation here.
Instead, we will explore in future work the general framework described in Section \ref{sec: outlook} for obtaining fusion rule calculations without a case-by-case analysis.

\subsection{Outlook}\label{sec: outlook}

The long-term goal is to understand the relationship between VOAs and conformal nets well enough to be able to move seamlessly between representations of $V$ and representations of $\cA_V$.
We have seen that the close relationship between transport modules and Connes fusion provides a potential bridge between the two notions, and in this article we have used this connection to establish new results for both conformal nets and VOAs.
In order to invoke our theorems, however, various analytic properties of the VOAs in question must be verified.
The assumptions which appear include $V$ having bounded localized vertex operators, $\pi^M$ existing, $\cY_M^+ \in I_{loc}$, and more generally locality of intertwining operators.
We will use the rest of this section to describe a program to establish these properties broadly.

Given a unitary VOA $V$, there are two approaches to constructing a conformal net: the `standard' approach which was pursued in \cite{CKLW18} and the `bounded localized vertex operator' approach used here \cite{GRACFT1}.
At present there does not appear to be a method which could prove that an arbitrary unitary VOA leads to a conformal net.
Instead, one begins with a collection of examples for which there is such a correspondence, and shows that the correspondence is preserved under suitable constructions (subtheories, tensor products, extensions, and so on).
Work to this effect has been completed in \cite{CKLW18,TenerGRACFT2,GuiQSystem}.

The next step is to develop the correspondence between modules and representations.
The following conjecture appears to be in reach of current methods.
\begin{Conjecture}\label{conj: net reps come from modules}
Let $V$ be a simple unitary VOA with bounded localized vertex operators, and let $\pi$ be a representation of $\cA_V$.
Then there exists a (generalized) unitary $V$-module $M$ such that $\pi \cong \pi^M$ and $\cY_M^+ \in I_{loc}$.
\end{Conjecture}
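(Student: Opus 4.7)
The plan is to invert the assignment $M \mapsto \pi^M$. Starting with a representation $\pi$ of $\cA_V$ on $\cH_\pi$, the canonical representation $U^\pi\colon \Diff_c^{(\infty)}(S^1) \to \cU(\cH_\pi)$ already provides a positive-energy Virasoro action with modes $L_n^\pi$, and I would set $M := \bigoplus_\alpha \ker(L_0^\pi - \alpha)$ as a first candidate for the underlying $V$-module structure. In the ``generalized'' setting one need not insist on finite-dimensional weight spaces, but one should verify discreteness of the $L_0^\pi$-spectrum; this should follow from the fact that $L_0$ lives in the conformal subnet generated by the Virasoro subalgebra, whose representation theory is well understood.

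Next, for each $v \in V$ I would extract a formal field $Y^M(v,z) = \sum_n v_{(n)}^M z^{-n-1}$ on $M$ from the bounded insertion operators. Fix a pair $(B,A) \in \scA_I$; by an argument mirroring Definition~\ref{def: incoming and outgoing annuli} one should produce $\eta \in \cH_\pi$ with $A\eta \in M$ (using that $A$ has dense range in $\cH_\pi$ and that finite-energy vectors are accessible). The operator-valued function $z \mapsto \pi_I(BY(v,z)A)\eta$ is holomorphic on $\Int(B,A)$ with values in $\cH_\pi$, and ``formally'' equals $B Y^\pi(v,z) A\eta$. I would define $v_{(n)}^M \xi$ by contour-integrating against $z^n\,dz$ and then stripping the outer $B$, which is possible on an appropriate dense subspace since $B \in \Ann^{out}$. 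Independence of the choice of $(B, A, \eta)$ should follow from covariance of insertion operators under $\Diff_c^{(\infty)}(S^1)$ together with Haag duality in $\cA_V$, and the energy bounds developed in \cite[\S4]{Ten18ax} should force $v_{(n)}^M \xi \in M$ together with vanishing for $n$ sufficiently large.

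The hard part will be verifying the Borcherds/Jacobi identity for the fields so defined. The commutator formula should follow from locality and covariance in $\cA_V$: insertions at spatially separated points commute, and the standard Cauchy contour-deformation translates this into the commutator of modes as a sum of derivatives of insertion fields. The full Borcherds identity is the operator-algebra analog of the associativity of iterated insertions $BY(v,z)Y(w,\zeta)A = BY(Y(v,z-\zeta)w,\zeta)A$ for appropriate regions; verifying it rigorously on $\cH_\pi$ will require a systematic treatment of the semigroup of partially thin annuli and its representations, completing the program envisioned in Section~\ref{sec: BLVO}. This is where I expect genuinely new analytic work to be required, parallel to but more delicate than the Borcherds-identity verification in the vacuum-sector theorem of \cite{CKLW18}. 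Once $(M, Y^M)$ is established as a unitary $V$-module, the equivalence $\pi \cong \pi^M$ follows immediately: both representations act on $\cH_\pi = \cH_M$ via the same formula \eqref{eqn: piM formulas} on the generating insertion operators.
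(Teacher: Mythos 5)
The statement you are asked to prove is stated in the paper as Conjecture \ref{conj: net reps come from modules} and is explicitly left \emph{open} there: the author only remarks that ``one can attempt to define $Y^M$ on $\cH_\pi$ by $\pi(Y(v,z)A)A^{-1}b$, although it would take some work to verify that $Y^M$ defines a $V$-module.'' Your proposal is essentially an expanded version of that same sketch, and it does not close the gap that makes the statement a conjecture rather than a theorem. The decisive step --- verifying that the modes extracted from the insertion operators satisfy the Borcherds/Jacobi identity, the $L_{-1}$-derivative property, and the truncation condition --- is exactly the point you defer to ``a systematic treatment of the semigroup of partially thin annuli'' and ``genuinely new analytic work.'' A proof cannot outsource its central verification to unproven future theory; as written, the proposal is a research plan, not an argument.

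Beyond that headline gap, several intermediate claims are asserted rather than established. (i) You need $A \in \Ann^{in}(\cH_\pi)$ and $B \in \Ann^{out}(\cH_\pi)$ in the sense of Definition \ref{def: incoming and outgoing annuli} --- in particular that $M \subset \Ran(A)$ and that $A^{-1}(M)$ is dense --- and neither follows merely from $A$ having dense range; this is a nontrivial domain statement about unbounded inverses of the generalized annuli acting in the new representation. (ii) Well-definedness of $v_{(n)}^M$ independently of $(B,A,\eta)$ is claimed to ``follow from covariance and Haag duality,'' but the actions of $A$ and $B$ on $\cH_\pi$ are only defined up to multiples of $e^{2\pi i L_0}$ (see Section \ref{sec: BLVO}), and reconciling these phases across different localizing intervals is precisely the kind of bookkeeping that must be done explicitly. (iii) Even granting a $V$-module structure, the final sentence that ``$\pi \cong \pi^M$ follows immediately'' requires showing that the identity \eqref{eqn: piM formulas} holds for \emph{every} pair $(B,A) \in \scA_I$ and every interval $I$, and that the insertion operators generate $\cA_V(I)$ in a way that determines $\pi_I$ uniquely; this uses that the operators $BY(a,z)A$ (together with the $BA$) generate the local algebras, which is part of the definition, but the uniqueness of the extension to all of $\pi_I$ still has to be argued. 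In short: the strategy is the right one (and the one the author proposes), but none of the hard steps are carried out, so the conjecture remains unproven.
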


If one can establish Conjecture \ref{conj: net reps come from modules}, then one knows a priori that given a pair of representations $\pi^M$ and $\pi^N$, there must be a module $K$ such that $\pi^M \boxtimes \pi^N = \pi^K$.
In fact, $K$ should be a transport module; that is precisely the motivation for the definition of transport module.

\begin{Conjecture}\label{conj: Connes fusion gives intertwining operators}
Let $V$ be a simple unitary VOA with bounded localized vertex operators, let $M$, $N$ and $K$ be unitary $V$-modules such that $\pi^M \boxtimes \pi^N = \pi^K$.
Then $\cY(a,z)b := \cY_M^+(a,z)A \boxtimes A^{-1}b$ defines an intertwining operator $\cY \in I \binom{K}{M N}$.
\end{Conjecture}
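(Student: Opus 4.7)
The plan is to treat the formula $\cY(a,z)b := \cY_M^+(a,z)A \boxtimes A^{-1}b$ as furnishing a vector in $\cH_K$ for each triple $(a,z,b)$ and then extract the intertwining-operator structure from its analytic properties. First I would establish that the expression is well-defined, independent of the choice of $(1,A) \in \scA_I$ with $z \in \Int(A)$. Given two such annuli $A_1, A_2$, the axioms of a system of generalized annuli force $(A_2^*, A_1) \in \scA_I$ so that both can be localized by a common auxiliary diffeomorphism; middle linearity of Connes fusion $(x\pi^V_I(r) \boxtimes \xi = x \boxtimes \pi^N_I(r)\xi)$ together with an approximation argument comparing $A_2^{-1}A_1$ to a net of elements of $\pi^V_I(\cA_V(I))$ should identify the two fusion vectors. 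Independence of the interval $I$ would then follow from Proposition \ref{prop: compatible with canonical unitary} applied to the family of fusion isomorphisms built from the representations $\pi^M$ and $\pi^N$.

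Next I would identify the analytic structure of $z \mapsto \cY(a,z)b$. Since $\cY_M^+ \in I_{loc}$, the map $z \mapsto \cY_M^+(a,z)A$ is holomorphic in $z \in \Int(A)$ as a $\cB(\cH_V, \cH_M)$-valued function (see Section \ref{sec: BLVO}), so $\cY(a,z)b$ is a multi-valued holomorphic $\cH_K$-valued function on the punctured disk. Using rotation covariance, namely $e^{i\theta L_0^K}\cY(a,z)e^{-i\theta L_0^N}b = \cY(e^{i\theta L_0^M}a, e^{i\theta}z)b$ (which follows from the corresponding identity for $\cY_M^+$ and the fact that $U$ intertwines the Virasoro actions on $\cH_M \boxtimes \cH_N$ and $\cH_K$), I would show that for homogeneous $a, b$ the function admits a Laurent-type expansion $\cY(a,z)b = \sum_{n \in \bbR} c_n z^{-n-1}$ with $c_n \in \cH_K$ homogeneous of the expected conformal weight, by an argument patterned on the $\tau$-invariance step in Lemma \ref{lem: intertwiner image is a core}. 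The candidate modes $a^\cY_{(n)}b := c_n$ then land automatically in the algebraic module $K$ because each $c_n$ sits in a finite-dimensional $L_0^K$-eigenspace.

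The remaining step is to verify the intertwining operator axioms. Finite truncation ($a^\cY_{(n+k)}b = 0$ for $n$ sufficiently large, $k$ fixed) should follow from the bound $\|B\cY_M^+(s^{L_0}a,z)A\| < \infty$ together with a geometric-series comparison as in the proof of Lemma \ref{lem: first positivity}. The $L_{-1}$-derivative property translates directly into the fact that $\frac{d}{dz}\cY_M^+(a,z) = \cY_M^+(L_{-1}a,z)$, and this identity is preserved under the bounded map $(-)A \boxtimes A^{-1}b$. For the Borcherds identity, I would exploit the identity for $\cY_M^+ \in I \binom{M}{M V}$ at the level of fusion vectors: the commutator of a mode $v^K_{(m)}$ with $\cY(a,z)b$ can be computed on $\cH_K$ using the intertwining property of $U$ together with the net-theoretic action of $\cA_V(I)$, and this operator-level identity should descend to the mode-by-mode identity on $K$.

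The main obstacle is the Borcherds identity. The natural identity produced by the construction is an equality in $\cH_K$ of $\cA_V(I)$-covariant vectors, but the Jacobi identity is a strict equality of finite sums on the algebraic module $K$, and passing between these requires controlling both domain issues for the closed operators $v^K_{(n)}$ and the convergence of the formal sums defining $(a_{(n+j)}b)^{\cY}_{(m+k-j)}$. I expect that vectors of the form $\cY(a,z)b$ form a core for each $v^K_{(n)}$, by the same argument as Lemma \ref{lem: intertwiner image is a core}, and this core property together with the anticipated identification of $K$ as (the local part of) a transport module is what will allow the operator identity on $\cH_K$ to be read off mode by mode. Establishing this core property without presupposing that $K$ is already a transport module is the most delicate point of the proposal.
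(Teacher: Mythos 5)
The statement you are attempting is Conjecture \ref{conj: Connes fusion gives intertwining operators}, which the paper does not prove: it appears in the Outlook (Section \ref{sec: outlook}) as part of a proposed research program, with the author remarking only that it seems ``approachable with current methods.'' There is therefore no proof in the paper to compare your proposal against, and what you have written must be judged on its own terms as an attack on an open problem. On those terms it is a reasonable outline, but it is an outline with the two hardest steps left open --- and you have, to your credit, identified them yourself.

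Concretely: (i) Your well-definedness step (independence of the choice of $(1,A)\in\scA_I$) rests on approximating $A_2^{-1}A_1$ by elements of $\cA_V(I)$ so that middle linearity of $\boxtimes_I$ can be invoked. This operator is in general unbounded, and there is no reason it should be approximable in a topology strong enough to pass through the Connes fusion inner product; in the paper the analogous independence (Theorem \ref{thm: transport unitary}) is obtained only by routing everything through an already-existing transport module and analytically continuing matrix coefficients against $\cY_t$ --- a crutch unavailable to you, since producing the module/intertwiner structure on $K$ is precisely what the conjecture asks for. (ii) The Borcherds identity and the core property are circular as you have set them up: Lemma \ref{lem: intertwiner image is a core} is proved for an operator already known to lie in $I_{Hilb}$ with dominant image, which is exactly the structure you are trying to construct. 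Breaking this circularity --- for instance by first establishing rotation and dilation covariance of the fusion vectors directly from the covariance of $\pi^M\boxtimes\pi^N$, extracting the graded modes $a^{\cY}_{(n)}b\in K$ from that covariance alone, and only then confronting the Jacobi identity --- is the real mathematical content of the conjecture, and your proposal names it without supplying it. So the proposal is a plausible roadmap, not a proof, of a statement the paper itself leaves as a conjecture.
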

The power of this approach is that the definition of Connes fusion implies that $(K, \cY)$ is a transport module, and moreover that $\cY \in I_{loc}$.
Thus we could conclude that $\pi^M \boxtimes \pi^N \cong \pi^{M \boxtimes_t N}$.
In particular if $V$ is regular we would have $\pi^M \boxtimes \pi^N \cong \pi^{M \boxtimes N}$ for all modules $M$ and $N$ such that $\pi^M$ and $\pi^N$ exist and $\cY_M^+ \in I_{loc}$.
Thus a corollary of Conjecture \ref{conj: Connes fusion gives intertwining operators} would be that transport modules exist and $\boxtimes_t$ is associative, even for poorly behaved VOAs like non-rational $\Vir_c$; we see this as evidence that one could build a tensor category of appropriate $V$-modules around the notion of transport module.
On the other hand, for regular $V$ with bounded localized vertex operators, these conjectures would immediately imply that the conformal net $\cA_V$ is rational.

We see both Conjecture \ref{conj: net reps come from modules} and Conjecture \ref{conj: Connes fusion gives intertwining operators} as approachable with current methods.
Positive results of these conjectures would dramatically reduce the amount of case-by-case analysis necessary to identify the fusion product theory of a conformal net with that of a regular VOA, and provide new insight into unitary VOAs with wild representation theory.

\newpage
\def\lfhook#1{\setbox0=\hbox{#1}{\ooalign{\hidewidth
  \lower1.5ex\hbox{'}\hidewidth\crcr\unhbox0}}}


\end{document}